\newtheorem{theorem}{Theorem}[section]
\newtheorem{proposition}[theorem]{Proposition}
\newtheorem{lemma}[theorem]{Lemma}
\newtheorem{fact}[theorem]{Fact}
\newtheorem{corollary}[theorem]{Corollary}
\newtheorem{definition}[theorem]{Definition}
\newcommand{\C}{\ensuremath{\mathbb{C}}}
\newcommand{\F}{\ensuremath{\mathbb{F}}}
\newcommand{\R}{\ensuremath{\mathbb{R}}}
\newcommand{\Z}{\ensuremath{\mathbb{Z}}}
\newcommand{\lat}{\mathcal{L}}
\newcommand{\eps}{\varepsilon} 
\renewcommand{\epsilon}{\varepsilon}
\newcommand{\poly}{\mathrm{poly}}
\DeclareMathOperator{\dist}{dist}
\renewcommand{\vec}[1]{\ensuremath{\boldsymbol{#1}}}
\newcommand{\basis}{\ensuremath{\mathbf{B}}}
\newcommand{\problem}[1]{\ensuremath{\mathrm{#1}} }
\DeclarePairedDelimiter\inner{\langle}{\rangle}
\DeclarePairedDelimiter\floor{\lfloor}{\rfloor}
\DeclarePairedDelimiter\ceil{\lceil}{\rceil}
\renewcommand{\Re}{\mathfrak{Re}}
\renewcommand{\Im}{\mathfrak{Im}}
\newcommand{\blue}[1]{{\color{blue} #1}}
\newcommand{\CVP}{\problem{CVP}}
\newcommand{\SVP}{\problem{SVP}}
\newcommand{\CVPP}{\problem{CVPP}}
\DeclareMathOperator{\var}
\newcommand{\norm}[1]{\|#1\|}
\newcommand{\set}[1]{\{#1\}}
\newcommand{\ind}{\textrm{ind}}
\newcommand{\val}{\textrm{val}}
\DeclareMathOperator*{\E}{\mathbb{E}}
\newcommand{\ton}{\vec{t}_{\textrm{on}}}
\newcommand{\toff}{\vec{t}_{\textrm{off}}}
\newcommand{\rgood}{r_{\textrm{good}}}
\newcommand{\rbad}{r_{\textrm{bad}}}
\newcommand{\roff}{r_{\textrm{off}}}
\newcommand{\bit}{\ensuremath{\{0, 1\}}}
\newcommand{\pmo}{\ensuremath{\{-1, 1\}}}
\newcommand{\cc}[1]{\ensuremath{\mathsf{#1}}}
\newcommand{\NP}{\cc{\NP}}
\newcommand{\lampar}{\ensuremath{\lambda_{\mathrm{par}}}}
\newcommand{\vpar}{\ensuremath{\vec{v}_{\mathrm{par}}}}
\newcommand{\fh}{\hat{f}}
\DeclarePairedDelimiter\iprod{\langle}{\rangle}
\DeclarePairedDelimiter\abs{|}{|}
\DeclarePairedDelimiter\card{|}{|}
\DeclareMathOperator{\sign}{sign}
\title{Fine-grained hardness of CVP(P)---\\Everything that we can prove (and nothing else)}
\author{Divesh Aggarwal\thanks{National University of Singapore. Part of this work was funded by the Singapore Ministry of
Education under grant MOE2019-T2-1-145 and the National Research Foundation under grant R-710-000-012-135. }\\ \texttt{dcsdiva@nus.edu.sg} 
\and Huck Bennett\thanks{University of Michigan. Supported by the National Science Foundation under award CCF-2006857. The views expressed in this work are those
    of the authors and do not necessarily reflect the official policy
    or position of the National Science Foundation. Part of this work was performed while the author was at Northwestern University and supported by a Warren Postdoctoral Fellowship.}\\ \texttt{huckbennett@gmail.com}  
\and Alexander Golovnev\thanks{Georgetown University. Partially supported by a Rabin Postdoctoral Fellowship.}\\ \texttt{alexgolovnev@gmail.com}  
\and Noah Stephens-Davidowitz\thanks{Cornell University. Partially supported by an NSF-BSF grant number 1718161 and
NSF CAREER Award number 1350619.}\\ \texttt{noahsd@gmail.com} 
}
\date{}
\begin{document}

\maketitle
\begin{abstract}
We show a number of fine-grained hardness results for the Closest Vector Problem in the $\ell_p$ norm ($\CVP_p$), and its approximate and non-uniform variants. First, we show that $\CVP_p$ cannot be solved in $2^{(1-\eps)n}$ time for all $p \notin 2\Z$ and $\eps > 0$, assuming the Strong Exponential Time Hypothesis (SETH). Second, we extend this by showing that there is no $2^{(1-\eps)n}$-time algorithm for \emph{approximating} $\CVP_p$ to within a constant factor $\gamma$ for such $p$ assuming a ``gap'' version of SETH, with an explicit relationship between $\gamma$, $p$, and the arity $k = k(\eps)$ of the underlying hard CSP. Third, we show the same hardness result for (exact) $\CVP_p$ with preprocessing (assuming non-uniform SETH).

For exact ``plain'' $\CVP_p$, the same hardness result was shown in [Bennett, Golovnev, and Stephens-Davidowitz FOCS 2017] for all but finitely many $p \notin 2\Z$, where the set of exceptions depended on $\eps$ and was not explicit. For the approximate and preprocessing problems, only very weak bounds were known prior to this work.

 We also show that the restriction to $p \notin 2\Z$ is in some sense inherent. In particular, we show that no ``natural'' reduction can rule out even a $2^{3n/4}$-time algorithm for $\CVP_2$ under SETH. For this, we prove that the possible sets of closest lattice vectors to a target in the $\ell_2$ norm have quite rigid structure, which essentially prevents them from being as expressive as $3$-CNFs. 
 
We prove these results using techniques from many different fields, including complex analysis, functional analysis, additive combinatorics, and discrete Fourier analysis. E.g., along the way, we give a new (and tighter) proof of Szemer\'{e}di's cube lemma for the boolean cube.
\end{abstract}

\vfill
\thispagestyle{empty}
\newpage
\tableofcontents
\thispagestyle{empty}
\newpage
\pagenumbering{arabic}

\section{Introduction}
\label{sec:introduction}

A lattice $\lat$ is the set of all integer linear combinations of
linearly independent basis vectors $\vec{b}_1,\dots,\vec{b}_n \in \R^d$,
\[
\lat = \lat(\vec{b}_1,\ldots, \vec{b}_n) := \big\{ z_1 \vec{b}_1 + \cdots + z_n \vec{b}_n  \ : \ z_i \in \Z \big\}
\; .
\] We call $n$ the \emph{rank} of the lattice $\lat$ and $d$ the \emph{dimension} or the \emph{ambient dimension} of the lattice.

The two most important computational problems on lattices are the Shortest Vector Problem ($\SVP$) and the Closest Vector Problem ($\CVP$). Given a basis for a lattice $\lat \subset \R^d$,
$\SVP$ asks us to compute the minimal length of a non-zero vector in $\lat$, and $\CVP$ asks us to compute the distance from some target point $\vec{t} \in \R^d$ to the lattice. Typically, we define length and distance in terms of the $\ell_p$ norm for some $1 \leq p \leq \infty$, given by
\[
\|\vec{x}\|_p := (|x_1|^p + |x_2|^p + \cdots + |x_d|^p)^{1/p}
\]
for finite $p$ and
\[
\|\vec{x}\|_{\infty} := \max_{1 \leq i \leq d} |x_i|
\; .
\]
In particular, the case where $p = 2$ corresponds to the Euclidean norm, which is the most important and best-studied norm in this context. 
We write $\SVP_p$ and $\CVP_p$ for the respective problems in the $\ell_p$ norm. $\CVP$ is known to be at least as hard as $\SVP$ (in any norm, under an efficient reduction that preserves the rank, ambient dimension, and approximation factor)~\cite{GMSS99} and appears to be significantly harder.

In the past decade, these problems have taken on still more importance, as their hardness underlies the security of most post-quantum public-key cryptography schemes, while the schemes that are currently used for most practical applications are not secure against quantum computers. Recent rapid progress in quantum computing (e.g.,~\cite{AAB+QuantumSupremacy19}) has therefore created a rush to switch to lattice-based cryptography in many applications. Indeed, for this reason, lattice-based cryptography is in the process of standardization for widespread use~\cite{NIST_quantum}.

Given the obvious importance of these problems, they have been studied quite extensively. However, in spite of much effort, algorithmic progress has stalled for $\CVP$. The fastest algorithm for $\CVP_2$ runs in $2^{n+o(n)}$ time~\cite{ADS15}---even for arbitrarily large constant approximation factors---and there are fundamental reasons that our current techniques cannot do better.\footnote{There are only two known algorithms that solve $\CVP_2$ in its exact form in time $2^{O(n)}$~\cite{MV13, ADS15}, and both of them involve enumeration over all $2^n$ cosets of $\lat$ modulo $2\lat$. (These cosets arise naturally in this context, and they play a large role in Section~\ref{sec:limitations}.) There are other approaches that achieve constant-factor approximation in time $2^{O(n)}$, but the constant in the exponent is significantly larger. The situation for $\SVP$ is far more dynamic. See, e.g.,~\cite{BDGL16,ASJustTake18}.} For arbitrary $p$, the fastest known exact algorithm is still Kannan's $n^{O(n)}$-time algorithm from over thirty years ago~\cite{Kannan87}. For constant-factor approximation and arbitrary $p$, Bl{\"o}mer and Naewe~\cite{BN09} gave a $2^{O(d)}$-time algorithm, which was later improved to $2^{O(n)}$ time by Dadush~\cite{Dadush2012}, and a $4^{(1+\eps)d}$-time algorithm for $p = \infty$ by Aggarwal and Mukhopadhyay~\cite{AMFasterAlgorithms18}.

While we have known for decades that $\CVP_p$ is NP-hard~\cite{Boas81}, even to approximate up to superconstant approximation factors~\cite{DKRS03}, such coarse hardness results are insufficient to rule out, e.g., a $2^{n/20}$-time algorithm or even a $2^{\sqrt{n}}$-time algorithm. If such algorithms were found, they would have innumerable positive applications, but they would also render current lattice-based cryptographic constructions broken in practice. Even a relatively small improvement beyond $2^n$ time would have major consequences.

In~\cite{conf/focs/BennettGS17}, we therefore initiated the study of the \emph{fine-grained hardness} of CVP in an effort to explain this lack of algorithmic progress and to give evidence for the \emph{quantitative} security of lattice-based cryptography. We showed that there is no $2^{(1-\eps)n}$-time algorithm for $\CVP_p$ assuming the Strong Exponential Time Hypothesis (SETH, a common hypothesis in complexity theory, defined in Section~\ref{sec:prelims}), but we were only able to prove this lower bound explicitly for odd integers $p$ (and $p = \infty$). For other values of $p$, our result was much weaker. For every $\eps > 0$, we showed that there are at most finitely many $p \notin 2\Z$ with a $2^{(1-\eps)n}$-time algorithm for $\CVP_p$ (assuming SETH). 
In particular, for any \emph{specific} value of $p \notin (2\Z+1) \cup \{\infty\}$, we could not rule out such an algorithm. (We did, however, rule out $2^{o(n)}$-time algorithms for all $p$.)

We showed that the restriction $p \notin 2\Z$, though quite unfortunate, is in some sense inherent. Specifically, the main gadget that we used in our reduction does not exist for $p \in 2\Z$. However, the fact that our result had a non-explicit finite list of \emph{additional} exceptions seems to be an artifact of the proof techniques. And, we could not rule out some more general class of reductions that would work, e.g., for the most interesting case when $p = 2$.

Perhaps even more importantly, our results were far weaker for the \emph{approximate} variant of $\CVP_p$, in which the goal is to approximate the distance to the lattice up to some constant factor. In particular, like nearly all reductions to \emph{exact} $\CVP_p$, our reductions in~\cite{conf/focs/BennettGS17} produced rather unnatural $\CVP_p$ instances. In such instances, there are $2^n$ lattice points (corresponding to the $2^n$ possible assignments to a SAT formula) that are all essentially the same distance from the target, and the difficulty of the problem boils down entirely to determining whether any of these points is just \emph{slightly} closer than the others. One could argue that such artificial instances do not capture the \emph{geometric} spirit of $\CVP_p$. Certainly an algorithm that achieved a small constant-factor approximation would be essentially just as good as an exact algorithm for nearly all applications (including, e.g., for cryptanalysis). However, we were only able to rule out $2^{o(n)}$-time algorithms for the approximate version of the problem (under a conjecture known as Gap-ETH). So, one might worry that the problem becomes far easier for even relatively small approximation factors.

Finally, our lower bounds were quite weak for the problem of $\CVP_p$ with preprocessing ($\CVPP_p$), an offline-online variant of $\CVP_p$ where an unbounded-time preprocessing algorithm may perform arbitrary preprocessing on the lattice $\lat$ in a way that helps an online query algorithm to find a closest lattice vector to a given target $\vec{t} \in \R^d$. In~\cite{conf/focs/BennettGS17}, we were only able to rule out a $2^{o(\sqrt{n})}$-time algorithm for this problem. It therefore remained plausible that much faster algorithms could exist for $\CVPP_p$ than for $\CVP_p$ or for constant-factor approximate $\CVP_p$. Such algorithms would, for example, lead to very strong preprocessing attacks on certain lattice-based cryptographic schemes.

In follow-up work, we used the main result of~\cite{conf/focs/BennettGS17} to prove strong lower bounds for SVP~\cite{ASGapETH18}, for SIVP~\cite{ACNoteConcrete19} with Chung, and for BDD~\cite{BPHardnessBounded20} with Peikert. However, these works inherited some of the deficiencies described above. Specifically, the strongest hardness results in the first two works only applied to odd integers $p \in (2\Z+1)$ (and $p = \infty$) and some non-explicit set of additional $p$. (\cite{BPHardnessBounded20} was written after a preliminary version of this work was published, and therefore was able to take advantage of the stronger results that we describe below.) 

\subsection{Our results}

\paragraph{Hardness results in a nutshell. } We improve on the hardness results of~\cite{conf/focs/BennettGS17} in a number of ways. We extend the main hardness result in~\cite{conf/focs/BennettGS17} to \emph{all} $p$ except for the even integers, to approximate $\CVP_p$, \emph{and} to $\CVPP_p$. (See Table~\ref{tbl:fine-grained-results}. In the introduction, we sometimes informally refer to an ``approximate variant of SETH'' as ``Gap-SETH.'' There is no consensus definition for what the ``right'' version of this hypothesis is. See Definition~\ref{def:gap-seth} for one possible definition due to Manurangsi~\cite{Manurangsi2019}, which is in some sense the most conservative possible definition of ``Gap-SETH.'')

\begin{theorem}[Informal, see Corollary~\ref{cor:cvp-hardness} and Theorems~\ref{thm:stronger_gap_SETH_or_something} and~\ref{thm:cvpp-hardness-main}]
	\label{thm:CVP_hard_intro}
	For every $1 \leq p \leq \infty$ with $p \notin 2\Z$, there is no $2^{(1-\eps)n}$-time algorithm for $\CVP_p$ for any constant $\eps > 0$ unless SETH is false. The same conclusion holds for $\CVPP_p$ unless non-uniform SETH is false.
	
	Furthermore, for every $1 \leq p \leq \infty$ with $p \notin 2\Z$ and constant $\eps > 0$, there is no $2^{(1-\eps)n}$-time algorithm for $\gamma_{p, \eps}$-approximate $\CVP_p$ for some $\gamma_{p,\eps} > 1$ unless Gap-SETH is false.	
\end{theorem}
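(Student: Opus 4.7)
The plan is to follow the ``isolating parallelepiped'' (IP) reduction template for $\CVP_p$ from~\cite{conf/focs/BennettGS17}, but to replace its case-by-case gadget analysis with an analytic construction that handles every $p \notin 2\Z$ uniformly, and to arrange the reduction so the same underlying construction simultaneously yields the approximate claim (via the Gap-SETH of Definition~\ref{def:gap-seth}) and has the structural property needed for $\CVPP_p$ under non-uniform SETH. Throughout, the arity $k = k(\eps)$ of the hard CSP is supplied by (Gap-)SETH.

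\emph{Step 1: the gadget.} For every $p \in [1,\infty] \setminus 2\Z$ and each arity $k$, I would construct a low-dimensional pair $(\lat_p, \vec{t}_p)$ such that exactly $2^k$ lattice vectors achieve the minimum distance $D_p$, every other lattice vector sits at distance at least $(1+c_p) D_p$ for some $c_p > 0$, and the $2^k$ closest points project bijectively onto $\{0,1\}^k$ via a fixed linear map. For odd integer $p$ this is essentially the explicit construction of~\cite{conf/focs/BennettGS17}. For arbitrary $p \notin 2\Z$ my plan is an analytic existence argument: the IP condition is a finite conjunction of strict inequalities in expressions of the form $\sum_i \alpha_i |x_i|^p$, and since $p \mapsto |x|^p = \exp(p \log|x|)$ is real-analytic in $p$ for $x \neq 0$, the ``bad'' set of $p$'s for a parameterized candidate family is the zero set of a (non-identically-zero) analytic function. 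One then argues that this zero set lies inside $2\Z$, using the rigidity of $\ell_p$ norms to pin down what can fail only at even integers, and deforms a known odd-integer gadget continuously to the desired $p$, with $c_p$ bounded below on compact sub-intervals avoiding $2\Z$.

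\emph{Step 2: the reduction.} Given a $k$-SAT formula $\varphi$ on $n$ variables and $m$ clauses, take a direct sum of one IP gadget per clause, pre-composed with a linear map selecting the $k$ literals of that clause. An assignment $\vec{x} \in \{0,1\}^n$ then lifts to a closest IP point in each clause's gadget exactly when that clause is satisfied, and to a strictly farther point otherwise; summing $p$-th powers of distances, a satisfiable $\varphi$ yields total distance at most $D$ and an unsatisfiable $\varphi$ at least $D + \Delta$. Standard sparsification keeps the rank $n + o(n)$, giving the $2^{(1-\eps)n}$ lower bound for exact $\CVP_p$ under SETH. Feeding in Manurangsi's Gap-$k$-SAT instead, in which unsatisfiable formulas violate an $\Omega(1)$-fraction of clauses, promotes the additive gap $\Delta$ into a constant multiplicative gap, producing $\gamma_{p,\eps} > 1$ and the approximate claim.

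\emph{Step 3: preprocessing, and the main obstacle.} For the $\CVPP_p$ claim, the key structural observation is that the lattice from Step~2 can be made to depend only on the clause \emph{skeleton} (the combinatorial data of which variable positions occur in which clause), while the literal polarities and constant terms are absorbed entirely into the target $\vec{t}$; under non-uniform SETH the preprocessing receives the skeleton-lattice as advice and the online query answers $k$-SAT on the polarity pattern encoded in $\vec{t}$. The main technical obstacle throughout is Step~1: showing that a single analytic family $(\lat_p, \vec{t}_p)$ has isolating constant $c_p$ bounded below uniformly on compact subsets of $[1,\infty] \setminus 2\Z$, and understanding precisely how and why the gadget is forced to degenerate as $p$ approaches an even integer (where, by the limitations section, no such IP can exist at $p=2$). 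This is where the complex-analytic, functional-analytic, and Fourier-analytic tools advertised in the abstract must do the heavy lifting.
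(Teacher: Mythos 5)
Your Step~1 is where the argument breaks down, and it is precisely the gap that this paper set out to close. Observing that $p \mapsto |x|^p$ is real-analytic and that the ``bad'' set of $p$ is therefore the zero set of some analytic function is exactly the argument of~\cite{conf/focs/BennettGS17}, and it only yields that for each $k$ there are \emph{finitely many} (non-explicit) bad $p \notin 2\Z$. Saying that ``one then argues that this zero set lies inside $2\Z$, using the rigidity of $\ell_p$ norms'' is not a technique --- it is a restatement of the theorem you want. The paper's actual route is different in a crucial way: $p$ is held fixed and the analyticity is in an auxiliary real parameter $t^*$ of a parameterized gadget family. One writes the gadget constraints as a linear system $H_{k,p}(t^*)\vec{\alpha} = \vec{w}$, diagonalizes $H_{k,p}(t^*)$ explicitly (its eigenvectors are the $\pm1$ Fourier characters, independent of $p$ and $t^*$), expresses each eigenvalue as a $\Z$-linear combination of $(t^*-k+2j)^p$, $j=0,\dots,k$, with leading coefficient $1$, and then proves these $k+1$ functions of $t^*$ are $\R$-linearly independent via an \emph{explicit Wronskian computation} whose value is $-2^{k(k+1)/2}\bigl(\prod_{i<j}(j-i)\bigr)\bigl(\prod_i (p)_i\bigr)\bigl(\prod_j (t^*-k+2j)^{p-k}\bigr)$. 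The falling-factorial product $\prod_i (p)_i$ vanishes exactly when $p$ is a non-negative integer below $k$, which is what pins the exceptional set to $\{0,1,\dots,k-1\}$ (combined with the known odd-integer construction and the impossibility for even $p<k$, one gets the full characterization $p\notin 2\Z$ or $p\geq k$). Your deformation-from-odd-$p$ plan, besides being vague, also offers no control on the isolating constant $c_p$ as $p$ ranges over an interval; the paper instead gets an explicit $c_p$ by analyzing the Gap-$k$-\emph{Parity} gadget, where the relevant eigenvalue is an alternating binomial sum evaluated by a contour integral and Ramanujan's identity, producing the $|\sin(\pi p/2)|/k^{(p+3)/2}$ lower bound.

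Steps~2 and~3 also have concrete problems. In Step~2, the direct-sum-of-gadgets-plus-identity-block reduction is fine for \emph{exact} $\CVP_p$, but it cannot give constant-factor hardness of approximation: the identity block $2\alpha I_n$ places every candidate solution at distance $\approx \alpha n^{1/p}$, so the YES/NO gap is only a $1+O(1/n)$ multiplicative factor. The paper has to replace isolating parallelepipeds with \emph{isolating lattices} (Proposition~\ref{prop:ip_to_lattice}) to eliminate this block before the Gap-CSP reduction can produce a constant $\gamma_{p,\eps}>1$; your proposal doesn't address this. (Your gadget description also conflates two distinct objects: the $2^k$ equidistant-plus-isolated structure you describe is what keeps coordinates in $\{0,1\}^n$, whereas the satisfy/falsify distinction requires the $(2^k-1)$-vs-$1$ split of an IP; both are needed and they are different.) In Step~3, the claim that literal polarities can be absorbed into the target with the lattice depending only on which variable positions a clause touches is false: negating a literal replaces $y_j$ by $1-y_j$ in the gadget argument, which corresponds to $V\mapsto VD$ for a sign matrix $D$, not to a translation of $\vec{t}$, so it changes the lattice. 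The paper instead fixes the basis $\basis_{n,k}$ by stacking one on-off isolating block for \emph{each of the $M=2^k\binom{n}{k}$ possible $k$-clauses including polarity patterns}, and encodes which of them actually appear in $\Phi$ via the two targets $\ton$ and $\toff$; the $\toff$ target (equidistant from all $2^k$ parallelepiped vertices) is the new gadget that makes this work, and it exists exactly when a $(p,k+1)$-isolating parallelepiped does.
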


As in~\cite{conf/focs/BennettGS17}, our result is actually a bit stronger than the above. SETH-based hardness only requires a reduction from $k$-SAT to $\CVP_p$, but we show a reduction from Max-$k$-SAT, and even from weighted Max-$k$-SAT. In fact, we also rule out $2^{o(n)}$-time algorithms for $\CVPP_p$ under a weaker complexity-theoretic assumption: the (non-uniform) Exponential Time Hypothesis. This weaker lower bound under a weaker assumption holds for all $p \neq 2$---including even integers $p \geq 4$.\footnote{Theorem~\ref{thm:CVP_hard_intro} also yields immediate similar improvements to the hardness of $\SVP_p$ and $\mathsf{SIVP}_p$, i.e., to the results of~\cite{ASGapETH18,ACNoteConcrete19}. In particular, by the main results in~\cite{ACNoteConcrete19}, the $2^n$ hardness for $\CVP_p$ and its approximate variant immediately extends to $\mathsf{SIVP}_p$. The results for $\SVP_p$ are rather complicated, as they vary with $p$ in complex ways~\cite{ASGapETH18}, but our results imply extensions of~\cite{ASGapETH18} to more values of $p$ than were known previously. See Appendix~\ref{app:svp} for a complete statement of the result.}

\paragraph{Concrete(-ish) approximation factors. }
Perhaps the most important result in Theorem~\ref{thm:CVP_hard_intro} is the hardness of approximation. As we mentioned above, approximate $\CVP_p$ is a far more natural problem than \emph{exact} $\CVP_p$, and prior to this work, one might have worried that the approximate variant could be solved in much less than $2^n$ time, even for approximation factors $\gamma = 1+\eps$. Theorem~\ref{thm:CVP_hard_intro} shows that the approximate variant is hard too (under an appropriate conjecture), in the sense that one cannot solve $\gamma$-approximate $\CVP_p$ in time $2^{(1-\eps)n}$ for \emph{some} constant $\gamma > 1$. 

For the simplest form of our reduction, however, the resulting constant $\gamma$ is not very satisfying for two reasons.
First, our simplest proof is itself non-constructive in the sense that we show how to reduce $(1 + \eps)$-approximate Max-$k$-SAT to $\gamma(p, k, \eps)$-approximate $\CVP_p$, but the dependence of $\gamma$ on $p$ and $k$ is not explicit.
Second, if we reduce from approximate Max-$k$-SAT, then we must have $\gamma < 1/(1-2^{-k}) \approx 1 + 2^{-k}$ because Max-$k$-SAT can be trivially approximated up to a factor of $1/(1-2^{-k})$, and, not surprisingly, our techniques cannot give a fine-grained reduction from approximate Max-$k$-SAT to approximate $\CVP_p$ with a larger approximation factor. %
Putting these two issues together, we see that while $\gamma$ is certainly a constant for fixed $\eps > 0$, our result might only really kick in for, e.g., $\gamma < 1+2^{-1000}$.

We show how to get around both of these issues by instead reducing from approximate Max-$k$-\emph{Parity}, the problem of determining how many constraints of the form $x_{i_1} \oplus \cdots \oplus x_{i_k} = b$ can be satisfied simultaneously.

\begin{theorem}[Informal, see Theorem~\ref{thm:parity_reduction}]
	\label{thm:intro_hard_approx}
	For every $1 \leq p \leq \infty$ with $p \notin 2\Z$, integer $k > \max\{2,p\}$, and $0 < \eps < 1$, there is an efficient (Karp) reduction from $(1 + \eps)$-approximate Max-$k$-Parity on $n$ variables to $\gamma$-approximate $\CVP_p$ on rank $n$ lattices, where
		\[
			\gamma \approx 1 + \eps |\sin(\pi p/2)|/k^{(p+3)/2}
			\; .
		\]
\end{theorem}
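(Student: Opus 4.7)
The plan is to adapt the locally dense lattice reduction paradigm from~\cite{conf/focs/BennettGS17}, but to start from Max-$k$-Parity rather than Max-$k$-SAT, so that the per-clause gadget only needs to distinguish the two possible parities of a fixed XOR rather than the $2^k-1$ satisfying assignments of a general $k$-clause from the one violating assignment. This replaces the $2^{-k}$-scale ``gap loss'' that is unavoidable when reducing from approximate Max-$k$-SAT with a much more benign polynomial-in-$k$ loss, which is what lets us get explicit constants $\gamma > 1$.

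Concretely, given a Max-$k$-Parity instance with variables $x_1,\ldots,x_n \in \{0,1\}$ and constraints $C_1,\ldots,C_m$ of the form $\bigoplus_{i \in S_j} x_i = b_j$, I would build a rank-$n$ lattice whose basis encodes $\{0,1\}^n$: include large-scaled ``label'' coordinates forcing any sufficiently close lattice vector to correspond to some assignment $\vec{x} \in \{0,1\}^n$, and attach to each clause $C_j$ a block of ``gadget'' coordinates. The gadget block for $C_j$ should be designed so that, for every assignment $\vec{x}$, the $\ell_p^p$-distance contributed by that block to the target equals $\alpha$ if $\vec{x}$ satisfies $C_j$ and $\beta > \alpha$ if it does not. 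Summing over blocks, the total $\ell_p^p$-distance is an affine function of the number of unsatisfied clauses, and the ratio $(\beta/\alpha)^{1/p}$ controls the final approximation factor.

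The heart of the proof is the construction and analysis of the per-clause gadget. Concretely, one wants a function on $\{0,1\}^k$ of the form ``$\ell_p^p$-distance in a short vector system'' that depends only on the parity $\bigoplus_{i \in S_j} x_i$. By discrete Fourier analysis on $\{0,1\}^k$, the parity-sensitive component of any such gadget can be expressed in terms of the Fourier coefficient of the function $t \mapsto |t|^p$ (or a suitable shifted version thereof) on the parity character. A standard contour / Mellin-transform computation evaluates this coefficient as a constant multiple of $\sin(\pi p/2)$ times a $\poly(k)$ factor. This is exactly where the hypothesis $p \notin 2\Z$ is essential: the coefficient vanishes identically at even integers, and for any other real $p$ we obtain a gadget whose multiplicative gap is $\Theta(|\sin(\pi p/2)|)$ times a $k$-dependent normalization. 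Calibrating the gadget so that its baseline $\ell_p^p$-mass scales like $k^{(p+1)/2}$ (to absorb the unconstrained fluctuations of a signed sum of $k$ coordinates, where the $\sqrt{k}$ enters) then converts the additive gap into the stated multiplicative $\gamma - 1 \approx \eps |\sin(\pi p/2)|/k^{(p+3)/2}$.

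Finally I would stitch the clause gadgets together in the usual way: the per-clause $\ell_p^p$-contributions add coordinate-wise, so a $(1+\eps)$-multiplicative gap in Max-$k$-Parity's optimum becomes an additive $\eps m (\beta - \alpha)$ gap in the $\ell_p^p$-distance on the $\CVP_p$ side, and dividing by the total baseline distance and taking $p$-th roots yields the claimed $\gamma$. I expect the two main obstacles to be (i) choosing the gadget so that the Fourier expansion cleanly isolates a single $\sin(\pi p/2)$ factor instead of a messy sum that could accidentally vanish, and (ii) tuning the normalization so that the $k$-dependence in the denominator is truly polynomial rather than exponential; both require some care in the analytic estimates of $|t|^p$ but are, in principle, just tighter versions of computations implicit in~\cite{conf/focs/BennettGS17}.
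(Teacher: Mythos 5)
Your high-level plan correctly identifies the two analytic ideas at the core of the paper's argument: that the per-clause gadget's gap is governed by the Fourier coefficient of $|t|^p$ on the parity character (in the paper's language, the eigenvalue $\lampar$ of $H_{k,p}(t^*)$ corresponding to the eigenvector $\vec{v}_{[k]}$), that a contour-integral calculation produces the factor $\sin(\pi p/2)$, and that the baseline mass scales like $k^{(p+1)/2}$. These match the paper's use of Lemma~\ref{cor:H-eigenvectors}, Corollary~\ref{cor:H-eigenvalues}, Corollary~\ref{cor:crazy_binomial}, and Corollary~\ref{cor:simple_binomial}.

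However, there is a genuine gap in the part of your construction that forces ``any sufficiently close lattice vector to correspond to some assignment $\vec{x}\in\{0,1\}^n$'' via ``large-scaled label coordinates.'' This is exactly the global identity-matrix gadget $2\alpha I_n$ of Eq.~\eqref{eq:Phi_t}, and as Section~\ref{sec:introduction} of the paper explains, that gadget is precisely what makes the construction useless for hardness of approximation. To exclude non-boolean $\vec{z}$ you must take $\alpha$ large enough that the label block dominates the total $\ell_p^p$-mass (contributing $\approx \alpha^p n \gtrsim mn$), while the clause gadgets contribute a gap of only $\approx \eps m$, so the multiplicative approximation factor collapses to $1 + O(\eps/n) \to 1$ as $n$ grows. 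The paper's fix is the strengthened notion of an \emph{isolating lattice} (Definition~\ref{def:cvip}, especially condition~\ref{en:other-vecs}): each per-constraint gadget is extended, via a $k\times k$ scaled identity block whose contribution is commensurate with the gadget's own mass (Proposition~\ref{prop:ip_to_lattice}), so that non-boolean coordinates are far \emph{locally within the gadget}. This lets the reduction in Theorem~\ref{thm:cvp-hardness-cvips} dispense with the global label block entirely, which is essential for a constant $\gamma > 1$. It is also where the extra factor of $k$ comes from that turns the eigenvalue ratio $\approx |\sin(\pi p/2)|/k^{(p+1)/2}$ into the $k^{(p+3)/2}$ in the statement, a bookkeeping discrepancy your outline leaves unexplained. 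Without this per-gadget strengthening your plan as written would not yield a constant approximation factor.
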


Behind this result is a new identity concerning certain weighted sums over binomial coefficients (Theorem~\ref{thm:crazy_binomial}), which generalizes~\cite{stack_exchange_2827591}.
We note that a similar identity was previously used in~\cite{conf/soda/LiWW20} to show lower bounds on sketching problems.

Theorem~\ref{thm:intro_hard_approx} (1) gives explicit (quite reasonable) dependence of $\gamma$ on $p$, $k$, and $\eps$, and (2) this reduction is meaningful (see below) for, say, $\eps = 1/\poly(k)$ as opposed to $\eps \approx 1/2^k$ when reducing from Max-$k$-SAT, resulting in a corresponding approximation factor of $\gamma = 1 + 1/\poly(k)$ as opposed to $\gamma \approx 1 + 2^{-k}$ for fixed $p \notin 2\Z$. (In fact, by a known fine-grained reduction from approximate Max-$k$-SAT to approximate Max-$k$-Parity~\cite{SV19}, we can also leverage Theorem~\ref{thm:intro_hard_approx} to resolve the first issue when reducing from approximate Max-$k$-SAT. I.e., we can get an explicit bound on $\gamma$ in terms of $p$, $k$, and $\eps$ in that case as well; see Theorem~\ref{thm:stronger_gap_SETH_or_something}.)

We next discuss for which values of $\eps$ the above reduction is ``meaningful.''
For $k \geq 3$, Max-$k$-Parity is known to be NP-hard to approximate up to any constant approximation factor strictly less than two~\cite{journals/jacm/Hastad01}. On the other hand, the fastest known algorithm for $(1 + \eps)$-approximate Max-$k$-Parity for arbitrary $\eps > 0$ (and also Max-$k$-SAT, when $\eps = \eps(k)$ is very small) runs in time roughly $(2 - \sqrt{\eps})^n$~\cite{conf/soda/AlmanCW20}. If one were to hypothesize that the fastest possible algorithm for Max-$k$-Parity has a similar runtime --- i.e. of the form $(2 - 1/\poly(\eps))^n$ --- and choose $\eps = \eps(k) = 1/\poly(k)$ then Theorem~\ref{thm:intro_hard_approx} shows that $1 + 1/\poly(k)$-approximate $\CVP_p$ has no $(2 - 1/\poly(k))^n$-time algorithm. We emphasize again that, in contrast, $(1 + \eps)$-approximate Max-$k$-SAT is trivial for such $\eps = 1/\poly(k)$.

\paragraph{Hardness of proving better hardness.} The restriction that $p$ is not an even integer is unfortunate, especially because we are most interested in the case when $p = 2$. But, this seems inherent. (In fact, it is known that $\ell_2$ is ``the easiest norm'' in a certain precise sense~\cite{RR06}.) Indeed, in~\cite{conf/focs/BennettGS17}, we already showed that our specific techniques are insufficient to prove hardness for $p \in 2\Z$. 

Here, we also rule out a far more general class of techniques for $p = 2$, which we call ``natural reductions.'' These are reductions with a fixed mapping between witnesses. Specifically, a reduction from a $k$-SAT formula $\phi$ to $\CVP_p$ over a lattice with basis $\basis$ is natural if there is a fixed (not necessarily efficient) mapping $f : \{0,1\}^n \to \Z^{n'}$ such that $\basis \vec{z}$ is a closest lattice vector if and only if $\vec{z} = f(\vec{x})$, where $\vec{x} \in \{0,1\}^n$ is a satisfying assignment (assuming that $\phi$ is satisfiable). We also mention here the fact that natural reductions cannot prove better than $2^n$ hardness for $1 < p < \infty$. We include a simple proof of this fact in Section~\ref{sec:natural_intro}.

\begin{theorem}[Informal]
	\label{thm:limitations_intro}
	There is no natural reduction from $3$-SAT on $n$ variables to $\CVP_2$ on a lattice with rank $n' \leq 4(n-2)/3$. In particular, no natural reduction can rule out even a $2^{3n/4}$-time algorithm for $\CVP_2$ under SETH.
	
	Furthermore, for any $1 < p < \infty$, there is no natural reduction from $3$-SAT on $n$ variables to $\CVP_p$ on a lattice with rank $n' < n$. In particular, no natural reduction can rule out a $2^n$-time algorithm for $\CVP_p$ under SETH for $1 < p < \infty$.
\end{theorem}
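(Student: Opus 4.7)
The plan is to handle the two parts in sequence, with the general-$p$ bound serving as a short coset warm-up that already yields $n' \geq n$, and the much more delicate $p = 2$ bound building on it using extra orthogonality rigidity together with a Szemer\'edi-type cube lemma on the boolean cube.

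For the second part, I use the strict convexity of $\ell_p$ for $1 < p < \infty$: if $\vec{v}_1 \equiv \vec{v}_2 \pmod{2\lat}$ were both closest to $\vec{t}$, then $(\vec{v}_1 + \vec{v}_2)/2 \in \lat$ would be strictly closer by strict convexity, a contradiction. So any closest-vector set in $\ell_p$ (for $1 < p < \infty$) occupies distinct cosets of $2\lat$. Picking a tautologically satisfied 3-CNF $\phi$ on $n$ variables---for instance, the conjunction of clauses of the form $(x_i \vee \neg x_i \vee x_j)$, each of which is a valid 3-clause that is always true---we get $|\mathrm{SAT}(\phi)| = 2^n$, so a natural reduction must produce $2^n$ distinct cosets in $\lat/2\lat$, forcing $2^{n'} \geq 2^n$ and hence $n' \geq n$.

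For the first part, the key Euclidean rigidity is that if four lattice vectors $\vec{v}_1, \vec{v}_2, \vec{v}_3, \vec{v}_4$ all lie on a common $\ell_2$-sphere centered at $\vec{t}$ and satisfy the parallelogram identity $\vec{v}_1 + \vec{v}_4 = \vec{v}_2 + \vec{v}_3$, then applying the parallelogram law to $\|\vec{v}_i - \vec{t}\|_2^2 = r^2$ forces the edges to be orthogonal: $\langle \vec{v}_2 - \vec{v}_1,\, \vec{v}_3 - \vec{v}_1 \rangle = 0$. Consequently, for each 3-CNF $\phi$, every integer parallelogram inside $f(\mathrm{SAT}(\phi)) \subseteq \Z^{n'}$ contributes an orthogonality constraint on the \emph{single} positive-definite quadratic form $M_\phi := \basis_\phi^\top \basis_\phi$ on $\R^{n'}$. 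The plan is to fix an arbitrary witness map $f$ with $n' \leq 4(n-2)/3$, apply a sharpened boolean analogue of Szemer\'edi's cube lemma (proved separately in the paper) to the injective map $f \bmod 2 : \{0,1\}^n \to \{0,1\}^{n'}$ in order to produce many combinatorial cubes in its image, and then cook up a specific 3-CNF $\phi$ whose satisfying set realizes many of these cubes as genuine integer parallelograms. A dimension count against the $\binom{n'+1}{2}$ degrees of freedom available in a positive-definite form $M_\phi$ then shows that the resulting simultaneous orthogonality relations are infeasible unless $n' > 4(n-2)/3$.

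The principal obstacle is bridging the \emph{combinatorial} cubes guaranteed by the cube lemma, which live in $\F_2^{n'}$, with the genuine \emph{integer} parallelograms in $\Z^{n'}$ needed to invoke the $\ell_2$ orthogonality rigidity: a cube in $\F_2^{n'}$ only guarantees that $f(\vec{x}_1) + f(\vec{x}_4) \equiv f(\vec{x}_2) + f(\vec{x}_3) \pmod{2}$, not the equality in $\Z^{n'}$ required by the parallelogram identity. Closing this gap requires the additional coset and equidistance structure forced by the closest-vector set, together with the freedom to adaptively tailor $\phi$ so that its satisfying set is compatible with the chosen cubes while keeping all the resulting orthogonality constraints attached to a single basis $\basis_\phi$---since different formulas yield different bases and hence unrelated constraints, the whole argument must be localized inside one carefully chosen $\phi$, and it is precisely the tightness of the new cube-lemma proof that yields the sharp $4/3$ factor.
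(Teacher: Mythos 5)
Your treatment of the $1 < p < \infty$ part (strict convexity forcing distinct cosets mod $2\lat$, hence $n' \geq n$) is exactly the paper's argument, and it is correct. The $p = 2$ part, however, has a genuine gap: you correctly identify the central difficulty — bridging from affine cubes mod $2$ (which the cube lemma provides) to genuine integer parallelepipeds in $\Z^{n'}$ (which the $\ell_2$ rigidity requires) — but you do not close it. The paper's bridge is Lemma~\ref{lem:4to8}, which is the heart of the argument: if four coordinate vectors $\vec{z}_1,\dots,\vec{z}_4 \in \CVP(\vec{t},\basis)$ form an affine square mod $2$ (i.e., $\vec{z}_1+\vec{z}_2+\vec{z}_3-\vec{z}_4 = 2\vec{v}$ for some $\vec{v} \in \Z^{n'}$), then either they form a genuine real parallelogram, or \emph{four additional, explicitly determined} coordinate vectors $\vec{z}_1',\dots,\vec{z}_4'$ must also be in $\CVP(\vec{t},\basis)$. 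Combined with Lemma~\ref{lem:3SAT_separates} (one can always write a $3$-clause separating a given size-$4$ set from a disjoint set of size $\geq 2$), this rules out the second case and forces every mod-$2$ square in the image to lift to a real parallelogram. Your proposed parallelogram-law orthogonality observation is a weaker fact that does not accomplish this lifting, because it already presumes the integer parallelogram identity, which is precisely what needs to be established.

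Beyond the missing lifting lemma, the proposed endgame also diverges from — and is weaker than — the paper's. You plan a dimension count against the $\binom{n'+1}{2}$ degrees of freedom in a Gram matrix $M_\phi$, accumulated over many parallelograms inside a single formula $\phi$, based on the assumption that ``since different formulas yield different bases and hence unrelated constraints, the whole argument must be localized inside one carefully chosen $\phi$.'' This assumption is a misconception: in a natural reduction the witness map $f$ is \emph{formula-independent}, so structural facts about $A = f(\{0,1\}^n)$ established via one formula carry over to every formula. The paper exploits exactly this freedom to use several different formulas (the empty formula, a separating formula from Lemma~\ref{lem:3SAT_separates}, and a $7$-of-$8$ formula from Lemma~\ref{lem:7outof8}), and the final contradiction is not a dimension count at all but a single geometric obstruction (Lemma~\ref{lem:same-distances-imply-cube}): a $3$-parallelepiped cannot have exactly seven of its eight vertices equidistant from a target in $\ell_2$. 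Without the lifting lemma and the multi-formula strategy, the dimension-counting plan has no clear path to completion, and its feasibility (e.g., linear independence of the accumulated orthogonality constraints, compatibility with positive-definiteness) is not argued.
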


Notice that we even rule out reductions from $3$-SAT to CVP. To prove SETH-hardness, we would need to show a reduction from $k$-SAT for all constant $k \geq 3$. Furthermore, we stress that this result also rules out such reductions from any problem that is provably at least as hard as $3$-SAT (under fine-grained natural reductions). This includes most ``reasonable'' Max-$3$-CSPs, such as Max-$3$-Parity (and of course Max-$k$-Parity for $k > 3$ as well). The essential obstruction is that the possible sets of closest vectors do not form an expressive enough class to capture $3$-SAT formulas.

Behind (the non-trivial $p=2$ part of) Theorem~\ref{thm:limitations_intro} are two new techniques. First is a new result concerning the structure of the closest lattice vectors to a target point in the $\ell_2$ norm. Specifically, we show that the structure of the closest vectors is quite rigid modulo $2\lat$. (See Lemma~\ref{lem:4to8}.) Second is a new and tighter proof of Szemer\'edi's cube lemma (Lemma~\ref{lem:additive_combinatorics}) for the boolean hypercube. We expect both of these results to be of independent interest.

\renewcommand{\arraystretch}{1.2}
\begin{table}[t]
	\begin{center}
		\begin{tabular}{|cl|cc|ccc|}
			\hline
			\multicolumn{2}{|c|}{Problem} & \multicolumn{2}{c|}{Upper bounds} & \multicolumn{3}{c|}{Lower bounds} \\ \hline
			&&Exact&Approximate& Exact & Approximate &Preprocessing \\
			\hline
			\multirow{3}{*}{$\CVP_p$}  & $p \notin 2\Z$ &$n^{O(n)}$&$2^{O(n)}$ & \blue{$2^{(1-\eps)n}$*} & \blue{$2^{(1-\eps)n}$} & \blue{$2^{(1-\eps)n}$}\\
			& $p \neq 2$ & $n^{O(n)}$&$2^{O(n)}$ & $2^{\Omega(n)}$ & $2^{\Omega(n)}$  & \blue{$2^{\Omega(n)}$}\\ 
			& $p = 2$ & $2^{n + o(n)}$&$2^{n + o(n)}$ & $2^{\Omega(n)}$ & $2^{\Omega(n)}$ &$2^{\Omega(\sqrt{n})}$ \\
			\hline
		\end{tabular}
	\end{center}
	\caption{\label{tbl:fine-grained-results} A summary of known quantitative upper and lower bounds under various assumptions on the complexity of $\CVP_p$ and $\CVPP_p$ for $p \in [1, \infty]$. New results appear in \blue{blue} (with a star next to the one result that is only novel for some $p$). Upper bounds for the approximate problems are for any constant approximation factor $\gamma > 1$, while lower bounds are for some small, explicit approximation factor $\gamma > 1$ depending on $p$ (and, in the case of $\CVP_p$ for $p \notin 2\Z$, also on $\eps >0$). The $2^{(1-\eps)n}$-time lower bounds are based on SETH (or Gap-SETH or non-uniform SETH), while the $2^{\Omega(\sqrt{n})}$-time and $2^{\Omega(n)}$-time lower bounds are based on ETH (or Gap-ETH or non-uniform ETH).}
\end{table}

\subsection{Our reductions}

The high-level idea behind our reductions (and those of~\cite{conf/focs/BennettGS17}) is as follows. 
The reduction is given as input a list $\phi_1,\ldots, \phi_m$ of $k$-clauses on $n$ boolean variables $x_1,\ldots, x_n$, where $k \geq 2$ is some constant. We wish to construct some basis $\basis \in \R^{d \times n}$ and target $\vec{t} \in \R^d$ such that for any $\vec{z}\in \Z^n$, $\|\basis \vec{z} - \vec{t}\|_p^p$ for $\vec{z} \in \Z^n$ is small if and only if $\vec{z}\in \{0,1\}^n$ represents an assignment that satisfies all of the $\phi_i$.

To that end, for each $\phi_i$, we wish to find a matrix $\Phi_i \in \R^{d' \times n}$ and target $\vec{t}_i\in \R^{d'}$ such that $\|\Phi_i \vec{z} - \vec{t}_i\|_p^p$ is small if and only if $z_{j_1},\ldots, z_{j_k} \in \{0,1\}$ represents an assignment that satisfies $\phi_i$. If we could find such matrices, we could take
\begin{equation}
\label{eq:Phi_t}
\basis := \begin{pmatrix}
	\Phi_1\\
	\Phi_2\\
	\vdots\\
	\Phi_m\\
	2\alpha I_n
\end{pmatrix} \in \R^{md' \times n} \qquad \qquad 
\vec{t} := \begin{pmatrix}
\vec{t}_1\\
\vec{t}_2\\
\vdots\\
\vec{t}_m\\
\alpha \vec{1}
\end{pmatrix}
\; ,
\end{equation}
where $\alpha \vec{1} \in \R^n$ is the vector whose coordinates are all $\alpha$.
Then, $\|\basis \vec{z}- \vec{t}\|_p^p = \sum_i \|\Phi_i \vec{z} - \vec{t}_i\|_p^p$ will be small if and only if $\vec{z} \in \{0,1\}^n$ corresponds to a satisfying assignment. (By taking $\alpha$ to be sufficiently large, we can guarantee that  any closest vectors must be of the form $\basis\vec{z}$ for $\vec{z} \in \{0,1\}^n$.)\

Since $\Phi_i \{0,1\}^n - \vec{t}_i = \{ \Phi_i \vec{z}- \vec{t}_i\ : \ \vec{z} \in \{0,1\}^n\}$ is a parallelepiped, and since the most important case (corresponding to $k$-SAT) is when all but one point in this set is long and all others are short, we call such objects \emph{isolating parallelepipeds}, as we explain below. The difficult step in these reductions is therefore to find isolating parallelepipeds $\Phi_i, \vec{t}_i$. We also naturally think of $\Phi_i \in \R^{d' \times k}$ by implicitly setting all entries in columns that do not correspond to the variables in $\phi_i$ to zero.

\paragraph{Finding isolating parallelepipeds. } We say that a parallelepiped $\Phi \{0,1\}^k - \vec{t}$ is a \emph{$(p, k)$-isolating parallelepiped} if all $\|\Phi \vec{z} - \vec{t}\|_p = 1$ for non-zero $\vec{z}\in \{0,1\}^k$ and $\|\Phi\vec0 - \vec{t}\|_p = \|\vec{t}\|_p > 1$. (We think of the vertex $-\vec{t}$ as ``isolated'' from the others. See Figure~\ref{fig:par}.) To find isolating parallelepipeds, we construct a family of parallelepipeds $\Phi, \vec{t}$ parameterized by $\alpha_1,\ldots, \alpha_{2^k} \geq 0$ and $t^* \in \R$ (see Figure~\ref{fig:param-ip-example}).
This family has the useful property that the norms $\|\Phi \vec{z} - \vec{t}\|_p^p$ are linear in the $\alpha_i$ for fixed $t^*$. (In~\cite{conf/focs/BennettGS17}, we used a less general family of parallelepipeds.)

\begin{figure}
	\begin{center}
		\begin{tabular}{m{4cm} m{0cm} m{4cm}}
			\hspace{-20ex} \includegraphics{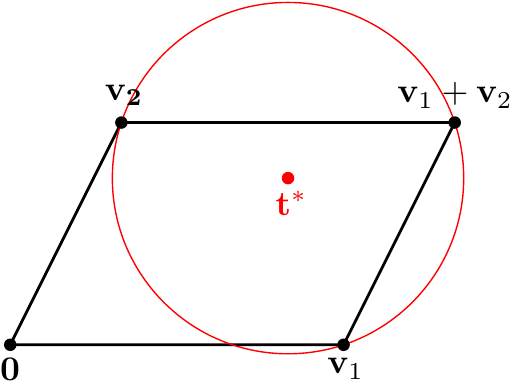} &  &  \includegraphics[scale=0.8]{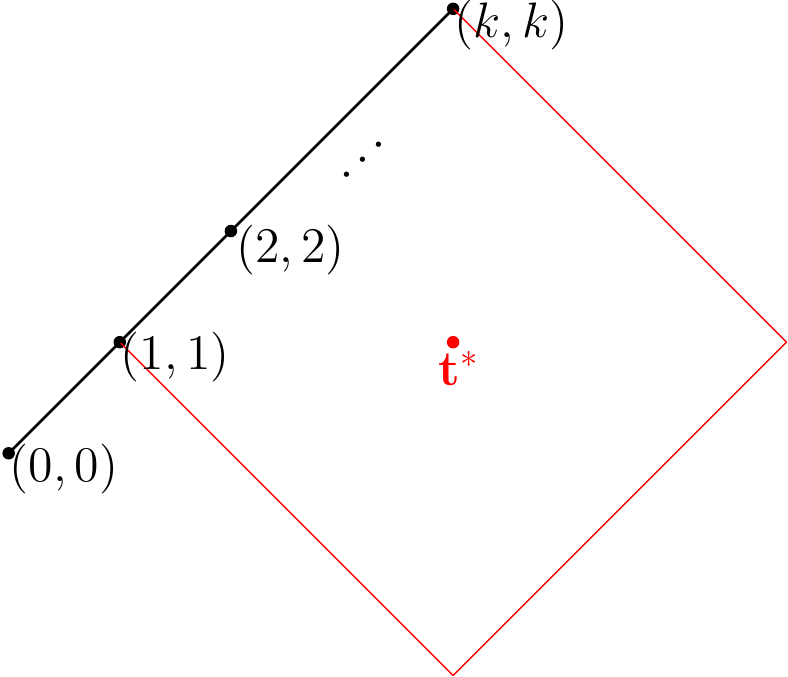}
		\end{tabular}
	\end{center}
	\caption{$(p, k)$-isolating parallelepipeds for $p = 2, k = 2$ (left) and $p = 1, k \geq 1$ (right). On the left, the vectors $\vec{v}_1$, $\vec{v}_2$, and $\vec{v}_1 + \vec{v}_2$ are all at the same distance from $\vec{t}^*$, while $\vec{0}$ is strictly farther away. 
		On the right is the degenerate parallelepiped generated by $k$ copies of the vector $(1,1)$. The vectors $(i, i)$ are all at the same $\ell_1$ distance from $\vec{t}^*$ for $1 \leq i \leq k$, while $(0, 0)$ is strictly farther away. 
		The (scaled) unit balls centered at $\vec{t}^*$ are shown in red, while the parallelepipeds are shown in black. (Figure taken from~\cite{conf/focs/BennettGS17}.)}
	\label{fig:par}
\end{figure}

So, finding isolating parallelepipeds essentially reduces to showing that a certain system of linear equations has a solution. (We actually need a non-negative solution, which is a major issue, but we ignore this for now.) To that end, we study the matrix $H_{k, p}(t^*) \in \R^{2^k \times 2^k}$ corresponding to this system of linear equations and try to show that its determinant is non-zero for some computable choice of $t^*$.
To do this, we observe that $H_{k, p}(t^*)$ satisfies the recurrence
\[
H_{k, p}(t^*) = \begin{pmatrix}
H_{k-1, p}(t^*-1) & H_{k-1, p}(t^*+1)\\
H_{k-1, p}(t^* + 1) & H_{k-1, p}(t^*-1)
\end{pmatrix}
\; .
\]
(It is this recurrence that makes this family more useful than the less general family in~\cite{conf/focs/BennettGS17}.)
This makes showing that $\det(H_{k, p}(t^*))$ is non-zero susceptible to a proof by induction on $k$. 

 To that end, we give formulas for the eigenvalues of $H_{k,p}(t^*)$ as functions of $t^*$. These functions are in turn each non-zero $\Z$-linear combinations of functions of the form $(t^* + \beta)^p$ for distinct $\beta \in \R$. (They are actually piecewise combinations of such functions, but we ignore this here.) We prove that such functions are $\R$-linearly independent if (and only if) either $p \geq k$ or $p \notin \Z$. Therefore, the eigenvalues cannot be identically zero as functions of $t^*$ for such $p$, which in turn implies that $\det(H_{k, p}(t^*))$ is not identically zero as a function of $t^*$, as needed. We finish the proof by noting that $\det(H_{k, p}(t^*))$ is (piecewise) analytic so that its zeros must be isolated, and it therefore has a computable non-zero point.

By combining this construction with our previous work, we completely characterize the values of $p$ and $k$ for which $(p, k)$-isolating parallelepipeds exist. Namely, the only case not handled by the construction above is the case where $p \in  \{1,\ldots, k-1\}$. In this case,~\cite{conf/focs/BennettGS17} showed that such parallelepipeds exist for odd $p$ but cannot exist for even $p < k$. (We provide a full proof of this latter claim in Lemma~\ref{lem:incl-excl}.) So, $(p,k)$-isolating parallelepipeds exist if and only if $p \notin \{2i \ :\ i < k/2\}$. 

As a corollary, we show a reduction from (weighted Max-)$k$-SAT on $n$ variables to a $\CVP_p$ instance with rank $n$ for all $p \notin \{2i \ :\ i < k/2\}$. In particular, we prove that $\CVP_p$ is SETH-hard for all $p \notin 2\Z$.

\paragraph{Hardness of approximation. } To prove hardness of approximation, we must show how to reduce an \emph{approximate} Max-$k$-SAT instance with $n$ variables to an approximate $\CVP_p$ instance with rank $n$. The $2^n$-hardness of approximate $\CVP_p$ described in Theorem~\ref{thm:CVP_hard_intro} then follows from the recent Gap-SETH conjecture of Manurangsi~\cite{Manurangsi2019}.

The construction shown in Eq.~\eqref{eq:Phi_t} is insufficient to prove hardness of approximation because the presence of the ``identity matrix gadget'' $2\alpha I_n$ forces the closest vector to be within distance roughly $\alpha n^{1/p}$ to the target. As a result, all SAT instances yield a $\CVP_p$ instance with $\dist_p(\vec{t}, \lat) \in (r, (1+O(1/n))r)$ for some radius $r \approx \alpha n^{1/p}$. 

To reduce to approximate $\CVP_p$, we therefore need to somehow remove this gadget, which we do by extending isolating parallelepipeds to ``isolating lattices.'' Specifically, we show how to construct a basis $\Phi \in \R^{d^* \times k}$ and target vector $\vec{t}^* \in \R^{d^*}$ such that $\Phi \vec{z}$ is a closest lattice vector to $\vec{t}^*$ if and only if $\vec{z} \in \{0,1\}^k$ and $\vec{z}$ corresponds to a satisfying assignment of the $k$-CNF $\phi$. I.e., while previously the satisfying assignments corresponded exactly to the closest vectors to $\vec{t}^*$ \emph{in the parallelepiped} $\Phi\{0,1\}^k$, now the satisfying assignments must correspond exactly to the closest vectors to $\vec{t}^*$ \emph{in the entire lattice} $\Phi \Z^k$. This eliminates the need for the identity matrix gadget.

We show a relatively straightforward reduction from isolating parallelepipeds to isolating lattices, which is enough to show a relatively weak hardness of approximation result, i.e., this allows us to reduce $(s,c)$-Gap-$k$-SAT (i.e., the problem of distinguishing between a $k$-SAT instance in which at least a $c$ fraction of the clauses are simultaneously satisfiable and one in which no assignment satisfies an $s$ fraction of the clauses) reduces to $\gamma(s,c, k)$-approximate CVP. However, applying this reduction to the above construction of isolating parallelepipeds is unsatisfying for two reasons: (1) the dependence of $\gamma(s,c,k)$ on $k$ is quite bad (the techniques described above do not even allow us to compute it explicitly, but it is relatively straightforward to see that $\gamma(s,c,k) \lesssim 1 + (c-s)/2^k$); and (2) because any $k$-SAT formula has an assignment satisfying at least a $(1-2^{-k})$-fraction of its clauses, $s > 	1-2^{-k}$ must rapidly approach one as $k$ increases.

We solve both of these problems by switching from Gap-$k$-SAT to another Gap-CSP: Gap-$k$-Parity, in which the input is $m$ constraints of the form $x_{i_1} \oplus \cdots \oplus x_{i_k} = b$ for $b \in \{0,1\}$, and the goal is to approximate the maximal number of simultaneously satisfiable clauses. This CSP is quite natural in this context because (1) H\r{a}stad showed that it is NP-hard to approximate up to any constant strictly less than $2$~\cite{journals/jacm/Hastad01}; and (2)~\cite{SV19} showed a fine-grained reduction from $(s,c)$-Gap-$k$-SAT to $(s',c')$-Gap-$k$-Parity with $s' = (1-2^{-k})s$ and $c' = (1-2^{-k})c$.

Furthermore, we are able to show that very good approximate parallelepipeds exist for Parity, i.e., for all positive integers $k$, there exists $\Phi \in \R^{d \times k}$ and $\vec{t}\in \R^d$ such that 
\begin{equation}
\label{eq:intro_parity_ip}
\| \Phi\vec{z} - \vec{t}\|_p^p = \begin{cases}
1 & \text{$\vec{z} \in \{0,1\}^k$ has odd Hamming weight}\\
1+\eps & \text{$\vec{z} \in \{0,1\}^k$ has even Hamming weight,}
\end{cases}
\end{equation}
(and vice-versa), where $\eps > 0$ is not too small. In particular, we can achieve $\eps \approx |\sin(\pi p/2)|/k^{(p+3)2}$. (This $\sin(\pi p/2)$ term is quite remarkable, as it elegantly accounts for the fact that our construction cannot possibly work for even $p$.)

To prove this, we study the eigenvalues of the matrix $H_{k, p}(t^*)$. We show that the symmetries of $H_{k, p}(t^*)$ imply that its eigenvectors correspond exactly to the output tables of the parity functions 
$\chi_S : \pmo^k \to \pmo$, $\chi_S(\vec{x}) := \prod_{i \in S} x_i$,
 and the eigenvalues are exactly the corresponding Fourier coefficients of a relatively simple function: $h : \pmo^k \to \R$, $h(\vec{x}) := \abs{\sum_{i=1}^k x_i - t^*}^p$. In particular, the parity function itself is equal to $\chi_{[k]}$, so that its output table is an eigenvector of $H_{k, p}$ with corresponding eigenvalue
\[
\lampar = 2\sum_{j = 0}^k (-1)^j \binom{k}{j} \abs{j-\tau}^p \ ,
\]
for $\tau := t^* - k/2$.
We would like to lower bound the absolute value of this sum, but notice that doing so seems non-trivial. E.g., it is not even clear whether it is positive, negative, or zero. (In fact, the sign of the sum for integer $\tau$ is $(-1)^{\tau + \floor{p/2} + 1}$, which is certainly not obvious.)

Using a contour integral, we give an explicit formula for this sum for integer values of $\tau$ (and large enough $k > p$), and in particular show that
\begin{equation}
\label{eq:intro_crazy_formula}
\sum_{i=0}^k (-1)^i \binom{k}{i} |i-\floor{k/2}|^p =  (-1)^{\floor{k/2}+1} \sin(\pi p/2)\binom{k}{\floor{k/2}} \cdot \beta_{p,k}
\; ,
\end{equation}
where $\beta_{p,k} > C_p$ converges to
\[
	C_p  := 2(2-2^{-p}) \frac{\Gamma(p+1)}{\pi^{p+1}} \zeta(p+1) \approx  (p/(\pi e))^p
\]
as $k \to \infty$.
Eq.~\eqref{eq:intro_crazy_formula} therefore allows us to understand the behavior of this sum quite precisely.
 (This formula is a generalization of the one appearing in~\cite{stack_exchange_2827591,conf/soda/LiWW20}. See Theorem~\ref{thm:crazy_binomial} and Corollary~\ref{cor:crazy_binomial}.)
 
 This allows us to explicitly describe a relatively simple parallelepiped satisfying Eq.~\eqref{eq:intro_parity_ip}, whereas previously we were only able to prove that such an object exists. (In terms of the construction described above, we set half the values of $\alpha_i$ to be zero and half to be one.) We can then directly compute $\eps$, which is given by the ratio of the eigenvalue computed in Eq.~\eqref{eq:intro_crazy_formula} to the largest eigenvalue. The largest eigenvalue is equal to the same sum without the alternating $(-1)^i$ term, which we show is equal to roughly $\binom{k}{\floor{k/2}} \cdot k^{(p+1)/2}$. So, the ratio is $
 \eps \approx |\sin(\pi p/2)|/k^{(p+1)/2}$. We therefore get an approximation factor of essentially $1+(c-s)\eps = 1+(c-s)|\sin(\pi p/2)|/k^{(p+3)/2}$ (losing an extra factor of $k$ in the conversion from an isolating parallelepiped to an isolating lattice), where in this context an approximation factor of $1$ means that the reduction fails.
  
A particularly striking feature of Eq.~\eqref{eq:intro_crazy_formula} is the term $\sin(\pi p/2)$. This term is of course zero if and only if $p\in 2\Z$, so that this quite neatly captures the fact that this construction does not (and cannot) work for $p \in 2\Z$. Furthermore, $\beta_{p,k}$ is monotonically increasing in $p$. So, in some sense this formula ``factors out'' the strange restriction that forces this sum to be zero when $p$ is an even integer.

\paragraph{Hardness of $\CVPP_p$. }
We next show how to extend the hardness result above from $\CVP_p$ to the Closest Vector Problems with Preprocessing in the $\ell_p$ norm ($\CVPP_p$). Namely, we show that $\CVPP_p$ is $2^n$-hard assuming (non-uniform) SETH for all $p \notin 2\Z$. To do this, we define an enhanced notion of an isolating parallelepiped, that we call an \emph{on-off-isolating parallelepiped} (this is analogous to what~\cite{SV19} does for codes). An on-off-isolating parallelepiped is an isolating parallelepiped $\Phi, \vec{t}^*$ together with a target $\toff$ such that $\|\Phi \vec{z} - \toff\|_p$ is constant for all $\vec{z} \in \{0,1\}^k$. 

To use these objects to reduce (Max-)$k$-SAT on $n$ variables to a $\CVPP_p$ instance with rank $n$, we must reduce $k$-SAT to $\CVP_p$ with a \emph{fixed} basis matrix $\basis_{n,k} \in \R^{d \times n}$. We use the matrix
\[
\basis_{n,k} := \begin{pmatrix} 
\Phi_1\\
\vdots\\
\Phi_{M}
\end{pmatrix}
\]
consisting of the on-off-isolating parallelepipeds for each possible $k$-clause on $n$ variables, stacked on top of each other, where $M := 2^k \binom{n}{k}$. Given a $k$-SAT formula $\{\phi_{i_1},\ldots, \phi_{i_m}\}$, we create the target 
\[
	\vec{t} := \begin{pmatrix}
		\vec{t}_1\\
		\vdots\\
		\vec{t}_{M}
	\end{pmatrix}
\]
such that $\vec{t}_i = \toff$ if $\phi_{i} \notin \{\phi_{i_1},\ldots, \phi_{i_m}\}$ and otherwise $\vec{t}_i = \vec{t}^*$. (We are oversimplifying a bit here. In our actual construction, we must shift $\toff$ in a way depending on which literals in the clause are negated. See Section~\ref{sec:cvpp-hardness}.) I.e., we use $\toff$ to ``turn off'' the clauses that do not appear in our SAT instance.

Finally, we show that $(p,k)$-on-off-isolating parallelepipeds exist if and only if $(p,k+1)$-isolating parallelepipeds exist. To transform a $(p,k+1)$-isolating parallelepiped $\Phi := (\Phi',\vec{\phi}_{k+1}), \vec{t}^*$ into a $(p,k)$-on-off-isolating parallelepiped, we simply take $\Phi'$, $\vec{t}^*$, and $\toff := \vec{t} - \vec{\phi}_{k+1}$. A simple calculation shows that $\|\Phi' \vec{z} - \toff\|_p = 1$ for all $\vec{z} \in \{0,1\}^k$ and $\|\Phi' \vec{z} - \vec{t}^*\|_p = 1$ for all non-zero $\vec{z} \in \{0,1\}^k$, as needed.

\subsection{Impossibility of natural reductions for \texorpdfstring{$p = 2$}{p=2}}

\label{sec:natural_intro}

In~\cite{conf/focs/BennettGS17}, we showed that the technique described above cannot work for even integers $p < k$. Specifically, we showed that isolating parallelepipeds do not exist in this case. However, this still left open the possibility of some other (potentially even simple) reduction from $k$-SAT to $\CVP_p$ for even integers $p$---perhaps even for $p = 2$. Here, we show that a very large class of reductions cannot work for $p = 2$. Behind these limitations is a new result concerning the structure of the closest lattice vectors to a target in the Euclidean norm.

Before we define natural reductions and show their limitations, we motivate the definition (and our techniques) by showing a simple limitation that applies for all $1 < p < \infty$. Specifically, we recall the well-known fact that for such $p$, the number of closest lattice vectors to a target is at most $2^{n'}$, where $n'$ is the rank of the lattice. (We show the simple proof of this fact below. Notice that $2^{n'}$ closest vectors are actually achieved by the integer lattice $\lat = \Z^{n'}$ and the all-halves target vector $\vec{t} = (1/2,\ldots, 1/2)$.) Therefore, if a reduction maps each satisfying assignment of some $3$-SAT formula to a distinct closest lattice vector, the rank $n'$ of the resulting lattice must be at least $\log_2 S$, where $S$ is the number of satisfying assignments. (Here, and below, we only consider the YES case, when there exists at least one satisfying assignment.) Since the number of satisfying assignments can be as large as $2^{n}$, where $n$ is the number of variables in the input instance, we immediately see that we must have $n' \geq n$.

Our specific reductions described above actually map each assignment $\vec{z} \in \{0,1\}^n$ to a very simple lattice vector: $\basis \vec{z}$. I.e., $\vec{z}$ is a satisfying assignment if and only if $\| \basis \vec{z} - \vec{t}\|_2 = r$. This suggests the following generalization of this type of reduction.

We call a reduction \emph{natural} if there exists a map $f$ from assignments $\vec{x} \in \{0,1\}^{n}$ to coordinate vectors $\vec{z}\in \Z^{n'}$ such that whenever the input $3$-SAT formula is satisfiable, $\|\basis \vec{z} - \vec{t} \|_2 = \dist_2(\vec{t}, \lat)$ if and only if $\vec{z} = f(\vec{x})$ for some satisfying assignment $\vec{x} \in \{0,1\}^n$. (We do not require $f$, or even the reduction itself, to be efficiently computable.)
Our reductions described above then correspond to the special case when $n = n'$ and $f$ is the identity map. 

Natural reductions are similar to \emph{parsimonious} reductions, which are efficient reductions that are required to preserve the number of witnesses between problems. However, natural reductions are more restrictive in the sense that $f$ must be instance independent.

\paragraph{Closest vectors mod two.} To rule out such reductions for $n' < 4n/3$, we study the algebraic and combinatorial properties of the set $S_{\basis, \vec{t}}$ of coordinates $\vec{z}\in \Z^{n'}$ of closest lattice vectors $\basis \vec{z}$ to some target vector $\vec{t}$. To motivate our techniques, let us first recall the well-known simple proof of the fact (mentioned above) that the number of closest vectors $|S_{\basis, \vec{t}}|$ is at most $2^{n'}$ for $1 < p < \infty$. Consider two distinct coordinates of closest vectors $\vec{z}_1, \vec{z}_2 \in \Z^{n'}$ to some target $\vec{t}$. Suppose that $\vec{z}_1 + \vec{z}_2 = 2\vec{z}$ for some integer vector $\vec{v} \in \Z^{n'}$. Then, $\|\basis \vec{v} - \vec{t}\|_p = \|(\basis \vec{z}_1 - \vec{t})/2 + (\basis \vec{z}_2 - \vec{t})/2\|_p < \|\basis\vec{z}_1 - \vec{t}\|_p/2 + \|\basis \vec{z}_2 - \vec{t}\|_p/2$, where we have used the \emph{strict convexity} of the $\ell_p$ norms for $1 < p < \infty$. (I.e., the triangle inequality $\|\vec{x} + \vec{y}\|_p \leq \|\vec{x} \|_p + \|\vec{y}\|_p$ is tight for $1 < p < \infty$ if and only if $\vec{y}$ is a scalar multiple of $\vec{x}$. Notice that this is false for $p = 1$ and $p =\infty$, and in each of these cases it is easy to show that there can be arbitrarily many closest lattice vectors to a target, even in two dimensions.)

The above proof does not \emph{only} show that the number of closest vectors is at most $2^{n'}$; it also shows that the set $S_{\basis, \vec{t}} \subset \Z^{n'}$ of coordinates of closest vectors in some basis $\basis$ has some algebraic structure. Specifically, there can be at most one element in $S_{\basis, \vec{t}}$ in each \emph{coset} of $\Z^{n'}/(2\Z^{n'})$. Here, a coset is the set $2\Z^{n'} + \vec{z}$ of all integer vectors with fixed coordinate parities. Notice that two cosets can be added together to obtain a new coset, $(2\Z^{n'} + \vec{z}_1) + (2\Z^{n'} + \vec{z}_2) = 2\Z^{n'} + (\vec{z}_1 + \vec{z}_2)$, and the above proof relied crucially on this structure (and specifically the fact that a coset summed with itself equals the zero coset). Of course, under addition, the cosets are isomorphic to $\F_2^{n'}$.
It is then natural to ask about the structure of $T_{\basis, \vec{t}} := S_{\basis, \vec{t}} \bmod 2$, viewed as a subset of the hypercube $\F_2^{n'}$.

Indeed, in Section~\ref{sec:limitations} we show the following curious property of $S_{\basis, \vec{t}}$ for $p = 2$. Let $C_2 \subset \F_2^{n'}$ be an affine square mod two (i.e., a two-dimensional affine subspace), and suppose that $C_2 \subseteq T_{\basis, \vec{t}}$. Let $C \subseteq S_{\basis, \vec{t}}$ be the set such that $C \bmod 2 = C_2$. (The above discussion shows that each element in $C_2$ has a unique preimage, so that $C$ is unique and $|C| = |C_2| = 4$.) Then, we show that either (1) the points in $C$ form a parallelogram over the reals (i.e., they must have the form $\vec{z}_1, \vec{z}_1 + \vec{z}_2, \vec{z}_1 + \vec{z}_3, \vec{z}_1 + \vec{z}_2 + \vec{z}_3$ over the reals, not just modulo $2$), or (2) there is a set of four other elements $C'$, uniquely determined by $C$, that must also lie in $S_{\basis, \vec{t}}$.

\paragraph{Studying the image of $f$.} To see how this can be used to rule out natural reductions, consider the image $A := f(\{0,1\}^n)$ of $f$ and $A_2 := A \bmod 2$. Suppose that $A_2$ contains an affine square $C_2 \subset A_2$, with $C \subset A$ such that $C = C_2 \bmod 2$. The fact that the set of closest vectors contains at most one element in each subset immediately implies that $|C| = 4$. Suppose that $C$ is not a parallelogram over the reals, and let $C'$ be the other four elements guaranteed by the above discussion. Then, let $E := f^{-1}(C) \subset \{0,1\}^n$ and $E' := f^{-1}(C') \subset \{0,1\}^n$ be the corresponding set of assignments. We observe that there exist $3$-SAT instances that are satisfied by all elements in $E$ but not all elements in $E'$. (This can be accomplished with a single clause.)  But, our reduction must map any such instance to a basis $\basis$ and a target $\vec{t}$ such that $C', C\subset S_{\basis, \vec{t}}$. This contradicts the assumption that $f$ only maps satisfying assignments to closest vectors.

Therefore, whenever $A_2$ contains an affine square $C_2$, the corresponding set $C$ in $A$ must be a parallelogram. It follows that any affine \emph{3-cube} in $A_2$ must correspond to a $3$-dimensional parallelepiped $P$ in $A$. Finally, we find a $3$-SAT instance satisfied by exactly seven of the eight elements in $f^{-1}(P)$. It follows that the reduction must produce a parallelepiped with exactly seven out of eight points closest to some target. In~\cite{conf/focs/BennettGS17}, we already showed that this is impossible. (We provide a simpler proof in Section~\ref{sec:limitations} as well.)

From this, we conclude that $A_2$ cannot contain any affine $3$-cube.

\paragraph{Using additive combinatorics to finish the proof. } Above, we observed that the image $A_2$ of $f$ modulo $2$ cannot contain any $3$-cube. But, we have already observed that $|A_2| = 2^n$ (i.e., the closest vectors must be distinct modulo $2$). So, $A_2 \subseteq \F_2^{n'}$ is some subset of $2^n$ points in $\F_2^{n'}$ that contains no affine hypercube. By Szemer\'edi's cube lemma, we must have $n' \geq 4n/3$, which is what we wished to prove. 

In fact, we only need a special case of Szemer\'edi's cube lemma. We provide a simpler proof of this special case based on the pigeon-hole principle. Though the proof is quite simple, to the authors' knowledge it is novel.

\subsection{Related work}

The most closely related work to this paper is of course~\cite{conf/focs/BennettGS17}. There are three additional papers showing fine-grained hardness of lattice problems:~\cite{ASGapETH18}, which showed such results for SVP;~\cite{ACNoteConcrete19}, which showed such results for SIVP; and~\cite{BPHardnessBounded20} which did the same for BDD. The first two of these works relied on the results in~\cite{conf/focs/BennettGS17}, and our improvements therefore immediately imply better hardness results for both SVP and SIVP. The third work was written after a preliminary version of this work appeared, and uses the results of this paper.

An additional line of work has shown different kinds of hardness for $\CVP$, $\SVP$, and related problems. In particular, Bhattacharyya, Ghoshal, Karthik, and Manurangsi showed the parameterized hardness of $\CVP$ and $\SVP$, as well as the analogous coding problems~\cite{BGMParameterizedIntractability18}.~\cite{SV19} showed tight hardness results for coding problems, using many ideas from~\cite{conf/focs/BennettGS17}. We in turn use some ideas from~\cite{SV19}, and in particular the idea of on-off-isolating parallelepipeds.

The work of Eisenbrand and Venzin~\cite{conf/esa/EisenbrandV20} gives a $2^{(0.802 + \eps)n}$-time algorithm for $\gamma$-$\CVP_p$ for constant $\gamma = \gamma(\eps)$ depending on $\eps > 0$. Their work combined with our work implies that (assuming Gap-SETH) there must be a time-approximation tradeoff for $\gamma$-$\CVP_p$ for $p \notin 2\Z$. In particular, their result shows that we cannot hope to get $2^{(1-\eps)n}$-hardness of $\gamma$-$\CVP_p$ for arbitrarily large constant $\gamma > 0$ and $p \not \in 2\Z$.

Finally, as mentioned earlier,~\cite{conf/soda/LiWW20} uses the bound in Equation~\eqref{eq:intro_crazy_formula} to show lower bounds on a natural sketching problem in $\ell_p$ norms. Interestingly, because this quantity vanishes for $p \in 2\Z$, both the present work and~\cite{conf/soda/LiWW20} are unable to show certain lower bounds for such $p$.

\subsection{Open questions}

The most obvious question that we leave open is, of course, to prove similar $2^n$ hardness results for $\CVP_2$, and more generally, for $\CVP_p$ for even integers $p$. In the $p = 2$ case, we show that any such proof (via SETH) would have to use an ``unnatural reduction.''
So, a fundamentally different approach is needed.\footnote{We note that the main reduction in~\cite{conf/focs/BennettGS17} works as a (natural) reduction from weighted Max-$2$-SAT formulas on $n$ variables with arbitrary (possibly exponential) weights to $\CVP_p$ instances of rank $n$ for all $p \in [1, \infty)$, including $p = 2$. So, a $2^{(1-\eps)n}$-time algorithm for $\CVP_2$ would imply a $2^{(1-\eps)n}$-time algorithm for weighted Max-$2$-SAT with arbitrary weights, for which no such algorithm is known. (Ryan Williams' algorithm for Max-$2$-SAT~\cite{journals/tcs/Williams05} runs in $W \cdot 2^{\omega n/3 + o(n)}$-time, where $W$ is the largest weight of a clause and $\omega < 2.374$ is the matrix multiplication constant.) So, there is already (rather weak) evidence that there is no $2^{(1-\eps)n}$-time algorithm for $\CVP_2$.}
One potentially promising direction would be to find a Cook reduction, as our limitations only apply to Karp reductions.
Another direction would be to show somewhat weaker hardness (say, $2^{n/2}$-hardness) of $\CVP_2$ assuming SETH using natural reductions. (Our limitations only apply to showing $2^{3n/4}$ or better hardness.)
Yet another potential direction would be to reduce directly to approximate $\CVP_2$ (presumably from a GapCSP). Our limitations show that the set of exact closest vectors cannot be as expressive as $3$-SAT formulas, but it says nothing about sets of ``nearly closest'' vectors.

Another potentially easier problem would be to show hardness of $\CVP_p$ in terms of the ambient dimension $d$, rather than $n$. Indeed, though there do exist $2^{O(n)}$-time constant-factor approximation algorithms for $\CVP_p$, the parameter $d$ is in some sense more natural. (E.g., the original algorithm of~\cite{BN09} runs in time $2^{O(d)}$, and the algorithm of~\cite{AMFasterAlgorithms18} also has its running time in terms of $d$.) This problem is potentially easier than the above because for $p = 2$ we may assume without loss of generality that $n = d$.

Of course, another open question is to prove stronger quantitative lower bounds for $\SVP_p$, and in particular for $\SVP_2$. While~\cite{ASGapETH18} did prove quite strong lower bounds for sufficiently large $p$, their bounds for small $p$ and in particular for $p = 2$ are quite weak.

We also note that $\CVP_p$ for $p \neq 2$ has received relatively little attention from an algorithmic perspective. In particular, there has not been much work trying to optimize the hidden constants in the exponent in the running times of $2^{O(n)}$ or $2^{O(d)}$ of the best known algorithms for constant-factor approximate $\CVP_p$. Our lower bounds provide new motivation for work on this subject. In particular, we ask whether our lower bounds are tight.

In fact, we do not expect our lower bound to be tight in the case when $p = \infty$.  (Recall that our limitation in Theorem~\ref{thm:limitations_intro} does not apply to $p = 1$ or $p = \infty$.) Indeed, because the kissing number in the  $\ell_\infty$ norm is $3^n-1$, one might guess that the fastest algorithms for $\CVP_\infty$ and $\SVP_\infty$ actually run in time $3^{n + o(n)}$ or perhaps $3^{d + o(d)}$. (See~\cite{AMFasterAlgorithms18}, which more-or-less achieves this.) We therefore ask whether stronger lower bounds can be proven in this special case.

We also note that our results only apply for \emph{exact} $\CVP_p$ or $\CVP_p$ with a rather small constant approximation factor. For cryptographic applications, one is interested in much larger approximation factors, typically approximation factors polynomial in $n$ (though the fastest known algorithms for these approximate problems work by solving smaller exact or near-exact instances of $\SVP_p$). While there are strong complexity-theoretic barriers to proving hardness in that regime, one might still hope to prove fine-grained hardness results for larger approximation factors---such as large constants or even superconstant. Indeed, we know NP-hardness up to an approximation factor of $n^{c/\log \log n}$, but this result is not fine-grained~\cite{DKRS03}.

Our work further motivates the emerging study of \emph{fine-grained hardness of approximation}. In particular, we wish to draw attention to the question of finding the ``right'' notion of Gap-SETH. Manurangsi's version~\cite{Manurangsi2019}, presented here in Definition~\ref{def:gap-seth}, is quite beautiful and natural, and we suspect that it will have many additional applications in the study of fine-grained hardness of approximation. However, what makes it so natural is that it is in some sense the weakest possible form of such a hypothesis (e.g., any hypothesis of the same form for any Gap-$k$-CSP implies Manurangsi's hypothesis). In particular, the order of quantifiers makes it difficult to use this hypothesis to prove hardness of approximation for specific constant approximation factors. So, perhaps a stronger hypotheses (or families of hypotheses) should be explored. The results of this work and those of~\cite{SV19} show that a hypothesis about Gap-$k$-Parity could prove useful, but we do not attempt to formalize this or claim that this is the ``right'' notion.

A final open question is to show $2^{\Omega(n)}$-hardness of $\CVPP_2$ assuming non-uniform ETH. 
The proof techniques in Section~\ref{sec:cvpp-hardness} show such hardness for $\CVPP_p$ for all $p \neq 2$ (including even integers $p$ greater than $2$), but for $p = 2$ the $2^{\Omega(\sqrt{n})}$-hardness shown in~\cite{conf/focs/BennettGS17} remains the best known. (For~$\CVP_p$, such $2^{\Omega(n)}$-hardness for all $p$, including $p = 2$, assuming ETH is known.)
\subsection*{Acknowledgments}

We would like to thank the Bertinoro program on \href{http://www.wisdom.weizmann.ac.il/~robi/Bertinoro2019_FineGrained/}{Fine Grained Approximation Algorithms and Complexity} at which some of this work was completed. We are also
grateful to the anonymous reviewers for their helpful comments.
\section{Preliminaries}
\label{sec:prelims}

Throughout this paper, we work with lattice problems over $\R^d$ for convenience. As usual, to be formal we must pick a suitable representation of real numbers and consider both the size of the representation and the efficiency of arithmetic operations in the given representation. But, we omit such details throughout to ease readability. We write $\Re(x)$ and $\Im(x)$ for the real part and imaginary part of $x \in \C$ respectively.
We will use boldfaced variables to denote column vectors, but will occasionally abuse notation by writing things like $\vec{v} = (\vec{u}, \vec{w})$ instead of $\vec{v} = (\vec{u}^T, \vec{w}^T)^T$.

\subsection{Lattice problems}

Let $\dist_p(\lat, \vec{t}) := \min_{\vec{x} \in \lat} \norm{\vec{x} - \vec{t}}_p$ denote the $\ell_p$ distance of $\vec{t}$ to $\lat$. 
We next formally define the lattice problems that we consider.

\begin{definition}
For any $\gamma \geq 1$ and $1 \leq p \leq \infty$, the $\gamma$-approximate Shortest Vector Problem with respect to the $\ell_p$ norm ($\gamma$-$\SVP_p$) is the promise problem defined as follows. Given a lattice $\lat$ (specified by a basis $B \in \R^{d \times n}$) and a number $r > 0$, distinguish between a `YES' instance where there exists a non-zero vector $\vec{v} \in \lat$ such that $\norm{\vec{v}}_p \leq r$, and a `NO' instance where $\norm{\vec{v}}_p > \gamma r$ for all non-zero $v \in \lat$.
\label{def:svp}
\end{definition}

\begin{definition}
For any $\gamma \geq 1$ and $1 \leq p \leq \infty$, the $\gamma$-approximate Closest Vector Problem with respect to the $\ell_p$ norm ($\gamma$-$\CVP_p$) is the promise problem defined as follows. Given a lattice $\lat$ (specified by a basis $B \in \R^{d \times n}$), a target vector $\vec{t} \in \R^d$, and a number $r > 0$, distinguish between a `YES' instance where $\dist_p(\lat, \vec{t}) \leq r$, and a `NO' instance where $\dist_p(\lat, \vec{t}) > \gamma r$.
\label{def:cvp}
\end{definition}

\noindent When $\gamma = 1$, we simply refer to the problems as $\SVP_p$ and $\CVP_p$.

\begin{definition}
The Closest Vector Problem with Preprocessing with respect to the $\ell_p$ norm ($\CVPP_p)$ is the problem of finding a preprocessing function $P$ and an algorithm $Q$ which work as follows. Given a lattice $\lat$ (specified by a basis $B \in \R^{d \times n}$), $P$ outputs a new description of $\lat$. Given $P(\lat)$, a target vector $\vec{t} \in \R^d$, and a number $r > 0$, $Q$ decides whether $\dist_p(\lat, \vec{t}) \leq r$.
\label{def:cvpp}
\end{definition}

When we measure the runtime of a $\CVPP$ algorithm, we only count the runtime of $Q$, and not of the preprocessing algorithm $P$. We will assume that the runtime of $Q$ is at least the size of the preprocessing, $|P(L)|$.

\subsection{Isolating parallelepipeds}

We recall the definition of an \emph{isolating parallelepiped} from~\cite{conf/focs/BennettGS17}. See Figure~\ref{fig:par}.
\begin{definition}
For any $1 \leq p \leq \infty$ and integer $k \geq 1$, we say that $V \in \R^{d^* \times k}$ and $\vec{t}^* \in \R^{d^*}$ define a \emph{$(p, k)$-isolating parallelepiped} if:
\begin{enumerate}
\item $\norm{V\vec{x} - \vec{t}^*}_p = 1$ for all $\vec{x} \in \bit^k \setminus \set{\vec{0}}$,
\item $\norm{\vec{t}^*}_p > 1$.
\end{enumerate}
\label{def:ip}
\end{definition}

We will more generally refer to the set $V \cdot \bit^k - \vec{t}^*$ for $V \in \R^{d^* \times k}$ and $\vec{t}^* \in \R^{d^*}$ as a \emph{$k$-parallelepiped}. We call a $2$-parallelepiped a \emph{parallelogram}.

\subsection{Constraint Satisfaction Problems}
A \emph{$k$-constraint} is a boolean function $C: \bit^k \to \bit$. 
A $k$-\emph{Constraint Satisfaction Problem} ($k$-CSP) $\mathcal{C}$ is specified by a set of $k$-constraints $\mathcal{C} = \set{C_1, \ldots, C_r}$. An instance $\Phi$ of a $k$-CSP $\mathcal{C}$ on $n$ variables $x_1, \ldots, x_n$ consists of $m$ $k$-constraints $C_1, \ldots, C_m \in \mathcal{C}$, where each constraints $C_i$ has $k$ (not necessarily distinct) variables $x_{i,1}, \ldots, x_{i,k}$ of $\Phi$ as its input variables. An assignment $\vec{y} \in \bit^n$ to the variables of $\Phi$ \emph{satisfies} constraint $C_i$ if $C_i(y_{i,1}, \ldots, y_{i,k}) = 1$, and satisfies $\Phi$ if it satisfies all of the constraints $C_1, \ldots, C_m$ of $\Phi$.
Let $\val(\Phi)$ denote the maximum fraction of constraints of $\Phi$ satisfiable by some assignment $\vec{y}$.

\begin{definition}
Let $\mathcal{C}$ be a $k$-CSP. The $(s, c)$-Gap-$\mathcal{C}$-CSP problem for $0 \leq s \leq c \leq 1$ is the promise problem defined as follows. On input an instance of $\Phi$, the goal is to distinguish between a YES instance in which $\val(\Phi) \geq c$, and a NO instance in which $\val(\Phi) < s$.
\end{definition}

We will primarily consider two CSPs in this work: (1) $k$-SAT, which consists of the $2^k$ functions $C : \bit^k \to \bit$ with exactly $2^k - 1$ satisfying assignments (equivalently, where each constraint is the disjunction of $k$ variables and negated variables), and (2) $k$-Parity, where $\mathcal{C}$ consists of the two constraints $C_0(x_1, \ldots, x_k) := x_1 \oplus x_2 \oplus \cdots \oplus x_k = 0$ and $C_1(x_1, \ldots, x_k) := x_1 \oplus x_2 \oplus \cdots \oplus x_k = 1$.

When a formula $\Phi$ is clear from context, we will write $m^+(\vec{y})$ to denote the number of constraints of $\Phi$ satisfied by the assignment $\vec{y}$.

Finally, we will need the following one of the main results of~\cite{SV19}. (This is actually a slight modification of the original theorem, but it is clear that the proof yields this modified version as well.)

\begin{theorem}[{\cite[Theorem 4.2]{SV19}}]
	\label{thm:SV19}
	For any integer $k \geq 2$ and $1-2^{-k} < s \leq c \leq 1$, there is a polynomial-time (Karp) reduction from $(s,c)$-Gap-$k$-SAT on $n$ variables to $(s',c')$-Gap-$k$-Parity on $n$ variables, where
\[
		s' := \frac{2^{k-1}}{2^k - 1} s
		\; ,
	\]
	and
	\[
		c' := \frac{2^{k-1}}{2^k - 1} c
		\; .
	\]
\end{theorem}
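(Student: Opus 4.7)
The plan is to use an identity-based Fourier reduction: given a $k$-SAT instance $\Phi$ with clauses $C_1,\dots,C_m$ on variables $x_1,\dots,x_n$, I would produce a $k$-Parity instance $\Phi'$ on the same $n$ variables by replacing each clause $C_i = \ell_{i,1}\vee\cdots\vee\ell_{i,k}$ with a bundle of $2^k-1$ parity constraints $P_{i,S}\colon \bigoplus_{j\in S}\ell_{i,j}=1$, indexed by nonempty $S\subseteq[k]$.

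The analysis is driven by the standard Fourier expansion of the clause indicator. Writing $y_{i,j}=(-1)^{\ell_{i,j}}\in\{\pm1\}$ so that $\mathds{1}[\ell_{i,j}=0]=(1+y_{i,j})/2$, and setting $\chi_S(\vec{y}_i):=\prod_{j\in S}y_{i,j}$, one has
\[
\mathds{1}[C_i\text{ sat}]=1-\prod_{j=1}^k\frac{1+y_{i,j}}{2}=\frac{2^k-1}{2^k}-\frac{1}{2^k}\sum_{\emptyset\neq S\subseteq[k]}\chi_S(\vec{y}_i).
\]
Since each $\chi_S\in\{\pm1\}$, the sum over nonempty $S$ equals $-1$ when $C_i$ is satisfied and $2^k-1$ when it is not, so the number of nonempty $S$ with $\chi_S=-1$---equivalently, the number of $P_{i,S}$ satisfied by the assignment---is exactly $2^{k-1}$ in the satisfied case and $0$ in the unsatisfied case.

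Summing over all clauses, $\Phi'$ has $m(2^k-1)$ parity constraints, and any assignment $\vec{x}$ satisfies exactly $m^+(\vec{x})\cdot 2^{k-1}$ of them, yielding an overall satisfied fraction of $\tfrac{m^+(\vec{x})}{m}\cdot\tfrac{2^{k-1}}{2^k-1}$. Thresholds $c$ and $s$ on $\val(\Phi)$ therefore translate directly to $c'=\tfrac{2^{k-1}}{2^k-1}c$ and $s'=\tfrac{2^{k-1}}{2^k-1}s$, and the hypothesis $s>1-2^{-k}$ is exactly what is needed to guarantee $s'>1/2$, placing the NO-threshold for $k$-Parity strictly above the trivial random-assignment value of $1/2$.

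The one delicate step---which I expect to be the main obstacle---is realizing each $P_{i,S}$ as a constraint of arity exactly $k$: its natural arity is $|S|$. Since the $k$-CSP definition permits repeated variables in a constraint and repeated literals cancel in XOR, pair-padding (appending any literal twice) immediately handles the case $|S|\equiv k\pmod 2$. For $|S|\not\equiv k\pmod 2$, pair-padding cannot adjust the literal-count parity by one, so I would resort to an auxiliary device: pair up bad-parity subsets within each cluster so that their extra padding literals jointly cancel, or introduce a controlled number of pad variables whose contribution to the gap can be absorbed into the slack between $s$ and $1-2^{-k}$ guaranteed by hypothesis. The Fourier identity and the gap arithmetic themselves are essentially forced by the ratio $2^{k-1}/(2^k-1)$ appearing in the statement, so getting the arity accounting right is the only real work.
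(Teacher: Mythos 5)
The paper gives no proof of this statement---it imports it from [SV19, Theorem 4.2] with a remark that the quoted version is a slight modification of the original---so there is no in-text argument to compare against. Your Fourier identity is correct and exactly accounts for the ratio $\frac{2^{k-1}}{2^k-1}$: among the $2^k-1$ constraints $P_{i,S}$, a satisfying local assignment satisfies exactly $2^{k-1}$ and a falsifying one satisfies $0$. The arity step you flag at the end, however, is not a loose end but the real content, and neither of your proposed repairs closes it.

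A $k$-Parity constraint on (possibly repeated) variable slots computes $\bigoplus_{v\in T}x_v=b$ where $T$ is the set of variables of odd multiplicity, and the $k$ slots force $|T|\equiv k\pmod 2$; about half of your $P_{i,S}$ have $|S|$ of the wrong parity. This is a hard Fourier obstruction, not a padding nuisance: the satisfied-count of any family of $k$-Parity constraints whose slots are drawn only from the clause's $k$ variables has Fourier expansion (over the local assignment) supported only on subsets of the correct size parity, whereas $\mathds{1}[C_i\ \mathrm{sat}]$ has a nonzero coefficient on \emph{every} nonempty $S\subseteq[k]$, so no such per-clause family can produce a count that is affine in $\mathds{1}[C_i\ \mathrm{sat}]$. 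Concretely for $k=3$, restricting to the odd-size $S$ that are realizable, an assignment making all three literals true satisfies $4$ of the $4$ eligible constraints while one making a single literal true satisfies only $2$, so the count is not uniform over satisfying assignments. Your repair (1), ``pair up bad-parity subsets so their paddings jointly cancel,'' has no mechanism---distinct parity constraints cannot be merged into one---and your repair (2) changes the variable count, which the statement disallows. The device that actually rescues the construction is a homogenization/folding trick: introduce one fresh variable $y_0$, encode $x_i=y_i\oplus y_0$, and observe that each $\bigoplus_{j\in S}\ell_{i,j}$ becomes a parity of \emph{even} size (either $|S|$ or $|S|+1$) over the $y$'s, with the two-to-one linear map $\vec{y}\mapsto\vec{x}$ preserving all satisfaction counts exactly. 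For even $k$ this places every constraint inside $k$-Parity at the price of one extra variable; for odd $k$ the folded sizes have the wrong parity for $k$-Parity and a further idea is still required. Either way the ``on $n$ variables'' accounting does not come out of your argument as written, and you should treat this accounting as the heart of the proof rather than as a closing technicality.
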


\paragraph{\texorpdfstring{$k$}{k}-SAT.} We next introduce some notation specific to $k$-SAT.
Let $\Phi$ be a $k$-SAT formula on $n$ variables $x_1, \ldots, x_n$ and $m$ clauses $C_1, \ldots, C_m$ (where each clause represents a constraint, when viewing $k$-SAT as a $k$-CSP). Let $\ind(\ell)$ denote the index of the variable underlying a literal $\ell$. I.e., $\ind(\ell) = j$ if $\ell = x_j$ or $\ell = \lnot x_j$. Call a literal $\ell$ \emph{positive} if $\ell = x_j$ and \emph{negative} if $\ell = \lnot x_j$ for some variable $x_j$.
Given a clause $C_i = \lor_{s=1}^k \ell_{i, s}$, let $P_i := \set{s \in [k] : \ell_{i, s} \textrm{ is positive}}$ and let $N_i := \set{s \in [k] : \ell_{i, s} \textrm{ is negative}}$ denote the indices of positive and negative literals in $C_i$ respectively. Given an assignment $\vec{y} \in \{0,1\}^n$ to the variables of $\Phi$, let $S_i(\vec{y})$ denote the indices of literals in $C_i$ satisfied by $\vec{y}$. I.e., $S_i(\vec{y}) := \set{s \in P_i : a_{\ind(\ell_{i,s})} = 1} \cup \set{s \in N_i : a_{\ind(\ell_{i,s})} = 0}$.

\subsection{Hardness assumptions}

\begin{definition}[SETH; {\cite{journals/jcss/ImpagliazzoPZ01}}]
	For every $\eps > 0$ there exists a $k = k(\eps) \in \Z^+$ such that no algorithm solves $k$-SAT on $n$ variables in $2^{(1-\eps)n}$ time.
\end{definition}

In his Ph.D. thesis, Manurangsi~\cite{Manurangsi2019} gave one possible definition of Gap-SETH.
\begin{definition}[Gap-SETH; {\cite[Conjecture 12.1]{Manurangsi2019}}]
For every $\eps > 0$ there exist $k = k(\eps) \in \Z^+$ and $\delta = \delta(\eps) > 0$ such that there is no algorithm that can distinguish between a $k$-SAT formula with $n$ variables that is satisfiable and one that has value less than $1 - \delta$ in $2^{(1-\eps)n}$ time.
\label{def:gap-seth}
\end{definition}
We will show that $\CVP_p$ cannot be approximated to within some factor $\gamma_{\eps} > 1$ in $2^{(1-\eps)n}$ time assuming Gap-SETH. Unfortunately, $\gamma_{\eps}$ decays as a function of $\eps$. However, our reduction from Gap-$k$-SAT to $\CVP_p$ can be adapted to a reduction from \emph{any} Gap-$k$-CSP to $\CVP_p$ with the same relevant parameters. (Namely, our reduction maps CSP instances on $n$ variables to CVP(P) instances of rank $n$.)

We will also use non-uniform variants of ETH and SETH to prove hardness results about $\CVPP_p$.

\begin{definition}[Non-uniform ETH]
There is no family of circuits of size $2^{o(n)}$ that solves $3$-SAT instances on $n$ variables.
\end{definition}

\begin{definition}[Non-uniform SETH]
For every $\eps > 0$ there exists a $k = k(\eps) \in \Z^+$ such that no family of circuits of size $2^{(1-\eps)n}$ solves $k$-SAT instances on $n$ variables.
\end{definition}

Our results are also quite robust to how we define non-uniform (S)ETH. For example, one of our main results about the complexity of $\CVPP_p$ roughly says that assuming non-uniform ETH (as stated above) there is no subexponential-sized family of circuits that decides $\CVPP_p$ for $p \neq 2$. However, if we were to change non-uniform ETH to say that there is no $2^{o(n)}$-time algorithm using $\poly(n)$ advice, then we would get a corresponding statement for $\CVPP_p$: that there is no $2^{o(n)}$-time algorithm for $\CVPP_p$ using $\poly(n)$ advice.

Interestingly, many of our results only depend on weaker versions of these hypotheses, where we replace an assumption about the hardness of $k$-SAT with an assumption about the hardness of Max-$k$-SAT or even weighted Max-$k$-SAT.

\subsection{Linear algebra}
We recall that an \emph{affine $k$-cube} in $\F_2^n$ is $\{ \vec{y}_0 + \sum_{j \in W} \vec{y}_j \ : \ W \subseteq \{1,\ldots, k\}\}$ for some $\vec{y}_0 \in \F_2^n$ and linearly independent $\vec{y}_1,\ldots, \vec{y}_k \in \F_2^n$.

We say that functions $f_0, \ldots, f_n : \R \to \R$ are linearly independent over the reals if given $a_0, \ldots, a_n \in \R$, the sum $\sum_{i=0}^n a_i f_i(x)$ is identically zero (is equal to $0$ for all $x \in \R$) only if $a_0 = \cdots = a_n = 0$. We say that $f \in C^k$ if the first $k$ derivatives of $f$ exist and are continuous, $f \in C^{\infty}$ if $f$ has derivatives of all orders, and that $f$ is \emph{analytic} if $f \in C^{\infty}$ and if the Taylor series of $f$ expanded around any point $x$ in the domain converges to $f$ in some neighborhood of $x$. We say that $f \in C^k(a, b)$ if the first $k$ derivatives of $f$ exist and are continuous on the (open) interval $(a, b)$ (we define $f \in C^{\infty}(a, b)$ and $f$ being analytic on $(a, b)$ analogously).

\begin{definition}
We define the \emph{Wronskian} of $f_0, \ldots, f_n \in C^n(a, b)$ to be $\det(M)$, where $M$ is the $(n + 1) \times (n + 1)$ matrix defined by
\[
M := \begin{pmatrix}
f_0(x) & f_1(x) & \cdots & f_n(x) \\
\frac{d}{dx} f_0(x) & \frac{d}{dx} f_1(x) & \cdots & \frac{d}{dx} f_n(x) \\
\vdots & \vdots & \ddots & \vdots \\
\frac{d^n}{dx^n} f_0(x) & \frac{d^n}{dx^n} f_1(x) & \cdots & \frac{d^n}{dx^n} f_n(x) \\
\end{pmatrix}
\ 
\]
for $x \in (a, b)$.
\label{def:wronskian}
\end{definition}

Because the derivative is a linear operator, we have the following.

\begin{fact}
Functions $f_0, \ldots, f_n$ are linearly independent over the reals if their Wronskian exists and is not identically zero on some interval $(a, b)$. 
\label{fct:wronskian-lin-ind}
\end{fact}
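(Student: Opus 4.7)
The plan is to prove the contrapositive: if $f_0, \ldots, f_n \in C^n(a,b)$ are linearly dependent over the reals, then their Wronskian $W(x) := \det(M(x))$ is identically zero on $(a,b)$.

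First I would assume the linear dependence: there exist constants $a_0, \ldots, a_n \in \R$, not all zero, such that $\sum_{i=0}^n a_i f_i(x) = 0$ for every $x \in \R$. Restricting this identity to $(a,b)$ and using that each $f_i \in C^n(a,b)$, I differentiate $k$ times for each $k = 1, 2, \ldots, n$, invoking the linearity of the derivative operator, to obtain the system
\[
\sum_{i=0}^n a_i \frac{d^k}{dx^k} f_i(x) = 0 \qquad \text{for all } x \in (a,b) \text{ and } k = 0, 1, \ldots, n.
\]

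Next, for each fixed $x \in (a,b)$, these $n+1$ equations say exactly that the nonzero column vector $(a_0, \ldots, a_n)^T$ lies in the kernel of the Wronskian matrix $M(x)$ from Definition~\ref{def:wronskian}. Hence $M(x)$ is singular and $W(x) = \det(M(x)) = 0$. Since $x \in (a,b)$ was arbitrary, $W$ vanishes identically on $(a,b)$, which is the desired contrapositive.

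There is essentially no obstacle here, as this is a classical consequence of the linearity of differentiation. The one mild subtlety worth flagging is a mismatch between domains: the text's definition of linear independence requires the identity $\sum_i a_i f_i \equiv 0$ to hold on all of $\R$, whereas the Wronskian hypothesis only concerns the interval $(a,b)$. Fortunately the proof uses the easier direction of this mismatch, since an identity on $\R$ trivially restricts to one on $(a,b)$, and that restriction is all we need in order to differentiate and conclude.
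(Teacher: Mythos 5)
Your proof is correct and matches the paper's approach: the paper states this as a Fact without a full proof, giving only the one-line justification ``Because the derivative is a linear operator,'' which is exactly the contrapositive argument you spell out. Your extra remark about the domain mismatch (identity on $\R$ versus $(a,b)$) is a sensible clarification and does not change the substance.
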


\subsection{Discrete Fourier analysis}
\label{subsec:discrete-fourier}

We will use several basic concepts from discrete Fourier analysis. We briefly review these concepts here; see~\cite{books/daglib/0033652} for a comprehensive survey.

The goal of discrete Fourier analysis is to analyze boolean functions $f: \pmo^k \to \R$ by representing them as multilinear polynomials. Every such function $f$ has such a representation, called its \emph{Fourier expansion}:
\[
f(\vec{x}) = \sum_{S \subseteq [k]} \fh(S) \cdot \chi_S(\vec{x}) \ .
\]
Here the functions $\chi_S(\vec{x}) := \prod_{i \in S} x_i$ are called \emph{Fourier characters}, and the values $\fh(S) \in \R$ are their corresponding \emph{Fourier coefficients}. The Fourier characters form an orthonormal basis of the function space $\mathcal{F} := \set{f : (f: \pmo^k \to \R)}$ equipped with the inner product $\iprod{f, g} := \E_{\vec{x} \sim \pmo^k}[f(\vec{x})g(\vec{x})]$.

The function space $\mathcal{F}$ is isomorphic as a vector space to $\R^{2^k}$ by the mapping $f \mapsto (f(\vec{x}))_{\vec{x} \in \pmo^k}$, where $(f(\vec{x}))_{\vec{x} \in \pmo^n}$ is the vector representing the output table of $f$. Applying this mapping to the Fourier characters $\chi_S$ results in vectors
\begin{equation}
\vec{v}_S := (\prod_{i \in S} x_i)_{\vec{x} \in \pmo^k} \ ,
\label{eq:vs}
\end{equation}
which (up to scaling) in turn form an orthonormal basis of $\R^{2^k}$ equipped with the standard inner product. Applying this mapping to a boolean function yields
\[
f(x) = \sum_{S \subseteq [k]} \fh(S) \cdot \chi_S(\vec{x}) \mapsto \sum_{S \subseteq [k]} \fh(S) \cdot \vec{v}_S \ .
\]

The key property of the Fourier characters that we will need is the recurrence relation
\begin{equation}
\label{eq:fourier-recurrence}
	\{ \vec{v}_{T} \ : \ T \subset [k] \} = \{(\vec{v}_{S}, \pm \vec{v}_{S})  \ : \ S \subset [k-1]\}
	\; ,
\end{equation}
which can be verified by inspection.

\subsection{The gamma function}

For $x,y  \in \C$, we adopt the convention that $x^{y} := \exp(\log(x) y)$ and $\sqrt{x} = \exp(\log(x)/2)$, where $\log(x)$ is the principal branch of the logarithm, satisfying $- \pi < \Im(\log(x)) \leq \pi$. 

The $\Gamma$ function is defined as 
\[
	\Gamma(x) := \int_0^\infty t^{x-1} \exp(-t) {\rm d} t
	\; .
\] 
This integral converges for $\Re(x) > 0$, and the function can be analytically continued to the entire complex plane except non-positive integer $s$, where the function has a simple pole. The inverse of the $\Gamma$ function, $1/\Gamma(x)$ is an entire function with zeros at all non-positive integers. The $\Gamma$ function satisfies the functional equation $\Gamma(x+1) = x\Gamma(x)$. In particular, for positive integers $n$, $\Gamma$ satisfies $\Gamma(n) = (n-1)!$.

We will need the following striking identity due to Ramanujan (see, e.g.,~\cite[Page 2]{MORFormulasTheorems66}).

\begin{theorem}
	\label{thm:ramanujan}
	For any positive integer $k$ and $x \in \C$ with $x \notin \{i,2i,\ldots, ki\}$,
	\[
		\frac{\Gamma(k+1)^2}{\Gamma(k+ix+1)\Gamma(k-ix+1)} =  \frac{\sinh(\pi x)}{\pi x} \prod_{j=1}^k (1+x^2/j^2)^{-1}
		\; .
	\]
	In particular, this quantity is positive and monotonically decreasing in $k$ for real $x \neq0$.
\end{theorem}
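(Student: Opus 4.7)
The plan is to reduce the identity to a single instance of Euler's reflection formula by peeling off the integer shifts in the Gamma functions with the functional equation $\Gamma(z+1) = z\Gamma(z)$. First I would iterate this functional equation $k$ times to write
\[
\Gamma(k+ix+1) = \Gamma(1+ix)\prod_{j=1}^{k}(j+ix), \qquad \Gamma(k-ix+1) = \Gamma(1-ix)\prod_{j=1}^{k}(j-ix),
\]
so that pairing factors gives $\Gamma(k+ix+1)\Gamma(k-ix+1) = \Gamma(1+ix)\Gamma(1-ix)\prod_{j=1}^{k}(j^2+x^2)$. Since $\Gamma(k+1)^2 = (k!)^2 = \prod_{j=1}^{k} j^2$, dividing yields
\[
\frac{\Gamma(k+1)^2}{\Gamma(k+ix+1)\Gamma(k-ix+1)} = \frac{1}{\Gamma(1+ix)\Gamma(1-ix)} \prod_{j=1}^{k}\frac{1}{1+x^2/j^2}.
\]
This already isolates the claimed finite product; it remains to identify the prefactor with $\sinh(\pi x)/(\pi x)$.

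For this prefactor I would invoke Euler's reflection formula $\Gamma(z)\Gamma(1-z) = \pi/\sin(\pi z)$ at $z = ix$, together with $\Gamma(1+ix) = ix\,\Gamma(ix)$, to obtain
\[
\Gamma(1+ix)\Gamma(1-ix) = ix\cdot\Gamma(ix)\Gamma(1-ix) = \frac{ix\,\pi}{\sin(i\pi x)} = \frac{\pi x}{\sinh(\pi x)},
\]
where in the last step I use $\sin(i\pi x) = i\sinh(\pi x)$. Taking reciprocals completes the identity. Note that the hypothesis $x \notin \{i,2i,\ldots,ki\}$ ensures that none of the factors $j^2 + x^2$ in the expansion of the Gamma product vanishes, so that the division is legitimate (and in particular the LHS has no pole), while for real $x$ the prefactor $\Gamma(1\pm ix)$ is well-defined since $1 \pm ix$ avoids the non-positive integers.

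Finally, for the positivity and monotonicity claim for real $x \neq 0$, I would read them off directly from the product representation derived above. Each factor $1 + x^2/j^2$ is real and strictly greater than $1$, so each $(1+x^2/j^2)^{-1}$ lies in $(0,1)$; combined with $\sinh(\pi x)/(\pi x) > 0$ (which is clear since this function is even in $x$ and strictly positive for $x > 0$), the product is positive. Replacing $k$ by $k+1$ multiplies the expression by the extra factor $(1+x^2/(k+1)^2)^{-1} \in (0,1)$, so the sequence is strictly decreasing in $k$. There is no serious obstacle here — the whole proof is a direct computation once one recognizes that Euler reflection plus the functional equation do all the work; the only care needed is tracking the branch issues around $\sin(i\pi x)$ and verifying that the excluded values of $x$ are exactly the ones that would cause factor cancellations.
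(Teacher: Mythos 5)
The paper does not prove this identity; it simply cites it (attributing it to Ramanujan) from \cite[Page 2]{MORFormulasTheorems66}, so there is no paper proof to compare against. Your argument is correct and is the standard elementary derivation: iterate $\Gamma(z+1)=z\Gamma(z)$ to peel off the integer shifts, pair conjugate factors $(j+ix)(j-ix)=j^2+x^2$, and reduce the remaining prefactor $1/(\Gamma(1+ix)\Gamma(1-ix))$ to $\sinh(\pi x)/(\pi x)$ via Euler's reflection formula at $z=ix$ together with $\sin(i\pi x)=i\sinh(\pi x)$; the positivity and monotonicity claims for real $x\neq 0$ then fall out of the product representation exactly as you describe. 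One small bookkeeping remark: the reflection step requires $ix\notin\Z$, so as written the computation does not directly cover $x=0$ or $x=ni$ with $n$ a negative integer or $|n|>k$ (points the theorem permits); since both sides are meromorphic in $x$ and agree off a discrete set, this is handled by continuity/analytic continuation, and you already flag the need to track such cases. (You might also note that the theorem's stated exclusion set is arguably incomplete --- at $x=-ji$ for $1\le j\le k$ the right-hand side is a $0\cdot\infty$ indeterminate form while the left-hand side is a finite nonzero rational --- but that is a quirk of the paper's statement, not a flaw in your proof, and is irrelevant for the paper's application where $x$ is a positive real.)
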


\section{Isolating parallelepipeds in \texorpdfstring{$\ell_p$}{l\_p} norms for all non-integer \texorpdfstring{$p$}{p}}
\label{sec:other-p}

Our first new result is a strengthening of a result in~\cite{conf/focs/BennettGS17}, which asserts that for every fixed $k \in \Z^+$ there exist $(p, k)$-isolating parallelepipeds for \emph{almost every} $p \in [1, \infty) \setminus 2\Z$, to a result showing that this is true for \emph{every} $p \in [1, \infty) \setminus 2\Z$. We also show that there exist $(p, k)$-isolating parallelepipeds when $k \leq p$. Moreover, we show that these are the only cases in which isolating parallelepipeds exist, and we therefore obtain a complete characterization of the values of $p$ and $k$ for which these objects exist. (Furthermore, our isolating parellelepipeds are computable if $p$ is computable.)

Our construction generalizes the approach from~\cite{conf/focs/BennettGS17}, and follows the same high-level structure. We start by showing that it suffices to ``define isolating parallelepipeds over $\pmo$ instead of $\bit$,'' i.e., that if there exist $V = (\vec{v}_1, \ldots, v_k) \in \R^{d \times k}$ and $\vec{t}^* \in \R^d$ that satisfy $\norm{V\vec{y} - \vec{t}^*}_p = 1$ for $\vec{y} \in \pmo^k \setminus \set{-\vec{1}}$ and $\norm{V(-\vec{1}) - \vec{t}^*}_p > 1$, then there exists a $(p, k)$-isolating parallelepiped.

We then define a family of $k$-parallelepipeds $V \in \R^{2^k \times k}, \vec{t}^* \in \R^{2^k}$ parameterized by $2^k$ numbers, $\alpha_{\vec{u}} \ge 0$ for $\vec{u} \in \pmo^k$, and a number $t^*$. Specifically, the row of $V$ indexed by $\vec{u} \in \pmo^k$ is equal to $\alpha_{\vec{u}}^{1/p} \cdot \vec{u}^T$ and the coordinate of $(\vec{t}^*)_{\vec{u}} = \alpha_{\vec{u}}^{1/p} \cdot t^*$.
(Throughout this section, we will adopt the convention that vectors $\vec{v} \in \R^{2^k}$ for some $k \in \Z^+$ are indexed by elements in $\pmo^k$ in lexicographic order. We adopt an analogous convention for rows (resp. columns) of matrices of the form $M \in \R^{2^k \times m}$ (resp. $M \in \R^{m \times 2^k}$) for some $m$.)
Figure~\ref{fig:param-ip-example} shows the form of such a $k$-parallelepiped when $k = 3$.

We observe that for such a family of $k$-parallelepipeds and $\vec{y} \in \pmo^k$, $\norm{V\vec{y} - \vec{t}^*}_p^p = \sum_{\vec{u}} \alpha_{\vec{u}} \abs{\iprod{\vec{u}, \vec{y}} - t^*}^p$. I.e., for fixed $\vec{y}$ and $\vec{t}^*$, $\norm{V\vec{y} - \vec{t}^*}_p^p$ is linear in the values $\alpha_{\vec{u}}$. This leads us to define the $2^k \times 2^k$ matrix $H_{k, p}(t^*)$ whose entry in row $\vec{u}$ and column $\vec{y}$ is equal to $\abs{\iprod{\vec{u}, \vec{y}} - t^*}^p$. Then, for non-negative $\vec{\alpha} = (\alpha_{\vec{u}})_{\vec{u} \in \pmo^k}$, the coordinate of $H_{k, p}(t^*) \cdot \vec{\alpha}$ indexed by $\vec{y}$ is equal to $\norm{V\vec{y} - \vec{t}^*}_p^p$.

In order to show that there exist choices of $\vec{\alpha}$ and $t^*$ such that $V$ and $\vec{t}^*$ form a ``$\pmo$ isolating parallelepiped,'' it therefore suffices to find non-negative $\vec{\alpha}$ such that $H_{k, p}(t^*) \cdot \vec{\alpha} = (1 + \eps, 1, 1, \ldots, 1)^T$ for some $\eps > 0$. We then use the following proof strategy for finding such $\vec{\alpha}$: (1) Show that for certain values of $k$ and $p$, $H_{k, p}(t^*)$ is non-singular so that we can compute $\vec{\alpha} = H_{k, p}(t^*)^{-1} \cdot (1 + \eps, 1, 1, \ldots, 1)^T$, and (2) show that if we pick $\eps > 0$ to be small enough then $\vec{\alpha}$ computed this way will be non-negative. In fact, there is nothing special about the vector $(1 + \eps, 1, 1, \ldots, 1)^T$, and we show a similar result for all vectors in some open neighborhood of $\vec{1}$, which will prove useful in proving Theorem~\ref{thm:intro_hard_approx}.

\subsection{A characterization of isolating parallelepipeds and SETH-hardness}

We now present the main result of this section and show how it implies both a full characterization of the existence of isolating parallelepipeds and the SETH-hardness of $\CVP_p$ for $p \notin 2\Z$.

\begin{theorem}
	For $k \in \Z^+$ and $p \in [1, \infty)$ if $p$ satisfies either (1) $p \notin \Z$ or (2) $p \geq k$, there exists a $(p, k)$-isolating parallelepiped $V \in \R^{2^k \times k}$, $\vec{t}^* \in \R^{2^k}$. 
	Moreover, if $p$ is computable then there is an algorithm that on input $k$ and $p$ outputs such an isolating parallelepiped.
	\label{thm:more-ips}
\end{theorem}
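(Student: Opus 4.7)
The plan is to extend the template of~\cite{conf/focs/BennettGS17}. First I would reduce to constructing a ``$\pmo$-analog'' of a $(p,k)$-isolating parallelepiped: $V'\in\R^{2^k\times k}$ and $\vec t'\in\R^{2^k}$ with $\norm{V'\vec y-\vec t'}_p=1$ for $\vec y\in\pmo^k\setminus\{-\vec 1\}$ and $\norm{V'(-\vec 1)-\vec t'}_p>1$. The affine change of variables $\vec y=2\vec x-\vec 1$ turns such an object into a standard $(p,k)$-isolating parallelepiped $V=2V'$, $\vec t^*=\vec t'+V'\vec 1$. Next I parameterize candidates by nonnegative weights $\alpha_{\vec u}\ge 0$ for $\vec u\in\pmo^k$ and a scalar $t^*\in\R$: the row of $V'$ indexed by $\vec u$ is $\alpha_{\vec u}^{1/p}\vec u^T$ and $(\vec t')_{\vec u}=\alpha_{\vec u}^{1/p} t^*$. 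Then $\norm{V'\vec y-\vec t'}_p^p=\sum_{\vec u}\alpha_{\vec u}\,|\iprod{\vec u,\vec y}-t^*|^p$ is linear in $\vec\alpha$, so the isolating conditions reduce to finding nonnegative $\vec\alpha$ and $t^*$ solving $H_{k,p}(t^*)\vec\alpha=\vec r_\delta$, where $H_{k,p}(t^*)$ is the $2^k\times 2^k$ matrix with entries $|\iprod{\vec u,\vec y}-t^*|^p$ and $\vec r_\delta$ has coordinate $(1+\delta)^p$ at the row indexed by $\vec y=-\vec 1$ and $1$ elsewhere, for some small $\delta>0$.

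\textbf{Diagonalization via Fourier characters.} The matrix $H_{k,p}(t^*)$ has the single-index form $H_{\vec u,\vec y}=g(\vec u\cdot\vec y)$, where $\vec u\cdot\vec y$ is componentwise product in $\pmo^k$ and $g(\vec z):=|\sum_i z_i-t^*|^p$. Reindexing by $\vec w=\vec u\cdot\vec y$, one checks that each Fourier character $\vec v_S=(\chi_S(\vec y))_{\vec y}$ from Section~\ref{subsec:discrete-fourier} is an eigenvector with eigenvalue $\lambda_S(t^*)=2^k\hat g(S)$. Grouping $\vec w\in\pmo^k$ by the number $\ell$ of $+1$ coordinates gives
\begin{equation*}
\lambda_S(t^*)\;=\;\sum_{\ell=0}^{k} c_\ell^{(S)}\,\bigl|2\ell-k-t^*\bigr|^p,\qquad c_\ell^{(S)}\;:=\;[x^\ell]\,(x-1)^{|S|}(x+1)^{k-|S|},
\end{equation*}
and $\det H_{k,p}(t^*)=\pm\prod_S\lambda_S(t^*)$. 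Since $(x-1)^{|S|}(x+1)^{k-|S|}$ is a nonzero polynomial of degree $k$ (its leading coefficient is $1$ and constant term is $(-1)^{|S|}$), each $\lambda_S$ is a nontrivial integer-coefficient combination of at least two of the shifted powers $\{(t^*-a_\ell)^p\}$ with $a_\ell=2\ell-k$. Nonsingularity of $H_{k,p}(t^*)$ thus reduces to showing that no $\lambda_S$ vanishes identically in $t^*$.

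\textbf{Linear independence of shifted $p$-th powers.} The key analytic step is: on any open interval $I\subset\R$ avoiding the integers $\{a_0,\ldots,a_k\}$, the functions $\{(t^*-a_\ell)^p:\ell\in L_S\}$ are $\R$-linearly independent, where $L_S\subseteq\{0,\ldots,k\}$ indexes the nonzero $c_\ell^{(S)}$. Suppose $\sum_{\ell\in L_S}e_\ell(t^*-a_\ell)^p\equiv 0$ on $I$. Differentiating $j$ times yields $\sum_\ell e_\ell\,p(p-1)\cdots(p-j+1)(t^*-a_\ell)^{p-j}=0$ for $j=0,1,\ldots,|L_S|-1$. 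The falling factorial $p(p-1)\cdots(p-j+1)$ is nonzero for all $j\le|L_S|-1\le k$ precisely when $p\notin\{0,1,\ldots,k-1\}$, which is exactly the hypothesis $p\notin\Z$ or $p\ge k$. Fixing $t^*_0\in I$, setting $b_\ell:=t^*_0-a_\ell$ (nonzero and distinct), and substituting $d_\ell:=e_\ell\,b_\ell^{p-(|L_S|-1)}$, the system becomes $\sum_\ell d_\ell b_\ell^m=0$ for $m=0,\ldots,|L_S|-1$, a Vandermonde system in the distinct $b_\ell$. Hence $\vec d=\vec 0$ and thus $\vec e=\vec 0$, proving linear independence. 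Consequently no $\lambda_S(t^*)$ is identically zero on any such $I$. The main obstacle of the whole proof lives here: this linear-independence step is precisely where the hypothesis enters, and it is the strengthening over~\cite{conf/focs/BennettGS17}.

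\textbf{Picking $t^*$ and ensuring positivity.} Each $\lambda_S(t^*)$ is piecewise analytic in $t^*$ with finitely many pieces (broken at the $a_\ell$), so its zero set is discrete. Thus $H_{k,p}(t^*)$ is invertible for all but countably many $t^*$, and when $p$ is computable, a suitable $t^*$ (say a rational outside $[-k,k]$ chosen to avoid the finitely many exceptional values in some bounded search region) is effectively computable. For any such $t^*$, the all-ones vector $\vec 1=\vec v_\emptyset$ has eigenvalue $\lambda_\emptyset(t^*)=\sum_\ell\binom{k}{\ell}|2\ell-k-t^*|^p$, which is strictly positive for $k\ge 1$, so the $\delta=0$ solution $\vec\alpha_0=\vec 1/\lambda_\emptyset(t^*)$ is strictly positive. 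By continuity of $\delta\mapsto H_{k,p}(t^*)^{-1}\vec r_\delta$, for all sufficiently small $\delta>0$ the solution $\vec\alpha_\delta$ remains entrywise strictly positive, yielding a valid $\pmo$-isolating parallelepiped and hence, after the change of variables, the desired $(p,k)$-isolating parallelepiped.
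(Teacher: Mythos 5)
Your proposal is correct and mirrors the paper's proof essentially step for step: reduction to the $\pmo$ formulation, parameterization by weights $\alpha_{\vec u}$ and a shift $t^*$, Fourier diagonalization of $H_{k,p}(t^*)$, linear independence of the shifted $p$-th powers using the hypothesis that the falling factorials $(p)_j$ are nonzero (your differentiate-and-Vandermonde step is precisely the Wronskian computation in Proposition~\ref{prop:wronskian}), and a small perturbation of $\vec\alpha=\vec 1/\lambda_\emptyset$ to get positivity (Proposition~\ref{prop:finding-alpha}). The explicit generating-function formula $c_\ell^{(S)}=[x^\ell](x-1)^{|S|}(x+1)^{k-|S|}$ and the observation that $H_{k,p}(t^*)$ is a group-convolution operator on $\pmo^k$ are pleasant refinements of the paper's presentation (which uses a block recursion and only records that the coefficients are integers with leading coefficient $1$), but the route is the same.
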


By combining Theorem~\ref{thm:more-ips}, the impossibility results in Corollary~\ref{cor:no-ips-kgeqp}, and the isolating parallelepiped construction in~\cite{conf/focs/BennettGS17} for odd integer $p$, we obtain a complete characterization of the values of $p$ and $k$ for which there exist $(p, k)$-isolating parallelepipeds.

\begin{theorem}
	There exists a $(p, k)$-isolating parallelepiped for $k \in \Z^+$ and $p \in [1, \infty)$ if and only if $p$ satisfies either (1) $p \notin 2\Z$ or (2) $p \geq k$. 
	Moreover, there is an algorithm that on input $k \in \Z^+$ and any computable $p \in [1, \infty)$ with either (1) $p \notin 2\Z$ or (2) $p \geq k$, outputs $V \in \R^{2^k \times k}$ and $\vec{t}^* \in \R^{2^k}$ that define a $(p, k)$-isolating parallelepiped.
	\label{thm:ip-characterization}
\end{theorem}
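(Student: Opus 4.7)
The plan is to prove the theorem as a direct combination of three previously established results that together cover all parameter regimes. The condition ``$p \notin 2\Z$ or $p \geq k$'' fails precisely when $p$ is an even positive integer strictly less than $k$, so both directions reduce to combining existence constructions in three overlapping regimes with one impossibility result.

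For the ``if'' direction, I split into three subcases. If $p \notin \Z$, or if $p \geq k$, then Theorem~\ref{thm:more-ips} directly yields both the existence of a $(p,k)$-isolating parallelepiped and a uniform algorithm producing one from $k$ and the computable $p$. The remaining case is when $p$ is an odd integer with $p < k$; here Theorem~\ref{thm:more-ips} does not apply, but the explicit construction of~\cite{conf/focs/BennettGS17} yields a $(p,k)$-isolating parallelepiped for every odd positive integer $p$, and this construction is trivially computable. Taken together, the three subcases cover every $(p,k)$ satisfying the hypothesis.

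For the ``only if'' direction, suppose $p$ is an even integer with $p < k$; we must show no $(p,k)$-isolating parallelepiped exists. This is precisely the content of Corollary~\ref{cor:no-ips-kgeqp}, whose proof is given in Lemma~\ref{lem:incl-excl}, so we simply invoke it. The idea behind that impossibility argument, which I expect to be the technical heart of the whole theorem, is that for even integer $p$ the function $x \mapsto |x|^p = x^p$ is a polynomial of degree $p$, so $\norm{V\vec{x}-\vec{t}^*}_p^p$ viewed as a function of $\vec{x} \in \bit^k$ becomes a multilinear polynomial of total degree at most $p < k$; an inclusion–exclusion identity over the $2^k$ vertices of the cube then forces the would-be isolated vertex $-\vec{t}^*$ to have the same $\ell_p$ norm as the common value attained on $\bit^k \setminus \{\vec{0}\}$, contradicting the second condition in Definition~\ref{def:ip}.

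Finally, the algorithmic claim follows by dispatching on $p$: if the input satisfies $p \geq k$ we invoke the algorithm of Theorem~\ref{thm:more-ips}; otherwise we run that algorithm in parallel with the odd-integer constructions of~\cite{conf/focs/BennettGS17} for each odd $i \in \{1,3,\ldots,k-1\}$ (attempting to certify $p = i$ to the precision afforded by the computable representation). Under the stated hypothesis at least one branch succeeds in finite time. Beyond the impossibility argument in Lemma~\ref{lem:incl-excl}, which is handled separately, the theorem reduces to a routine case analysis and no substantial obstacle remains.
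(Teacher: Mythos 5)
Your proposal is correct and matches the paper's proof exactly: it combines Theorem~\ref{thm:more-ips} (covering $p \notin \Z$ and $p \geq k$), the odd-integer construction of~\cite{conf/focs/BennettGS17} (Proposition 4.4 and Corollary 4.7 therein), and the impossibility result in Corollary~\ref{cor:no-ips-kgeqp} via Lemma~\ref{lem:incl-excl}. Your extra care about the algorithmic dispatch (dovetailing the two subroutines because deciding whether a computable $p$ lies in $\Z$ is not effective) is a sensible refinement of a detail that the paper glosses over.
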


\begin{proof}
	By Proposition 4.4 and Corollary 4.7 in~\cite{conf/focs/BennettGS17}, such parallelepipeds and the corresponding algorithm exist for odd integers $p$.  Theorem~\ref{thm:more-ips} shows that such parallelepipeds exist for all $p \geq k$ and all $p \notin \Z$, with corresponding algorithms for computable $p$. Corollary~\ref{cor:no-ips-kgeqp} shows that these are the only cases in which isolating parallelepipeds exist.
\end{proof}

The (Karp) reduction from (weighted Max-)$k$-SAT to $\CVP_p$ assuming the existence of computable $(p, k)$-isolating parallelepipeds given in~\cite[Theorem 3.2]{conf/focs/BennettGS17} immediately implies the following. (We actually show a strictly stronger reduction in Section~\ref{sec:gap-seth-hardness}.)

\begin{corollary}
	For every $\eps > 0$ and every computable $p \in [1, \infty) \setminus 2\Z$, there is no $2^{(1-\eps)n}$-time algorithm for $\CVP_p$ assuming W-Max-SAT-SETH. In particular, there is no $2^{(1-\eps)n}$-time algorithm for $\CVP_p$ assuming SETH.
	\label{cor:cvp-hardness}
\end{corollary}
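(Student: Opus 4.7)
The plan is to derive the corollary directly by plugging Theorem~\ref{thm:more-ips} (equivalently, Theorem~\ref{thm:ip-characterization}) into the generic reduction from weighted Max-$k$-SAT to $\CVP_p$ given in \cite[Theorem 3.2]{conf/focs/BennettGS17}, which takes as a black-box a \emph{computable} $(p,k)$-isolating parallelepiped and produces a Karp reduction from weighted Max-$k$-SAT on $n$ variables to $\CVP_p$ on a lattice of rank exactly $n$.

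Concretely, fix $\eps > 0$ and a computable $p \in [1,\infty) \setminus 2\Z$. By W-Max-SAT-SETH, there exists some $k = k(\eps) \in \Z^+$ such that weighted Max-$k$-SAT on $n$ variables cannot be solved in time $2^{(1-\eps)n}\cdot\poly$. The first step is to invoke Theorem~\ref{thm:more-ips} on this $(p,k)$: since $p \notin 2\Z$, we are in case $(1)$ of its hypothesis ($p \notin \Z$) or else $p$ is an odd integer, in which case the isolating parallelepipeds produced in \cite[Proposition 4.4 and Corollary 4.7]{conf/focs/BennettGS17} apply; in both regimes we obtain a matrix $V \in \R^{2^k \times k}$ and target $\vec{t}^* \in \R^{2^k}$ defining a $(p,k)$-isolating parallelepiped, with entries computable from $p$ and $k$.

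The second step is to feed this gadget into the construction sketched in Eq.~\eqref{eq:Phi_t}. Given a weighted Max-$k$-SAT instance with clauses $\phi_1,\ldots,\phi_m$ on variables $x_1,\ldots,x_n$, for each clause $\phi_i$ embed a shifted and reflected copy of $V$ as a matrix $\Phi_i$ acting on the variables appearing in $\phi_i$, together with target $\vec{t}_i$, so that $\|\Phi_i \vec{z} - \vec{t}_i\|_p^p$ equals $1$ if the restriction of $\vec{z}\in\{0,1\}^n$ to $\phi_i$'s variables satisfies $\phi_i$ and equals $\|\vec{t}^*\|_p^p > 1$ otherwise. Stacking the $\Phi_i$ on top of a sufficiently scaled identity gadget $2\alpha I_n$ (and appending $\alpha\vec{1}$ to the target) forces any closest lattice vector $\basis \vec{z}$ to satisfy $\vec{z}\in\{0,1\}^n$, after which $\|\basis\vec{z}-\vec{t}\|_p^p$ is an affine function of the weighted number of unsatisfied clauses. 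Hence solving $\CVP_p$ on this instance decides weighted Max-$k$-SAT. Since the resulting lattice has rank exactly $n$ and the reduction is polynomial time (using computability of the isolating parallelepiped), a hypothetical $2^{(1-\eps)n}$-time $\CVP_p$ algorithm would yield a $2^{(1-\eps)n}\cdot\poly(n,m)$-time algorithm for weighted Max-$k$-SAT on $n$ variables, contradicting W-Max-SAT-SETH.

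There is essentially no obstacle left at this stage: all the nontrivial work has been absorbed into Theorem~\ref{thm:more-ips} (the existence and computability of isolating parallelepipeds for every $k$ and every $p \notin 2\Z$), which plugs the only gap in \cite[Theorem 3.2]{conf/focs/BennettGS17}. The minor bookkeeping items are (i) verifying that the construction really produces a rank-$n$ lattice (so that the SETH hypothesis on $n$ variables transfers directly, without loss in the exponent), and (ii) confirming that the parameter $\alpha$ can be chosen with polynomial bit complexity so as to dominate the gadget norms---both handled exactly as in \cite{conf/focs/BennettGS17}. The second sentence of the corollary then follows because SETH is at least as strong as W-Max-SAT-SETH.
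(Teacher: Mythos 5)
Your proposal matches the paper's proof: the paper treats Corollary~\ref{cor:cvp-hardness} as an immediate consequence of Theorem~\ref{thm:ip-characterization} (the existence of computable $(p,k)$-isolating parallelepipeds for all $k$ when $p \notin 2\Z$, which combines Theorem~\ref{thm:more-ips} for non-integer $p$ with the BGS17 construction for odd integer $p$) plugged into the black-box reduction from weighted Max-$k$-SAT to rank-$n$ $\CVP_p$ given in~\cite[Theorem~3.2]{conf/focs/BennettGS17}, and you do exactly that, correctly handling the case split between $p \notin \Z$ and odd integer $p$ and correctly noting that SETH implies W-Max-SAT-SETH. Your extra exposition of the Eq.~\eqref{eq:Phi_t} construction is a reasonable unpacking of what the cited theorem does, but does not change the argument.
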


We also note that the ``in particular'' part of the above claim also holds for $p = \infty$ by~\cite[Theorem 6.5]{conf/focs/BennettGS17}, but that the reduction given in~\cite[Theorem 3.2]{conf/focs/BennettGS17} only works when $p$ is finite.

A natural question to ask is whether Corollary~\ref{cor:cvp-hardness} can be extended to $p \in 2\Z$ using a reduction that does not use isolating parallelepipeds. In Section~\ref{sec:limitations}, we give an impossibility result precluding a much larger class of reductions, which we call ``natural reductions.'' 

\subsection{A parameterized family of parallelepipeds}

We first recall the following simple observation from~\cite{conf/focs/BennettGS17}, which says that we can ``work over $\pmo$ instead of $\bit$'' when defining isolating parallelepipeds, which we will do in this section. 

\begin{lemma}
	There is an efficient algorithm that takes as input a matrix $V \in \R^{d^* \times k}$ and vector $\vec{t}^* \in \R^{d^*}$, and outputs a matrix $V' \in \R^{d^* \times k}$ and vector $\vec{t}' \in \R^{d^*}$ such that for all $\vec{z}\in \{0,1\}^n$ and all $p \geq 1$, $\|V' \vec{z} - \vec{t}' \|_p = \|V(\vec{1}_k - 2\vec{z}) - \vec{t}^*\|_p$, where $\vec{1}_k := (1,1,\ldots, 1)$.
\label{lem:ips-pmo-bit}
\end{lemma}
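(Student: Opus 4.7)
The plan is to produce $V'$ and $\vec{t}'$ by a direct affine computation so that the map $\vec{z} \mapsto V'\vec{z} - \vec{t}'$ on $\{0,1\}^k$ agrees with the map $\vec{z} \mapsto V(\vec{1}_k - 2\vec{z}) - \vec{t}^*$. Since both maps are affine in $\vec{z}$, it suffices to match their linear part and their constant term.

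Expanding the target expression gives
\[
V(\vec{1}_k - 2\vec{z}) - \vec{t}^* = -2V\vec{z} - (\vec{t}^* - V\vec{1}_k).
\]
So I would define
\[
V' := -2V \qquad \text{and} \qquad \vec{t}' := \vec{t}^* - V\vec{1}_k,
\]
both of which can clearly be computed efficiently from $V$ and $\vec{t}^*$ (just scale $V$ by $-2$ and subtract the row-sum vector of $V$ from $\vec{t}^*$). With this choice, for any $\vec{z} \in \R^k$ we get the exact identity
\[
V'\vec{z} - \vec{t}' = -2V\vec{z} - \vec{t}^* + V\vec{1}_k = V(\vec{1}_k - 2\vec{z}) - \vec{t}^*,
\]
and therefore $\|V'\vec{z} - \vec{t}'\|_p = \|V(\vec{1}_k - 2\vec{z}) - \vec{t}^*\|_p$ for every $p \geq 1$ (in fact for any norm at all). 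Restricting to $\vec{z} \in \{0,1\}^k$ yields the claim.

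There is no real obstacle here; the lemma is just recording the observation that the affine change of variables $\vec{y} = \vec{1}_k - 2\vec{z}$, which bijects $\{0,1\}^k$ with $\pmo^k$, can be absorbed into the parallelepiped description by rescaling $V$ and shifting $\vec{t}^*$. The only thing worth noting is that $\vec{t}'$ depends on $V$ (through $V\vec{1}_k$), so one must not forget the shift when transferring the parallelepiped from the $\pmo$ picture back to the $\{0,1\}$ picture.
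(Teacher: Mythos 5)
Your proof is correct and takes the obvious direct approach: expand $V(\vec{1}_k-2\vec{z})-\vec{t}^*$, read off the affine map, and set $V':=-2V$, $\vec{t}':=\vec{t}^*-V\vec{1}_k$ to get the two vectors to agree \emph{identically} (not just in norm). The paper states the lemma as a recalled observation from its predecessor and does not spell out a proof, so there is nothing to contrast against; your argument is the canonical one. One small note: the paper's Figure~\ref{fig:param-ip-example} caption instead uses $V':=2V$, $\vec{t}':=V\vec{1}+\vec{t}^*$, which produces the \emph{negation} of the target vector; that choice also works since $\|-\vec{x}\|_p=\|\vec{x}\|_p$, but your choice has the slight advantage of matching the vectors on the nose. (Also, the ``$\vec{z}\in\{0,1\}^n$'' in the statement is evidently a typo for $\{0,1\}^k$, which you correctly interpret.)
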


\begin{figure}
\begin{align*}
&V :=
\begin{pmatrix*}[c]
\alpha_{(-1, -1, -1)}^{1/p} & 0 & \cdots & 0 \\
0 & \alpha_{(-1, -1, 1)}^{1/p} & \cdots & 0 \\
\vdots & \vdots & \ddots & \vdots \\
0 & 0 & \cdots & \alpha_{(1, 1, 1)}^{1/p}
\end{pmatrix*}
\cdot
\begin{pmatrix*}[r]
-1 & -1 & -1 \\
-1 & -1 & 1 \\
-1 & 1 & -1 \\
1 & -1 & -1 \\
-1 & 1 & 1 \\
1 & -1 & 1 \\
1 & 1 & -1 \\
1 & 1 & 1 \\
\end{pmatrix*} \\ 
&\vec{t}^* := 
\begin{pmatrix*}[c]
\alpha_{(-1, -1, -1)}^{1/p} & 0 & \cdots & 0 \\
0 & \alpha_{(-1, -1, 1)}^{1/p} & \cdots & 0 \\
\vdots & \vdots & \ddots & \vdots \\
0 & 0 & \cdots & \alpha_{(1, 1, 1)}^{1/p}
\end{pmatrix*}
\cdot
\begin{pmatrix*}[r]
t^* \\
t^* \\
t^* \\
t^* \\
t^* \\
t^* \\
t^* \\
t^* \\
\end{pmatrix*}
\end{align*}
\caption{$V$ and $\vec{t}^*$ of the form defined in Definition~\ref{def:param-ip} for $k = 3$ and $p \geq 1$. Lemma~\ref{lem:ips-pmo-bit} and Proposition~\ref{prop:finding-alpha} together assert that, for $p \in [1, \infty)$ where $p$ satisfies either (1) $p \notin \Z$ or (2) $p \geq 3$, there exist $(\alpha_{\vec{u}})_{\vec{u} \in \pmo^3}$ and $t^*$ such that $V' := 2V$, $(\vec{t}^*)' := V\vec{1} + \vec{t}^*$ form an isolating parallelepiped.}
\label{fig:param-ip-example}
\end{figure}

We next define a family of $k$-parallelepipeds $V \in \R^{2^k \times k}$,  $\vec{t}^* \in \R^{2^k}$ parameterized by (1) $2^k$ non-negative numbers $(\alpha_{\vec{u}})_{\vec{u}\in \pmo^k}$, where, for some $p \geq 1$, $\alpha_{\vec{u}}^{1/p}$ scales the row of $V$ and coordinate of $\vec{t}^*$ corresponding to $\vec{u} \in \pmo^k$, and (2) another number $t^* \in \R$.

\begin{definition}
For $p \in [1, \infty)$, $k \in \Z^+$, $\vec{\alpha} \in (\R^{\geq 0})^{2^k}$, and $t^* \in \R$, define the matrix $V = V(\vec{\alpha}) \in \R^{2^k \times k}$ and vector $\vec{t}^* = \vec{t}^*(\vec{\alpha}, t^*) \in \R^{2^k}$ as follows.
Set the row of $V$ indexed by $\vec{u}$ to be $\alpha_{\vec{u}}^{1/p} \cdot \vec{u}^T$, and set $\vec{t}^* := t^* \cdot (\alpha_{\vec{u}}^{1/p})_{\vec{u} \in \pmo^k}$.
\label{def:param-ip}
\end{definition}

I.e., $V$ is the matrix whose rows consist of vectors $\vec{u} \in \pmo^k$ scaled by corresponding weights $\alpha_{\vec{u}}^{1/p}$, and the coordinate of $\vec{t}^*$ indexed by $\vec{u}$ is equal to $\alpha_{\vec{u}}^{1/p} \cdot t^*$. (See Figure~\ref{fig:param-ip-example}.)
We also define another matrix, $H$, which we will use to relate our choice of parameters $\vec{\alpha}$ and $t^*$ to the value of $\norm{V\vec{y} - \vec{t}^*}_p^p$ for $\vec{y} \in \pmo^k$.

\begin{definition}
For $p \geq 1$ and an integer $k \geq 0$, define the matrix $H_{k, p}(t^*) \in \R^{2^k \times 2^k}$ by
$(H_{k, p}(t^*))_{\vec{u}, \vec{v}} := \abs{\iprod{\vec{u}, \vec{v}} - t^*}^p$ for $k \geq 1$, and define $H_{0, p}(t^*) := \abs{t^*}^p$.
\label{def:H-matrix}
\end{definition}

We next show that for $\vec{y} \in \pmo^k$, $\norm{V\vec{y} - \vec{t}^*}_p^p$ is equal to the inner product of $\vec{\alpha}$ with row $\vec{y}$ of $H_{k,p}(t^*)$.

\begin{lemma}
For $\vec{\alpha} \in (\R^{\geq 0})^{2^k}$ and $t^* \in \R$, let $V = V(\vec{\alpha})$ and let $\vec{t}^* = \vec{t}^*(\vec{\alpha}, t^*)$ be as defined in Definition~\ref{def:param-ip}. Then
\[
(H_{k, p}(t^*) \cdot \vec{\alpha})_{\vec{y}} = \norm{V\vec{y} - \vec{t}^*}_p^p \ .
\]
\label{lem:alpha-to-weights}
\end{lemma}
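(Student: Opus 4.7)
The proof is a direct, line-by-line unpacking of the two definitions, so the plan is simply to compute both sides and observe that they coincide. No clever step is needed; the only risk is notational slip-ups around the indexing convention (rows/columns of $V$ and coordinates of $\vec{t}^*$ are indexed by $\vec{u}\in\pmo^k$ in lex order, while the argument $\vec{y}$ ranges over the same set).

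First I would write out the $\vec{u}$-th coordinate of $V\vec{y}-\vec{t}^*$ using Definition~\ref{def:param-ip}. The row of $V$ indexed by $\vec{u}$ is $\alpha_{\vec{u}}^{1/p}\vec{u}^T$, so $(V\vec{y})_{\vec{u}} = \alpha_{\vec{u}}^{1/p}\iprod{\vec{u},\vec{y}}$, and $(\vec{t}^*)_{\vec{u}}=\alpha_{\vec{u}}^{1/p}\cdot t^*$. Therefore
\[
(V\vec{y}-\vec{t}^*)_{\vec{u}} \;=\; \alpha_{\vec{u}}^{1/p}\bigl(\iprod{\vec{u},\vec{y}}-t^*\bigr).
\]

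Next I would raise to the $p$-th power in the $\ell_p$ norm and sum over $\vec{u}$, using that $\alpha_{\vec{u}}\ge 0$ so $(\alpha_{\vec{u}}^{1/p})^p=\alpha_{\vec{u}}$:
\[
\norm{V\vec{y}-\vec{t}^*}_p^p \;=\; \sum_{\vec{u}\in\pmo^k}\alpha_{\vec{u}}\,\bigl|\iprod{\vec{u},\vec{y}}-t^*\bigr|^{p}.
\]
Finally, by Definition~\ref{def:H-matrix} we have $(H_{k,p}(t^*))_{\vec{y},\vec{u}}=|\iprod{\vec{y},\vec{u}}-t^*|^p$, so the right-hand side above is exactly $\sum_{\vec{u}}(H_{k,p}(t^*))_{\vec{y},\vec{u}}\,\alpha_{\vec{u}}=(H_{k,p}(t^*)\cdot\vec{\alpha})_{\vec{y}}$, using the symmetry $\iprod{\vec{u},\vec{y}}=\iprod{\vec{y},\vec{u}}$. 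Combining the two displays yields the claimed identity, completing the proof.

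There is no real obstacle here — the only things to be careful about are: (i) that the exponent $1/p$ inside the absolute value combines with the outer $p$ to give a clean factor of $\alpha_{\vec{u}}$ (which uses $\alpha_{\vec{u}}\ge 0$), and (ii) that the row/column indexing of $H_{k,p}(t^*)$ matches the orientation in which $H_{k,p}(t^*)\cdot\vec{\alpha}$ is applied. Both are handled by the symmetry of the inner product and the explicit form of Definition~\ref{def:H-matrix}.
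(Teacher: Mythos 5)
Your proof is correct and matches the paper's argument in substance: both expand $\norm{V\vec{y}-\vec{t}^*}_p^p$ as $\sum_{\vec{u}}\alpha_{\vec{u}}\abs{\iprod{\vec{u},\vec{y}}-t^*}^p$ and identify this with $(H_{k,p}(t^*)\cdot\vec\alpha)_{\vec{y}}$, differing only in which side of the equality serves as the starting point. The care you take with the $\alpha_{\vec{u}}^{1/p}$ cancellation and the row/column indexing of $H$ is exactly what the paper's proof implicitly relies on.
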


\begin{proof}
For $\vec{y} \in \pmo^k$,
\begin{align*}
(H_{k, p}(t^*) \cdot \vec{\alpha})_{\vec{y}} &= \sum_{\vec{u} \in \pmo^k} H_{k, p}(t^*)_{\vec{y}, \vec{u}} \cdot \alpha_{\vec{u}} \\
		                             &= \sum_{\vec{u} \in \pmo^k} \abs{\iprod{\alpha_{\vec{u}}^{1/p} \cdot \vec{u}, \vec{y}} - \alpha_{\vec{u}}^{1/p} \cdot t^*}^p \\
											           &= \norm{V\vec{y} - \vec{t}^*}_p^p \ ,
\end{align*}
as needed.
\end{proof}

We will show that for every $k \in \Z^+$ and every $p \in [1, \infty)$ that satisfies either (1) $p \notin \Z$ or (2) $p \geq k$, there exists $t^* \in \R$ such that $H_{k, p}(t^*)$ is non-singular.
To show this, we will start by analyzing the eigenvectors and eigenvalues of $H_{p, k}(t^*)$. 

\subsection{Eigenvectors and eigenvalues of \texorpdfstring{$H_{k,p}(t^*)$}{H}}
\label{subsec:eig-vec-val}

We start by showing that the eigenvectors of $H_{p, k}(t^*)$ have a very nice form, and importantly that they do not depend on either $p$ or $t^*$. Namely, the vectors $\vec{v}_S \in \pmo^{2^k}$ corresponding to the output table of the Fourier characters $\chi_S(\vec{x}) := \prod_{i \in S} x_i$ (as in Eq.~\eqref{eq:vs}) are eigenvectors of $H_{p, k}(t^*)$. Accordingly, the eigenvalues $\lambda_S$ corresponding to $\vec{v}_S$ are the Fourier coefficients $\fh(S)$ where $f(\vec{u}) := |\sum u_i - t^*|$ for $\vec{u}\in \{-1,1\}^k$.

\begin{lemma}
For all $1 \leq p < \infty$ and $t^* \in \R$, the $2^k$ vectors $\vec{v}_S \in \pmo^{2^k}$ for $S \subseteq [k]$ of the form in Eq.~\eqref{eq:vs} are eigenvectors of $H_{p, k}(t^*)$. In particular, there are $2^k$ such vectors, and they form an eigenbasis of $H_{p, k}(t^*)$.
\label{cor:H-eigenvectors}
\end{lemma}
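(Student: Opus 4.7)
My plan is to prove the eigenvector claim by direct computation, using a symmetry/change-of-variables trick rather than induction. Fix $S \subseteq [k]$ and $\vec{u} \in \pmo^k$, and compute
\[
(H_{k,p}(t^*) \vec{v}_S)_{\vec{u}} = \sum_{\vec{v} \in \pmo^k} \abs{\iprod{\vec{u}, \vec{v}} - t^*}^p \chi_S(\vec{v}).
\]
The key observation is that for any fixed $\vec{u} \in \pmo^k$, the map $\vec{v} \mapsto \vec{w}$ defined coordinatewise by $w_i := u_i v_i$ is a bijection on $\pmo^k$. Under this substitution, $\iprod{\vec{u},\vec{v}} = \sum_i u_i v_i = \sum_i w_i$ (since $u_i^2 = 1$), and $\chi_S(\vec{v}) = \prod_{i\in S} u_i w_i = \chi_S(\vec{u}) \chi_S(\vec{w})$.

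So the sum factors as
\[
(H_{k,p}(t^*) \vec{v}_S)_{\vec{u}} = \chi_S(\vec{u}) \cdot \underbrace{\sum_{\vec{w} \in \pmo^k} \abs{\textstyle\sum_i w_i - t^*}^p \chi_S(\vec{w})}_{=:\,\lambda_S},
\]
and the inner sum $\lambda_S$ does not depend on $\vec{u}$. Since $(\vec{v}_S)_{\vec{u}} = \chi_S(\vec{u})$, this exactly says $H_{k,p}(t^*) \vec{v}_S = \lambda_S \vec{v}_S$, i.e., $\vec{v}_S$ is an eigenvector with eigenvalue $\lambda_S$. (Note $\lambda_S = 2^k \hat{f}(S)$ for the function $f(\vec{w}) := \abs{\sum_i w_i - t^*}^p$, matching the informal statement preceding the lemma.)

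For the ``eigenbasis'' part, I will invoke the standard fact from Subsection 2.5 that the Fourier characters are orthonormal in $\mathcal{F}$; equivalently, $\iprod{\vec{v}_S, \vec{v}_T} = 2^k \cdot \mathbb{1}[S = T]$, which can also be verified directly by summing $\chi_{S \triangle T}$ over $\pmo^k$ coordinate by coordinate. Hence the $2^k$ vectors $\{\vec{v}_S\}_{S \subseteq [k]}$ are pairwise orthogonal and nonzero, so they are linearly independent, and since there are exactly $2^k$ of them, they form a basis of $\R^{2^k}$. No step here looks like a real obstacle; the only subtlety is being careful with the bijection argument, but the squared-identity $u_i^2 = 1$ on $\pmo$ makes everything go through cleanly (and incidentally explains why the Fourier-analytic viewpoint is natural over $\pmo$ rather than $\bit$, as already foreshadowed by Lemma~\ref{lem:ips-pmo-bit}).
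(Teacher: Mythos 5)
Your proof is correct, and it takes a genuinely different route from the paper's. The paper proves the eigenvector claim by induction on $k$, using the block recurrence $H_{k,p}(t^*) = \left(\begin{smallmatrix} H_{k-1,p}(t^*-1) & H_{k-1,p}(t^*+1) \\ H_{k-1,p}(t^*+1) & H_{k-1,p}(t^*-1) \end{smallmatrix}\right)$ together with the recursive structure of the Fourier characters in Eq.~\eqref{eq:fourier-recurrence}; the eigenvalue formula of Corollary~\ref{cor:H-eigenvalues} is then derived separately by evaluating $(H_{k,p}(t^*)\vec{v}_S)_{\vec{1}}$. Your argument instead observes directly that the entry $(H_{k,p}(t^*))_{\vec{u},\vec{v}}$ depends only on the coordinatewise product $\vec{u}\odot\vec{v}$ (this is the content of the substitution $w_i := u_i v_i$ using $u_i^2 = 1$), so $H_{k,p}(t^*)$ is a ``convolution-type'' matrix over the group $\pmo^k$ and is therefore diagonalized by its characters. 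This is more elementary (no induction, no appeal to the block recurrence), and it produces the eigenvalue formula $\lambda_S = \sum_{\vec{w}}\abs{\sum_i w_i - t^*}^p\chi_S(\vec{w}) = 2^k\hat f(S)$ in the same breath, so Corollary~\ref{cor:H-eigenvalues} falls out for free. The one thing the paper's inductive presentation buys is that it introduces and verifies the block recurrence for $H_{k,p}$, which the paper has no further use for in Proposition~\ref{prop:wronskian} but which motivates the choice of the parameterized family in Definition~\ref{def:param-ip}; your argument would leave that recurrence unstated. Your orthogonality argument for the eigenbasis claim is also correct and matches what the paper implicitly relies on from Subsection~\ref{subsec:discrete-fourier}.
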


\begin{proof}
We prove the lemma by induction on $k$. In the base case where $k = 0$, the scalar $1$ is an ``eigenvector'' of $H_{k, p}(t^*)$.
We next consider the inductive case where $k \geq 1$. Let $\vec{u} = (u_1, \vec{u}'), \vec{y} = (y_1, \vec{y}') \in \pmo^k$. If $u_1 = v_1$ we then have that $\iprod{\vec{u}, \vec{y}} = \iprod{\vec{u}', \vec{y}'} + 1$, and if $u_1 \neq y_1$ then $\iprod{\vec{u}, \vec{y}} = \iprod{\vec{u}', \vec{y}'} - 1$ (with $\iprod{\vec{u}', \vec{y}'} = 0$ if $k-1 = 0$).
Therefore, we can write $H_{k, p}(t^*)$ in block form as
\[
H_{k, p}(t^*) = \begin{pmatrix} H_{k-1, p}(t^* - 1) & H_{k-1, p}(t^* + 1) \\ H_{k-1, p}(t^* + 1) & H_{k-1, p}(t^* - 1) \end{pmatrix} \ .
\]
By the induction hypothesis, the eigenvectors of $H_{k-1, p}(t^* - 1)$ and $H_{k-1, p}(t^* + 1)$ are the same. Suppose that $\vec{v}$ is such an eigenvector. Then one can check that $(\vec{v}, \vec{v})$ and $(-\vec{v}, \vec{v})$ are eigenvectors of $H_{k, p}(t^*)$.

Furthermore, by the induction hypothesis, each vector $\vec{v}_S$ for $S \subseteq [k - 1]$ is an eigenvector of $H_{k-1, p}(t^* - 1)$ and $H_{k-1, p}(t^* + 1)$. Therefore, $\vec{v}_{S'} := (\vec{v}_S, \vec{v}_S)$ and $\vec{v}_{S''} := (-\vec{v}_S, \vec{v}_S)$ are (distinct) eigenvectors of $H_{k, p}(t^*)$. By Eq.~\eqref{eq:fourier-recurrence}, we see that for all $T \subseteq [k]$, $\vec{v}_T$ has this form, and is thus an eigenvector of $H_{k, p}(t^*)$.
\end{proof}

\begin{corollary}
Each eigenvalue $\lambda_S$ corresponding to the eigenvector $\vec{v}_S$ of $H_{k, p}(t^*)$ has the value
\begin{equation}
\lambda_S = \sum_{\vec{x} \in \pmo^k} \chi_S(\vec{x}) \cdot \Big|\sum_{i=1}^k x_i - t^*\Big|^p \ .
\label{eq:H-eigenvalues}
\end{equation}
In particular, each $\lambda_S$ satisfies
\begin{equation}
\lambda_S = \sum_{j=0}^k a_j \cdot \abs{k - 2j - t^*}^p
\label{eq:H-eigenvalues-form}
\end{equation}
for some $a_0, \ldots, a_k \in \Z$ with $a_0 = 1$.
\label{cor:H-eigenvalues}
\end{corollary}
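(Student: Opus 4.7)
The plan is to extract $\lambda_S$ from the eigenvector equation by plugging in a convenient coordinate, and then rewrite the resulting sum by grouping terms according to the Hamming weight.

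First I would compute a single coordinate of $H_{k,p}(t^*)\vec{v}_S$. By Lemma \ref{cor:H-eigenvectors} we have $H_{k,p}(t^*)\vec{v}_S = \lambda_S \vec{v}_S$, and the $\vec{y}$-th coordinate of the left side is
\[
\bigl(H_{k,p}(t^*)\vec{v}_S\bigr)_{\vec{y}} = \sum_{\vec{u}\in\pmo^k} \abs{\iprod{\vec{y},\vec{u}}-t^*}^p \chi_S(\vec{u})
\; ,
\]
while the $\vec{y}$-th coordinate of the right side is $\lambda_S \chi_S(\vec{y})$. Evaluating both sides at $\vec{y}=\vec{1}$, using $\chi_S(\vec{1})=1$ and $\iprod{\vec{1},\vec{u}}=\sum_{i=1}^k u_i$, immediately gives Eq.~\eqref{eq:H-eigenvalues} (with $\vec{u}$ relabelled as $\vec{x}$).

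Next, to obtain Eq.~\eqref{eq:H-eigenvalues-form}, I would group the sum by $j := \abs{\set{i : x_i = -1}}$, so that $\sum_{i=1}^k x_i = k-2j$ and hence $\abs{\sum_i x_i - t^*}^p = \abs{k-2j-t^*}^p$ is constant on each group. Setting
\[
a_j := \sum_{\vec{x}\in\pmo^k,\; \abs{\set{i:x_i=-1}}=j} \chi_S(\vec{x})
\; ,
\]
we get exactly the expression in Eq.~\eqref{eq:H-eigenvalues-form}. A direct count, splitting by how many of the $-1$-coordinates land inside $S$, gives
\[
a_j = \sum_{\ell=0}^{j} (-1)^{\ell}\binom{\abs{S}}{\ell}\binom{k-\abs{S}}{j-\ell}
\; ,
\]
which is manifestly an integer (it is, up to sign convention, a Krawtchouk coefficient, and equals $[x^j](1-x)^{\abs{S}}(1+x)^{k-\abs{S}}$). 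Finally, for $j=0$ the only contributing $\vec{x}$ is $\vec{x}=\vec{1}$, and $\chi_S(\vec{1})=1$, yielding $a_0 = 1$.

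There is no real obstacle here: the first half is just reading off a single entry of the eigenvector identity, and the second half is a bookkeeping rewrite using that $\sum_i x_i$ depends only on the Hamming weight. The only mild subtlety is to verify $a_0=1$, but this is immediate since the weight-$0$ string is unique and $\chi_S$ sends it to $1$.
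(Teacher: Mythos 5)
Your proof is correct and takes essentially the same approach as the paper: read off the $\vec{1}$-coordinate of the eigenvector equation to get Eq.~\eqref{eq:H-eigenvalues}, then group terms by the number of $-1$-coordinates and note the weight-$0$ string gives $a_0 = 1$. The explicit Krawtchouk formula for $a_j$ is a nice extra detail, but the paper's proof (which only observes that each term is $\pm\abs{k-2j-t^*}^p$) establishes the same statement with less bookkeeping.
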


\begin{proof}
Fix an eigenvector $\vec{v}_S$ of $H_{k, p}(t^*)$. It holds that for $\vec{u} \in \pmo^k$,
\[
(H_{k,p}(t^*) \cdot \vec{v}_S)_{\vec{u}} := \sum_{\vec{x} \in \pmo^k} \chi_S(\vec{x}) \cdot \abs{\iprod{\vec{u}, \vec{x}} - t^*}^p \ .
\]
Moreover, setting $\vec{u} = \vec{1}$ in the above expression and noting that $(\vec{v}_S)_{\vec{1}} = 1$ for all $S \subseteq [k]$, we get that $\lambda_S = \sum_{\vec{x} \in \pmo^k} \chi_S(\vec{x}) \cdot \abs{\sum_{i=1}^k x_i - t^*}^p$, as claimed.

The ``in particular'' part of the claim follows by noting that each term in the sum in Eq.~\eqref{eq:H-eigenvalues} is equal to $\pm \abs{k - 2j - t^*}^p$ for some $j \in \set{0, 1, \ldots, k}$. The fact that $a_0 = 1$ in Eq.~\eqref{eq:H-eigenvalues-form} for every $S$ follows by noting that the term corresponding to $\vec{x} = 1$ is the unique term equal to $\pm \abs{k - t^*}^p$ and that because $\chi_S(\vec{1}) = 1$ it is equal to $\abs{k - t^*}^p$.
\end{proof}

Fix $k \geq 0$. Using Corollary~\ref{cor:H-eigenvalues}, we can compute relatively simple expressions for the eigenvalues $\lambda = \lambda_{\emptyset}$ of $\vec{1} = \vec{v}_{\emptyset}$ and $\lampar = \lambda_{[k]}$ of $\vpar = \vec{v}_{[k]}$ by noting that for $S = \emptyset$ and $S = [k]$ the value of each term $\chi_S(\vec{x}) \cdot |\sum_{i = 1}^k x_i - t^*|^p$ in Eq.~\eqref{eq:H-eigenvalues} only depends on the number of coordinates $j$ of $\vec{x}$ equal to $-1$. Namely,
\begin{equation}
\lambda = \sum_{j = 0}^k \binom{k}{j} \abs{k - 2j - t^*}^p \ ,
\label{eq:lambda}
\end{equation}
and
\begin{equation}
\lampar = \sum_{j = 0}^k (-1)^j \binom{k}{j} \abs{k - 2j - t^*}^p \ .
\label{eq:lambda-par}
\end{equation}
We note that $\lambda > 0$ for $k \geq 1$ regardless of $p$ and $t^*$. Indeed, this follows by observing that each term in Eq.~\eqref{eq:lambda} sum is non-negative, and at most one term is equal to zero.

\subsection{Non-singularity of \texorpdfstring{$H$}{H} with certain parameters}

We next show that the function $t^* \mapsto \det(H_{k,p}(t^*))$ is analytic and not identically zero for certain $k$ and $p$. Using the general fact that such functions have isolated roots, this leads to a simple algorithm for finding $t^*$ such that $\det(H_{k,p}(t^*))$ is non-singular for such $k$ and $p$.

\begin{proposition}
Let $k \in \Z^+$, and let $p \in [1, \infty)$ be a value that satisfies either (1) $p \notin \Z$ or (2) $p \geq k$. Then $\det(H_{k,p}(t^*))$ is analytic and not identically zero as a function of $t^*$ for $t^* > k$.
\label{prop:wronskian}
\end{proposition}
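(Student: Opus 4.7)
The plan is to reduce the claim to a linear-independence statement about shifted power functions, and then verify that statement via an explicit Wronskian computation.

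\textbf{Analyticity.} First I would observe that when $t^* > k$, every entry $|\iprod{\vec{u},\vec{v}}-t^*|^p = (t^*-\iprod{\vec{u},\vec{v}})^p$ of $H_{k,p}(t^*)$ is a well-defined power of a positive quantity, hence analytic in $t^*$ on $(k,\infty)$. Since $\det(H_{k,p}(t^*))$ is a polynomial in these entries, it is analytic on $(k,\infty)$.

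\textbf{Reduction to linear independence of eigenvalues.} By Lemma~\ref{cor:H-eigenvectors}, $H_{k,p}(t^*)$ has a $t^*$-independent eigenbasis $\{\vec{v}_S\}_{S \subseteq [k]}$, so $\det(H_{k,p}(t^*)) = \prod_S \lambda_S(t^*)$, and it suffices to show that each eigenvalue is not identically zero on $(k,\infty)$. By Corollary~\ref{cor:H-eigenvalues}, each $\lambda_S(t^*)$ has the form $\sum_{j=0}^k a_j (t^* - (k-2j))^p$ for $t^* > k$, where $a_j \in \Z$ and $a_0 = 1$. Since the constant term coefficient $a_0$ is nonzero, it suffices to show that the $k+1$ functions $g_j(t^*) := (t^*-\beta_j)^p$, with distinct shifts $\beta_j := k-2j$ for $j=0,\ldots,k$, are linearly independent over $\R$ on $(k,\infty)$.

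\textbf{Wronskian computation.} By Fact~\ref{fct:wronskian-lin-ind}, it is enough to show the Wronskian $W(t^*) := \det\bigl((g_j^{(i)}(t^*))_{0 \le i,j \le k}\bigr)$ is not identically zero on $(k,\infty)$. Since $g_j^{(i)}(t^*) = p(p-1)\cdots(p-i+1)\,(t^*-\beta_j)^{p-i}$, we can factor $\rho_i := p(p-1)\cdots(p-i+1)$ out of row $i$ and $(t^*-\beta_j)^p$ out of column $j$, obtaining
\[
W(t^*) \;=\; \Bigl(\prod_{i=0}^k \rho_i\Bigr) \Bigl(\prod_{j=0}^k (t^*-\beta_j)^p\Bigr) \det\!\bigl( (t^*-\beta_j)^{-i}\bigr)_{0 \le i,j \le k},
\]
and the last determinant is a Vandermonde in the distinct values $(t^*-\beta_j)^{-1}$, hence nonzero for every $t^* > k$. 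The factor $\prod_{j}(t^*-\beta_j)^p$ is also nonzero on $(k,\infty)$. So $W(t^*) \ne 0$ iff $\prod_{i=0}^k \rho_i \ne 0$, i.e.\ iff none of $p, p-1, \ldots, p-k+1$ vanishes. This holds precisely when either $p \notin \Z$ or $p \ge k$, which is exactly our hypothesis.

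\textbf{The main obstacle} is pinning down the exact case analysis on $p$: the argument has to fail at $p \in \{1,2,\ldots,k-1\}$ (consistent with the impossibility results cited in the paper), and the Wronskian calculation is the cleanest way I see to isolate this dependence. Combining the three steps yields the claim: each $\lambda_S(t^*)$ is a nontrivial linear combination of linearly independent analytic functions on $(k,\infty)$, so no $\lambda_S$ is identically zero, and therefore the analytic function $\det(H_{k,p}(t^*)) = \prod_S \lambda_S(t^*)$ is not identically zero on $(k,\infty)$.
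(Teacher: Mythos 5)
Your proof is correct and follows essentially the same route as the paper's: reduce to linear independence of the shifted power functions via the eigenbasis decomposition of $H_{k,p}(t^*)$, then compute the Wronskian explicitly by row and column scaling down to a Vandermonde. The only cosmetic difference is the column normalization—you factor $(t^*-\beta_j)^p$ out of column $j$ to obtain a Vandermonde in the reciprocals $(t^*-\beta_j)^{-1}$, whereas the paper factors $(t^*-\beta_j)^{p-k}$ to obtain a Vandermonde directly in $(t^*-\beta_j)$; both are equally valid.
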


\begin{proof}
Fix $k \in \Z^+$ and $p$ satisfying either (1) $p \notin \Z$ or (2) $p \geq k$.
We have that $\det(H_{k,p}(t^*)) = \prod_{S \subseteq [k]} \lambda_S$, and by Corollary~\ref{cor:H-eigenvalues} that each eigenvalue $\lambda_S$ is a linear combination \begin{equation}
\lambda_S = \lambda_S(t^*) = \sum_{j=0}^k a_j \cdot \abs{k - 2j - t^*}^p = \sum_{j=0}^k a_j \cdot \abs{t^* - k + 2j}^p
\label{eq:eigenval-form-wronskian}
\end{equation}
of functions $|t^* - k + 2j|^p$ for $j \in \set{0, 1, \ldots, k}$ with $a_0, \ldots, a_k \in \Z$ and $a_0 = 1$.
Moreover, functions of the form $\abs{t^* - k + 2j}^p$ for $j \in \set{0, 1, \ldots, k}$ satisfy $\abs{t^* - k + 2j}^p = (t^* - k + 2j)^p$ and are analytic for $t^* > k$.
So, $\det(H_{k, p}(t^*))$ is also analytic for $t^* > k$, and in order to show that $\det(H_{k, p}(t^*))$ is not identically zero it suffices to show that each eigenvalue $\lambda_S = \lambda_S(t^*)$ is not identically zero as a function of $t^*$. (Here, we are using the fact that the product $\prod_{i=1}^{m} f_i(t)$ of finitely many analytic functions $f_i : U \to \R$ over an open set $U \subset \R$ is identically zero if and only if one of the $f_i(t)$ is identically zero. This follows, e.g., from the fact that analytic functions have at most countably many roots.)

Because $a_0 = 1 \neq 0$ for each $\lambda_S = \lambda_S(t^*) = \sum_{j = 0}^k a_j \cdot |t^* - k + 2j|^p$, to show that $\lambda_S(t^*)$ is not identically zero it suffices to show that the functions $|t^* - k + 2j|^p$ for $j \in \set{0, 1, \ldots, k}$ are linearly independent over the reals. Moreover, it suffices to show that these functions are linearly independent for $t^* > k$, and therefore to show that the functions $(t^* - k + 2j)^p$ for each $j \in \set{0, 1, \ldots, k}$ are linearly independent, since $|t^* - k + 2j|^p = (t^* - k + 2j)^p$ for $t^* > k$.

By Fact~\ref{fct:wronskian-lin-ind}, to show that these functions are linearly independent it suffices to show that their Wronskian $W := \det(M)$ with
\[
M = M_{p, k}(t^*) :=
\begin{pmatrix}
(p)_0 \cdot (t^*-k)^{p\phantom{-0}} & (p)_0 \cdot (t^*-k+2)^{p\phantom{-0}} & \cdots & (p)_0 \cdot (t^*+k)^{p\phantom{-0}} \\
(p)_1 \cdot (t^*-k)^{p-1} & (p)_1 \cdot (t^*-k+2)^{p-1} & \cdots & (p)_1 \cdot (t^*+k)^{p-1} \\
 \vdots & \vdots & \ddots & \vdots \\
(p)_k \cdot (t^*-k)^{p-k} & (p)_k \cdot (t^*-k+2)^{p-k} & \cdots & (p)_k \cdot (t^*+k)^{p-k}
\end{pmatrix}
\]
is not identically zero for $t^* > k$.
Here the notation $(p)_i$ denotes the falling factorial function, which is defined by $(p)_i := p (p-1) \cdots (p-(i-1))$ for $i \geq 1$ and $(p)_0 := 1$.

In fact, we show that 
\begin{equation}
\label{eq:Wronskian_value}
	W = - 2^{k(k+1)/2} \cdot \Big(\prod_{0 \leq i < j \leq k} (j - i) \Big) \cdot \Big(\prod_{i=0}^k (p)_i \Big) \cdot \Big( \prod_{j=0}^k (t^* - k + 2j)^{p-k} \Big)
\end{equation}
is non-zero for all $t^* > k$ and $p$ satisfying the conditions of the theorem. This immediately implies the result.

To that end, dividing the $i$th row of $M$ (which has rows indexed by $i \in \set{0, 1, \ldots, k}$) by $(p)_i$ (which is non-zero because of our assumptions about $p$) we obtain
\[
M' = M_{p, k}'(t^*) := \begin{pmatrix}
(t^*-k)^{p\phantom{-0}} & (t^*-k+2)^{p\phantom{-0}} & \cdots & (t^*+k)^{p\phantom{-0}} \\
(t^*-k)^{p-1} & (t^*-k+2)^{p-1} & \cdots & (t^*+k)^{p-1} \\
\vdots & \vdots & \ddots & \vdots \\
(t^*-k)^{p-k} & (t^*-k+2)^{p-k} & \cdots & (t^*+k)^{p-k}
\end{pmatrix} \ .
\]
Similarly, dividing the $j$th column of $M'$ (which has columns indexed by $j \in \set{0, 1, \ldots, k}$) by $(t^* - k + 2j)^{p-k}$ (which is well-defined and non-zero for $t^* > k$) we obtain
\[
M'' = M_{p, k}''(t^*) := \begin{pmatrix}
(t^*-k)^{k\phantom{-0}} & (t^*-k+2)^{k\phantom{-0}} & \cdots & (t^*+k)^{k\phantom{-0}} \\
(t^*-k)^{k-1} & (t^*-k+2)^{k-1} & \cdots & (t^*+k)^{k-1} \\
\vdots & \vdots & \ddots & \vdots \\
1 & 1 & \cdots & 1
\end{pmatrix} \ ,
\]
which is a Vandermonde matrix up to transposition and reordering of the rows. We can therefore use the formula for the determinant of a Vandermonde matrix to compute
\[
\det(M'') = -\prod_{0 \leq i < j \leq k} \big((t^* - k + 2j) - (t^* - k + 2i)\big) = - \prod_{0 \leq i < j \leq k} 2(j - i) \neq 0
\]
The result follows by noting that 
\[
	W = \det(M'') \cdot \Big(\prod_{i=0}^k (p)_i \Big) \cdot \Big( \prod_{j=0}^k (t^* - k + 2j)^{p-k} \Big)
	\; ,
\]
as claimed in Eq.~\eqref{eq:Wronskian_value}.
\end{proof}

\begin{corollary}
For every $k \in \Z^+$ and every real $p \in [1, \infty)$ that satisfies either (1) $p \notin \Z$ or (2) $p \geq k$, there exists $t^*$ such that $\det(H_{k, p}(t^*)) \neq 0$. Moreover, if $p$ is computable then there is an algorithm that on input $k$ and $p$ outputs such a $t^*$.
\label{cor:finding-x}
\end{corollary}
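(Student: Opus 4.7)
The plan is to derive Corollary~\ref{cor:finding-x} directly from Proposition~\ref{prop:wronskian}, which does the real analytic work. For existence, note that $f(t^*) := \det(H_{k,p}(t^*))$ is analytic and not identically zero on the connected open interval $(k, \infty)$; hence its zero set in $(k,\infty)$ is discrete, so in particular there exists $t^* > k$ with $f(t^*) \neq 0$. This already gives the first half of the corollary without further work.

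For the algorithmic part, I would assume $p$ is computable and give a simple dovetailed search. For any rational $t^* > k$, every entry $\abs{\iprod{\vec{u},\vec{v}} - t^*}^p$ of $H_{k,p}(t^*)$ is a computable real, so $f(t^*) = \det(H_{k,p}(t^*))$ can be approximated to any desired precision by standard arithmetic on computable reals. I would fix the sequence of rational test points $q_i := k + 1 + 2^{-i}$ for $i \in \Z^+$. These lie in $(k,\infty)$ and accumulate at $k+1$, which sits in the interior of the domain of analyticity. By the identity theorem for real-analytic functions, if $f(q_i) = 0$ held for infinitely many $i$, then $f$ would vanish identically on $(k,\infty)$, contradicting Proposition~\ref{prop:wronskian}. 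Hence only finitely many $q_i$ are roots of $f$.

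The algorithm then dovetails: at stage $n = 1, 2, \ldots$, it computes an approximation $\tilde f_n(q_i)$ of $f(q_i)$ to within $2^{-n}$ for each $i \leq n$, and halts outputting $q_i$ as soon as $|\tilde f_n(q_i)| > 2^{1-n}$. Correctness of the output is clear: whenever the halting condition triggers, $|f(q_i)| \geq |\tilde f_n(q_i)| - 2^{-n} > 2^{-n} > 0$. Termination is guaranteed because the preceding paragraph produces some $q_j$ with $c := |f(q_j)| > 0$; once $n$ is large enough that $2^{1-n} < c/2$, the approximation $\tilde f_n(q_j)$ lies within $2^{-n}$ of $f(q_j)$, so $|\tilde f_n(q_j)| > c/2 > 2^{1-n}$ and $q_j$ is output.

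The only mild obstacle is preventing the procedure from getting permanently stuck certifying nonvanishing at a zero of $f$; the accumulation argument combined with dovetailing handles this cleanly, since it guarantees that finitely many candidates suffice and that precision improvements eventually confirm nonvanishing at a good one. No additional structural facts about $H_{k,p}$ beyond Proposition~\ref{prop:wronskian} are needed.
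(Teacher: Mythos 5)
Your proposal is correct and follows the same high-level approach as the paper: invoke Proposition~\ref{prop:wronskian} to get analyticity and non-vanishing of $\det(H_{k,p}(\cdot))$ on $(k,\infty)$, conclude the zero set is discrete, and then search a fixed sequence of rational test points. However, you are noticeably more careful than the paper on two technical points. First, the paper's algorithm tests $t_i^* := k + 2^{-i}$, which accumulates at the \emph{boundary} point $k$; the assertion that ``isolated roots implies the algorithm halts'' does not strictly follow, since an analytic function on $(k,\infty)$ can have zeros accumulating at the boundary without being identically zero. Your choice $q_i := k + 1 + 2^{-i}$, accumulating at the interior point $k+1$, makes the identity-theorem argument airtight. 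Second, the paper instructs the algorithm to ``output the first $t_i^*$ for which $\det(H_{k,p}(t_i^*)) \neq 0$,'' but exact nonvanishing of a computable real is only semi-decidable; your dovetailed precision-refinement with the $2^{1-n}$ threshold is the correct way to make this an actual halting algorithm. In short, both proofs rest on Proposition~\ref{prop:wronskian}, but your write-up patches two small rigor gaps in the published version.
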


\begin{proof}
The corollary is an immediate consequence of Proposition~\ref{prop:wronskian} and the fact that an analytic function that is not identically zero has isolated roots. Indeed, the fact that such a function has isolated roots implies that the following algorithm must halt (when $p$ is computable). Compute $\det(H_{k, p}(t_i^*))$ where $t_i^* = k + 2^{-i}$ for $i = 1, 2, \ldots$, and output the first $t_i^*$ for which ${\det(H_{k, p}(t_i^*)) \neq 0}$.
\end{proof}

\subsection{Finishing the proof}

If $H_{k, p}(t^*)$ is non-singular then for any vector $\vec{w} \in \R^{2^k}$ we can solve the linear system
$H_{k, p}(t^*) \cdot \vec{\alpha} = \vec{w}$
to obtain some solution $\vec{\alpha}$.
In particular, if $\vec{w} = (1 + \eps, 1, 1, \ldots, 1)$ for some $\eps > 0$ and the solution $\vec{\alpha}$ to this equation is non-negative, then by Lemma~\ref{lem:alpha-to-weights} we can use $\vec{\alpha}$ as the weights in an isolating parallelepiped of the form in Definition~\ref{def:param-ip}.
The issue with this is that we critically require that our solution $\vec{\alpha}$ be non-negative, and a priori there is no guarantee that it will be.
However, we next show that by setting $\eps > 0$ appropriately we can ensure that the solution $\vec{\alpha}$ will in fact be non-negative. (In fact, we note that such solutions exist for $\vec{w}$ in an open neighborhood around $\vec{1}$.  Note that the theorem is only interesting if $\vec{\alpha}' \notin (\R^{\geq 0})^{2^k}$ has at least one negative coordinate. Otherwise, we can clearly take $\eps > 0$ to be as large as we like.)

\begin{proposition}
Fix $k \in \Z^+$, $p \in [1, \infty)$, and $t^* \in \R$ such that $H_{k,p}(t^*)$ is non-singular.
Let $\vec{b} \in \R^{2^k}$ and let $\vec{\alpha}' := H_{k,p}(t^*)^{-1} \cdot \vec{b}$.
Then there exists $\vec{\alpha} \in (\R^{\geq 0})^{2^k}$ that satisfies
$H_{k,p}(t^*) \cdot \vec{\alpha} = \vec{1} + \eps \cdot \vec{b}$ for
\begin{equation}
\eps := \frac{1}{\lambda \cdot \abs{\min_{\vec{u}\in \{-1,1\}^k} \alpha_{\vec{u}}'}} \geq \frac{1}{\lambda \cdot \norm{\vec{\alpha}'}_{\infty}} > 0 \ ,
\label{eq:parallelepiped-eps}
\end{equation}
where $\lambda = \sum_{j = 0}^k \binom{k}{j} \cdot \abs{t^* - k + 2j}^p > 0$ is the eigenvalue of $H_{k,p}(t^*)$ corresponding to the eigenvector $\vec{1}$ as in Eq.~\eqref{eq:lambda}.
\label{prop:finding-alpha}
\end{proposition}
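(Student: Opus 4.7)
The plan is to exploit the fact that $\vec{1}$ is an eigenvector of $H_{k,p}(t^*)$ with known positive eigenvalue $\lambda$, which was already observed just after Eq.~\eqref{eq:lambda}. This lets us write down the candidate $\vec{\alpha}$ explicitly in a form that decouples the ``free'' direction along $\vec{1}$ from the contribution of $\vec{b}$, and then check non-negativity coordinate by coordinate.

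First I would observe that because $\vec{1} = \vec{v}_\emptyset$ is an eigenvector of $H := H_{k,p}(t^*)$ with eigenvalue $\lambda > 0$ (Corollary~\ref{cor:H-eigenvalues} together with Eq.~\eqref{eq:lambda}), and because $H$ is non-singular by hypothesis, we have $H^{-1}\vec{1} = (1/\lambda)\vec{1}$. By linearity of $H^{-1}$, the unique solution to $H \vec{\alpha} = \vec{1} + \eps \vec{b}$ is
\[
    \vec{\alpha} \;=\; \frac{1}{\lambda}\vec{1} + \eps \cdot \vec{\alpha}',
\]
where $\vec{\alpha}' = H^{-1}\vec{b}$ as in the statement. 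So the task reduces entirely to choosing $\eps > 0$ ensuring that every coordinate $\alpha_{\vec{u}} = 1/\lambda + \eps\,\alpha'_{\vec{u}}$ is non-negative.

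Next I would analyze this coordinate-wise. For any $\vec{u}$ with $\alpha'_{\vec{u}} \geq 0$, the coordinate $\alpha_{\vec{u}}$ is automatically non-negative for every $\eps > 0$. For $\vec{u}$ with $\alpha'_{\vec{u}} < 0$, the constraint $\alpha_{\vec{u}} \geq 0$ is equivalent to $\eps \leq 1/(\lambda\,|\alpha'_{\vec{u}}|)$. The most restrictive such inequality is obtained at the most negative coordinate, which (assuming at least one coordinate is negative, the only nontrivial case per the remark in the statement) is $\min_{\vec{u}} \alpha'_{\vec{u}}$, yielding the bound $\eps \leq 1/(\lambda\,|\min_{\vec{u}}\alpha'_{\vec{u}}|)$. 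Hence the choice in Eq.~\eqref{eq:parallelepiped-eps} works. The inequality $1/(\lambda\,|\min_{\vec{u}}\alpha'_{\vec{u}}|) \geq 1/(\lambda\,\|\vec{\alpha}'\|_\infty)$ is immediate from $|\min_{\vec{u}}\alpha'_{\vec{u}}| \leq \max_{\vec{u}}|\alpha'_{\vec{u}}| = \|\vec{\alpha}'\|_\infty$, and the strict positivity follows because $\lambda$ is a finite positive number and $\|\vec{\alpha}'\|_\infty$ is finite (it is positive in the nontrivial case, and in the trivial case where $\vec{\alpha}'$ is non-negative any positive $\eps$ works so the claim is vacuous).

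There is no serious obstacle here: the argument is a one-line linear algebra observation plus a coordinate-wise verification, once one notices that $\vec{1}$ is a $\lambda$-eigenvector. The only mildly subtle point is handling the edge cases (the case $\vec{b} = \vec{0}$ or $\vec{\alpha}' \in (\R^{\geq 0})^{2^k}$), which I would dispatch by the remark that in those cases $\eps$ can be chosen arbitrarily large, so the stated bound is automatically satisfied.
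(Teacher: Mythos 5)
Your proof is correct and uses exactly the same construction as the paper: the paper also sets $\vec{\alpha} := \frac{1}{\lambda}\vec{1} + \eps\,\vec{\alpha}'$ and asserts non-negativity, merely omitting the coordinate-wise verification and the edge-case discussion that you spell out.
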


\begin{proof}
Let
\[
\vec{\alpha} := \frac{1}{\lambda} \cdot \vec{1} + \eps \cdot \vec{\alpha}' \ .
\]
Then $\vec{\alpha}$ is non-negative and $H_{k,p}(t^*) \cdot \vec{\alpha} = \vec{1} + \eps \cdot \vec{b}$, as needed.
\end{proof}

Theorem~\ref{thm:more-ips} then follows immediately by combining Lemma~\ref{lem:ips-pmo-bit}, Lemma~\ref{lem:alpha-to-weights}, Corollary~\ref{cor:finding-x}, and Proposition~\ref{prop:finding-alpha} (applied with $\vec{b} := \vec{e}_1$).

\section{Gap-SETH hardness of CVP}
\label{sec:gap-seth-hardness}

In this section, we prove fine-grained hardness of approximation of $\CVP_p$ for all $p \in [1, \infty) \setminus 2\Z$. To that end, we first show in Section~\ref{subsec:iso-lats} how to modify isolating parallelepipeds to what we call \emph{isolating lattices}, which are entire lattices with the property that the closest vectors to some target correspond exactly to the satisfying assignments of some CSP, and all other lattice vectors are at least a $(1 + \eps)$ factor farther away from the target. (We also need the unsatisfying assignments to be ``second-closest'' vectors, and all exactly a $1 + \eps$ factor farther away.)  We show in Theorem~\ref{thm:cvp-hardness-cvips} that such gadgets imply a reduction from constant-factor approximate Gap-$k$-CSPs to constant-factor approximate $\CVP_p$, where the approximation factor depends on $\eps$.

In fact, we consider general $k$-CSPs, and not just $k$-SAT.
This is motivated for two reasons: (1) the fact that general $k$-CSPs are known to be $\mathsf{NP}$-hard to approximate to within much better approximation factors than $k$-SAT~\cite{journals/jacm/Chan16,journals/cc/AustrinM09,conf/dagstuhl/MakarychevM17} (since $(s,c)$-Gap-$k$-SAT is trivial for $s \leq 1-2^{-k}$), and so it is natural to hypothesize some corresponding quantitative hardness of approximation for them; and (2) we are able to analyze our parallelepiped construction better for different CSPs, and therefore to get an explicit lower bound on $\eps$. In particular, $(s,c)$-Gap-$k$-Parity is known to be NP-hard to approximate for any constants $s > 1/2$ and $c < 1$, and and at least as hard (in a fine-grained sense) to approximate as $k$-SAT, as in Theorem~\ref{thm:SV19}. Furthermore, we show how to build isolating parallelepipeds (and thus isolating lattices) for $k$-\emph{Parity} that have a relatively large gap $\eps \approx 1/k^{(p+3)/2}$ between the distances for satisfying and unsatisfying assignments.

In fact, $k$-Parity arises particularly naturally in this context because the parity function corresponds to the eigenvector $\vec{v}_{[k]}$ of $H_{p,k}(t^*)$ as described in Lemma~\ref{cor:H-eigenvectors}, which makes it much more amenable to the techniques in Section~\ref{sec:other-p}. Indeed, Lemma~\ref{cor:H-eigenvectors} shows an eigenbasis corresponding exactly to the parity functions applied to subsets of their input variables (i.e., the Fourier basis). So, $k$-Parity is the only non-degenerate $k$-CSP (i.e., the only $k$-CSP with constraints that depend on all $k$ of their input variables) corresponding to a vector in this eigenbasis.

 Together, these two properties allow us not only to show reductions from $(s,c)$-Gap-$k$-Parity to approximate $\CVP_p$ with a relatively large approximation factor but even to show reductions from $(s,c)$-Gap-$k$-SAT to $\CVP_p$ with a larger approximation factor than we know how to achieve directly.

This leads to the following fine-grained hardness of approximation results for $\CVP_p$.

\begin{theorem}
		\label{thm:parity_reduction}
		For all $p \in [1,\infty) \setminus 2\Z$, all integers $k > \max\{2,p\}$, and all $1/2 <s \leq c < 1$, there exists a polynomial time (Karp) reduction from $(s, c)$-Gap-$k$-Parity instances on $n$ variables to $\gamma$-$\CVP_p$ instances of rank $n$ with $\gamma = \gamma(p, k, s,c)$ satisfying 
		\[
		\gamma \geq 1+ (c-s)\cdot \frac{|\sin(\pi p/2)|}{4p^3k} \cdot \Big(\frac{2p}{ e^2 \pi^2 k}\Big)^{(p+1)/2} 
		\; .
		\]
\end{theorem}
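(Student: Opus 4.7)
The plan is to instantiate the parametric family of isolating parallelepipeds from Definition~\ref{def:param-ip} in a way tailored to $k$-Parity, convert the resulting gadget into an isolating lattice, and then apply the generic stacking reduction from $k$-CSPs to $\CVP_p$ (the same template that built the basis in~\eqref{eq:Phi_t}). The novelty is all in the first step, which exploits the fact that $\vec{v}_{[k]}$ is itself an eigenvector of $H_{k,p}(t^*)$.

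Concretely, I will invoke Proposition~\ref{prop:finding-alpha} with $\vec{b} := \vec{v}_{[k]}$. By Lemma~\ref{cor:H-eigenvectors} we have $\vec{\alpha}' = H_{k,p}(t^*)^{-1}\vec{v}_{[k]} = \vec{v}_{[k]}/\lampar$, so $\|\vec{\alpha}'\|_\infty = 1/|\lampar|$, and Proposition~\ref{prop:finding-alpha} yields a non-negative $\vec{\alpha}$ with $H_{k,p}(t^*)\vec{\alpha} = \vec{1} + \eps\,\vec{v}_{[k]}$ for $\eps = |\lampar|/\lambda$. By Lemma~\ref{lem:alpha-to-weights} the induced parallelepiped then has $\|V\vec{y} - \vec{t}^*\|_p^p = 1 + \eps\,\chi_{[k]}(\vec{y})$, i.e.\ the $p$th-power distance takes only the two values $1 \pm \eps$, separated according to the parity of $\vec{y}$. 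Lemma~\ref{lem:ips-pmo-bit} then transports this object back to a $\bit$-indexed gadget whose distances isolate precisely the assignments of a given parity.

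The next step is to lower-bound $\eps = |\lampar|/\lambda$. I will choose $t^*$ so that the shifts $k-2j-t^*$ hit the integers symmetrically around zero (with a trivial case split on the parity of $k$); then the alternating sum in Eq.~\eqref{eq:lambda-par} is of exactly the form handled by the contour-integral identity appearing in Eq.~\eqref{eq:intro_crazy_formula}. That identity, together with the monotonicity in $k$ supplied by Theorem~\ref{thm:ramanujan}, gives
\[
|\lampar| \;\geq\; |\sin(\pi p/2)|\,\binom{k}{\lfloor k/2\rfloor}\,C_p,
\]
with $C_p = 2(2-2^{-p})\Gamma(p+1)\zeta(p+1)/\pi^{p+1}$. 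For the denominator, I will bound $\lambda = \sum_j \binom{k}{j}|k-2j-t^*|^p$ by a standard $p$th-moment estimate for the centered binomial $\mathrm{Bin}(k,1/2)$, obtaining $\lambda \leq \binom{k}{\lfloor k/2\rfloor}\cdot O(k^{(p+1)/2})$ with an explicit constant extracted via Stirling. Dividing the two bounds and expanding $\Gamma(p+1)/\pi^{p+1}$ by Stirling yields $\eps \geq \tfrac{|\sin(\pi p/2)|}{p^2}\bigl(2p/(e^2\pi^2 k)\bigr)^{(p+1)/2}$ up to universal constants.

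Finally, I will promote the parallelepiped to an isolating \emph{lattice} using the construction of Section~\ref{subsec:iso-lats}, which costs a factor of $O(pk)$ in the gap while preserving the rank, and feed the result into the generic reduction of Theorem~\ref{thm:cvp-hardness-cvips}. That reduction converts a $(s,c)$-Gap-$k$-Parity instance on $n$ variables into a $\gamma$-$\CVP_p$ instance of rank $n$ with $\gamma - 1 \gtrsim (c-s)\,\eps/k$, which after collecting constants is exactly the stated bound. The main obstacle is the second step: extracting from Theorem~\ref{thm:crazy_binomial} a lower bound on $|\lampar|$ that is uniform in $k > p$ and pairing it with a tight enough upper bound on $\lambda$ so that the explicit constants $|\sin(\pi p/2)|/(4p^3)$ and $2p/(e^2\pi^2)$ appear in the precise form claimed; everything after that is bookkeeping on machinery already built in Sections~\ref{subsec:iso-lats} and~\ref{sec:other-p}.
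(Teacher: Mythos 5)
The proposal is correct and follows essentially the same route as the paper: it builds the parity parallelepiped from the eigenstructure of $H_{k,p}(t^*)$, bounds the ratio $|\lampar|/\lambda$ via the contour-integral identity (Corollary~\ref{cor:crazy_binomial}) and a central-binomial-moment estimate (Lemma~\ref{lem:simple_binomial}), then passes through Proposition~\ref{prop:ip_to_lattice} and Theorem~\ref{thm:cvp-hardness-cvips}. The only cosmetic difference is that you invoke Proposition~\ref{prop:finding-alpha} with $\vec{b} = \vec{v}_{[k]}$ rather than writing $\vec{\alpha} = \vec{1} + (-1)^{\eta+b}\vpar$ directly as in Theorem~\ref{thm:ips_parity}; since $\vec{v}_{[k]}$ is an eigenvector, this yields the same weights up to scale, so the two formulations coincide.
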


Theorem~\ref{thm:parity_reduction} implies $2^{(1-\eps)n}$-hardness of approximation of $\CVP_p$ with an explicit constant approximation factor $\gamma(p, \eps)$ under a sufficiently strong complexity-theoretic assumption. 
As mentioned in the introduction, the fastest known algorithms for $(s,c)$-Gap-$k$-Parity run in time roughly $(2 - \sqrt{c/s - 1})^n$~\cite{conf/soda/AlmanCW20}. So, it is consistent with current knowledge to hypothesize that the fastest possible algorithms for $(s,c)$-Gap-$k$-Parity require time $(2 - \poly(c/s - 1))^n$ for $c/s < 2$.
Assuming this hypothesis and taking, e.g., $c - s = 1/\poly(k)$ (in which case $c/s \leq 1 + 1/\poly(k)$) Theorem~\ref{thm:parity_reduction} then asserts that $(1 + 1/\poly(k))$-approximate $\CVP_p$ requires $(2 - 1/\poly(k))^n$ time for fixed $p \notin 2\Z$

We also get fine-grained hardness of approximation for $\CVP_p$ based on Gap-$k$-SAT as an immediate corollary of Theorem~\ref{thm:parity_reduction} combined with the reduction from Gap-$k$-SAT to Gap-$k$-Parity in Theorem~\ref{thm:SV19}.

\begin{theorem}
	\label{thm:stronger_gap_SETH_or_something}
	For all $p \in [1,\infty) \setminus 2\Z$, all integers $k > \max\{2,p\}$, and all $1-2^{-k} <s \leq c \leq 1$, there exists a polynomial time (Karp) reduction from $(s, c)$-Gap-$k$-SAT instances on $n$ variables to $\gamma$-$\CVP_p$ instances of rank $n$ with $\gamma = \gamma(p, k, s,c)$ satisfying 
	\[
	\gamma \geq 1+ \frac{2^{k-1}}{2^k - 1} \cdot (c-s)\cdot \frac{|\sin(\pi p/2)|}{4p^3 k} \cdot \Big(\frac{2p}{ e^2 \pi^2 k}\Big)^{(p+1)/2} 
	\; .
	\]
In particular, for all $p \in [1, \infty) \setminus 2\Z$ and every $\eps > 0$ there exists $\gamma = \gamma(p, \eps, s, c) > 1$ such that there is no $2^{(1 - \eps)n}$-time algorithm for $\gamma$-$\CVP_p$ assuming Gap-SETH.
\end{theorem}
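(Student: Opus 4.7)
The plan is to obtain this theorem as a direct composition of the $k$-SAT-to-$k$-Parity reduction from Theorem~\ref{thm:SV19} with the Gap-$k$-Parity-to-$\CVP_p$ reduction from Theorem~\ref{thm:parity_reduction}; essentially all of the work consists of verifying that the parameter ranges of the two reductions are compatible and tracking how the gap propagates.

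Given an $(s,c)$-Gap-$k$-SAT instance on $n$ variables with $1 - 2^{-k} < s \leq c \leq 1$, I would first apply Theorem~\ref{thm:SV19} to produce an $(s',c')$-Gap-$k$-Parity instance on the same $n$ variables, where $s' = \frac{2^{k-1}}{2^k-1}\, s$ and $c' = \frac{2^{k-1}}{2^k-1}\, c$. I then need to verify that $(s',c')$ lies in the range $1/2 < s' \leq c' < 1$ required by Theorem~\ref{thm:parity_reduction}: the lower bound $s' > 1/2$ holds because $\frac{2^{k-1}}{2^k-1}(1-2^{-k}) = 1/2$, and the upper bound $c' < 1$ holds because $\frac{2^{k-1}}{2^k-1} < 1$ for all $k \geq 2$. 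The hypothesis $k > \max\{2,p\}$ is identical to the one assumed in our target theorem. Applying Theorem~\ref{thm:parity_reduction} to this parity instance then gives a $\gamma$-$\CVP_p$ instance of rank $n$ (the parameter $n$ is preserved end-to-end, so Gap-$k$-SAT on $n$ variables maps to $\CVP_p$ of rank $n$), and substituting $c' - s' = \frac{2^{k-1}}{2^k-1}(c-s)$ into the bound from Theorem~\ref{thm:parity_reduction} yields precisely the claimed expression for $\gamma$.

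For the ``in particular'' claim, I would fix $\eps > 0$, invoke Gap-SETH to obtain $k$ and $\delta > 0$ such that $(1-\delta, 1)$-Gap-$k$-SAT has no $2^{(1-\eps)n}$-time algorithm, and then apply the composed reduction above. Two small technicalities need to be handled. First, we need $1 - \delta > 1 - 2^{-k}$, which comes for free by replacing $\delta$ with $\min(\delta, 2^{-k-1})$, since shrinking $\delta$ only strengthens the hardness statement. Second, we need $k > \max\{2, p\}$; this can be arranged by invoking Gap-SETH at a sufficiently small $\bar\eps \leq \eps$ instead (hardness at $\bar\eps$ trivially implies hardness at $\eps$, because any $2^{(1-\eps)n}$-time algorithm also runs in $2^{(1-\bar\eps)n}$ time), relying on the natural reading that the Gap-SETH witness $k$ is unbounded as $\eps \to 0$, or alternatively via a standard gap-preserving padding trick that lifts $k$-SAT hardness to $k'$-SAT for any $k' \geq k$. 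Because the entire proof is just a composition of two previously established reductions, there is no real obstacle --- the only work is the parameter bookkeeping described above.
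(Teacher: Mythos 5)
Your proposal matches the paper's own (one-sentence) proof exactly: Theorem~\ref{thm:stronger_gap_SETH_or_something} is obtained by composing the Gap-$k$-SAT-to-Gap-$k$-Parity reduction of Theorem~\ref{thm:SV19} with the Gap-$k$-Parity-to-$\CVP_p$ reduction of Theorem~\ref{thm:parity_reduction}, with $c'-s' = \frac{2^{k-1}}{2^k-1}(c-s)$ plugged into the bound. Your additional verification that $(s',c')$ lands in the required range $1/2 < s' \leq c' < 1$ and your handling of the Gap-SETH bookkeeping (shrinking $\delta$ and ensuring $k > \max\{2,p\}$) are correct and fill in details the paper leaves implicit.
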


We note in passing that the Gap-SETH result (with a non-explicit approximation factor $\gamma(p, \eps, s, c) > 1$) in the above theorem can also be shown directly (i.e., without going through parity) from the results of Section~\ref{sec:other-p}, Proposition~\ref{prop:ip_to_lattice}, and Theorem~\ref{thm:cvp-hardness-cvips}. (In particular, the gap $\eps(k,p) > 0$ that we obtain in Section~\ref{sec:other-p} is necessarily constant for constant $k$ and $p$.)

\subsection{Isolating lattices}
\label{subsec:iso-lats}

The following definition strengthens the notion of an ``isolating parallelepiped'' to an ``isolating lattice.'' It also generalizes to arbitrary CSPs, rather than just $k$-SAT, and explicitly considers the ``gap'' $\eps$ between satisfying and unsatisfying assignments.
(We allow for the possibility that $\eps = 0$ in order to capture the case when $p \in 2\Z$ more naturally.)

\begin{definition}
For any $1 \leq p \leq \infty$, integer $k \geq 1$, constraint $C : \bit^k \to \bit$, and $\eps \geq 0$, we say that $V \in \R^{d^* \times k}$ with full column rank and $\vec{t}^* \in \R^{d^*}$ define a \emph{$(p, k, C, \eps)$-isolating parallelepiped} (respectively, $(p, k, C, \eps)$-isolating lattice) if conditions~\ref{en:sat-vecs} and~\ref{en:unsat-vecs} (resp., if conditions~\ref{en:sat-vecs},~\ref{en:unsat-vecs}, and~\ref{en:other-vecs}) below hold:
\begin{enumerate}
\item (Satisfying assignments are close.) For all $\vec{x} \in C^{-1}(1)$, $\norm{V\vec{x} - \vec{t}^*}_p = 1$. \label{en:sat-vecs}
\item (Unsatisfying assignments are far.) For all $\vec{x} \in C^{-1}(0)$, $\norm{V\vec{x} - \vec{t}^*}_p = 1 + \eps$. \label{en:unsat-vecs}
\item (Non-boolean assignments are far.) For all $\vec{x} \in \Z^k \setminus \bit^k$, $\norm{V\vec{x} - \vec{t}^*}_p \geq 1 + \eps$. \label{en:other-vecs}
\end{enumerate}
\label{def:cvip}
\end{definition}

The following proposition shows how to construct a $(p, k, C, \eps')$-isolating lattice from any $(p, k, C, \eps)$-isolating parallelepiped $V, \vec{t}^*$.
The idea is simply to append a scaled identity matrix to the bottom of $V$ and a vector whose entries are all the same to the bottom of $\vec{t}^*$.
We note that, up to the values of $\eps, \eps'$, the converse to the proposition is trivial since any isolating lattice is also an isolating parallelepiped.
\begin{proposition}
	\label{prop:ip_to_lattice}
For any $1 \leq p \leq \infty$, integer $k \geq 1$, constraint $C : \bit^k \to \bit$, and $\eps' > 0$, if there exists a (computable) $(p, k, C, \eps)$-isolating parallelepiped then there exists a (computable) $(p, k, C, \eps')$-isolating lattice, where
\[
	\eps' = \Big(\frac{(1 + \eps)^p + k \mu}{1 + k \mu}\Big)^{1/p} - 1 \geq \frac{\eps}{1 + k \mu}
\] 
and $\mu := (1+\eps)^p/(3^p - 1)$.
\label{prop:ip-to-cvip}
\end{proposition}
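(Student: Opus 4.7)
The plan is to construct the isolating lattice by stacking a scaled identity gadget below $V$. Specifically, given a $(p,k,C,\eps)$-isolating parallelepiped $V, \vec{t}^*$, I would define
\[
W := \begin{pmatrix} V \\ \alpha I_k \end{pmatrix}, \qquad \vec{s} := \begin{pmatrix} \vec{t}^* \\ (\alpha/2)\vec{1}_k \end{pmatrix},
\]
for a scale $\alpha > 0$ to be chosen below, and then rescale $(W,\vec{s})$ by a single scalar so that satisfying assignments land at distance exactly $1$. The identity block guarantees that $W$ has full column rank (so the output is genuinely a rank-$k$ lattice) and, more importantly, forces any non-boolean integer vector to be far from $\vec{s}$.

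Quantifying this, for any $\vec{x} \in \Z^k$ the identity block contributes $\sum_{i=1}^k \alpha^p |x_i - 1/2|^p$ to $\|W\vec{x} - \vec{s}\|_p^p$. For $\vec{x} \in \{0,1\}^k$ this is exactly $k(\alpha/2)^p$, while for $\vec{x} \in \Z^k \setminus \{0,1\}^k$ at least one coordinate satisfies $|x_i - 1/2| \geq 3/2$ (and every integer $x_j$ satisfies $|x_j - 1/2| \geq 1/2$), so the contribution is at least $(3^p + (k-1))(\alpha/2)^p$. Writing $\alpha_0 := \alpha/2$ and combining with the parallelepiped guarantee on $\|V\vec{x}-\vec{t}^*\|_p$, I obtain
\[
\|W\vec{x} - \vec{s}\|_p^p = \begin{cases} 1 + k\alpha_0^p & \text{if } C(\vec{x}) = 1, \\ (1+\eps)^p + k\alpha_0^p & \text{if } \vec{x} \in \{0,1\}^k,\ C(\vec{x}) = 0, \\ \geq (3^p + k - 1)\alpha_0^p & \text{if } \vec{x} \in \Z^k \setminus \{0,1\}^k. \end{cases}
\]
After rescaling by $(1 + k\alpha_0^p)^{-1/p}$, satisfying assignments have distance $1$ and unsatisfying ones have distance $1+\eps'$ with $(1+\eps')^p = \frac{(1+\eps)^p + k\alpha_0^p}{1 + k\alpha_0^p}$. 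The remaining condition (that non-boolean integer vectors also lie at distance at least $1+\eps'$) reduces to $(3^p - 1)\alpha_0^p \geq (1+\eps)^p$, so the tightest admissible choice is $\alpha_0^p = \mu := (1+\eps)^p/(3^p - 1)$, which matches the formula for $\eps'$ in the statement.

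It remains to prove the cleaner bound $\eps' \geq \eps/(1 + k\mu)$, and this is the step where some thought is needed. I would use Jensen's inequality applied to the convex function $x \mapsto x^p$ (valid for $p \geq 1$): viewing $(1+\eps')^p = \frac{1 \cdot (1+\eps)^p + k\mu \cdot 1^p}{1 + k\mu}$ as a weighted average of $(1+\eps)^p$ and $1^p$ with weights $1$ and $k\mu$, convexity gives $(1+\eps')^p \geq \bigl(\frac{(1+\eps) + k\mu}{1+k\mu}\bigr)^p = \bigl(1 + \frac{\eps}{1+k\mu}\bigr)^p$, and extracting $p$-th roots yields the claim. Computability of the parallelepiped passes to the lattice since the whole construction is explicit in $V$, $\vec{t}^*$, $p$, $k$, and $\eps$. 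The only genuinely non-mechanical step is the threshold calculation that pins down $\mu$; everything else is bookkeeping in the three cases above.
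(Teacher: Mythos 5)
Your construction is exactly the paper's: you stack a scaled identity block below $V$, shift the target by half that scale in each new coordinate, set $\alpha_0^p = \mu = (1+\eps)^p/(3^p-1)$ by forcing the identity block to put every non-boolean integer vector at least a factor $1+\eps'$ away, and normalize by $(1+k\mu)^{-1/p}$. The case analysis you give (boolean satisfying, boolean unsatisfying, non-boolean with one coordinate at distance $\geq 3/2$ and the rest at distance $\geq 1/2$) is precisely the verification the paper leaves to the reader, and it is correct.

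Where you genuinely diverge is in proving $\eps' \geq \eps/(1+k\mu)$. The paper introduces $f(s,x) := \bigl((x^s + \beta)/(1+\beta)\bigr)^{1/s}$, shows it is non-decreasing in $s$ by a two-step calculus argument (compute $\partial_s \log f$, observe it vanishes at $x=1$, then check $\partial_x$ of that derivative is non-negative), and evaluates $f(p, 1+\eps) \geq f(1, 1+\eps)$. You instead apply Jensen's inequality to the convex map $x \mapsto x^p$, reading $(1+\eps')^p = \frac{(1+\eps)^p + k\mu \cdot 1^p}{1+k\mu}$ as a weighted average of $p$-th powers, which dominates the $p$-th power of the weighted average $\bigl(1 + \frac{\eps}{1+k\mu}\bigr)^p$. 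Taking $p$-th roots finishes it. Your argument is shorter, needs no derivatives, and makes the role of convexity transparent; the paper's route is more laborious but is self-contained in the sense that it directly establishes monotonicity in $s$, which may be reusable elsewhere. Both are correct.
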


\begin{proof}
Suppose that $V$ and $\vec{t}^*$ define a $(p, k, C, \eps)$-isolating parallelepiped. Define
\[
V' := \frac{1}{(1 + k \mu)^{1/p}} \cdot \begin{pmatrix} V \\ 2\mu^{1/p} \cdot I_k  \end{pmatrix} \ , \qquad \vec{t}' := \frac{1}{(1 + k \mu)^{1/p}} \cdot \begin{pmatrix} \vec{t}^* \\ \mu^{1/p} \cdot \vec{1} \end{pmatrix} \ .
\]
One can check that $V'$ and $\vec{t}'$ define a $(p, k, C, \eps')$-isolating lattice with $\eps'$ as specified above.

Finally, to see that $\eps' \geq \eps/(1+k\mu)$, we define
\[
	f(s,x) := \Big(\frac{x^s + \beta}{1+\beta}\Big)^{1/s}
	\; .
\]
It suffices to show that $f(s,x)$ is non-decreasing in $s$ for $s \geq 1$, $x \geq 1$, and $\beta > 0$.
Notice that
\[
	g(s,x) := \frac{\partial}{\partial s} \log f(s,x) = \frac{1}{s^2} \cdot \Big( \frac{s x^s \log(x)}{x^s + \beta} - (\log(x^s + k\mu) - \log(1+k\mu)) \Big)
\]
In particular, this is zero when $x = 1$, so it suffices to show that this expression is increasing in $x$ for $x \geq 1$. Indeed, a simple computation shows that
\[
	\frac{\partial}{\partial x} g(s,x) = \beta \cdot \frac{x^{s-1} \log x}{(x^s + \beta)^2} \geq 0
	\; ,
\]
as needed. Therefore, $f(s,x)$ is increasing in $s$, so that $f(s,x) \geq f(1,x)$, and the result follows by plugging in $s = p$, $x = 1+\eps$, and $\beta = \mu k$.
\end{proof}

\subsection{Constructing isolating parallelepipeds for parity with large \texorpdfstring{$\eps$}{gap}}
\label{subsec:parity-parallelepipeds}

We now show a construction of $(p, k, C_b, \eps)$-isolating parallelepipeds for the parity constraints defined by 
$C_1(x_1, \ldots, x_k) :=  x_1 \oplus \cdots \oplus x_k$ and $C_0(x_1,\ldots, x_k) = \neg C_1(x_1, \ldots, x_k)$, i.e., $C_b$ constrains the parity of the number of non-zero inputs. (It is trivial to convert an isolating parallelepiped for $C_b$ into one for $C_{1-b}$, but our construction happens to naturally yield both.) 

The proof relies on bounds on sums of binomial coefficients corresponding to eigenvalues of $H_{p,k}(t^*)$. We defer the proof of these bounds to Section~\ref{sec:crazy_binomial}.
 
\begin{theorem}
	\label{thm:ips_parity}
	For any $k \geq 3$, $1 \leq p < k$, and $b \in \{0,1\}$, there exists a computable $(p, k, C_b, \eps)$-isolating parallelepiped for some 
	\[
		\eps \geq \frac{|\sin(\pi p/2)|}{p^2} \cdot \Big(\frac{2p}{ e^2 \pi^2 k}\Big)^{(p+1)/2}
		\; .
	\]
	
	Up to scaling, this is achieved (in $\{-1,1\}$ coordinates) by the construction given in Definition~\ref{def:param-ip} with $\alpha_{\vec{u}} := 1 + (-1)^{\eta + b} \prod_i u_i$ for $\vec{u}\in \{-1,1\}^k$ and $t^* :=  (1+(-1)^{k+1})/2$, where $\eta :=  \floor{k/2}  + \floor{p/2}$.
\end{theorem}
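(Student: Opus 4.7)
The plan is to leverage the eigenvector decomposition of $H_{k,p}(t^*)$ from Section~\ref{sec:other-p}: the prescribed $\vec{\alpha} = \vec{1} + (-1)^{\eta+b}\vpar$ is exactly the sum of the eigenvectors $\vec{v}_\emptyset = \vec{1}$ and $\vpar = \vec{v}_{[k]}$ (Lemma~\ref{cor:H-eigenvectors}), with eigenvalues $\lambda$ and $\lampar$. Consequently $\alpha_{\vec{u}} = 1 + (-1)^{\eta+b}\prod_i u_i \in \{0, 2\}$ is non-negative and computable; the $2^{k-1}$ non-zero rows of $V$ have the form $2^{1/p}\vec{u}^T$ with $\prod_i u_i$ of a fixed sign, and they span $\R^k$ for $k \geq 3$ (the case $k = 2$ produces only two antipodal rows and fails, which explains the hypothesis), so $V$ has full column rank. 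By Lemma~\ref{lem:alpha-to-weights},
\[
\|V\vec{y}-\vec{t}^*\|_p^p = (H_{k,p}(t^*)\vec{\alpha})_{\vec{y}} = \lambda + (-1)^{\eta+b}\lampar \cdot \vpar(\vec{y}) \qquad (\vec{y}\in\pmo^k),
\]
so the problem reduces to pinning down the sign and magnitude of $\lampar$.

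The choice $t^* = (1+(-1)^{k+1})/2$ makes $k-2j-t^* = 2(m-j)$ for $m := \lfloor k/2\rfloor$, so by Eq.~\eqref{eq:lambda-par}, $\lampar = 2^p\sum_{j=0}^k(-1)^j\binom{k}{j}|j-m|^p$. The main technical obstacle is evaluating this alternating sum; that is exactly the content of Theorem~\ref{thm:crazy_binomial} (and Corollary~\ref{cor:crazy_binomial}), which will be proved in Section~\ref{sec:crazy_binomial} via contour integration together with Ramanujan's identity (Theorem~\ref{thm:ramanujan}). It yields
\[
\lampar = (-1)^{m+1}\sin(\pi p/2)\cdot 2^p\binom{k}{m}\beta_{p,k}, \qquad \beta_{p,k} > C_p := \tfrac{2(2-2^{-p})\Gamma(p+1)\zeta(p+1)}{\pi^{p+1}}.
\]
Since $\sin(\pi p/2) = (-1)^{\lfloor p/2\rfloor}|\sin(\pi p/2)|$ and $\eta = m + \lfloor p/2\rfloor$, the signs collapse to $(-1)^{\eta+b}\lampar = (-1)^{b+1}|\lampar|$. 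Combining with the fact that under $y_i = 1-2x_i$ one has $\vpar(\vec{y}) = \prod_i y_i = (-1)^b$ iff $C_b(\vec{x}) = 1$, we conclude that $\|V\vec{y}-\vec{t}^*\|_p^p$ equals $\lambda-|\lampar|$ on satisfying assignments and $\lambda+|\lampar|$ on the rest; the strict inequality $|\lampar| < \lambda$ will be a byproduct of the quantitative bounds below for $k \geq 3$.

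It remains to extract the promised gap. Rescaling by $(\lambda-|\lampar|)^{-1/p}$ and passing to $\bit^k$ via Lemma~\ref{lem:ips-pmo-bit} yields a valid $(p,k,C_b,\eps)$-isolating parallelepiped with $(1+\eps)^p = 1 + 2|\lampar|/(\lambda-|\lampar|)$. A routine concavity inequality (e.g.\ $(1+x)^{1/p}-1 \geq x/(2p)$ for $x\in[0,1]$) gives $\eps \gtrsim |\lampar|/(p\lambda)$. The stated quantitative bound then follows by assembling (i) $\beta_{p,k} \geq C_p \gtrsim (p/(\pi e))^p$, from Stirling applied to $\Gamma(p+1)$ together with $\zeta(p+1) \geq 1$; (ii) the Stirling-type lower bound $\binom{k}{m} \geq 2^k\sqrt{2/(\pi k)}$; and (iii) the $p$-th absolute moment estimate
\[
\lambda = 2^{p+k}\cdot \E_{X \sim \mathrm{Bin}(k,1/2)}|X-m|^p \lesssim 2^{p+k}\Gamma((p+1)/2)(k/2)^{p/2}/\sqrt{\pi},
\]
which is sharp up to constants by the central limit theorem. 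Substituting these into $\eps \gtrsim |\lampar|/(p\lambda)$ and simplifying the $\Gamma$-factors produces the target bound $\eps \geq |\sin(\pi p/2)|p^{-2}(2p/(e^2\pi^2 k))^{(p+1)/2}$.
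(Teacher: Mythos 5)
Your proposal follows the same route as the paper's proof: identify $\vec{\alpha}$ as the sum of the eigenvectors $\vec 1$ and $\vpar$, read off $\|V\vec y-\vec t^*\|_p^p = \lambda \pm |\lampar|$ via Lemma~\ref{lem:alpha-to-weights}, use Corollary~\ref{cor:crazy_binomial} for the sign and lower bound on $|\lampar|$, and translate $\,(1+\eps)^p = (\lambda+|\lampar|)/(\lambda-|\lampar|)\,$ into a lower bound on $\eps$. Your sign-chasing (collapsing $(-1)^{\eta+b}\lampar$ to $(-1)^{b+1}|\lampar|$ and matching with $\vpar(\vec y)=(-1)^b$) and your verification of the choice $t^*=(1+(-1)^{k+1})/2$ are correct. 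You also explicitly verify the full-column-rank requirement of Definition~\ref{def:cvip} and correctly observe it fails at $k=2$; the paper does not spell this out, so this is a genuine (small) improvement in care.

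The one real gap is step (iii), the upper bound on $\lambda$. You write
\[
\lambda = 2^{p+k}\,\E_{X\sim\mathrm{Bin}(k,1/2)}\,|X-m|^p \lesssim 2^{p+k}\,\Gamma\bigl((p+1)/2\bigr)\,(k/2)^{p/2}/\sqrt{\pi}\,,
\]
but you only gesture at the central limit theorem rather than proving the inequality. The CLT gives \emph{convergence} of the moment, not a one-sided bound valid for all $k$ and all $1\le p<k$; near the boundary $p\approx k$ the binomial tail contribution can deviate from the Gaussian prediction and needs explicit control. In addition, your estimate centers at $m=\lfloor k/2\rfloor$ rather than at the mean $k/2$, and the half-integer offset when $k$ is odd must be absorbed somewhere. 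The paper handles both issues non-asymptotically: Lemma~\ref{lem:simple_binomial} proves $\sum_i\binom{k}{i}|i-k/2|^p \le 11\binom{k}{\lfloor k/2\rfloor}(pk/2)^{(p+1)/2}$ by splitting the range at $k/2-\sqrt{pk}$ and $k/2-\sqrt{pk/2}$ and treating $p\ge k/2$ separately, and Corollary~\ref{cor:simple_binomial} accounts for the shift to $\lfloor k/2\rfloor$. If you replace your heuristic moment bound with a rigorous sub-Gaussian/binomial moment inequality (or simply cite Corollary~\ref{cor:simple_binomial}), the rest of your argument goes through and yields the stated bound on $\eps$.
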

\begin{proof}
 	Let $\vec{1} \in \{-1,1\}^{2^k}$ be the all-ones vector, and let $\vpar \in \{-1,1\}^{2^k}$ be the vector whose $\vec{u}$ coordinate is $\prod_i u_i$ for $\vec{u} \in \{-1,1\}^k$, as in Section~\ref{subsec:eig-vec-val}. In particular, the vector $\vec{\alpha} \in \{-1,1\}^{2^k}$ whose $\vec{u}$ coordinate is $\alpha_{\vec{u}}$ satisfies $\vec{\alpha} = \vec{1}+ (-1)^{\eta + b} \vpar$. 
 	
 	By Lemma~\ref{cor:H-eigenvectors} and Corollary~\ref{cor:H-eigenvalues}, $\vec{1}$ and $\vpar$ are eigenvectors of $H_{p,k}(t^*)$ with respective eigenvalues
	 \[
 		\lambda = 2^p \sum_{j=0}^k \binom{k}{j}|k/2 - j-t^*/2|^p = 2^p \sum_{j=0}^k \binom{k}{j} |j-\floor{k/2}|^p
 		\; ,
 	\]
 	and
 	\[
 		\lampar = 2^p \sum_{j=0}^k (-1)^j \binom{k}{j}|k/2 - j-t^*/2|^p = 2^p \sum_{j=0}^k  (-1)^j \binom{k}{j} |j - \floor{k/2}|^p
 		\; .
 	\]
 	(See also Eqs.~\eqref{eq:lambda} and~\eqref{eq:lambda-par}.)
 	Therefore,
 	\[
 		\vec{\beta} := H_{p,k}(t^*) \vec{\alpha} = \lambda \vec{1}+ (-1)^{\eta + b} \lampar \vpar
 		\; .
 	\]
 	In other words, $\beta_{\vec{u}} = \lambda + (-1)^{\eta + b} \lampar \prod_i u_i \geq 0$, so that the coordinates in $\vec{\beta}$ take just two values, depending only on $\prod_i u_i$. 
 	By Corollary~\ref{cor:crazy_binomial}, $\mathsf{sign}(\lampar) = (-1)^{\eta + 1}$ (where we take this statement to be true by convention if $\lampar = 0$), so that $\beta_{\vec{u}}$ is smaller when $\prod_i u_i = (-1)^{b}$. 
 	By Lemma~\ref{lem:alpha-to-weights}, $\beta_{\vec{u}} = \|V \vec{u} - \vec{t}^*\|_p^p$ in the corresponding parallelepiped. So, (up to scaling and change of coordinates) this gives a $(p, k, C_b, \eps)$-isolating parallelepiped with 
 	\begin{equation}
 	\label{eq:eps_lambda_lambda_par}
 		(1+\eps)^p = \frac{\lambda + |\lampar|}{\lambda - |\lampar|} \geq 1 + 2|\lampar|/\lambda
 		\; .
 	\end{equation}
 	
 	It remains to bound $\eps$. Indeed, by Corollary~\ref{cor:simple_binomial}
 	\[
 	\lambda \leq 44 2^p \cdot \binom{k}{\floor{k/2}} (pk/2)^{(p+1)/2}
 	\; .
 	\]
 	Similarly, by Corollary~\ref{cor:crazy_binomial}, 
 	\[
 		|\lampar| \geq 4 2^p \cdot |\sin(\pi p/2)| \binom{k}{\floor{k/2}} (p/(e\pi))^p
 		\; .
 	\]
 	Plugging this into Eq.~\eqref{eq:eps_lambda_lambda_par}, we see that
 	\begin{align*}
 		(1+\eps)^p 
 			&\geq 1+ \frac{|\sin(\pi p/2)| (p/(e\pi))^p}{6 (pk/2)^{(p+1)/2}} \\
 			&= 1 + |\sin(\pi p/2)| \cdot \frac{e\pi }{6p} \cdot \Big(\frac{2p}{ e^2 \pi^2 k}\Big)^{(p+1)/2} \\
 			&\geq 1 + 1.4 \cdot \frac{|\sin(\pi p/2)|}{p} \cdot \Big(\frac{2p}{ e^2 \pi^2 k}\Big)^{(p+1)/2}
 			\; .
 	\end{align*}
 	Finally, using the fact that $(1+\eps)^p \leq 1+1.4p\eps$ (for, e.g., $\eps \leq 1/(4p)$),
 	we have that
 	\[
 		\eps \geq \frac{|\sin(\pi p/2)|}{p^2} \cdot \Big(\frac{2p}{ e^2 \pi^2 k}\Big)^{(p+1)/2}
 		\; ,
 	\]
 	as claimed.
\end{proof}

\begin{corollary}
	\label{cor:il_parity}
	For any $k \geq 3$, $1 \leq p < k$, and $b \in \{0,1\}$, there exists a computable $(p, k, C_b, \eps)$-isolating lattice for some 
	\[
	\eps \geq \frac{|\sin(\pi p/2)|}{p^2} \cdot \frac{1}{1+2k/(3^p-1)} \cdot \Big(\frac{2p}{ e^2 \pi^2 k}\Big)^{(p+1)/2}  \geq \frac{|\sin(\pi p/2)|}{2p^2 k} \Big(\frac{2p}{ e^2 \pi^2 k}\Big)^{(p+1)/2} 
	\; .
	\]
\end{corollary}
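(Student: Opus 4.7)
The plan is to prove Corollary~\ref{cor:il_parity} by a direct combination of the two results already at hand: Theorem~\ref{thm:ips_parity} and Proposition~\ref{prop:ip_to_lattice}. That is, I will first invoke Theorem~\ref{thm:ips_parity} to obtain a $(p, k, C_b, \eps_0)$-isolating parallelepiped with
\[
\eps_0 \geq \frac{|\sin(\pi p/2)|}{p^2} \cdot \Big(\frac{2p}{e^2 \pi^2 k}\Big)^{(p+1)/2},
\]
and then feed this into Proposition~\ref{prop:ip_to_lattice} to convert it into a $(p, k, C_b, \eps')$-isolating lattice with $\eps' \geq \eps_0/(1 + k\mu)$, where $\mu := (1+\eps_0)^p/(3^p - 1)$.

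Next I will bound $\mu$ from above so that the explicit dependence on $k$ and $p$ in the denominator becomes transparent. Since $\eps_0$ is quite small (in particular $\eps_0 \leq 1/(4p)$ for the regime of interest, because of the $(2p/(e^2\pi^2 k))^{(p+1)/2}$ factor with $k \geq 3$ and $|\sin(\pi p/2)|/p^2 \leq 1$), we have $(1+\eps_0)^p \leq 2$, and hence $\mu \leq 2/(3^p - 1)$. Plugging this into the bound from Proposition~\ref{prop:ip_to_lattice} gives
\[
\eps' \geq \frac{\eps_0}{1 + 2k/(3^p - 1)} \geq \frac{|\sin(\pi p/2)|}{p^2} \cdot \frac{1}{1 + 2k/(3^p-1)} \cdot \Big(\frac{2p}{e^2 \pi^2 k}\Big)^{(p+1)/2},
\]
which is the first inequality claimed in the corollary.

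To finish, I will establish the second, cleaner inequality. For $p \geq 1$ we have $3^p - 1 \geq 2$, so $2k/(3^p - 1) \leq k$, which gives $1 + 2k/(3^p-1) \leq 1 + k \leq 2k$ (using $k \geq 3 \geq 1$). Therefore $1/(1 + 2k/(3^p-1)) \geq 1/(2k)$, and substituting yields
\[
\eps' \geq \frac{|\sin(\pi p/2)|}{2p^2 k} \cdot \Big(\frac{2p}{e^2 \pi^2 k}\Big)^{(p+1)/2},
\]
as required. The main content here is really all in Theorem~\ref{thm:ips_parity} (whose proof in turn rests on the binomial sum identity of Section~\ref{sec:crazy_binomial}); the rest is a routine but careful numerical bookkeeping step, and I do not expect any genuine obstacle.
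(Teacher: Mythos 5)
Your proof follows the same route as the paper's: combine Theorem~\ref{thm:ips_parity} with Proposition~\ref{prop:ip_to_lattice}, bound $\mu \leq 2/(3^p-1)$, and then do arithmetic. The arithmetic in your final two displays is correct, and the overall structure is right. However, the sentence you use to justify $(1+\eps_0)^p \leq 2$ is a non sequitur: you argue that $\eps_0 \leq 1/(4p)$ ``because of the $(2p/(e^2\pi^2 k))^{(p+1)/2}$ factor,'' but that factor appears in the \emph{lower} bound on $\eps_0$ from Theorem~\ref{thm:ips_parity}, and a lower bound being small tells you nothing about whether the quantity itself is small. Proposition~\ref{prop:ip_to_lattice} defines $\mu$ in terms of the parallelepiped's \emph{actual} gap $\eps_0$; if that gap were large, $\mu$ would be large and the resulting $\eps'$ could be smaller than claimed. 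To repair this, one should either verify directly from the construction in Theorem~\ref{thm:ips_parity} that $(1+\eps_0)^p = (\lambda + |\lampar|)/(\lambda - |\lampar|) \leq 2$, i.e.\ that $|\lampar| \leq \lambda/3$, or observe that a $(p,k,C_b,\eps_0)$-isolating parallelepiped can be turned into one with any prescribed smaller gap $\eps \in (0,\eps_0]$ by appending a coordinate equal to $0$ in every column of $V$ and equal to some $s>0$ in $\vec{t}^*$, then rescaling, so that one may take the gap to equal the lower bound without loss of generality. (The paper's own one-line proof plugs the lower bound $\eps'$ directly into the formula for $\mu$ and is similarly terse on this point, so you are in good company, but your explicit justification as written is incorrect.)
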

\begin{proof}
	Combining Theorem~\ref{thm:ips_parity} with Proposition~\ref{prop:ip_to_lattice} yields an isolating lattice with 
	\[
		\eps \geq \frac{\eps'}{1 + k \mu}
		\; ,
	\]
	where 
	\[
		\eps' := \frac{|\sin(\pi p/2)|}{p^2} \cdot \Big(\frac{2p}{ e^2 \pi^2 k}\Big)^{(p+1)/2} 
		\; ,
	\]
	and $\mu := (1+\eps')^p/(3^p-1) \leq 2/(3^p-1)$. The result follows.
\end{proof}

\subsection{Gap-SETH hardness of CVP from isolating lattices}
\label{subsec:gap-seth-hardness-cvp}

\begin{theorem}
Let $\mathcal{C}$ be a $k$-CSP for some $k \in \Z^+$ and suppose that for some $p \in [1, \infty)$ and $\eps > 0$, there exists a computable $(p, k, C, \eps)$-isolating lattice for every $C \in \mathcal{C}$. 
Then, for every $0 < s \leq c \leq 1$ there exists a polynomial time (Karp) reduction from $(s, c)$-Gap-$\mathcal{C}$ instances on $n$ variables to $\gamma$-$\CVP_p$ instances of rank $n$ with $\gamma = \gamma(p,\eps, s,c)$ satisfying 
\[
	\gamma^p = \frac{1 - s(1-1/(1 + \eps)^p)}{1 - c(1-1/(1 + \eps)^p)}
	\; .
\]
\label{thm:cvp-hardness-cvips}
\end{theorem}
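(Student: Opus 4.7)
The plan is to build a $\CVP_p$ instance from the CSP instance $\Phi$ with constraints $C_1,\dots,C_m$ on $n$ variables by stacking the given isolating lattices, one per constraint, with no extra ``identity'' gadget. For each $C_i$, let $V_i \in \R^{d_i \times k}$ and $\vec{t}_i^* \in \R^{d_i}$ define a $(p,k,C_i,\eps)$-isolating lattice; embed $V_i$ into $\Phi_i \in \R^{d_i \times n}$ by placing its $k$ columns into the positions of the variables occurring in $C_i$ and filling the remaining $n-k$ columns with zeros. Set
\[
    \basis := \begin{pmatrix} \Phi_1 \\ \vdots \\ \Phi_m \end{pmatrix}, \qquad \vec{t} := \begin{pmatrix} \vec{t}_1^* \\ \vdots \\ \vec{t}_m^* \end{pmatrix}, \qquad r^p := cm + (1-c)m(1+\eps)^p,
\]
and output $(\basis, \vec{t}, r)$. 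Assuming without loss of generality that every variable appears in some constraint, the full column rank of each $V_i$ forces the columns of $\basis$ to be linearly independent, so the lattice has rank exactly $n$.

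By construction, for every $\vec{z} \in \Z^n$,
\[
    \|\basis\vec{z} - \vec{t}\|_p^p \;=\; \sum_{i=1}^{m} \|V_i \vec{z}_i - \vec{t}_i^*\|_p^p,
\]
where $\vec{z}_i$ denotes the restriction of $\vec{z}$ to the $k$ variables of $C_i$. By Definition~\ref{def:cvip}, the $i$-th summand equals $1$ if $\vec{z}_i \in \{0,1\}^k$ satisfies $C_i$, equals $(1+\eps)^p$ if $\vec{z}_i \in \{0,1\}^k$ does not satisfy $C_i$, and is at least $(1+\eps)^p$ if $\vec{z}_i \notin \{0,1\}^k$. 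In a YES instance, any assignment $\vec{y}$ with $m^+(\vec{y}) \geq cm$ yields $\|\basis\vec{y}-\vec{t}\|_p^p \leq cm + (1-c)m(1+\eps)^p = r^p$, so the threshold is met.

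The crux is the NO case: we must show $\|\basis\vec{z}-\vec{t}\|_p^p > sm + (1-s)m(1+\eps)^p$ for \emph{every} $\vec{z} \in \Z^n$, not only boolean ones. Let $A := \#\{i : \vec{z}_i \in \{0,1\}^k \text{ and } C_i(\vec{z}_i)=1\}$. Define $\vec{z}' \in \{0,1\}^n$ by setting $z'_j := z_j$ whenever $z_j \in \{0,1\}$ and $z'_j := 0$ otherwise. Whenever $\vec{z}_i$ is boolean it agrees with $\vec{z}'_i$, so $A \leq m^+(\vec{z}') < sm$ by the NO hypothesis. The bounds on each summand above then give
\[
    \|\basis\vec{z}-\vec{t}\|_p^p \;\geq\; A + (m-A)(1+\eps)^p \;=\; m(1+\eps)^p - A\bigl((1+\eps)^p-1\bigr) \;>\; sm + (1-s)m(1+\eps)^p,
\]
as required. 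Dividing this NO bound by $r^p$ and then dividing numerator and denominator by $(1+\eps)^p$ yields $\gamma^p = (1 - s(1 - (1+\eps)^{-p}))/(1 - c(1 - (1+\eps)^{-p}))$, which matches the statement.

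The only nontrivial step is the non-boolean case, since we have removed the identity gadget that previously forced $\vec{z} \in \{0,1\}^n$ outright. The rounding argument works precisely because an isolating lattice makes a non-boolean restriction at least as ``expensive'' as an unsatisfying boolean one, so any non-boolean $\vec{z}$ is dominated by its boolean projection $\vec{z}'$ in the NO-case count. No modification of $m$, no padding constraints, and no dependence of $\gamma$ on $n$ are needed.
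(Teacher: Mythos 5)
Your proof is correct and matches the paper's proof essentially step for step: the same stacked construction with one isolating lattice per constraint and no identity gadget, the same choice of $r$, and the same treatment of the NO case by comparing an arbitrary $\vec{z} \in \Z^n$ to a boolean projection $\vec{z}'$ using the third property of isolating lattices. Your version is just slightly more explicit in spelling out the casework behind the inequality $\|\basis\vec{z}-\vec{t}\|_p^p \geq \|\basis\vec{z}'-\vec{t}\|_p^p$.
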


\begin{proof}
Let $\Phi$ be an $(s,c)$-Gap-$\mathcal{C}$ instance with $n$ variables and $m$ constraints $C_1, \ldots, C_m$.
Let $(V_1, \vec{t}_1^*), \ldots, (V_m, \vec{t}_m^*)$ be $(p, k, C, \eps)$-isolating lattices corresponding to the constraints $C = C_1, \ldots, C_m$, respectively.
We define the output $\gamma$-$\CVP_p$ instance $(B, \vec{t}, r)$ as follows.
 We set
\[
B := \begin{pmatrix} B_1 \\ \vdots \\ B_m \end{pmatrix}, \qquad \vec{t} := \begin{pmatrix} \vec{t}_1^* \\ \vdots \\ \vec{t}_m^* \end{pmatrix}\ ,
\]
with blocks $B_i \in \R^{d^* \times n}$ defined by
\[
(B_i)_j := \begin{cases}
(V_i)_s & \textrm{if $x_j$ is the $s$th variable of $C_i$} \ , \\
\vec{0}  & \textrm{otherwise} \ ,
\end{cases}
\]
for $1 \leq i \leq m$ and $1 \leq j \leq n$, where $(V_i)_s$ denotes the $s$th column of $V_i$. We set 
\[
r := ((1+\eps)^p - c((1 + \eps)^p-1))^{1/p} m^{1/p}
\; .
\]
Clearly, the reduction runs in polynomial time. The fact that $B$ is full-rank (and hence a lattice basis) follows from the fact that the $V_i$ are full-rank, assuming without loss of generality that all $n$ variables appear in $\Phi$.

For $\vec{y} \in \bit^n$,
\[
\norm{B\vec{y} - \vec{t}}_p^p = \sum_{i=1}^m \norm{B_i\vec{y} - \vec{t}_i}_p^p = m^+(\vec{y}) + (m - m^+(\vec{y})) \cdot (1 + \eps)^p \ ,
\]
where $m^+(\vec{y})$ denotes the number of constraints satisfied by $\vec{y}$.
It follows that if $\val(\Phi) \geq c$, then there exists $\vec{y} \in \bit^n$ such that 
\[
\norm{B\vec{y} - \vec{t}}_p^p \leq cm + (1-c) (1 + \eps)^p m = r^p
\; .
\] 
Alternatively, if $\val(\Phi) < 1 - \delta$ then for every $\vec{y} \in \Z^n$,
\[
\norm{B\vec{y} - \vec{t}}_p^p \geq \norm{B \vec{y}' - \vec{t}}_p^p > ((1-s) \cdot (1 + \eps)^p + s)m 
= \frac{(1+\eps)^p - s((1 + \eps)^p-1)}{(1+\eps)^p - c((1 + \eps)^p-1)} \cdot r^p \ ,
\]
where $\vec{y}'$ is an (arbitrary) vector satisfying $\vec{y}' \in \bit^n$ and $y_i' = y_i$ for coordinates $i$ such that $y_i \in \bit$.
Therefore, the output is an instance of $\gamma$-$\CVP$ with
\[
\gamma = \gamma(p, s, c) =\frac{(1 - s(1-1/(1 + \eps)^p))^{1/p}}{(1 - c(1-1/(1 + \eps)^p))^{1/p}} \ ,
\]
which is a `YES' instance if $\Phi$ is a `YES' instance and a `NO' instance if $\Phi$ is a `NO' instance, as needed.
\end{proof}

We are now ready to prove the main hardness result in this section, Theorem~\ref{thm:parity_reduction}.

\begin{proof}[Proof of Theorem~\ref{thm:parity_reduction}]
	Combining Theorem~\ref{thm:cvp-hardness-cvips} with Corollary~\ref{cor:il_parity} gives
	\[
		\gamma \geq \Big( \frac{1 - s(1-1/(1 + \eps)^p)}{1 - c(1-1/(1 + \eps)^p)} \Big)^{1/p}
		\; ,
	\]
	where 
	\[
		\eps := \frac{|\sin(\pi p/2)|}{2p^2 k} \Big(\frac{2p}{ e^2 \pi^2 k}\Big)^{(p+1)/2} 
		\; .
	\]
	We need to show that $\gamma \geq 1+ \eps (c-s)/(2p)$. 
	
	To that end, let
	\[
		f(\delta,\alpha) := \frac{1 - (c-\delta)(1-1/\alpha)}{1 - c(1-1/\alpha)} \cdot \frac{1}{(1+ \delta (\alpha^{1/p}-1)/(2p))^p}
		\; .
	\]
	Notice that $\gamma^p/(1+\eps (c-s)/p)^p \geq f(c-s, (1+\eps)^p)$, and $f(0,\alpha) = 1$. So, it suffices to show that $f(\delta,\alpha)$ is increasing in $\delta$ for $1/2 \leq \delta \leq c \leq 1$ and $1 \leq \alpha \leq 2$.
	
	We have
	\[
		\frac{\partial }{\partial \delta} \log(f(\delta, \alpha)) = \frac{\alpha-1}{\alpha - (c-\delta)(\alpha-1)}-\frac{p(\alpha^{1/p} -1)}{2p + \delta (\alpha^{1/p}-1)} \geq  \frac{\alpha-1}{\alpha}-\frac{\alpha -1}{2} \geq 0
		\; ,
	\]
	as needed.
\end{proof}

\section{Hardness of CVPP from on-off isolating parallelepipeds}
\label{sec:cvpp-hardness}

In this section, we substantially improve the quantitative hardness results from~\cite{conf/focs/BennettGS17} for $\CVPP_p$.~\cite{conf/focs/BennettGS17} showed $2^{\Omega(\sqrt{n})}$-hardness of $\CVPP_p$ for all $p \in [1, \infty)$ assuming non-uniform ETH, and did not show any additional hardness assuming non-uniform SETH. Here we show $2^{\Omega(n)}$-hardness of $\CVPP_p$ for all $p \neq 2$ (including even integers other than $2$) assuming non-uniform ETH, and $2^{(1-\eps)n}$-hardness of $\CVPP_p$ for all $p \notin 2\Z$ assuming non-uniform SETH. We also show both of these results for $p = \infty$. We do not show any improved hardness for the case where $p = 2$, which remains a tantalizing open question.

\begin{theorem}
The following hardness results hold for $\CVPP_p$:
\begin{enumerate}
\item For every $p \in [1, \infty) \setminus 2\Z$ and $\eps > 0$, there is no $2^{(1 - \eps)n}$-time algorithm for $\CVPP_p$ assuming non-uniform Max-SAT-SETH. In particular, there is no $2^{(1 - \eps)n}$-time algorithm for $\CVPP_p$ assuming non-uniform SETH. \label{en:cvpp-seth-hardness-finite-p}
\item For every $p \geq 1$, $p \neq 2$, there is no $2^{o(n)}$-time algorithm for $\CVPP_p$ assuming non-uniform Max-SAT-ETH. In particular, there is no $2^{o(n)}$-time algorithm for $\CVPP_p$ assuming non-uniform ETH. \label{en:cvpp-eth-hardness-finite-p}
\item For every $\eps > 0$, there exists a $\gamma(\eps) > 1$ such that there is no $2^{(1 - \eps)n}$-time algorithm that approximates $\CVPP_{\infty}$ to within a factor of $\gamma(\eps)$ assuming non-uniform SETH.
\label{en:cvpp-seth-hardness-infinity}
\end{enumerate}
\label{thm:cvpp-hardness-main}
\end{theorem}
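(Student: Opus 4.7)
The plan is to reduce all three claims to the existence of a suitable \emph{on-off-isolating parallelepiped}, defined as a triple $(V, \vec{t}^*, \toff)$ with $V \in \R^{d^* \times k}$ such that $\|V\vec{z} - \vec{t}^*\|_p = 1$ for every $\vec{z} \in \{0,1\}^k \setminus \{\vec{0}\}$, $\|V\vec{0} - \vec{t}^*\|_p > 1$ (these two conditions make $(V,\vec{t}^*)$ a $(p,k)$-isolating parallelepiped), and $\|V\vec{z} - \toff\|_p = 1$ for every $\vec{z} \in \{0,1\}^k$ (the ``off'' condition). As stated in the introduction, I would first prove the equivalence: a $(p,k)$-on-off-isolating parallelepiped exists if and only if a $(p,k+1)$-isolating parallelepiped exists. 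For the nontrivial direction, given $(\Phi, \vec{t}^*)$ with $\Phi = (\Phi', \vec{\phi}_{k+1})$, set $\toff := \vec{t}^* - \vec{\phi}_{k+1}$ and note $\|\Phi'\vec{z} - \toff\|_p = \|\Phi(\vec{z},1) - \vec{t}^*\|_p = 1$ for all $\vec{z} \in \{0,1\}^k$, while $\|\Phi'\vec{z} - \vec{t}^*\|_p = \|\Phi(\vec{z},0) - \vec{t}^*\|_p$ equals $1$ for nonzero $\vec{z}$ and exceeds $1$ at $\vec{z} = \vec{0}$. The reverse direction is an analogous computation.

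Next I would build the preprocessed basis. Fix a $(p,k)$-on-off-isolating parallelepiped $(V^*, \vec{t}^*, \toff)$. For every possible $k$-clause $\psi$ on $n$ variables (determined by a choice of $k$ of the $n$ variable indices and a pattern of negations, so $M := 2^k\binom{n}{k}$ clauses in total), construct a block $\Phi_\psi \in \R^{d^* \times n}$ whose columns at the $k$ relevant variable positions are the columns of $V^*$, with signs flipped according to the negation pattern, and zero elsewhere. Let
\[
    \basis_{n,k} := \begin{pmatrix} \Phi_{\psi_1} \\ \vdots \\ \Phi_{\psi_M} \\ 2\alpha I_n \end{pmatrix},
\]
for a sufficiently large $\alpha$. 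The preprocessing outputs $\basis_{n,k}$; crucially, it depends only on $n$ and $k$, not on any particular formula. Given a (weighted Max-)$k$-SAT instance with clauses $\phi_{i_1},\ldots,\phi_{i_m}$, the online algorithm produces the target $\vec{t}$ whose $\psi$-block equals the appropriately shifted $\vec{t}^*$ if $\psi$ appears in the formula and the appropriately shifted $\toff$ otherwise, with a final block $\alpha\vec{1}$. A standard computation using the identity-gadget trick shows any closest lattice vector has the form $\basis_{n,k}\vec{z}$ with $\vec{z}\in\{0,1\}^n$, and for such $\vec{z}$, the $p$-th power of the distance from $\vec{t}$ is $m - m^+(\vec{z}) \cdot (\|\vec{t}^*\|_p^p - 1) + (\text{constant})$. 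Thus the reduction preserves the parameter $n$ exactly and encodes the number of satisfied clauses into the distance, yielding a Karp reduction from (weighted Max-)$k$-SAT on $n$ variables to $\CVPP_p$ on rank-$n$ lattices.

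Applying this to each case using Theorem~\ref{thm:ip-characterization}: for part~\ref{en:cvpp-seth-hardness-finite-p}, $p \notin 2\Z$ gives $(p,k+1)$-isolating parallelepipeds for every $k$, hence on-off parallelepipeds of every arity; choosing $k = k(\eps)$ from non-uniform (Max-SAT-)SETH finishes the proof. For part~\ref{en:cvpp-eth-hardness-finite-p}, the only remaining case is $p \in 2\Z$ with $p \geq 4$; then $(p,4)$-isolating parallelepipeds exist (since $p \geq k+1 = 4$), so on-off parallelepipeds of arity $3$ exist, giving a reduction from $3$-SAT on $n$ variables to $\CVPP_p$ of rank $n$ and the desired $2^{\Omega(n)}$ lower bound under non-uniform (Max-SAT-)ETH. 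For part~\ref{en:cvpp-seth-hardness-infinity} ($p = \infty$), I would use a direct $\ell_\infty$ on-off construction: the $\ell_\infty$ unit ball (a cube) admits an on-off-isolating parallelepiped with a \emph{constant-factor} gap (the analog of the $\ell_1$ construction in Figure~\ref{fig:par} carries over, up to taking a product with a coordinate that separates the ``on'' and ``off'' cases by a constant factor). Since the $\ell_\infty$ distance over the stacked blocks is the coordinate-wise maximum, any single unsatisfied clause propagates its gap to the whole distance, yielding the hardness of approximation with some explicit $\gamma(\eps) > 1$.

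The main obstacle I anticipate is the $p = \infty$ case: the general equivalence between on-off-isolating parallelepipeds and $(p,k+1)$-isolating parallelepipeds is stated for finite $p$, and one must verify that the construction preserves a constant gap when translated to $\ell_\infty$. The other technical subtlety is handling the negation patterns in $\basis_{n,k}$: because the basis must be fixed \emph{before} seeing the formula, one must show that flipping signs of the appropriate columns of $V^*$ (rather than separately storing each literal pattern) yields a valid on-off parallelepiped for the negated clause as well, which follows from the symmetry of $\ell_p$ norms under coordinate sign flips but must be checked. Once these two pieces are in place, the remainder of the argument is bookkeeping.
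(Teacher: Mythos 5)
Your proposal takes essentially the same route as the paper: establish the equivalence between $(p,k)$-on-off-isolating parallelepipeds and $(p,k+1)$-isolating parallelepipeds (Proposition~\ref{prop:equiv-ips-ooips}), stack all $M = 2^k\binom{n}{k}$ possible clause-blocks together with an identity gadget to get a formula-independent basis, use $\ton$/$\toff$ on the query side to turn clauses on and off (Theorem~\ref{thm:cvpp-hardness-ooips}), and handle $p=\infty$ by a separate direct construction (Theorem~\ref{thm:cvpp-inf}). The high-level case analysis---$p \notin 2\Z$ gives all arities for SETH; $p \in 2\Z$, $p \geq 4$ gives a $(p,4)$-isolating parallelepiped, hence arity-$3$ on-off parallelepipeds, for ETH; $p = \infty$ via the cube---is exactly the paper's.

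One small imprecision worth flagging: your aside attributes the correctness of the negated-clause blocks to "the symmetry of $\ell_p$ norms under coordinate sign flips." That is not the right mechanism. Flipping the sign of a column $\vec{v}_s$ of $V$ is not an ambient coordinate sign flip, and what actually makes the construction work is the paired shift of the targets by $\sum_{s \in N_i}\vec{v}_s$: the algebraic substitution $y_s \mapsto 1-y_s$ for negated literals turns $\sum_{s\in P_i} y_s\vec{v}_s - \sum_{s\in N_i} y_s \vec{v}_s - (\ton - \sum_{s\in N_i}\vec{v}_s)$ into $\sum_{s \in S_i(\vec{y})}\vec{v}_s - \ton$, where $S_i(\vec{y})$ is the set of satisfied literal positions. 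This is translation invariance plus the bijection $z \mapsto 1-z$ on $\{0,1\}$, not a norm symmetry. You already identified the right fix (the "appropriately shifted" target), so this is a wording issue rather than a gap, but it should be stated correctly in a final write-up. The $p = \infty$ sketch is thinner than the paper's Theorem~\ref{thm:cvpp-inf} (which uses integer $\pm 1$ rows with targets $(k+1)/2 - |N_i|$ versus $k/2 - |N_i|$, giving a clean gap $\gamma = (k+1)/k$), but the idea you describe is the same.
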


Items~\ref{en:cvpp-seth-hardness-finite-p} and~\ref{en:cvpp-seth-hardness-infinity} together assert that we get the same $2^{(1 - \eps)n}$ hardness of $\CVPP_p$ for $p \notin 2\Z$ that we get for $\CVP_p$ (assuming non-uniform SETH). Furthermore, Item~\ref{en:cvpp-seth-hardness-infinity} gives hardness of approximation for $\CVPP_{\infty}$ (with a reasonably large $\gamma$), which is similar to the case for $\CVP_{\infty}$~\cite[Theorem 6.5]{conf/focs/BennettGS17}.
Item~\ref{en:cvpp-eth-hardness-finite-p} asserts that for \emph{every} $p \neq 2$, $\CVPP_p$ takes $2^{\Omega(n)}$-time assuming non-uniform ETH. We emphasize that, interestingly, this lower bound holds for even integers $p = 4, 6, \ldots$ greater than $2$, therefore yielding a stronger hardness result for $\CVPP_p$ for \emph{all} values of $p \neq 2$ than what is known for $p = 2$.

\subsection{On-off isolating parallelepipeds}

We show these results by defining a family of geometric gadgets called ``$(p, k)$-on-off isolating parallepeipeds'' that are defined by vectors $\vec{v}_1, \ldots, \vec{v}_k$ and \emph{two} targets $\ton$ and $\toff$, and then showing that such gadgets exist if and only if ``normal'' $(p, k + 1)$-isolating parallepipeds exist.
As the name suggests, $(p, k)$-on-off isolating parallelepipeds will allow us to ``turn clauses on and off.'' More precisely, for a given $n$ and $k$, we will output a single basis $B = (\vec{b}_1, \ldots, \vec{b}_n)$ as preprocessing. Then, given a $k$-SAT instance $\Phi$ on $n$ variables, we will output a target vector $\vec{t}$ that uses copies of $\ton$ to ``turn on'' row blocks in $B$ corresponding to all clauses in $\Phi$, and copies of $\toff$ to ``turn off'' row blocks in $B$ corresponding to clauses not in $\Phi$. 

The high-level strategy of outputting a basis $B$ that ``represents all clauses possible in an $n$-variable $k$-SAT instance'' as preprocessing, and then, given a $k$-SAT instance $\Phi$ on $n$ variables, of ``turning on and off clauses'' according to whether they appear in $\Phi$ using the query target $\vec{t}$ is the same as was used in~\cite[Lemma 6.1]{conf/focs/BennettGS17}. However, here we use a different framework for turning on and off clauses, and use it to output bases $B$ of lower rank, leading to improved hardness results.

\begin{definition}[On-off isolating parallelepiped]
For $1 \leq p \leq \infty$ and $k \in \Z+$, we say that $V \in \R^{d^* \times k}$, $\ton \in \R^{d^*}$, and $\toff \in \R^{d^*}$ define a \emph{$(p, k)$-on-off isolating parallelepiped} if:
\begin{enumerate}
\item For all $\vec{x} \in \bit^k \setminus \set{\vec{0}}$, $\norm{V\vec{x} - \ton}_p = 1$.
\item $\norm{V\vec{\vec{0}} - \ton}_p = \norm{\ton}_p > 1$.
\item For all $\vec{x} \in \bit^k$, $\norm{V\vec{x} - \toff}_p = 1$.\footnote{It is natural to ask whether the given definition of an on-off isolating parallelepiped is sufficiently general.
Indeed, one could define three different radii $\rgood := \norm{V\vec{x} - \ton}_p$ for $\vec{x} \in \bit^k \setminus \set{\vec{0}}$, $\rbad := \norm{\vec{\ton}}_p$, and $\roff := \norm{V\vec{x} - \toff}_p$ for $\vec{x} \in \bit^k$ corresponding to the three cases in the definition (with the requirement that $\rgood < \rbad$). However, given $V, \ton, \toff$ satisfying these conditions for some $\rgood, \rbad, \roff$, we can output another $(p, k)$-on-off isolating parallelepiped that achieves $\roff = \rgood = 1$ simply by appending a coordinate of value $\abs{\rgood^p - \roff^p}^{1/p}$ to $\toff$ if $\rgood > \roff$ and to $\ton$ if $\roff > \rgood$, and then normalizing. So, the definition given is essentially without loss of generality.}
\end{enumerate}
\label{def:on-off-ips}
\end{definition}

We note that the first two conditions are the same as in the definition of ``normal'' isolating parallelepipeds (Definition~\ref{def:ip}) with $\ton$ taking the role of $\vec{t}^*$. 
As in the case of isolating parallelepipeds, the $2^k - 1$ close vectors $V\vec{x}$ for $\vec{x} \in \bit^k \setminus \set{\vec{0}}$ to $\ton$ correspond to the $2^k - 1$ possible satisfying assignments to the variables of a $k$-clause, and the more distant vector $\vec{0}$ corresponds to the single falsifying assignment to the variables of a $k$-clause.
The new third condition asserts that all $2^k$ vectors $V\vec{x}$ for $\vec{x} \in \bit^k$ are equally close to $\toff$, which says that the distance between $V\vec{x}$ and $\toff$ will be the same regardless of whether the corresponding clause is satisfied or not. In other words, by using $\toff$ in place of $\ton$ (or $\vec{t}^*$), we will be able to ``turn off'' a clause so that its satisfiability is irrelevant.

The following proposition gives a construction of a $(p, k)$-on-off isolating parallelepiped from a $(p, k + 1)$-isolating parallelepiped and vice-versa, therefore showing that one of these objects exists if and only if the other one does.
\begin{proposition}
For every $p \in [1, \infty)$ and integer $k \geq 1$, there exists a computable $(p, k)$-on-off isolating parallelepiped if and only if there exists a computable $(p, k + 1)$-isolating parallelepiped.
\label{prop:equiv-ips-ooips}
\end{proposition}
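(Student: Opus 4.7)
The plan is to prove both directions by giving explicit, essentially dual, constructions that simply add or remove one ``column'' of the parallelepiped, with a compensating shift of the target.

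For the forward direction, suppose $\Phi \in \R^{d^* \times (k+1)}$ and $\vec{t}^* \in \R^{d^*}$ define a $(p, k+1)$-isolating parallelepiped. Write $\Phi = (\Phi' \mid \vec{\phi}_{k+1})$, where $\Phi' \in \R^{d^* \times k}$ consists of the first $k$ columns. I would then set $V := \Phi'$, $\ton := \vec{t}^*$, and $\toff := \vec{t}^* - \vec{\phi}_{k+1}$. The three conditions of Definition~\ref{def:on-off-ips} are routine to verify by observing that for any $\vec{z} \in \bit^k$, we have $V\vec{z} - \ton = \Phi(\vec{z},0) - \vec{t}^*$ and $V\vec{z} - \toff = \Phi(\vec{z},1) - \vec{t}^*$. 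Since $(\vec{z},0) \neq \vec{0}$ when $\vec{z} \neq \vec{0}$ and $(\vec{z},1) \neq \vec{0}$ always, the isolating parallelepiped property immediately gives the required equalities $\|V\vec{z} - \ton\|_p = 1$ for $\vec{z} \neq \vec{0}$ and $\|V\vec{z} - \toff\|_p = 1$ for all $\vec{z} \in \bit^k$, while $\|V\vec{0} - \ton\|_p = \|\vec{t}^*\|_p > 1$ is inherited directly.

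For the reverse direction, suppose $V \in \R^{d^* \times k}$, $\ton$, and $\toff$ define a $(p,k)$-on-off isolating parallelepiped. I would define $\Phi := (V \mid \ton - \toff) \in \R^{d^* \times (k+1)}$ and $\vec{t}^* := \ton$. Writing $\vec{z} = (\vec{z}', z_{k+1}) \in \bit^{k+1}$, we have $\Phi \vec{z} - \vec{t}^* = V\vec{z}' - \ton$ when $z_{k+1} = 0$ and $\Phi \vec{z} - \vec{t}^* = V\vec{z}' - \toff$ when $z_{k+1} = 1$. Applying the three defining conditions of the on-off parallelepiped then gives $\|\Phi\vec{z} - \vec{t}^*\|_p = 1$ for every $\vec{z} \in \bit^{k+1} \setminus \{\vec{0}\}$ and $\|\vec{t}^*\|_p = \|\ton\|_p > 1$, which is exactly what Definition~\ref{def:ip} requires.

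Both constructions are obtained by a single vector addition/subtraction and one column append/drop, so computability is preserved in each direction: given an algorithm outputting one gadget on input $(p,k)$ or $(p,k+1)$, the above transformation yields an algorithm outputting the other. I do not anticipate any real obstacle here; the only mild subtlety is keeping track of which ``$\vec{0}$'' is which (namely, confirming that extending $\vec{z}' \in \bit^k \setminus \{\vec{0}\}$ by a $0$ keeps it non-zero in $\bit^{k+1}$, and that extending $\vec{z}' = \vec{0}$ by a $1$ also does so), which ensures the ``isolated'' vertex in the $(k+1)$-dimensional gadget really is the image of $\vec{0} \in \bit^{k+1}$.
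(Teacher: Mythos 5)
Your construction is exactly the paper's: take $V' := (\vec{v}_1,\ldots,\vec{v}_k)$, $\ton := \vec{t}^*$, $\toff := \vec{t}^* - \vec{v}_{k+1}$ in one direction, and append the column $\ton - \toff$ with $\vec{t}^* := \ton$ in the other. The proof is correct, and your explicit bookkeeping of the correspondence $(\vec{z},0) \leftrightarrow \ton$, $(\vec{z},1) \leftrightarrow \toff$ simply spells out what the paper leaves as ``straightforward to check.''
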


\begin{proof}
Suppose that $V = (\vec{v}_1, \ldots, \vec{v}_{k+1})$, $\vec{t}^*$ define a $(p, k + 1)$-isolating parallelepiped. Set $V' := (\vec{v}_1, \ldots, \vec{v}_k)$, set $\ton := \vec{t}^*$, and set $\toff := \vec{t}^* - \vec{v}_{k+1}$. It is straightforward to check that $V', \ton, \toff$ define a $(p, k)$-on-off isolating parallelepiped.

Suppose that $V = (\vec{v}_1, \ldots, \vec{v}_k)$, $\ton$, $\toff$ define a $(p, k)$-on-off isolating parallelepiped. Set $v_i' := v_i$ for $i = 1, \ldots, k$, set $v_{k+1}' := \ton - \toff$, and set $\vec{t}^* := \ton$. It is straightforward to check that $V' := (v_1', \ldots, v_{k+1}')$, $\vec{t}^*$ define a $(p, k + 1)$-isolating parallelepiped.
\end{proof}

\subsection{Hardness of CVPP from on-off isolating parallelepipeds}

The following theorem gives a non-uniform reduction from Max-$k$-SAT formulas on $n$ variables to $\CVPP_p$ instances of rank $n$, assuming that $(p, k)$-on-off-isolating parallelepipeds exist.\footnote{However, as a technical difference, the reduction below works as a reduction from MAX-$k$-SAT (or weighted MAX-$k$-SAT with polynomial integer weights), but not as a reduction from weighted MAX-$k$-SAT with arbitrary weights as in~\cite[Theorem 3.2]{conf/focs/BennettGS17}. This is because the reduction in~\cite[Theorem 3.2]{conf/focs/BennettGS17} requires scaling rows of both the basis matrix and target vector, and now we must output the basis matrix before we know the weights of the input weighted MAX-$k$-SAT instance.}

\begin{theorem}%
If there exists a computable $(p, k)$-on-off isolating parallelepiped defined by $V = (\vec{v}_1, \ldots, \vec{v}_k) \in \R^{d^* \times k}$, $\ton \in \R^{d^*}$, $\toff \in \R^{d^*}$ for some $p \in [1, \infty)$ and $k \in \Z^+$, then there exist a pair of polynomial-time algorithms $(P, Q)$ (in analogy to the definition of CVPP) that behave as follows.
\begin{enumerate}
\item On input $n \in \Z^+$, $P$ outputs a basis $B \in \R^{d \times n}$ of a rank $n$ lattice $\lat$, where $d = 2^k \binom{n}{k} d^* + n$.
\item On input a Max-$k$-SAT instance with $n$ variables, $Q$ outputs a target vector $\vec{t} \in \R^d$ and a distance bound $r \geq 0$ such that $\dist_p(\vec{t}, \lat) \leq r$ if and only if the input is a `YES' instance.
\end{enumerate}
\label{thm:cvpp-hardness-ooips}
\end{theorem}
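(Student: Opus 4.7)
The plan is to instantiate the reduction sketched in the introduction, using the on-off isolating parallelepiped $V, \ton, \toff$ as the local gadget for each possible $k$-clause on $n$ variables. Since the preprocessing $P$ receives only $n$, the basis $B$ must contain one block for \emph{every} possible $k$-clause; there are $M := 2^k \binom{n}{k}$ such clauses, accounting for all choices of $k$ variables and all sign patterns of their literals. The query algorithm $Q$ then chooses, in each of these $M$ slots, whether to ``activate'' the clause by placing a shifted copy of $\ton$ in that block of the target or to ``deactivate'' it by placing a shifted copy of $\toff$ there. Because $\toff$ is equidistant from every $V\vec{x}$ with $\vec{x} \in \bit^k$, deactivated slots contribute a fixed penalty of $1$ to $\|B\vec{y}-\vec{t}\|_p^p$ independent of the boolean assignment $\vec{y}$, while activated slots contribute $1$ when the corresponding clause is satisfied and $\|\ton\|_p^p > 1$ when it is not.

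Concretely, for each possible clause $C$ on variables $x_{i_1}, \ldots, x_{i_k}$ with some sign pattern, I would build a block $\Phi_C \in \R^{d^* \times n}$ whose $i_j$th column is $\vec{v}_j$ if the $j$th literal is positive and $-\vec{v}_j$ if it is negative (all other columns zero). Stacking the $M$ blocks $\Phi_C$ on top of a scaled identity matrix $2\alpha I_n$ gives the basis $B$, which has rank $n$ and ambient dimension $Md^* + n$, as required. In $Q$, the target block in slot $C$ is set to $\ton - \sum_{j:\,\text{neg in }C} \vec{v}_j$ if $C$ appears in the input formula and to $\toff - \sum_{j:\,\text{neg in }C} \vec{v}_j$ otherwise, while the identity-gadget block of the target is set to $\alpha \vec{1}$. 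The purpose of the shift by $-\sum \vec{v}_j$ is to ensure that for every boolean $\vec{y}$, $\Phi_C \vec{y}$ minus the target block equals $V\vec{y}' - \ton$ (respectively $V\vec{y}' - \toff$), where $\vec{y}' \in \bit^k$ is the restriction of $\vec{y}$ to the variables of $C$ with coordinates flipped on negated literals---so that $\vec{y}' = \vec{0}$ exactly when $\vec{y}$ fails to satisfy $C$.

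Summing the per-block contributions for any boolean $\vec{y}$ then yields
\[
\|B\vec{y} - \vec{t}\|_p^p \;=\; M \;+\; n\alpha^p \;+\; (m - m^+(\vec{y}))(\|\ton\|_p^p - 1),
\]
a strictly decreasing function of $m^+(\vec{y})$, so setting $r^p := M + n\alpha^p + (m - m^*)(\|\ton\|_p^p - 1)$ reduces Max-$k$-SAT (with threshold $m^*$) to a single $\CVPP_p$ query. All that is left is to rule out integer $\vec{y} \notin \bit^n$: any such $\vec{y}$ has some coordinate with $|2\alpha y_i - \alpha|^p \geq 3^p \alpha^p$ in the identity block, for an excess of at least $(3^p - 1)\alpha^p$ over the identity contribution of any boolean vector. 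Choosing $\alpha^p$ to be a fixed polynomial in $n$ strictly larger than $M \|\ton\|_p^p / (3^p - 1)$ (which depends only on $n$) makes this excess exceed the maximum possible ``savings'' $M(\|\ton\|_p^p - 1)$ obtainable in the clause blocks relative to the worst boolean assignment, ensuring that the closest lattice vector is boolean.

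The construction is almost entirely bookkeeping, so there is no single difficult step. The one subtlety worth flagging is that $P$ must commit to $\alpha$ without knowing the number of clauses $m$ in the eventual input; this is handled by the crude bound $m \leq M = O(n^k)$, which lets $\alpha$ be chosen as a function of $n$ alone. Everything else reduces to the defining properties of $(p,k)$-on-off isolating parallelepipeds (Definition~\ref{def:on-off-ips}) and the fact that $\|\cdot\|_p^p$ is additive across the block structure of $B$ and $\vec{t}$.
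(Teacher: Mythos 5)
Your proposal is correct and follows essentially the same construction and analysis as the paper's proof: one $d^*$-row block per possible $k$-clause (with columns $\pm\vec{v}_j$ according to the sign pattern), an identity gadget $2\alpha I_n$, targets shifted by $-\sum_{\text{negated}}\vec{v}_j$ so that the block restriction reduces to $V\vec{y}' - \ton$ or $V\vec{y}' - \toff$, and a choice of $\alpha$ depending only on $n$ that makes non-boolean $\vec{y}$ strictly worse. The only cosmetic difference is that the paper sets $\alpha := M^{1/p}(1+\eps)$ (i.e., $\alpha^p = M\|\ton\|_p^p$) and notes $(n+2)\alpha^p > (n+1)\alpha^p \geq r^p$, while you pick $\alpha^p$ just above $M\|\ton\|_p^p/(3^p-1)$ — both choices satisfy the needed inequality $(3^p-1)\alpha^p > M\|\ton\|_p^p$, so the argument goes through either way.
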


\begin{proof}
Let $M := 2^k \cdot \binom{n}{k} = O(n^k)$ be the total possible number of $k$-clauses on $n$ variables, and let $C_1, \ldots, C_M$ denote those clauses.
By assumption, there exists a $(p, k)$-isolating parallelepiped $V, \ton, \toff$ with $\norm{\ton}_p = 1 + \eps$ for some $\eps > 0$.

The algorithm $P$ constructs the basis $B \in \R^{d \times n}$ as
\[
B := \begin{pmatrix} B_1 \\ \vdots \\ B_M \\ 2\alpha \cdot I_n \end{pmatrix} \ ,
\]
for $\alpha := M^{1/p} \cdot (1 + \eps)$ and with blocks $B_i \in \R^{d^* \times n}$ defined by
\[
(B_i)_j := \begin{cases}
\vec{v}_s & \textrm{if $x_j$ is the $s$th literal of $C_i$}\ , \\
- \vec{v}_s & \textrm{if $\lnot x_j$ is the $s$th literal of $C_i$}\ , \\
\vec{0}  & \textrm{otherwise} \ ,
\end{cases}
\]
for $1 \leq i \leq M$ and $1 \leq j \leq n$.

Given an instance $(\Phi, W)$ of Max-$k$-SAT with $m$ clauses, 
the algorithm $Q$ outputs
$\vec{t} \in \R^d$ defined by
\[
\vec{t} := \begin{pmatrix} \vec{t}_1 \\ \vdots \\ \vec{t}_M \\ \alpha \cdot \vec{1} \end{pmatrix}\ ,
\]
where $\vec{t}_i := \ton - \sum_{s \in N_i} \vec{v}_s$ if $C_i$ is in $\Phi$ and $\vec{t}_i := \toff  - \sum_{s \in N_i} \vec{v}_s$ if $C_i$ is not in $\Phi$ for $1 \leq i \leq M$, and
\[
 r := ((M - (m - W)) + (m - W) \cdot (1 + \eps)^p + n \cdot \alpha^p)^{1/p} \ .
\]

Clearly, both $P$ and $Q$ run in polynomial time. We next analyze for which $\vec{y} \in \Z^n$ it holds that $\norm{B\vec{y} - \vec{t}}_p \leq r$.
Note that by the definition of $\alpha$ above, $\alpha^p = M \cdot (1 + \eps)^p \geq (M - (m - W)) + (m - W) \cdot (1 + \eps)^p$ for all $m$ and $W$.
Therefore, for $y \notin \bit^n$, $\norm{B\vec{y} - \vec{t}}_p^p \geq \alpha^p \sum_{i=1}^n \abs{2 y_i - 1}^p \geq (n + 2) \cdot \alpha^p > r^p$. So, we only need to analyze the case where $\vec{y} \in \bit^n$.

Consider an assignment $\vec{y} \in \bit^n$ to the variables of $\Phi$. Then for $1 \leq i \leq M$ such that $C_i$ is in $\Phi$,
\begin{align*}
\norm{B_i \vec{y} - \vec{t}_i}_p &= \Big\|\sum_{s \in P_i} y_{\ind(\ell_{i, s})} \cdot \vec{v}_s - \sum_{s \in N_i} y_{\ind(\ell_{i, s})} \cdot \vec{v}_s - \Big(\ton - \sum_{s \in N_i} \vec{v}_s \Big)\Big\|_p \\
          &= \Big\|\sum_{s \in P_i} y_{\ind(\ell_{i, s})} \cdot \vec{v}_s + \sum_{s \in N_i} \big(1 - y_{\ind(\ell_{i, s})} \big) \cdot \vec{v}_s - \ton \Big\|_p \\
          &= \Big\|\sum_{s \in S_i(\vec{y})} \vec{v}_s - \ton \Big\|_p \ .
\end{align*}
By assumption, the last quantity is equal to $1$ if $|S_i(\vec{y})| \geq 1$ and is equal to $1 + \eps$ otherwise. A similar argument shows that for $1 \leq i \leq M$ such that $C_i$ is not in $\Phi$,
\[
\norm{B_i \vec{y} - \vec{t}_i}_p = \Big\|\sum_{s \in S_i(\vec{y})} \vec{v}_s - \toff \Big\|_p = 1
\]
regardless of $\vec{y}$.

Because $|S_i(\vec{y})| \geq 1$ if and only if $C_i$ is satisfied, it follows that
\[
\norm{B \vec{y} - \vec{t}}_p^p = \Big(\sum_{i=1}^M \norm{B_i\vec{y} - \vec{t}_i}_p^p\Big) + n \cdot \alpha^p = M - (m - m^+(\vec{y})) + (m - m^+(\vec{y})) \cdot (1 + \eps)^p + n \cdot \alpha^p \ .
\]
Therefore, $\norm{B\vec{y} - \vec{t}}_p \leq r$ if and only if $m^+(\vec{y}) \geq W$, and therefore there exists $\vec{y}$ such that $\norm{B\vec{y} - \vec{t}}_p \leq r$ if and only if $(\Phi, W)$ is a `YES' instance of MAX-$k$-SAT, as needed.
\end{proof}

We then get Theorem~\ref{thm:cvpp-hardness-main} Items~\ref{en:cvpp-seth-hardness-finite-p} and~\ref{en:cvpp-eth-hardness-finite-p} about the hardness of $\CVPP_p$ assuming (non-uniform, Max-SAT versions of) SETH and ETH, respectively. 

\begin{proof}[Proof of Theorem~\ref{thm:cvpp-hardness-main}, Items~\ref{en:cvpp-seth-hardness-finite-p} and~\ref{en:cvpp-eth-hardness-finite-p}]
Combine Theorem~\ref{thm:ip-characterization}, Proposition~\ref{prop:equiv-ips-ooips}, and Theorem~\ref{thm:cvpp-hardness-ooips}.
\end{proof}

\subsection{SETH Hardness of \texorpdfstring{$\CVPP_{\infty}$}{CVPP\_infinity}}
Finally, we give a non-uniform reduction from Max-$k$-SAT formulas on $n$ variables to $\CVPP_{\infty}$ instances of rank $n$.

\begin{theorem}%
For every $k \in \Z^+$, there exists a pair of polynomial-time algorithms $(P, Q)$ (in analogy to the definition of CVPP) that behave as follows.
\begin{enumerate}
\item On input $n \in \Z^+$, $P$ outputs a basis $B \in \R^{d \times n}$ of a rank $n$ lattice $\lat$, where $d = 2^k \binom{n}{k} + n$.
\item On input a $k$-SAT instance with $n$ variables, $Q$ outputs a target vector $\vec{t} \in \R^d$ such that $\dist_{\infty}(\vec{t}, \lat) \leq k/2$ if and only if the input is a `YES' instance.
\end{enumerate}
\label{thm:cvpp-inf}
\end{theorem}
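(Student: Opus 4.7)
The key observation is that the $\ell_\infty$ norm aggregates coordinates by maximum rather than by summation, so a single scalar row per clause suffices---no full isolating-parallelepiped gadget is needed. My plan is to have $P$ write down one such row for every \emph{possible} $k$-clause on $n$ variables together with a scaled identity block that pins integer solutions to $\bit^n$, and to have $Q$ encode the actual input formula $\Phi$ entirely through the target coordinates, choosing two different values per clause according to whether the clause appears in $\Phi$ or not.

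Concretely, enumerate the $M := 2^k \binom{n}{k}$ possible $k$-clauses $C_1, \ldots, C_M$ on $x_1, \ldots, x_n$. For each $C_i$ let $(\vec{b}_i)_j := +1$ if $x_j$ appears positively in $C_i$, $(\vec{b}_i)_j := -1$ if $\neg x_j$ appears in $C_i$, and $(\vec{b}_i)_j := 0$ otherwise, and let $n_i := |N_i|$. Then $P$ outputs
\[
B := \begin{pmatrix} \vec{b}_1^\top \\ \vdots \\ \vec{b}_M^\top \\ k \cdot I_n \end{pmatrix} \in \R^{d \times n}, \qquad d = M + n,
\]
and on input $\Phi$ the query algorithm $Q$ outputs $\vec{t} \in \R^d$ whose $i$th top coordinate is $t_i := k/2 + 1 - n_i$ if $C_i \in \Phi$ and $t_i := k/2 - n_i$ otherwise, with the bottom $n$ coordinates all equal to $k/2$. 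Both $P$ and $Q$ clearly run in polynomial time and the entries of $B$ and $\vec{t}$ have magnitude $O(k)$.

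For the verification, observe that for $\vec{y} \in \bit^n$ we have $\vec{b}_i^\top \vec{y} = |S_i(\vec{y})| - n_i$, where $|S_i(\vec{y})| \in \{0, 1, \ldots, k\}$ counts the literals of $C_i$ satisfied by $\vec{y}$. The bottom block contributes $\max_j |k y_j - k/2|$, which equals $k/2$ on $\bit^n$ and is at least $3k/2$ on $\Z^n \setminus \bit^n$, so any $\vec{y} \in \Z^n$ with $\|B\vec{y} - \vec{t}\|_\infty \leq k/2$ must already lie in $\bit^n$. A direct case analysis then shows that for $C_i \in \Phi$ we have $|\vec{b}_i^\top \vec{y} - t_i| \leq k/2$ iff $|S_i(\vec{y})| \geq 1$ (when $|S_i(\vec{y})| = 0$ the distance is exactly $k/2 + 1$), while for $C_i \notin \Phi$ we have $|\vec{b}_i^\top \vec{y} - t_i| \leq k/2$ unconditionally. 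Taking the maximum over all rows yields $\dist_\infty(\vec{t}, \lat) \leq k/2$ iff $\Phi$ is satisfiable.

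There is no genuinely hard step---the main point worth checking is that the three requirements ``satisfied $\Rightarrow \leq k/2$'', ``unsatisfied $\Rightarrow > k/2$'', and ``off-clause always $\leq k/2$'' can all be realized with a single scalar target per clause. This works precisely because the range $\{-n_i, 1-n_i, \ldots, k-n_i\}$ of possible values of $\vec{b}_i^\top \vec{y}$ has width exactly $k$, so the ``on'' target picks the unique shift in which $\{1-n_i, \ldots, k-n_i\}$ sits inside a radius-$k/2$ interval while $-n_i$ escapes by exactly $1$, and the ``off'' target centers the entire range in a radius-$k/2$ interval. The conceptual message is that the $\ell_\infty$ norm is expressive enough that the on-off isolating parallelepipeds of Section~\ref{sec:cvpp-hardness} collapse to a single well-chosen target coordinate per clause.
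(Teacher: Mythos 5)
Your construction is the same as the paper's: enumerate all $M = 2^k\binom{n}{k}$ possible $k$-clauses, one $\pm 1/0$ row per clause plus a $k\cdot I_n$ block in the basis, and encode the formula entirely in the target by shifting by $|N_i|$ and choosing an ``on'' versus ``off'' value. The only difference is your on-target offset $k/2 + 1 - n_i$ versus the paper's $(k+1)/2 - n_i$; both place $\{1,\ldots,k\}$ inside the radius-$k/2$ interval and push $0$ outside, so your variant is equally valid.
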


\begin{proof}
Let $M := 2^k \cdot \binom{n}{k} = O(n^k)$ be the total possible number of $k$-clauses on $n$ variables, and let $C_1, \ldots, C_M$ denote those clauses.

The algorithm $P$ constructs the basis $B \in \R^{d \times n}$ as
\[
B := \begin{pmatrix} \vec{b}_1^T \\ \vdots \\ \vec{b}_M^T \\ k \cdot I_n \end{pmatrix} \ ,
\]
and with rows $\vec{b}_i^T$ defined by
\[
(B_i)_j := \begin{cases}
1 & \textrm{if $x_j$ is the $s$th literal of $C_i$}\ , \\
- 1 & \textrm{if $\lnot x_j$ is the $s$th literal of $C_i$}\ , \\
\vec{0}  & \textrm{otherwise} \ ,
\end{cases}
\]
for $1 \leq i \leq M$ and $1 \leq j \leq n$.

Given an instance $\Phi$ of $k$-SAT with $m$ clauses, 
the algorithm $Q$ outputs
$\vec{t} \in \R^d$ defined by
\[
\vec{t} := \begin{pmatrix} t_1 \\ \vdots \\ t_M \\ \frac{k}{2} \cdot \vec{1} \end{pmatrix}\ ,
\]
where $t_i := (k + 1)/2 - \card{N_i}$ if $C_i$ is in $\Phi$ and $t_i := k/2  - \card{N_i}$ if $C_i$ is not in $\Phi$ for $1 \leq i \leq M$, and where $r := k/2$.

Clearly, both $P$ and $Q$ run in polynomial time. We next analyze for which $\vec{y} \in \Z^n$ it holds that $\norm{B\vec{y} - \vec{t}}_{\infty} \leq r = k/2$.
If $y \notin \bit^n$, $\norm{B\vec{y} - \vec{t}}_{\infty} \geq \max_{i \in [n]} \abs{y_i \cdot k - k/2} \geq 3k/2$. So, we only need to analyze the case where $\vec{y} \in \bit^n$.

Consider an assignment $\vec{y} \in \bit^n$ to the variables of $\Phi$. Then for $1 \leq i \leq M$ such that $C_i$ is in $\Phi$,
\begin{align*}
\Big|\langle\vec{b}_i, \vec{y}\rangle - t_i\Big| &= \Big|\sum_{s \in P_i} y_{\ind(\ell_{i, s})} - \sum_{s \in N_i} y_{\ind(\ell_{i, s})} - ((k+1)/2 - |N_i|) \Big| \\
                                                        &= \Big|\sum_{s \in P_i} y_{\ind(\ell_{i, s})} - \sum_{s \in N_i} (1 - y_{\ind(\ell_{i, s})}) - (k+1)/2 \Big| \\
                                                        &= \Big||S_i(\vec{y})| - (k+1)/2 \Big|.
\end{align*}
It follows that if $|S_i(\vec{y})| = 0$ then $|\langle\vec{b}_i, \vec{y}\rangle - t_i| = (k+1)/2$, and otherwise $|\langle\vec{b}_i, \vec{y}\rangle - t_i| \leq (k - 1)/2$. 
Because $|S_i(\vec{y})| \geq 1$ if and only if clause $C_i$ is satisfied, it follows that $|\langle\vec{b}_i, \vec{y}\rangle - t_i| \leq (k - 1)/2$ if and only if clause $C_i$ is satisfied.

A similar argument shows that for $1 \leq i \leq M$ such that $C_i$ is not in $\Phi$,
\[
|\langle\vec{b}_i, \vec{y}\rangle - t_i| = ||S_i(\vec{y})| - k/2|
\]
regardless of $\vec{y}$.

Therefore for $\vec{y} \in \bit^n$, $\max_{1 \leq i \leq M} |\langle\vec{b}_i, \vec{y}\rangle - t_i|$ is less than or equal to $k/2$ if every clause in $\Phi$ is satisfied, and is greater than $(k+1)/2$ if there exists a clause in $\Phi$ that is not satisfied.
It follows that
\[
\norm{B\vec{y} - \vec{t}}_{\infty} = \max \set{|\langle\vec{b}_1, \vec{y}\rangle - t_1|, \ldots, |\langle\vec{b}_m, \vec{y}\rangle - t_m|, k/2} = k/2 = r
\]
if $\vec{y}$ satisfies $\Phi$, and $\norm{B\vec{y} - \vec{t}}_{\infty} \geq (k + 1)/2 > r$ if not.
Therefore, there exists $\vec{y} \in \bit^n$ that satisfies $\Phi$ if and only if there exists $\vec{y} \in \bit^n$ that satisfies $\norm{B\vec{y} - \vec{t}}_{\infty}$, as needed.
\end{proof}

Theorem~\ref{thm:cvpp-hardness-main}, Item~\ref{en:cvpp-seth-hardness-infinity} follows as a corollary.
\section{Limitations}
\label{sec:limitations}

\subsection{Impossibility of \texorpdfstring{$(p,k)$}{(p,k)}-isolating parallelepipeds for even integer \texorpdfstring{$p < k$}{p<k}}
In~\cite{conf/focs/BennettGS17}, we proved that there do not exist $(2, 3)$-isolating parallelepipeds, and noted that there are no $(p, p+1)$-isolating parallelepipeds for $p\in2\Z$. Here, we give a simple geometric proof of the non-existence of $(2, 3)$-isolating parallelepipeds, and we also prove that  there are no $(p, p+1)$-isolating parallelepipeds for $p\in2\Z$. This finishes the complete characterization of values of $p$ and $k$ such that $(p,k)$-isolating parallelepipeds exist, as presented in Theorem~\ref{thm:ip-characterization}.

\begin{lemma}
Suppose that $V = (\vec{v}_1, \vec{v}_2, \vec{v}_3) \in \R^{d \times 3}$, $\vec{t} \in \R^d$, and $\norm{V\vec{x} - \vec{t}} = 1$ for all $\vec{x} \in \bit^3 \setminus \set{\vec{0}}$. Then $\norm{\vec{t}} = \norm{V\vec{x} - \vec{t}}$ for $\vec{x} \in \bit^3 \setminus \set{\vec{0}}$, and hence $V, \vec{t}$ do not form an isolating parallelepiped.
\label{lem:same-distances-imply-cube}
\end{lemma}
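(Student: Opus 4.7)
The plan is a direct calculation: the seven squared-distance equations $\norm{V\vec{x}-\vec{t}}_2^2 = 1$ for $\vec{x} \in \bit^3 \setminus \set{\vec{0}}$ already force $\norm{\vec{t}}_2^2 = 1$. Since $V\vec{0}-\vec{t} = -\vec{t}$, this yields $\norm{V\vec{0}-\vec{t}} = \norm{\vec{t}} = 1$, i.e., the ``isolated'' vertex sits at the same distance from $\vec{t}$ as all the others, so condition~2 of Definition~\ref{def:ip} is violated and $V,\vec{t}$ cannot form an isolating parallelepiped.

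First, from the three singleton equations $\norm{\vec{v}_i-\vec{t}}^2 = 1$, expanding and solving gives
\[
\iprod{\vec{v}_i,\vec{t}} = \tfrac{1}{2}(\norm{\vec{v}_i}^2 + \norm{\vec{t}}^2 - 1) \qquad (i = 1,2,3).
\]
Substituting these into the three pair equations $\norm{\vec{v}_i+\vec{v}_j-\vec{t}}^2 = 1$ and expanding $\norm{\vec{v}_i+\vec{v}_j}^2 = \norm{\vec{v}_i}^2 + \norm{\vec{v}_j}^2 + 2\iprod{\vec{v}_i,\vec{v}_j}$, the $\norm{\vec{v}_i}^2$ and $\norm{\vec{v}_j}^2$ contributions cancel and one obtains
\[
\iprod{\vec{v}_i,\vec{v}_j} = \tfrac{1}{2}(\norm{\vec{t}}^2 - 1) \qquad (i \neq j).
\]
Finally, I would expand the triple equation $\norm{\vec{v}_1+\vec{v}_2+\vec{v}_3-\vec{t}}^2 = 1$ using $\norm{\vec{v}_1+\vec{v}_2+\vec{v}_3}^2 = \sum_i \norm{\vec{v}_i}^2 + 2\sum_{i<j}\iprod{\vec{v}_i,\vec{v}_j}$ and plug in the two formulas above. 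The terms $\sum_i \norm{\vec{v}_i}^2$ cancel once again, and a routine simplification collapses everything to $\norm{\vec{t}}^2 = 1$.

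There is no real obstacle here: the argument is inclusion–exclusion over the cube $\bit^3$, in which the $\norm{\vec{v}_i}^2$ enter with coefficients of alternating sign at each level of the cube and telescope out completely, leaving a single scalar equation in $\norm{\vec{t}}^2$. What makes the Euclidean case special is that $\vec{x} \mapsto \norm{V\vec{x}-\vec{t}}_2^2$ is a polynomial of total degree at most two in the entries of $\vec{x}$, so its values on any seven of the eight vertices of $\bit^3$ already determine its value on the eighth vertex. This is the same cancellation phenomenon, in a minimalistic form, as the general inclusion–exclusion obstruction formalized in Lemma~\ref{lem:incl-excl}, which accounts for why $(p,k)$-isolating parallelepipeds also fail to exist for every even integer $p < k$.
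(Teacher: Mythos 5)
Your proof is correct, and I verified the final substitution: plugging $\iprod{\vec{v}_i,\vec{t}} = \tfrac{1}{2}(\norm{\vec{v}_i}^2 + \norm{\vec{t}}^2 - 1)$ and $\iprod{\vec{v}_i,\vec{v}_j} = \tfrac{1}{2}(\norm{\vec{t}}^2 - 1)$ into the expansion of $\norm{\vec{v}_1+\vec{v}_2+\vec{v}_3-\vec{t}}^2 = 1$ does indeed telescope to $\norm{\vec{t}}^2 = 1$. However, your approach differs genuinely from the paper's. The paper proves this lemma \emph{geometrically}: setting $\vec{p}=\vec{t}-\vec{v}_3$, the four points $\vec{0},\vec{v}_1,\vec{v}_2,\vec{v}_1+\vec{v}_2$ are equidistant from $\vec{p}$, so their common plane meets a sphere in a circle and the parallelogram is cyclic, hence a rectangle; then the three points $\vec{v}_1,\vec{v}_2,\vec{v}_1+\vec{v}_2$ determine the circumcircle of that rectangle, on which the fourth vertex $\vec{0}$ must also lie, giving $\norm{\vec{t}}=\norm{\vec{v}_1-\vec{t}}$. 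In fact the text preceding the lemma explicitly says it is offering ``a simple geometric proof'' of non-existence of $(2,3)$-isolating parallelepipeds, as an alternative to the algebraic computation. Your approach is essentially the specialization to $p=2$, $k=3$ of the paper's own Lemma~\ref{lem:incl-excl}/Corollary~\ref{cor:no-ips-kgeqp} (which you correctly identify): the degree-$p$ polynomial $\vec{x}\mapsto\norm{V\vec{x}-\vec{t}}_p^p$ on $\bit^k$ has vanishing top-level alternating sum whenever $p<k$, so seven of the eight vertex values pin down the eighth. What your route buys is uniformity and mechanical verifiability, and it makes the generalization to all even $p<k$ transparent; what the paper's geometric proof buys is visual intuition (a cyclic parallelogram is a rectangle, and three vertices of a rectangle already determine the circumcircle) that explains \emph{why} the Euclidean case is rigid, at the cost of being specific to $p=2$.
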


\begin{proof}
For $\vec{p}=\vec{t}-\vec{v}_3$, by assumption we have that
\[
\norm{\vec{p}}=\norm{\vec{v}_1-\vec{p}}=\norm{\vec{v}_2-\vec{p}}=\norm{\vec{v}_1+\vec{v}_2-\vec{p}}=r \ .
\]

Let us consider a plane $P$ passing through the points $\vec{0}, \vec{v}_1$ and $\vec{v}_2$, and let $\vec{p}^*$ be the projection of $\vec{p}$ onto $P$. Consider the parallelogram $D$ formed by the points $\vec{0}, \vec{v}_1, \vec{v}_2$ and $\vec{v}_1+\vec{v}_2$. These points lie on a circle around the point $\vec{p}^*$. Therefore, $D$ is a cyclic parallelogram, i.e., a rectangle. %

Let $\vec{t}^*$ be the projection of $\vec{t}$ onto $P$. Let $\norm{\vec{v}_1-\vec{t}^*}=\norm{\vec{v}_2-\vec{t}^*}=\norm{\vec{v}_1+\vec{v}_2-\vec{t}^*}=r'$. Since the three points of the rectangle formed by the points $\vec{0}, \vec{v}_1, \vec{v}_2$ and $\vec{v}_1+\vec{v}_2$ lie on the circle of radius $r'/2$ around the point $\vec{t}^*$, the fourth point of this rectangle also lies on this circle. Thus, $\norm{\vec{t}}=\norm{\vec{v}_1-\vec{t}}$. 
\end{proof}

\begin{corollary}
There do not exist $(2, k)$-isolating parallelepipeds for $k \geq 3$.
\end{corollary}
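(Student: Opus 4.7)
The plan is to reduce the general case $k \geq 3$ directly to the $k = 3$ case handled by Lemma~\ref{lem:same-distances-imply-cube}. Suppose for contradiction that $V = (\vec{v}_1, \ldots, \vec{v}_k) \in \R^{d \times k}$ and $\vec{t}^* \in \R^d$ define a $(2, k)$-isolating parallelepiped, where $k \geq 3$. Consider the submatrix $V' := (\vec{v}_1, \vec{v}_2, \vec{v}_3) \in \R^{d \times 3}$ together with the same target $\vec{t}^*$. For every $\vec{x}' \in \{0,1\}^3 \setminus \{\vec{0}\}$, the padded vector $\vec{x} := (\vec{x}', 0, \ldots, 0) \in \{0,1\}^k \setminus \{\vec{0}\}$ satisfies $V\vec{x} = V'\vec{x}'$, so by the isolating parallelepiped property we have $\|V'\vec{x}' - \vec{t}^*\|_2 = \|V\vec{x} - \vec{t}^*\|_2 = 1$.

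Hence $V'$ and $\vec{t}^*$ satisfy the hypothesis of Lemma~\ref{lem:same-distances-imply-cube}, which then yields $\|\vec{t}^*\|_2 = \|V'\vec{x}' - \vec{t}^*\|_2 = 1$ for any $\vec{x}' \in \{0,1\}^3 \setminus \{\vec{0}\}$. This directly contradicts the defining condition $\|\vec{t}^*\|_2 > 1$ of a $(2,k)$-isolating parallelepiped. Therefore no such object exists.

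There is essentially no obstacle here: all of the geometric work has already been done in Lemma~\ref{lem:same-distances-imply-cube}, and the only step is to notice that restricting to the three-dimensional face spanned by any three of the generators immediately invokes that lemma. The same argument would show that nonexistence in the $k$-dimensional case follows from nonexistence for any smaller $k' \geq 3$, so the lemma for $k=3$ is the only nontrivial input.
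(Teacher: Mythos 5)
Your proof is correct and is essentially the same (and only natural) argument the paper has in mind: restrict to any three of the generators to get the hypothesis of Lemma~\ref{lem:same-distances-imply-cube}, conclude $\|\vec{t}^*\|_2 = 1$, and contradict the requirement $\|\vec{t}^*\|_2 > 1$. Nothing is missing.
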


\begin{lemma}
For every $p\in2\Z$, integers $d$ and $k>p$, and vectors $\vec{v}_1,\ldots,\vec{v}_k, \vec{t} \in \R^{d}$,
we have
\[
\label{eq:incl-excl}
\sum_{S\subseteq[k]}(-1)^{|S|} \, \Big\|\vec{t} - \sum_{i\in S}\vec{v}_i\Big\|_p^p=0 \; .
\]
\end{lemma}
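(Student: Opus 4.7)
The key observation is that for even integer $p$, the absolute value disappears and $\|\vec{x}\|_p^p = \sum_{j=1}^d x_j^p$ becomes an honest polynomial in the coordinates of $\vec{x}$. So I would begin by defining the function
\[
F(\epsilon_1,\ldots,\epsilon_k) := \Big\|\vec{t} - \sum_{i=1}^k \epsilon_i \vec{v}_i\Big\|_p^p = \sum_{j=1}^d \Big(t_j - \sum_{i=1}^k \epsilon_i (\vec{v}_i)_j\Big)^p
\]
and noting that, since each summand on the right is a polynomial in $\epsilon_1,\ldots,\epsilon_k$ of total degree exactly $p$, the function $F$ is a polynomial (in $k$ real variables) of total degree at most $p$.

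Next, I would recognize the alternating sum in the statement as an iterated forward-difference operator. Let $\Delta_i g(\vec{\epsilon}) := g(\vec{\epsilon}+\vec{e}_i) - g(\vec{\epsilon})$. A direct induction on $k$ (telescoping one coordinate at a time) gives the standard identity
\[
\sum_{S \subseteq [k]} (-1)^{|S|}\, F(\mathbf{1}_S) = (-1)^k\, (\Delta_1 \Delta_2 \cdots \Delta_k F)(\vec{0}),
\]
where $\mathbf{1}_S \in \{0,1\}^k$ is the indicator of $S$. Thus it suffices to show that $\Delta_1 \cdots \Delta_k F$ is identically zero.

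For that, I would invoke a standard monomial argument: by linearity it is enough to verify the claim for a single monomial $\prod_{i=1}^k \epsilon_i^{a_i}$ with $\sum a_i \le p$. Since $p < k$, we must have $a_j = 0$ for some $j \in [k]$, so the monomial does not depend on $\epsilon_j$; hence $\Delta_j$ annihilates it, and therefore so does the full product $\Delta_1 \cdots \Delta_k$. Combining this with the previous display yields the claimed identity.

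I do not anticipate any real obstacle here: the only subtlety is making sure to use that $p$ is an \emph{even} integer (so $|x|^p = x^p$ and no piecewise behavior arises) and that $k > p$ strictly (so the total degree of $F$ is strictly less than the number of difference operators applied). The rest is the elementary fact that iterated finite differences kill polynomials of total degree smaller than the number of variables being differenced.
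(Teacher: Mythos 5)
Your proof is correct, but it takes a genuinely different route from the paper's. The paper expands each term $\bigl(t_j - \sum_{i \in S} v_{S_i,j}\bigr)^p$ via the multinomial theorem, swaps the order of summation so the inner sum becomes $\sum_{S \supseteq \{i : a_i \neq 0\}} (-1)^{|S|}$, and then kills it with the binomial identity $(1-1)^{k - |\{i : a_i \neq 0\}|} = 0$, which requires exactly the pigeonhole observation that $|\{i : a_i \neq 0\}| \leq p < k$. Your argument packages the same combinatorial kernel more abstractly: you observe that $F(\vec{\epsilon}) = \|\vec{t} - \sum_i \epsilon_i \vec{v}_i\|_p^p$ is an honest polynomial of total degree at most $p$ once $p$ is an even integer, recognize the alternating sum as $(-1)^k (\Delta_1 \cdots \Delta_k F)(\vec{0})$, and invoke the standard fact that $k$ iterated finite differences in $k$ distinct coordinates annihilate any polynomial of total degree less than $k$ (since every monomial of degree $\leq p < k$ omits some $\epsilon_j$ and is hence killed by $\Delta_j$). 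The two proofs amount to the same pigeonhole count; yours is cleaner conceptually and makes the underlying "finite differences annihilate low-degree polynomials" phenomenon explicit, while the paper's is more elementary and computational. Your sign bookkeeping $\sum_S (-1)^{|S|} F(\mathbf{1}_S) = (-1)^k (\Delta_1\cdots\Delta_k F)(\vec{0})$ and the use of the evenness of $p$ (so $|x|^p = x^p$) and the strictness of $k > p$ are all correctly attended to.
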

\label{lem:incl-excl}
\begin{proof}
We will use the Multinomial theorem which states that 
\[
(x_1+\ldots+x_m)^n=\sum_{a_1+\ldots+a_m=n}\binom{n}{a_1,\ldots,a_m} \prod_{t=1}^m 	x_t^{a_t}\ ,
\]
where 
\[
\binom{n}{a_1,\ldots,a_m}=\frac{n!}{a_1!\cdots a_m!} \ .
\]
Let $\vec{t}=(t_1,\ldots, t_d)$ and for an $i\in[k], \vec{v}_i=(v_{i,1},\ldots,v_{i,d})$. For a set $S\in[k]$, and an integer $1\leq i\leq |S|$, let $S_i$ be the $i$th element of the set $S$.
Then we have that for $p\in2\Z$,
\begin{align*}
\sum_{S\subseteq[k]}(-1)^{|S|} \, \Big\|\vec{t} - \sum_{i\in S}\vec{v}_i\Big\|_p^p
&= \sum_{S\subseteq[k]}(-1)^{|S|} \, \sum_{j=1}^d (t_j-\sum_{i=1}^{|S|}v_{S_i,j})^p \\
&= \sum_{S\subseteq[k]}(-1)^{|S|} \, \sum_{j=1}^d \sum_{a_0+\ldots+a_{|S|}=p} \binom{p}{a_0,\ldots,a_{|S|}} t_j^{a_0}\prod_{i=1}^{|S|}v_{S_i,j}^{a_i} \\
&= \sum_{j=1}^d \sum_{a_0+\ldots+a_{k}=p} \binom{p}{a_0,\ldots,a_{k}} t_j^{a_0}\prod_{i=1}^{k}v_{i,j}^{a_i} \cdot  \sum_{S\supseteq\{i\colon a_i\neq0\}}(-1)^{|S|}\\
&= \sum_{j=1}^d \sum_{a_0+\ldots+a_{k}=p} \binom{p}{a_0,\ldots,a_{k}} t_j^{a_0}\prod_{i=1}^{k}v_{i,j}^{a_i} \cdot  (1-1)^{k-|\{i\colon a_i\neq0\}|}\\
&=0 \ ,
\end{align*}
where the last equality follows from $|\{i\colon a_i\neq0\}|\leq p < k$.
\end{proof}

\begin{corollary}
Let $p\in2\Z$. There do not exist $(p, k)$-isolating parallelepipeds for $k > p$.
\label{cor:no-ips-kgeqp}
\end{corollary}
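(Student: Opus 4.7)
The plan is to derive the corollary directly from Lemma~\ref{lem:incl-excl} by setting up a contradiction with the definition of an isolating parallelepiped. Suppose toward contradiction that $V = (\vec{v}_1,\ldots,\vec{v}_k) \in \R^{d^* \times k}$ and $\vec{t}^* \in \R^{d^*}$ formed a $(p,k)$-isolating parallelepiped for some even integer $p$ and some $k > p$. I will apply Lemma~\ref{lem:incl-excl} to the columns $\vec{v}_1,\ldots,\vec{v}_k$ of $V$ together with the target $\vec{t}^*$, using the natural bijection between subsets $S \subseteq [k]$ and $\vec{x} \in \{0,1\}^k$ under which $\sum_{i \in S} \vec{v}_i = V\vec{x}$.

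Under this bijection, the term of the sum in Lemma~\ref{lem:incl-excl} indexed by $S$ is $(-1)^{|S|}\,\|\vec{t}^* - V\vec{x}\|_p^p$. By condition~(1) of Definition~\ref{def:ip}, every such term with $S \neq \emptyset$ (equivalently, $\vec{x} \neq \vec{0}$) equals $(-1)^{|S|}$, and the single term with $S = \emptyset$ equals $\|\vec{t}^*\|_p^p$. Since $\sum_{S \subseteq [k]} (-1)^{|S|} = (1-1)^k = 0$ for $k \geq 1$, the nonempty-$S$ terms contribute exactly $-1$. Therefore Lemma~\ref{lem:incl-excl} forces
\[
\|\vec{t}^*\|_p^p - 1 \;=\; 0,
\]
i.e.\ $\|\vec{t}^*\|_p = 1$, directly contradicting condition~(2) of Definition~\ref{def:ip}, which requires $\|\vec{t}^*\|_p > 1$.

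There is essentially no obstacle here beyond correctly matching notation: the lemma is stated for arbitrary vectors in $\R^d$, and its sum is precisely the inclusion--exclusion alternating sum that collapses (using $\|V\vec{x}-\vec{t}^*\|_p = 1$ on all but one vertex of the parallelepiped) to the single quantity $\|\vec{t}^*\|_p^p - 1$. The only care needed is to note that the sign convention in the lemma ($(-1)^{|S|}\|\vec{t}-\sum_{i\in S}\vec{v}_i\|_p^p$) gives $+1$ to the $S=\emptyset$ term, so the ``isolated'' vertex survives with a positive coefficient while the remaining $2^k - 1$ vertices sum to $-1$. The proof writeup will therefore be two or three lines: quote Lemma~\ref{lem:incl-excl}, substitute the parallelepiped data, and read off the contradiction.
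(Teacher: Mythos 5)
Your proof is correct and takes essentially the same approach as the paper: both apply Lemma~\ref{lem:incl-excl} to the parallelepiped data, use the identity $\sum_{S\subseteq[k]}(-1)^{|S|}=0$ to collapse the nonempty terms to $-1$, and conclude $\|\vec{t}^*\|_p=1$, contradicting condition~(2) of Definition~\ref{def:ip}.
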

\begin{proof}
Suppose towards a contradiction that $V = (\vec{v}_1, \ldots, \vec{v}_k) \in \R^{d \times k}$ and $\vec{t} \in \R^d$ form an isolating parallelepiped. Then for all $\vec{x} \in \bit^k \setminus \set{\vec{0}}$,  $\norm{V\vec{x} - \vec{t}} = 1$. By Lemma~\ref{lem:incl-excl}, 
\[
\norm{\vec{t}}_p^p = \sum_{\emptyset\neq S\subseteq [k]}(-1)^{|S|+1}\,\norm{t-\sum_{i\in S}v_i}_p^p=\sum_{\emptyset\neq S\subseteq [k]}(-1)^{|S|+1}=1
\ . \qedhere
\]
\end{proof}

\subsection{Impossibility of natural reductions for \texorpdfstring{$p = 2$}{p=2}}

For a lattice $\lat \subset \R^d$ with basis $\basis \in \R^{d \times n}$ and target vector $\vec{t} \in \R^d$, let 
\[
\CVP(\vec{t}, \basis) := \{ \vec{z}\in \Z^n \ : \ \| \basis \vec{z} - \vec{t}\|_2 = \dist_2(\vec{t},\lat) \}
\] 
be the set of the coordinates of closest lattice vectors to $\vec{t}$.

\begin{definition}
	A \emph{natural reduction} from $k$-SAT to $\CVP_2$ is a (not necessarily efficient) mapping from $k$-SAT instances on $n$ variables to $\CVP_2$ instances $\basis \in \R^{d \times n'}, \vec{t}\in \R^{d}$ such that there exists a (not necessarily efficiently computable) fixed function $f : \{0,1\}^n \to \Z^{n'}$ with the following property. 
If the input $k$-SAT instance is satisfiable, then $\vec{x} \in \bit^n$ is a satisfying assignment if and only if $f(x) \in \CVP(\vec{t}, \basis)$.
\end{definition}

In other words, a natural reduction is one in which there exists a fixed function $f$ such that, if the input to the reduction is a satisfiable formula, $f$ forms a bijection between satisfying assignments and (coordinates of) closest vectors.
The following theorem shows that no natural reduction can rule out a $2^{3n/4}$-time algorithm for $\CVP_2$ under SETH.

\begin{theorem}
	\label{thm:no_natural}
	Every natural reduction from $3$-SAT on $n$ variables to $\CVP_2$ on rank $n'$ lattices must have $n' > 4(n-2)/3$.
\end{theorem}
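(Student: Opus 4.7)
The plan is to execute the three-step strategy sketched in Section~1.3 of the introduction: (i) use strict convexity of the $\ell_2$ norm to show that a natural reduction produces an image in $\F_2^{n'}$ of size exactly $2^n$; (ii) establish a rigidity result that forbids affine $3$-cubes in this image; and (iii) conclude via a cube lemma in $\F_2^{n'}$.

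I would first set up the reduction data. Let $f : \{0,1\}^n \to \Z^{n'}$ be the witness map of a natural reduction, and let $A := f(\{0,1\}^n)$. Consider the trivially satisfiable $3$-SAT instance (e.g., an empty formula, or one whose only clauses are tautologies); by the natural-reduction property, \emph{every} assignment $\vec{x} \in \{0,1\}^n$ must satisfy $f(\vec{x}) \in \CVP(\vec{t}, \basis)$ for the reduction's output $(\basis, \vec{t})$. The parallelogram identity $\|\basis\vec{z}_1-\vec{t}\|_2^2 + \|\basis\vec{z}_2-\vec{t}\|_2^2 = 2\|\basis(\vec{z}_1+\vec{z}_2)/2 - \vec{t}\|_2^2 + \tfrac{1}{2}\|\basis(\vec{z}_1-\vec{z}_2)\|_2^2$ implies that any two distinct $\vec{z}_1, \vec{z}_2 \in \CVP(\vec{t},\basis)$ must differ in some coordinate mod $2$; otherwise $(\vec{z}_1+\vec{z}_2)/2$ would be a strictly closer lattice point. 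Hence $f$ is injective modulo $2$, so that $A_2 := A \bmod 2 \subseteq \F_2^{n'}$ satisfies $|A_2| = 2^n$.

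The main obstacle is to prove the rigidity lemma for $p=2$: if four elements $\vec{z}_1, \ldots, \vec{z}_4 \in S_{\basis,\vec{t}}$ reduce mod $2$ to an affine $2$-flat in $\F_2^{n'}$ (so $\vec{z}_1+\vec{z}_2 \equiv \vec{z}_3+\vec{z}_4 \pmod{2}$ after relabeling), then either the four points form a genuine parallelogram over $\R$ or there is a canonically associated four-tuple $C' \subset \Z^{n'}$, also of closest vectors, obtained by ``reflecting'' through the appropriate midpoint. Concretely, write $\vec{z}_i = \vec{z}_1 + 2\vec{u}_i$ and expand $\|\basis(\vec{z}_1 + 2\vec{u}_i)-\vec{t}\|_2^2$; the four equalities force a system of linear constraints on $\langle \basis \vec{u}_i, \basis \vec{u}_j - \vec{t}\rangle$, whose solutions split into the parallelogram case $\vec{z}_1+\vec{z}_2 = \vec{z}_3+\vec{z}_4$ over $\R$ and a complementary case, where the same $\ell_2$ distance is also achieved by $\vec{z}_3' := \vec{z}_1+\vec{z}_2-\vec{z}_3$ and analogously defined $\vec{z}_1', \vec{z}_2', \vec{z}_4'$. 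I would then iterate this to a $3$-cube: if $A_2$ contains an affine $3$-cube $K_2$, suppose the preimage $K \subseteq A$ is not a true $3$-parallelepiped over $\R$. Applying the rigidity lemma to one of the six $2$-faces of $K$ that fails to be a parallelogram, I would produce a companion four-tuple $C'$ of closest vectors with $C' \not\subset A$; then choose a single $3$-SAT clause that is satisfied by all of $f^{-1}(K)$ but is falsified by some element of $f^{-1}(C')$ --- such a clause always exists by inspecting the three coordinates where the $3$-cube $K_2$ varies --- which yields an instance whose natural reduction must place $C'$ among the closest vectors, contradicting the reduction's natural property.

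Thus any affine $3$-cube $K_2 \subseteq A_2$ lifts to a genuine $3$-parallelepiped $K \subseteq A$. Now pick a $3$-SAT formula (a single $3$-clause on the three ``varying'' variables of $K$) that is satisfied by exactly seven out of the eight elements of $f^{-1}(K)$. The corresponding reduction output would have exactly those seven images as closest lattice vectors, i.e., would exhibit a $(2,3)$-isolating parallelepiped (after translating so that the isolated vertex is $\vec{0}$). But Lemma~\ref{lem:same-distances-imply-cube} (equivalently, Corollary~\ref{cor:no-ips-kgeqp}) rules this out. Hence $A_2$ contains no affine $3$-cube. Finally, I would invoke the cube lemma proved elsewhere in Section~\ref{sec:limitations} (the paper's tightened pigeonhole version of Szemer\'edi's cube lemma for $\F_2^{n'}$): a $3$-cube-free subset of $\F_2^{n'}$ of size $2^n$ forces $n > \tfrac{3n'}{4} + 2$ in its crispest form, i.e., $n' < \tfrac{4(n-2)}{3}$ is impossible, giving $n' > 4(n-2)/3$ as claimed.
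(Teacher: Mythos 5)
Your high-level strategy matches the paper's proof exactly (injectivity mod $2$, rigidity of affine squares, $3$-cube-freeness, then the cube lemma), and the opening and closing steps are essentially correct. However, the middle step --- the rigidity argument that forces every affine $3$-cube in $A_2$ to lift to a genuine parallelepiped --- has genuine gaps. First, your companion formula $\vec{z}_3' := \vec{z}_1 + \vec{z}_2 - \vec{z}_3$ is not the right one: the paper's Lemma~\ref{lem:4to8} takes the four closest-coordinate vectors to satisfy $\vec{z}_4 = \vec{z}_1 + \vec{z}_2 + \vec{z}_3 - 2\vec{v}$ for some integer $\vec{v}$, and then sets $\vec{z}_3' := \vec{z}_1 + \vec{z}_2 - \vec{v}$ (and similarly for the others). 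Your version coincides with this only in the degenerate case $\vec{v} = \vec{z}_3$, i.e., precisely when the four points already form a parallelogram --- so it gives nothing new in the case you actually need. Second, you assert $C' \not\subset A$ and then talk about $f^{-1}(C')$, which is inconsistent; in fact the reverse must hold: applying the reduction to a trivially satisfiable formula (where all assignments are satisfying) together with Lemma~\ref{lem:4to8} shows $C' \subset \CVP(\vec{t},\basis) \subseteq A$. Only then are the preimages $S := f^{-1}(C)$ and $S' := f^{-1}(C')$ both defined.

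Third, the separating-clause step is not justified. You ask for a single $3$-clause satisfied by all eight elements of $f^{-1}(K)$ but falsified by some element of $f^{-1}(C')$, and claim such a clause ``always exists by inspecting the three coordinates where the $3$-cube $K_2$ varies.'' But $K_2$ lives in $\F_2^{n'}$, while the assignments $f^{-1}(K), f^{-1}(C') \subseteq \{0,1\}^n$ carry no corresponding cube structure --- $f$ is an arbitrary injection, so those preimages are eight (and four) essentially arbitrary points of $\{0,1\}^n$. In fact, the paper does something both weaker and cleaner: it only needs the clause to be satisfied by the four preimages $S$ of the single offending $2$-face (so that rigidity forces $C' \subset \CVP$) while falsifying at least one of the four elements of $S'$. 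That this is always possible is exactly Lemma~\ref{lem:3SAT_separates}, whose proof (a majority-string pigeonhole argument on the $4$-element set $S$) is a real ingredient you would need to supply. Once that is fixed, the remainder of your outline --- the $7$-out-of-$8$ clause via Lemma~\ref{lem:7outof8}, the contradiction from Lemma~\ref{lem:same-distances-imply-cube}, and the cube-lemma finish --- is correct, modulo a sign slip in your final inequality ($2^n < 2^{3n'/4+2}$ gives $n < 3n'/4 + 2$, not $n > 3n'/4 + 2$; the concluding bound $n' > 4(n-2)/3$ is still right).
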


We remark that Theorem~\ref{thm:no_natural} trivially also applies to natural reductions from $k$-SAT for $k > 3$, but that it remains an interesting open question to show a quantitatively stronger result for some such $k$. Doing so would require improving or generalizing several of the following lemmas.
Additionally, we note that ``natural reductions'' are a special type of Karp reduction. It is an interesting question whether we can extend Theorem~\ref{thm:no_natural} to rule out a broader class of reductions such as some natural class of Cook reductions.

To prove Theorem~\ref{thm:no_natural}, we study the structure of $A := f(\{0,1\}^n)$ modulo two. In particular, we will show that $A$ cannot contain any affine $3$-cube modulo two. The next lemma is a version of Szemer\'{e}di's cube lemma for the boolean cube, which shows that any such set must be small (relative to $n'$). To the authors' knowledge, our proof is novel and significantly simpler than that of prior work (e.g.,~\cite[Lemma 3.1]{cs16}). We also obtain a tighter bound.

\begin{lemma}
	\label{lem:additive_combinatorics}
	Let $d\geq1$ be an integer. Every set $S\subseteq\F_2^n$ of size $|S|\ge 2^{n(1-2^{-(d-1)})+2}$ contains an affine subspace of dimension $d$.
\end{lemma}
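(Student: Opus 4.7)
The plan is to prove the lemma by induction on $d$. The base case $d=1$ is immediate: the hypothesis yields $|S| \geq 4$, so picking any two distinct points $y_0, y_0 + y_1 \in S$ produces an affine $1$-dimensional subspace.

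For the inductive step from $d-1$ to $d$ (simultaneously dropping the ambient dimension from $n$ to $n-1$), the key idea is a pigeonhole argument on ``differences.'' For each nonzero $z \in \F_2^n$, set $S_z := \{x \in S : x + z \in S\}$. Each ordered pair of distinct elements of $S$ contributes to exactly one $S_z$, so $\sum_{z \neq 0} |S_z| = |S|(|S|-1)$, and pigeonhole supplies some $z \neq 0$ with
\[
|S_z| \geq \frac{|S|(|S|-1)}{2^n - 1} \geq \frac{|S|^2}{2^{n+1}}.
\]
Now $S_z$ is invariant under translation by $z$, so it is the full preimage $\pi^{-1}(\overline{S_z})$ under the quotient $\pi : \F_2^n \to \F_2^n / \langle z \rangle \cong \F_2^{n-1}$, giving $|\overline{S_z}| = |S_z|/2 \geq |S|^2/2^{n+2}$. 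A brief calculation shows that the starting bound on $|S|$ propagates correctly: establishing $|\overline{S_z}| \geq 2^{(n-1)(1-2^{-(d-2)})+2}$ reduces to the inequality $2^{-(d-2)} \leq 1$, which holds for $d \geq 2$. The inductive hypothesis applied in $\F_2^{n-1}$ then produces an affine $(d-1)$-cube in $\overline{S_z}$ with basis $\overline{y_1}, \ldots, \overline{y_{d-1}}$ and offset $\overline{y_0}$. Lifting these arbitrarily to $y_0, \ldots, y_{d-1} \in \F_2^n$ gives an affine $(d-1)$-cube lying entirely inside $S_z$ (since $S_z$ is the full preimage), and adjoining $y_d := z$ yields the desired affine $d$-cube $\{y_0 + \sum_{j \in W} y_j : W \subseteq [d]\}$, which lies in $S_z \cup (S_z + z) \subseteq S$.

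The main obstacle I anticipate is verifying that the new direction $z$ is linearly independent of the basis of the $(d-1)$-cube produced by induction --- a more naive approach that invokes induction on $S_z$ directly (inside $\F_2^n$) could fail if $z$ happened to lie in the span of the returned cube. Passing to the quotient $\F_2^n / \langle z \rangle$ before applying induction sidesteps this cleanly, because any dependence $\sum a_i y_i + b z = 0$ would project to a dependence of the $\overline{y_i}$, forcing all $a_i = 0$ and then $b = 0$. This also explains the shape of the bound: each inductive step consumes one ambient dimension (the one spanned by $z$) together with one cube dimension, and the doubling built into the exponent $n \cdot 2^{-(d-1)}$ exactly accommodates the $|S|^2/2^{n+2}$ loss incurred in the pigeonhole step.
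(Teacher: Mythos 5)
Your proof is correct, and it uses the same core pigeonhole idea as the paper's: sum $|S_z|$ over nonzero $z$ to locate a popular difference, extract the large substructure supported on that difference, recurse for a $(d-1)$-cube, and adjoin the popular difference as a new direction. Where you diverge is in how the recursion is set up. The paper keeps the ambient dimension fixed at $n$: it chooses one representative from each of the $M$ popular pairs, applies the inductive hypothesis to those $M$ representatives inside $\F_2^n$, and then must argue separately that the popular sum $\vec{z}_0$ does not lie in the span of the cube so found. The paper dispatches this with a one-line remark ("immediate from the fact that each of the $M$ pairs above contains distinct elements"), which unpacks to: if $\vec{z}_0 \in \operatorname{span}(\vec{z}_1,\dots,\vec{z}_{d-1})$, then two distinct cube points --- both of which are representatives --- would differ by $\vec{z}_0$ and hence form one of the $M$ pairs, contradicting that only one representative per pair was selected. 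Your version instead passes to the quotient $\F_2^n/\langle z\rangle$ before invoking the inductive hypothesis, which drops the ambient dimension by one but renders the linear independence of the adjoined direction automatic --- exactly the obstacle you flag. The two formulations are equivalent in substance (your $\overline{S_z}$ is in bijection with the paper's set of $M$ representatives, since $M = |S_z|/2 = |\overline{S_z}|$), but the quotient route turns the step the paper treats tersely into a non-issue, at the small cost of verifying that the quantitative bound survives the replacement of $n$ by $n-1$, which your computation (reducing to $2^{-(d-2)} \le 1$) correctly handles.
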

\begin{proof}
	We prove the result by induction on $d$. For $d=1$, we have $|S| \ge 4$, and so the statement is trivially true since any set with $2$ elements contains an affine subspace of dimension $1$.
	
	Now we assume the result is true for $d=k$, and show that it is true for $d=k+1\ge 2$. Let $S := \{\vec{x}_1, \ldots, \vec{x}_N\}$, where $N = |S| \ge 2^{n(1-2^{-k})+2}$. Consider all ${N \choose 2}$ distinct pairs of elements in $S$. By the pigeon-hole principle, at least \[M=\frac{N(N-1)}{2 \cdot 2^n} \ge \frac{N^2}{4 \cdot 2^n} = 2^{n(1-2^{-(k-1)})+2}\] distinct pairs have the same sum, say $\vec{z}_0 \in \F_2^n$. Without loss of generality, let these pairs be $(\vec{x}_1, \vec{x}_1 +\vec{z}_0), (\vec{x}_2, \vec{x}_2 + \vec{z}_0), \ldots, (\vec{x}_M, \vec{x}_M + \vec{z}_0)$.
	
	By the induction hypothesis, there exist $\vec{z}^*$, and linearly independent vectors $\vec{z}_1, \ldots, \vec{z}_k$ such that the set  $\{\vec{x}_1, \ldots, \vec{x}_M\}$ contains every element of the form $\vec{z}^* + \sum_{i=1}^k \sigma_i \vec{z}_i $ where $\sigma_i \in \{0,1\}$ for $1\le i \le k$. 
	
	This implies that $S$ contains every element of the form $\vec{z}^* + \sum_{i=0}^k \sigma_i \vec{z}_i $ where $\sigma_i \in \{0,1\}$ for $0 \le i \le k$. To complete the proof, we need to show that $\vec{z}_0$ is not in the span of $\vec{z}_1, \ldots, \vec{z}_k$. But this is immediate %
	from the fact that each of the $M$ pairs above contains distinct elements.
\end{proof}

This next lemma shows that the coordinates of closest vectors have some additional structure modulo two. In particular, if $\vec{z}_1, \vec{z}_2, \vec{z}_3, \vec{z}_4 \in \CVP(\vec{t}, \basis)$ form a square modulo two (i.e., a two-dimensional affine subspace), then either they form a parallelogram over the reals or there must be some specific set of four other vectors $\vec{z}_1',\vec{z}_2',\vec{z}_3',\vec{z}_4' \in \CVP(\vec{t},\basis)$. We will then use this to argue that $A := f(\{0,1\}^n)$ cannot contain any affine $3$-cubes modulo two.

\begin{lemma}
	\label{lem:4to8}
	For any lattice $\lat \subset \R^d$ with rank $n \geq 2$ and basis $\basis \in \R^{d \times n}$ and any target $\vec{t} \in \R^d$, suppose that $\vec{z}_1, \vec{z}_2, \vec{z}_3, \vec{z}_4 := \vec{z}_1 + \vec{z}_2 + \vec{z}_3 - 2\vec{v} \in \CVP(\vec{t}, \basis)$ are coordinates of distinct closest lattice vectors with $\vec{v} \in \Z^n$. Then $\vec{z}_1', \vec{z}_2', \vec{z}_3', \vec{z}_4' \in \CVP(\vec{t}, \basis)$ where
	\[
	\vec{z}_1' := \vec{z}_2 + \vec{z}_3 - \vec{v}, \quad 
        \vec{z}_2' := \vec{z}_1 + \vec{z}_3 - \vec{v}, \quad 
        \vec{z}_3' := \vec{z}_1 + \vec{z}_2 - \vec{v}, \quad 
        \vec{z}_4' := \vec{v}
	\; .
	\]	
	In particular, $C := \set{\vec{z}_1, \vec{z}_2, \vec{z}_3, \vec{z}_4} \cup \set{\vec{z}_1', \vec{z}_2', \vec{z}_3', \vec{z}_4'}$ has size either four or eight, and $|C| = 4$ if and only if $C = \{\vec{y}_0, \vec{y}_0 + \vec{y}_1, \vec{y}_0 + \vec{y}_2, \vec{y}_0 + \vec{y}_1 + \vec{y}_2\}$ for some $\vec{y}_i \in \Z^n$, i.e., $C$ is a parallelogram.
\end{lemma}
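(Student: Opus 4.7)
The plan is to view the eight candidate closest vectors as the corners of a (possibly degenerate) parallelepiped. Setting $\vec{x}_i := \vec{z}_i - \vec{v}$ for $i \in \{1,2,3\}$ and $\vec{y}_{\vec{a}} := \vec{v} + a_1 \vec{x}_1 + a_2 \vec{x}_2 + a_3 \vec{x}_3$ for $\vec{a} \in \{0,1\}^3$, one checks directly that the odd-Hamming-weight corners $\vec{y}_{100}, \vec{y}_{010}, \vec{y}_{001}, \vec{y}_{111}$ are exactly $\vec{z}_1, \vec{z}_2, \vec{z}_3, \vec{z}_4$, while the even-weight corners $\vec{y}_{000}, \vec{y}_{110}, \vec{y}_{101}, \vec{y}_{011}$ are exactly $\vec{z}_4', \vec{z}_3', \vec{z}_2', \vec{z}_1'$. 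Set $r := \dist_2(\vec{t}, \lat)$ and $\vec{q}_{\vec{a}} := \basis \vec{y}_{\vec{a}} - \vec{t} \in \R^d$. The central observation is that, after collapsing $a_i^2 = a_i$, the squared distance $\|\vec{q}_{\vec{a}}\|_2^2$ is a multilinear polynomial in $\vec{a} \in \{0,1\}^3$ of total degree at most $2$: expanding $\|\basis \vec{v} - \vec{t} + \sum_i a_i \basis \vec{x}_i\|_2^2$ produces no $a_1 a_2 a_3$ term.

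For the first claim, I use that every multilinear polynomial on $\{0,1\}^3$ of total degree less than $3$ has vanishing alternating sum $\sum_{\vec{a}} (-1)^{|\vec{a}|}$. Applied to $\|\vec{q}_{\vec{a}}\|_2^2$, this yields $\sum_{i=1}^{4} \|\basis \vec{z}_i' - \vec{t}\|_2^2 = \sum_{i=1}^{4} \|\basis \vec{z}_i - \vec{t}\|_2^2 = 4 r^2$. Each $\basis \vec{z}_i'$ lies in $\lat$, so each of the four summands is at least $r^2$; equality in the sum therefore forces $\|\basis \vec{z}_i' - \vec{t}\|_2 = r$ for every $i$, giving $\vec{z}_i' \in \CVP(\vec{t}, \basis)$.

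For the size-of-$C$ claim, all eight $\basis \vec{y}_{\vec{a}}$ now lie on the sphere of radius $r$ about $\vec{t}$. Applying the standard parallelogram identity $\|\vec{p}_{00}\|_2^2 + \|\vec{p}_{11}\|_2^2 - \|\vec{p}_{10}\|_2^2 - \|\vec{p}_{01}\|_2^2 = 2 \langle \vec{v}_1, \vec{v}_2 \rangle$ (with $\vec{p}_{ab} := \vec{p} + a \vec{v}_1 + b \vec{v}_2$) to each of the three $2$-dimensional faces of the parallelepiped forces $\langle \basis \vec{x}_i, \basis \vec{x}_j \rangle = 0$ for all $i \neq j$, so $\basis \vec{x}_1, \basis \vec{x}_2, \basis \vec{x}_3$ are pairwise orthogonal in $\R^d$. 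Distinctness of $\vec{z}_1, \vec{z}_2, \vec{z}_3, \vec{z}_4$ gives $\vec{x}_i \neq \vec{x}_j$ and $\vec{x}_i + \vec{x}_j \neq \vec{0}$ whenever $i \neq j$, so at most one of $\vec{x}_1, \vec{x}_2, \vec{x}_3$ can vanish. If none vanishes, orthogonality yields linear independence of $\basis \vec{x}_1, \basis \vec{x}_2, \basis \vec{x}_3$, hence of $\vec{x}_1, \vec{x}_2, \vec{x}_3$ (since $\basis$ has trivial kernel), so $\vec{a} \mapsto \vec{y}_{\vec{a}}$ is injective and $|C| = 8$. If instead (after relabeling) $\vec{x}_3 = \vec{0}$, then $\vec{z}_3 = \vec{v}$ and the eight corners collapse to $C = \{\vec{v}, \vec{v} + \vec{x}_1, \vec{v} + \vec{x}_2, \vec{v} + \vec{x}_1 + \vec{x}_2\}$, which has exactly four elements (by linear independence of the orthogonal nonzero pair $\vec{x}_1, \vec{x}_2$) and is a parallelogram with $\vec{y}_0 := \vec{v}$, $\vec{y}_1 := \vec{x}_1$, $\vec{y}_2 := \vec{x}_2$. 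The converse direction of the ``if and only if'' is trivial since a parallelogram has only four vertices.

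The delicate step is ruling out $|C| \in \{5, 6, 7\}$: the three generators $\vec{x}_1, \vec{x}_2, \vec{x}_3$ could a priori be linearly dependent without any one of them vanishing, which would otherwise produce such intermediate sizes. The parallelogram-identity argument on the sphere-constrained corners is precisely what forbids this, since pairwise orthogonality forces linear independence unless one generator is the zero vector.
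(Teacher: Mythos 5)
Your proof is correct, and for the second half it takes a genuinely different route from the paper's. For the first claim, the two arguments are the same in substance: the paper directly computes that four non-negative ``slack'' quantities $\delta_1, \ldots, \delta_4$ (one per candidate vector $\vec{z}_i'$) sum to zero, which is precisely the fact that the alternating sum of the multilinear polynomial $\vec{a} \mapsto \|\basis\vec{y}_{\vec{a}} - \vec{t}\|_2^2$ vanishes because it has no $a_1 a_2 a_3$ term; your packaging is cleaner and more conceptual. For the size-of-$C$ claim, however, the paper argues combinatorially: from the relation $2\vec{z}_j' = \vec{z}_1 + \vec{z}_2 + \vec{z}_3 + \vec{z}_4 - 2\vec{z}_j$ it considers when $\vec{z}_i = \vec{z}_j'$, handles $i \neq j$ by showing $C$ collapses to a parallelogram, and rules out $i = j$ by a convexity argument (a centroid cannot be equidistant with the other three distinct vertices). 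You instead exploit that all eight $\basis\vec{y}_{\vec{a}}$ now lie on a common sphere, apply the parallelogram identity on each two-face to conclude $\langle \basis\vec{x}_i, \basis\vec{x}_j\rangle = 0$ for $i \neq j$, and then reduce to a dichotomy on whether one of $\vec{x}_1, \vec{x}_2, \vec{x}_3$ vanishes. This cleanly rules out the intermediate sizes $5$, $6$, $7$ in one stroke and makes explicit the geometric content (rectangularity of the faces) that the paper's cyclic-quadrilateral argument in Lemma~\ref{lem:same-distances-imply-cube} appeals to separately. Both approaches are valid; yours gives slightly more geometric information (pairwise orthogonality of the generators) for essentially the same effort.
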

\begin{proof}
	By shifting $\vec{t}$ appropriately, we may assume without loss of generality that $\vec{z}_3 = \vec0$. Let $\vec{x}:= \basis \vec{z}_1$, $\vec{y} := \basis \vec{z}_2$, and $\vec{w} := \basis \vec{v}$. Since $\vec{0}, \vec{x}, \vec{y}, \vec{x} + \vec{y} - 2\vec{w}$ are all the same distance from $\vec{t}$, we have
	\[
		\|\vec{x} - \vec{t}\|_2^2 = \|\vec{t}\|_2^2\; , \qquad \|\vec{y} - \vec{t}\|_2^2 = \|\vec{t}\|_2^2\; , \qquad \|\vec{x} + \vec{y} - 2\vec{w}- \vec{t}\|_2^2 = \|\vec{t}\|_2^2
		\; .
	\]
	Recalling the identity $\| \vec{u}_1 - \vec{u}_2\|_2^2 =\|\vec{u}_1\|_2^2 + \|\vec{u}_2\|_2^2 - 2\inner{\vec{u}_1, \vec{u}_2}$, we have
	\begin{equation}
	\label{eq:stuff}
		0= \|\vec{x} \|_2^2 - 2\inner{\vec{x}, \vec{t}} = \|\vec{y} \|_2^2 - 2\inner{\vec{y}, \vec{t}} = \|\vec{x} + \vec{y}- 2\vec{w} \|_2^2 - 2\inner{\vec{x}, \vec{t}} - 2\inner{\vec{y}, \vec{t}} + 4 \inner{\vec{w}, \vec{t}}
		\; .
		\end{equation}

	Furthermore, since
	$\vec0 \in \CVP(\vec{t}, \lat)$, and since $\vec{w}, \vec{x} - \vec{w}, \vec{y} - \vec{w}, \vec{x} + \vec{y} - \vec{w}$ are lattice vectors, we must have
	\begin{equation*}
	\label{eq:no_closer_vectors}
		\|\vec{w} - \vec{t}\|_2^2 \geq \|\vec{t}\|_2^2 \; , \quad \|\vec{x} - \vec{w} - \vec{t}\|_2^2 \geq \|\vec{t}\|_2^2 \; , \quad \|\vec{y} - \vec{w} - \vec{t}\|_2^2 \geq \|\vec{t}\|_2^2 \; , \quad \|\vec{x} + \vec{y} - \vec{w} - \vec{t}\|_2^2 \geq \|\vec{t}\|_2^2
		\; .
	\end{equation*}
	(Otherwise, there would be a lattice vector closer to $\vec{t}$ than $\vec0$.)
	Rearranging as above, we have
	\begin{align*}
		\delta_1 &:= \|\vec{w}\|_2^2 - 2\inner{\vec{w}, \vec{t}} \geq 0 \; , \\ 
		\delta_2 &:= \|\vec{x} - \vec{w}\|_2^2 - \|\vec{x}\|_2^2 + 2\inner{\vec{w}, \vec{t}} = \|\vec{x} - \vec{w}\|_2^2 - 2\inner{\vec{x}, \vec{t}} + 2\inner{\vec{w}, \vec{t}} \geq 0 \; , \\ 
		\delta_3 &:=  \|\vec{y} - \vec{w}\|_2^2 - \|\vec{y}\|_2^2 + 2\inner{\vec{w}, \vec{t}} =  \|\vec{y} - \vec{w}\|_2^2 - 2\inner{\vec{y}, \vec{t}} + 2\inner{\vec{w}, \vec{t}} \geq 0 \; , \\
		\delta_4 &:=  \|\vec{x} + \vec{y} - \vec{w}\|_2^2 - \|\vec{x} + \vec{y} - 2\vec{w}\|^2 - 2\inner{\vec{w}, \vec{t}}=  \|\vec{x} + \vec{y} - \vec{w}\|_2^2 - 2\inner{\vec{x}, \vec{t}} - 2\inner{\vec{y}, \vec{t}}+  2\inner{\vec{w}, \vec{t}} \geq 0
		\; ,
	\end{align*}
	where we have used Eq.~\eqref{eq:stuff}.
	Then, 
	\begin{align*}
	\delta_1 + \delta_2 + \delta_3 + \delta_4 
		&= \|\vec{w}\|_2^2 + \|\vec{x} - \vec{w}\|_2^2 + \|\vec{y} - \vec{w}\|_2^2 + \|\vec{x} + \vec{y} - \vec{w}\|_2^2 \\
		&\qquad - \|\vec{x}\|_2^2 - \|\vec{y}\|_2^2 - \|\vec{x} + \vec{y} - 2\vec{w}\|_2^2\\
		&= 0
	\; .
	\end{align*}
	Since the $\delta_i$ are all non-negative and they sum to zero, they must all be zero. In other words, $\vec{z}_1', \vec{z}_2', \vec{z}_3',\vec{z}_4' \in \CVP(\vec{t}, \lat)$ as needed.
	
	Finally, notice that $2\vec{z}_j' = \vec{z}_1 + \vec{z}_2 + \vec{z}_3 + \vec{z}_4 - 2\vec{z}_j$. If $|C| < 8$, then there exists $i,j$ such that $\vec{z}_i = \vec{z}_j'$. If $i \neq j$, then we see that $\vec{z}_{i} + \vec{z}_{j} = \vec{z}_{k} + \vec{z}_{\ell}$, i.e., the $\vec{z}_{i'}$ form a parallelogram. Furthermore, we must have $\vec{z}_j = \vec{z}_i'$, $\vec{z}_k = \vec{z}_\ell'$, and $\vec{z}_{\ell} = \vec{z}_k'$, i.e., $|C| = 4$. On the other hand, if $i = j$, then we have $4\vec{z}_i = \vec{z}_1 + \vec{z}_2 + \vec{z}_3 + \vec{z}_4$, which yields a contradiction because then $\vec{z}_i$ lies in the convex hull of the other vectors, which means that $\basis \vec{z}_i$ cannot be distinct vectors equidistant from $\vec{t}$.
\end{proof}

The next two lemmas show some basic properties about the expressiveness of $3$-SAT.

\begin{lemma}
	\label{lem:7outof8}
	For any $k\geq1$ and non-empty set $S\subseteq \{0,1\}^n$ with $|S| \leq 2^k$, there exists a $k$-CNF on $n$ variables such that exactly $|S|-1$ of the elements in $S$ are satisfying assignments.
\end{lemma}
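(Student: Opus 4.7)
The plan is to prove the stronger statement that a single $k$-clause suffices, i.e., that there exist $\vec{x}^* \in S$ and a $k$-element subset $T \subseteq [n]$ such that no other $\vec{y} \in S$ agrees with $\vec{x}^*$ on all of $T$. Given such $\vec{x}^*$ and $T$, the clause $C := \bigvee_{i \in T} \ell_i$, where $\ell_i := \neg x_i$ if $x^*_i = 1$ and $\ell_i := x_i$ if $x^*_i = 0$, is falsified precisely at those $\vec{y} \in \{0,1\}^n$ with $\vec{y}|_T = \vec{x}^*|_T$. By our choice of $T$, the only such $\vec{y}$ in $S$ is $\vec{x}^*$ itself, so the $1$-clause $k$-CNF $\{C\}$ is satisfied by exactly $|S| - 1$ elements of $S$, as required.

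I would establish the existence of such a pair $(\vec{x}^*, T)$ by induction on $k$, allowing $n$ and $S$ to vary subject only to $1 \leq |S| \leq 2^k$ and $k \leq n$. The base case $k = 0$ forces $|S| = 1$, and its unique element trivially has a unique (empty) restriction. For the inductive step, if $|S| = 1$ we are again done. Otherwise, $|S| \geq 2$, so there exists a coordinate $i \in [n]$ on which $S$ takes both values. Writing $S_b := \{\vec{z} \in S : z_i = b\}$ for $b \in \{0,1\}$, we have $|S_0| + |S_1| = |S| \leq 2^k$, so some $S_b$ satisfies $1 \leq |S_b| \leq 2^{k-1}$. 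Applying the inductive hypothesis to $S_b$, viewed as a subset of $\{0,1\}^{[n]\setminus\{i\}}$, with $k-1$ in place of $k$, we obtain $T' \subseteq [n] \setminus \{i\}$ with $|T'| = k - 1$ and $\vec{x}^* \in S_b$ such that $\vec{x}^*|_{T'}$ is unique in $S_b$. Setting $T := T' \cup \{i\}$, any $\vec{y} \in S$ with $\vec{y}|_T = \vec{x}^*|_T$ satisfies $y_i = x^*_i = b$ (hence $\vec{y} \in S_b$) and $\vec{y}|_{T'} = \vec{x}^*|_{T'}$, forcing $\vec{y} = \vec{x}^*$.

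There is no serious obstacle; the only subtlety is the bookkeeping of the induction. The edge case $k > n$ can be handled either by convention (allowing repeated literals in a ``$k$-clause'', so that any clause of length $\leq n$ may be padded to length $k$) or by observing that in this range $S$ is already a subset of a set of size $2^n \leq 2^k$ and the argument with $T = [n]$ still goes through. I anticipate the lemma will then be combined in the sequel with the rigidity statement of Lemma~\ref{lem:4to8} to show that any affine $3$-cube in $A = f(\bit^n) \bmod 2$ would yield, via a suitable $3$-CNF with exactly seven satisfying assignments on the corresponding eight points of $A$, a set of eight closest lattice vectors where only four were promised---contradicting the parallelogram conclusion of Lemma~\ref{lem:4to8}---and thence combined with Lemma~\ref{lem:additive_combinatorics} to yield Theorem~\ref{thm:no_natural}.
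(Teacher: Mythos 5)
Your proposal is correct and uses essentially the same argument as the paper: induction on $k$, splitting $S$ along a coordinate where it takes both values, recursing on the smaller half of size at most $2^{k-1}$, and extending the inductively obtained $(k-1)$-clause by one literal on the splitting coordinate. The only cosmetic difference is that you isolate the combinatorial core as the existence of a pair $(\vec{x}^*, T)$ with $|T|=k$ under which $\vec{x}^*$ is the unique restriction, while the paper constructs the clause directly; you are also a bit more careful about the $|S|=1$ and $k>n$ edge cases, which the paper dispatches implicitly.
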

\begin{proof}
	We show how to find a $k$-clause that is satisfied by exactly $|S|-1$ elements. The proof is by induction on $k$. The base case $k = 1$ is trivial. So, we suppose that the result holds for $k-1$. We assume without loss of generality that the number of strings in $S$ whose first coordinate is one is between $1$ and $2^{k-1}$. (I.e., we assume that there are at least as many zeros as ones and that not all strings are the same on this coordinate.) Let $S_1$ be the set of strings with non-zero first coordinate. By induction, there is a $(k-1)$-clause $\phi$ such that exactly $|S_1|-1$ elements in $S_1$ satisfy $\phi$. Then $\phi \vee \neg x_1$ is a $k$-clause satisfied by exactly $|S|-1$ elements in $S$, as needed.
\end{proof}

\begin{lemma}
	\label{lem:3SAT_separates}
	For any non-empty disjoint sets $S, T \subseteq \{0,1\}^n$ with $|S| = 4$ and $|T| \geq 2$, there exists a $3$-CNF on $n$ variables such that all elements in $S$ are satisfying assignments and at least one element in $T$ is not a satisfying assignment.
\end{lemma}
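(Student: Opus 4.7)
The plan is to exhibit a single 3-clause (which is itself a 3-CNF) that is satisfied by all of $S$ and falsified by some $\vec{t}\in T$. Given $\vec{t}\in T$, a 3-clause built on coordinates $\{i_1,i_2,i_3\}$ whose literals are each chosen to evaluate to $0$ under $\vec{t}$ is falsified by $\vec{t}$, and such a clause is satisfied by $\vec{s}\in S$ if and only if $\vec{s}$ disagrees with $\vec{t}$ on at least one of the chosen coordinates. So, writing $D_{\vec{s},\vec{t}}:=\{i\in[n]:s_i\ne t_i\}$, the task reduces to finding $\vec{t}\in T$ and a set $I$ of at most $3$ coordinates that hits each of the four sets $D_{\vec{s},\vec{t}}$ for $\vec{s}\in S$. (If $|I|<3$ we pad with arbitrary coordinates whose literals are falsified by $\vec{t}$, which only enlarges the satisfying set; and each $D_{\vec{s},\vec{t}}\ne\emptyset$ since $S\cap T=\emptyset$.)

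First I would fix an arbitrary $\vec{t}\in T$ and split into cases. If any two of the four sets $D_{\vec{s}_a,\vec{t}},D_{\vec{s}_b,\vec{t}}$ with $a\ne b$ share a coordinate $i_1$, then picking $i_2$ from $D_{\vec{s}_c,\vec{t}}$ and $i_3$ from $D_{\vec{s}_d,\vec{t}}$ (where $\{c,d\}=\{1,2,3,4\}\setminus\{a,b\}$) gives a hitting set of size $\le 3$, and we are done. So the only obstruction is the case in which the four sets $\{D_{\vec{s}_a,\vec{t}}\}_{a=1}^{4}$ are pairwise disjoint.

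Second, in that obstruction case I would invoke the hypothesis $|T|\ge 2$ to pick some $\vec{t}'\in T\setminus\{\vec{t}\}$, and use the symmetric-difference identity
\[
D_{\vec{s},\vec{t}'}=D_{\vec{s},\vec{t}}\,\triangle\,D_{\vec{t},\vec{t}'}
\]
(which follows because $s_i\ne t_i'$ iff $\mathrm{XOR}(s_i\oplus t_i,\ t_i\oplus t_i')=1$). Since $\vec{t}\ne\vec{t}'$ we may choose $i\in D_{\vec{t},\vec{t}'}$. Pairwise disjointness of $\{D_{\vec{s}_a,\vec{t}}\}$ implies that $i$ lies in at most one of the $D_{\vec{s}_a,\vec{t}}$, so for at least three indices $a$ we have $i\notin D_{\vec{s}_a,\vec{t}}$ and hence $i\in D_{\vec{s}_a,\vec{t}'}$. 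In particular two of the sets $D_{\vec{s}_a,\vec{t}'},D_{\vec{s}_b,\vec{t}'}$ share the coordinate $i$, and the first case of the argument applies with $\vec{t}'$ in place of $\vec{t}$, giving the required 3-clause.

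The main obstacle, and the reason the hypothesis $|T|\ge 2$ is needed, is precisely the obstruction case above: with a single target $\vec{t}$ it is possible that the four difference sets $D_{\vec{s}_a,\vec{t}}$ are pairwise disjoint non-empty singletons, and then no $3$ coordinates can hit all of them, so no single 3-clause (nor indeed any 3-CNF whose falsifying set contains $\vec{t}$) works for that particular $\vec{t}$. The clean fix is the symmetric-difference identity, which guarantees that the pairwise-disjoint configuration cannot persist simultaneously at two distinct points of $T$; all other steps are routine.
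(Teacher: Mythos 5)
Your proposal is correct, and it takes a genuinely different route from the paper's proof. Both arguments use a single $3$-clause whose three literals are all falsified by some chosen target $\vec{t} \in T$, so the task becomes finding at most three coordinates that hit each of the four difference sets $D_{\vec{s},\vec{t}} := \{i : s_i \ne t_i\}$ for $\vec{s}\in S$. The paper sidesteps the ``pairwise disjoint'' obstruction from the start: it defines the coordinate-wise majority string $s$ of $S$ and picks $t \in T\setminus\{s\}$ (which is where $|T|\ge 2$ enters). At any coordinate $j$ with $t_j \ne s_j$, at least two of the four strings in $S$ agree with the majority and so already differ from $t$ at $j$; that one coordinate handles at least half of $S$, and two further coordinates (one for each of the at most two remaining strings, which exist since $S\cap T=\emptyset$) finish the job. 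Your proof instead fixes an arbitrary $t\in T$ first, notes that the construction fails only when all four difference sets are pairwise disjoint, and then escapes that case by invoking a second target $t'\in T$ and the symmetric-difference identity $D_{\vec{s},\vec{t}'} = D_{\vec{s},\vec{t}}\,\triangle\,D_{\vec{t},\vec{t}'}$ to show the disjointness cannot persist. Both proofs are elementary and of similar length; the paper's majority-vote choice is marginally slicker in that it never needs to consider a second target, while yours makes the role of $|T|\ge 2$ and the precise nature of the obstruction more explicit.
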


\begin{proof}
We will find an assignment of $3$ variables that satisfies $S$, but doesn't satisfy at least one element of $T$. %

Define the majority string $s \in \{0,1\}^n$ of $S$ to be such that $s_i = 0$, if for at least $2$ strings in $S$, the $i$-th coordinate is $0$, and $s_i = 1$, otherwise. Let $t \in T \setminus \{s\}$. Consider a position $j$ where $t$ differs from $s$. Set the $j$-th variable $x_j = s_j$. This satisfies at least $2$ of the strings in $S$. Let $a, b$ be the two strings in $S$ such that $a_j = b_j \neq s_j$. Note that $t_j \neq s_j$, and hence $t_j = a_j = b_j$. Since $t$ is different from $a$ and $b$, there exist positions $k$ and $\ell$ such that $t_k \neq a_k$ and $t_\ell \neq b_\ell$. We set $x_k = a_k$ and $x_\ell = b_\ell$. Thus, we satisfy every element of $S$ but do not satisfy $t$.   
\end{proof}

Finally, we prove Theorem~\ref{thm:no_natural}. To do so, we first use Lemmas~\ref{lem:4to8} and~\ref{lem:3SAT_separates} to argue that if $\vec{z}_1,\vec{z}_2, \vec{z}_3, \vec{z}_4 \in A$ satisfy $\vec{z}_1 + \vec{z}_2 + \vec{z}_3 + \vec{z}_4 = \vec0 \bmod 2$ as in Lemma~\ref{lem:4to8}, then the $\vec{z}_i$ must form a parallelogram, where $A := f(\{0,1\}^n)$ is the image of $f$. We therefore conclude that if $\vec{z}_1,\ldots, \vec{z}_8 \in A$ form an affine $3$-cube modulo two, then they must actually form a parallelepiped. From this and Lemma~\ref{lem:7outof8}, we derive a contradiction by Lemma~\ref{lem:same-distances-imply-cube}. Therefore, $A \bmod 2$ cannot contain any affine $3$-cube, which means that $n' > 4(n-2)/3$ by Lemma~\ref{lem:additive_combinatorics}.

\begin{proof}[Proof of Theorem~\ref{thm:no_natural}]
	Let $R$ be a natural reduction from $3$-SAT on $n$ variables to $\CVP_2$ on rank $n'$ lattices. I.e., $R$ maps $3$-SAT instances $\phi$ to $\basis \in \R^{d \times n'}$ and $\vec{t}\in \R^d$. First, notice that $f$ must be injective. In particular, if $f$ is not injective, then the reduction cannot possibly be valid because for every two distinct assignments $\vec{x}, \vec{x}' \in \{0,1\}^n$, there exists a $3$-SAT instance $\phi$ that is satisfied by one but not the other. Let $A := f(\{0,1\}^n) \subset \Z^{n'}$ be the image of $f$.
	
	Suppose that there exist distinct $\vec{z}_1 := f(\vec{x}_1), \vec{z}_2  := f(\vec{x}_2), \vec{z}_3  := f(\vec{x}_3), \vec{z}_4  := f(\vec{x}_4) = \vec{z}_1 + \vec{z}_2+ \vec{z}_3 - 2\vec{v} \in A$ for some $\vec{v} \in \Z^{n'}$. Then, for any $\basis, \vec{t}$, if $\vec{z}_1, \vec{z}_2, \vec{z}_3, \vec{z}_4 \in \CVP(\vec{t}, \basis)$, by Lemma~\ref{lem:4to8}, we must also have $\vec{z}_1',\vec{z}_2',\vec{z}_3',\vec{z}_4' \in \CVP(\vec{t}, \basis)$ as well, where
	\[\vec{z}_1' := \vec{z}_2 + \vec{z}_3 - \vec{v}, \quad 
	\vec{z}_2' := \vec{z}_1 + \vec{z}_3 - \vec{v}, \quad 
	\vec{z}_3' := \vec{z}_1 + \vec{z}_2 - \vec{v}, \quad 
	\vec{z}_4' := \vec{v} 
	\; .
	\]
	Therefore, by applying $R$ to, e.g., the empty formula $\emptyset$, we see that $\vec{z}_1', \ldots, \vec{z}_4' \in A$ must also lie in the image of $f$, i.e., $\vec{z}_j' = f(\vec{x}_j')$. Again by Lemma~\ref{lem:4to8}, either the $\vec{z}_i$ form a parallelogram, or the sets $S := \{\vec{x}_1,\ldots, \vec{x}_4\}$ and $S' := \{ \vec{x}_1',\ldots, \vec{x}_4'\}$ are disjoint. But, if $S, S'$ are disjoint, then by Lemma~\ref{lem:3SAT_separates}, there exists a $3$-clause $\phi$ such that $\phi(\vec{x}_i) = 1$ for all $i$ but there exists a $j$ such that $\phi(\vec{x}_j') = 0$. Then, taking $\basis, \vec{t} = R(\phi)$, we see that $\vec{z}_j' \notin \CVP(\vec{t}, \basis)$, a contradiction.
	
We conclude that any such $\vec{z}_1, \vec{z}_2 , \vec{z}_3, \vec{z}_4 = \vec{z}_1 + \vec{z}_2 + \vec{z}_3 - 2\vec{v} \in A$ must form a parallelogram. I.e., if the $\vec{z}_i$ form an affine subspace mod two, then they form a parallelogram $\{\vec{z}_1,\vec{z}_2, \vec{z}_3, \vec{z}_4\} = \{\vec{y}_0, \vec{y}_0 + \vec{y}_1, \vec{y}_0 + \vec{y}_2, \vec{y}_0 + \vec{y}_1 + \vec{y}_2\}$. Now, suppose that $A$ modulo two contains an affine $3$-cube. I.e., suppose that it contains distinct $\vec{z}_1 := f(\vec{x}_1),\ldots, \vec{z}_8 := f(\vec{x}_8)$ such that $\basis(\vec{z}_i - \vec{y}_0 - \sum_{j \in W_i} \vec{y}_j) \in 2\lat$ for some $\vec{y}_0, \vec{y}_1, \vec{y}_2, \vec{y}_3 \in \Z^n$ and distinct $W_i \subseteq \{1,2,3\}$. Then, by the above, we see that $\vec{z}_i = \vec{y}_0 + \sum_{j \in W_i} \vec{y}_j$. I.e., the $\vec{z}_i$ form a parallelepiped. But, by Lemma~\ref{lem:7outof8}, there exists a $3$-clause $\phi$ such that exactly seven out of the eight $\vec{x}_i$ satisfy $\phi$. Therefore, $(\basis, \vec{t}) := R(\phi)$ must have $\|\basis \vec{z}_i - \vec{t}\| = \dist(\vec{t}, \basis \Z^{n'})$ for seven out of the eight $\vec{z}_i$. But, by Lemma~\ref{lem:same-distances-imply-cube}, this is not possible.
	
	Finally, we conclude that $A$ cannot include any affine $3$-cube modulo two. Therefore, by Lemma~\ref{lem:additive_combinatorics}, we see that $n' > 4(n-2)/3$.
\end{proof}
\section{On certain weighted sums of binomial coefficients}

\label{sec:crazy_binomial}

In this section, we study two different classes of sums:
\[
	\sum_{i=0}^{k} \binom{k}{i} |i - \tau|^p
	\; ,
\]
and 
\[
\sum_{i=0}^{k} (-1)^i \binom{k}{i} |i - \tau|^p
\; .
\]
Recall that these correspond to the eigenvalues of $H_{k,p}(t^*)$ from Corollary~\ref{cor:H-eigenvalues}, and particularly Eqs.~\eqref{eq:lambda} and~\eqref{eq:lambda-par}. The bounds derived in this section allow us to build good isolating parallelepipeds for parity, as in Theorem~\ref{thm:ips_parity}. There, we use $\tau = \floor{k/2}$ (for which some of our results are cleaner), but we leave $\tau$ as a variable when we can.

\subsection{Identities and bounds for the alternating sum}

\noindent We first address the alternating sum. It will be convenient to prove our main result concerning this sum in a slightly different parameterization. The case $m = 0$ is due to~\cite{stack_exchange_2827591} (and also appeared in~\cite{conf/soda/LiWW20}), and the proof uses a contour integral suggested in~\cite{stack_exchange_comment}.

\begin{theorem}
	\label{thm:crazy_binomial}
	For any integers $n \geq 1$ and $0 \leq m \leq n$ and $p \in \C$ satisfying $1 \leq \Re(p) < 2n-m$,
	\begin{align*}
	\sum_{i=0}^{2n-m} &(-1)^{n-i} \binom{2n-m}{i}  |n-i|^p \\
	&= -2\sin(\pi p/2) \binom{2n-m}{n} \int_0^\infty \frac{x^p}{\sinh(\pi x)} \cdot \Re\Big( \frac{\Gamma(n - m + 1) \Gamma(n+1)}{\Gamma(n-m + ix + 1)\Gamma(n-ix + 1)} \Big) {\rm d} x
	\; .
	\end{align*}
\end{theorem}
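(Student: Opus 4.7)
The plan is a standard contour integration. First, re-index by $j := n-i$, rewriting the left-hand side as $S := \sum_{j=m-n}^{n} (-1)^j \binom{2n-m}{n-j} |j|^p$, and introduce
\[
G(z) := \frac{\pi}{\sin(\pi z)} \cdot \frac{\Gamma(2n-m+1)}{\Gamma(n-z+1)\Gamma(n-m+z+1)} \cdot (z^2)^{p/2},
\]
where $(z^2)^{p/2}$ uses the principal branch of $w^{p/2}$, which produces branch cuts for $G$ along the full imaginary axis. Applying $\Gamma(w)\Gamma(1-w) = \pi/\sin(\pi w)$ with $w = z-n$, one rewrites
\[
G(z) = (-1)^n \Gamma(2n-m+1) \cdot \frac{(z^2)^{p/2}}{\prod_{j=m-n}^{n}(z-j)},
\]
which makes three things transparent: $G$ has only simple poles, located at the integers $j \in [m-n, n]$; its residue at $j$ equals $(-1)^j \binom{2n-m}{n-j} |j|^p$ (on the real axis $(j^2)^{p/2} = |j|^p$; the $j=0$ residue vanishes since $\Re p \geq 1$); and $|G(z)| = O(|z|^{\Re p - (2n-m+1)})$ as $|z| \to \infty$ away from the imaginary axis.

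Next, close the contour using two counterclockwise pieces: $C_+^R$, the boundary of the right half-disk of radius $R$ with its diameter on the $+$ side of the cut, and $C_-^R$, analogously on the $-$ side, both indented around the origin. Together they enclose all poles, so $\oint_{C_+^R \cup C_-^R} G \, dz = 2\pi i \, S$. The hypothesis $\Re p < 2n-m$ makes the large arcs contribute $O(R^{\Re p - 2n + m}) \to 0$, and the small arcs near the origin vanish because $(z^2)^{p/2} \to 0$ there (since $\Re p > 0$). Only the two diameters along the imaginary axis survive, and they are traversed in opposite directions, so the total reduces to $i \int_{-R}^{R} [G(iy + 0^-) - G(iy + 0^+)] \, dy$.

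To finish, compute the jump of $(z^2)^{p/2}$ across the imaginary axis: for $y > 0$ it equals $2i \sin(\pi p/2) \, y^p$, and for $y < 0$ it equals $-2i \sin(\pi p/2) \, |y|^p$. The jump multiplies the smooth prefactor $A(y) := \frac{\pi}{i \sinh(\pi y)} \cdot \frac{\Gamma(2n-m+1)}{\Gamma(n-iy+1)\Gamma(n-m+iy+1)}$. Substituting $y \to -y$ on the negative half-line and collecting, the combination $A(y) + \overline{A(y)} = 2\, \Re A(y)$ appears, producing the real part in the final formula. After dividing by $2\pi i$ and using $\Gamma(2n-m+1) = \binom{2n-m}{n} \Gamma(n+1) \Gamma(n-m+1)$ to pull the real constants inside $\Re(\cdot)$, one recovers the claimed identity with the constant $-2 \sin(\pi p/2)$. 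The main obstacle is sign bookkeeping: the opposite orientations of the two diameters, the opposite branch-cut jumps on the two halves of the imaginary axis, and the identification of conjugate $\Gamma$-ratios as twice a real part must all combine correctly. A numerical sanity check at $n=2$, $m=0$, $p=1$ (both sides equal $-4$) confirms the final constant and sign.
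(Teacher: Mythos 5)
Your proof is correct, and it takes a genuinely different route than the paper's. The paper works with the function $f_{n,m,p}(x) = \frac{e^{p\log x}}{\sin(\pi x)}\cdot\frac{1}{\Gamma(n+x+1)\Gamma(n-x-m+1)}$, which uses the principal branch of $x^p$ (branch cut along the negative real axis) and is integrated over a right-half-plane contour indented at the origin; this picks up only the residues at $x = 1, \ldots, n-m$, so the paper introduces a second function $g_{n,m,p}$ to capture the other terms and then adds the two resulting integral identities. Your $G(z) = \frac{\pi}{\sin(\pi z)}\cdot\frac{\Gamma(2n-m+1)}{\Gamma(n-z+1)\Gamma(n-m+z+1)}\cdot(z^2)^{p/2}$ instead uses the even function $(z^2)^{p/2}$, whose branch cut runs along the imaginary axis, and the factorized form $G(z) = (-1)^n\Gamma(2n-m+1)\,\frac{(z^2)^{p/2}}{\prod_{j=m-n}^n(z-j)}$ cleanly exhibits all $2n-m+1$ simple poles (with vanishing residue at $0$ since $\Re p \ge 1$) and the $O(|z|^{\Re p - (2n-m+1)})$ decay without any exponential growth to worry about on the large arcs. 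You therefore recover the entire alternating sum from one residue computation rather than two. Both approaches collapse to the same integral along the imaginary axis, but yours is slightly more economical.

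One small correction in the write-up: after substituting $y \mapsto -y$ on the negative half-line, the combination that appears is $A(y) - A(-y) = A(y) - \overline{A(y)} = 2i\,\Im A(y)$, not $A(y) + \overline{A(y)}$ as you wrote. This is where the $1/i$ hidden in $A(y) = \frac{\pi}{i\sinh(\pi y)}\cdot(\text{gamma ratio})$ does its work: $\Im A(y) = -\frac{\pi}{\sinh(\pi y)}\,\Re\big(\frac{\Gamma(2n-m+1)}{\Gamma(n-iy+1)\Gamma(n-m+iy+1)}\big)$, which is how the real part of the $\Gamma$-ratio (and the correct overall sign $-2\sin(\pi p/2)$) emerges after dividing by $2\pi i$. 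Your numerical check at $n=2$, $m=0$, $p=1$ does confirm that the final constant is right, so this is a slip in exposition rather than an error in the argument, but it is worth stating correctly since it is exactly the kind of sign issue you flag as the main obstacle.
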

\begin{proof}
	Let
	\begin{align*}
	f_{n,m, p}(x) &:= \frac{e^{p \log x}}{\sin(\pi x)} \cdot \frac{1}{\Gamma(n + x + 1)\Gamma(n-x-m+1)} \\
	&= f_{n,0,p}(x)\cdot \frac{\Gamma(n-x+1)}{\Gamma(n-x-m+1)}
	\; ,
	\end{align*}
	where $\log x$ is the principal branch of the logarithm function, which satisfies $-\pi < \Im(\log(x)) \leq \pi$ and is analytic except for a branch cut along the non-positive part of the real axis, $x \leq 0$ (where the imaginary part jumps from $\pi$ to $-\pi$). $f_{n,m, s}(x)$ itself is analytic except for this branch cut and simple poles at $x = k$ for integers $k$ with $1 \leq k \leq n-m$ (or no poles at all if $n = m$). For such $k$, the residue is given by
	\begin{align*}
	\mathsf{Res}_{x\to k}(f_{n,m,p}(x)) &= \mathsf{Res}_{x\to k}(1/\sin(\pi x)) \cdot k^p \cdot \frac{1}{\Gamma(n+k+1) \Gamma(n-k-m+1)} \\
	&=  \frac{(-1)^k}{\pi} \cdot k^p \cdot \frac{1}{(n+k)!(n-k-m)!}
	\; .
	\end{align*}
	
	For $R > n$ and $\eps \in (0,1)$, let $C_{R, \eps}$ be the contour defined as the (counter-clockwise oriented) union of the following four curves: (1) the line from $iR$ to $i\eps$; (2) the small half circle $\{ i\eps e^{ix} \ : \ 0 \leq x \leq \pi\}$; (3) the line from $-i\eps$ to $-iR$; and (4) the large half circle $\{ -iR e^{ix} \ : \ 0 \leq x \leq \pi \}$.
	By Cauchy's residue theorem,
	\[
	\oint_{C_{R, \eps}} f_{n,m, p}(x) {\rm d} x  = 2\pi i \sum_{k=1}^{n-m}  \mathsf{Res}_{x\to k}(f_{n,m, p}(x))  = 2 i  \sum_{k=1}^{n-m} \frac{(-1)^k k^p}{(n+k)!(n-k-m)!}
	\; .
	\]
	
	For sufficiently large $|x|$ we have $|f_{n,m,p}(x)| = O_{n,m,p}(|x|^{p+m-2n -1})$ (this follows, e.g., from Theorem~\ref{thm:ramanujan}), which implies that
	\[
	\lim_{R \to \infty} \Big( R \int_{-\pi}^\pi |f_{n,m,p}(R e^{ix})|{\rm d} x \Big) = 0
	\; ,
	\]
	provided that $\Re(p) < 2n-m$.
	Similarly, for $|x| \leq 1/2$ and $\Re(p) \geq 1$, $|f_{n,m, p}(x)|$ is bounded (because it is continuous over this compact region of the complex plane), which implies that
	\[
	\lim_{\eps \to 0} \Big( \eps \int_{-\pi}^\pi |f_{n,m, p}(\eps e^{ix})| {\rm d} x \Big)= 0
	\; .
	\]
	It follows that
	\begin{align}
	2 i  \sum_{k=0}^{n-m} \frac{(-1)^k k^p}{(n+k)!(n-k-m)!} &=
	\lim_{R \to \infty, \eps \to 0^+}\Big( \oint_{C_{R,\eps}} f_{n,m, p}(x) {\rm d} x\Big) \nonumber\\
	&= -i\int_{0}^\infty f_{n,m, p}(ix) {\rm d} x - i \int_{0}^{\infty} f_{n,m, p}(-ix) {\rm d} x \nonumber\\
	&= -i\int_{0}^\infty (f_{n,m, p}(ix) + f_{n,m, p}(-ix)) {\rm d} x \nonumber\\
	&= -\int_0^\infty   \frac{x^p}{\sinh(\pi x)\Gamma(n+ix + 1)\Gamma(n-ix+1)} \nonumber\\
	&\quad\cdot \Big( e^{\pi i p/2} \frac{\Gamma(n-ix+1)}{\Gamma(n-ix-m+1)} - e^{-\pi ip/2} \frac{\Gamma(n+ix+1)}{\Gamma(n+ix-m+1)} \Big) {\rm d} x \label{eq:f}
	\; .
	\end{align}
	
	Now, let
	\begin{align*}
		g_{n,m, p}(x) &:= \frac{e^{p \log x}}{\sin(\pi x)} \cdot \frac{1}{\Gamma(n -m + x + 1)\Gamma(n-x+1)} \\
		&= f_{n,0,p}(x)\cdot \frac{\Gamma(n+x+1)}{\Gamma(n-m+x+1)}
		\; .
	\end{align*}
	Notice that $g_{n,m,s}$ has poles at $x = k$ for integers $1 \leq k \leq n$. An essentially identical analysis then shows that
	\begin{align}
		2 i  \sum_{k=1}^{n} \frac{(-1)^k k^p}{(n-m+k)!(n-k)!} &= 
		-i\int_{0}^\infty (g_{n,m, p}(ix) + g_{n,m, p}(-ix)) {\rm d} x \nonumber\\
		&= -\int_0^\infty  \frac{x^p}{\sinh(\pi x)\Gamma(n+ix + 1)\Gamma(n-ix+1)} \nonumber\\
		&\quad \cdot \Big( e^{\pi i p/2} \frac{\Gamma(n+ix+1)}{\Gamma(n-m+ix+1)} - e^{-\pi ip/2} \frac{\Gamma(n-ix+1)}{\Gamma(n-m-ix+1)} \Big) {\rm d} x \label{eq:g}
		\; .
	\end{align}
	
	Summing Eqs.~\eqref{eq:f} and~\eqref{eq:g} and multiplying by $-(2n-m)!/(2i)$ gives
	\begin{align*}
	&\sum_{k=0}^{n-m} (-1)^{k+1} k^p \binom{2n-m}{n+k} 
			+ \sum_{k=1}^{n} (-1)^{k+1} k^p \binom{2n-m}{n-k} \\
	&\qquad = \sum_{j=0}^{2n-m} (-1)^{n-j+1} \binom{2n-m}{j}  |n-j|^p\\
	&\qquad = \sin(\pi p/2) (2n-m)! \cdot \int_0^\infty \frac{x^p}{\sinh(\pi x)\Gamma(n+ix + 1)\Gamma(n-ix+1)} \\
	&\qquad\qquad\qquad  \cdot \Big( \frac{\Gamma(n+ix+1)}{\Gamma(n-m+ix+1)}  + \frac{\Gamma(n-ix+1)}{\Gamma(n-m-ix+1)} \Big) {\rm d} x\\
	&\qquad = \sin(\pi p/2) \binom{2n-m}{n} \int_0^\infty \frac{x^p}{\sinh(\pi x)} \\
			&\qquad\qquad\qquad \cdot  \Big( \frac{n! (n-m)!}{\Gamma(n-m+ix+1)\Gamma(n-ix + 1)}  + \frac{n! (n-m)!}{\Gamma(n+ix+1)\Gamma(n-m-ix+1)} \Big) {\rm d} x\\
	&\qquad = 2\sin(\pi p/2) \binom{2n-m}{n} \int_0^\infty \frac{x^p}{\sinh(\pi x)} \cdot \Re\Big( \frac{\Gamma(n-m+1)\Gamma(n+1)}{\Gamma(n-m+ix+1) \Gamma(n-ix+1)} \Big) {\rm d} x
	\; ,
	\end{align*}
	as needed.
\end{proof}

\begin{corollary}
	\label{cor:somecorollary}
	For any integers $k \geq 1$ and $0 \leq \tau \leq k$, and $p \in \C$ with $1 \leq \Re(p) < k$,
	\begin{align*}
	\sum_{i=0}^{k}& (-1)^{i} \binom{k}{i}  |i-\tau|^p = \\
	&(-1)^{\tau+1}\cdot 2\sin(\pi p/2) \binom{k}{\tau} \int_0^\infty \frac{x^p}{\sinh(\pi x)} \cdot \Re\Big( \frac{\Gamma(\tau+1)\Gamma(k-\tau+1)}{\Gamma(\tau+ix+1)\Gamma(k-\tau-ix+1)} \Big) {\rm d} x
	\; .
	\end{align*}
\end{corollary}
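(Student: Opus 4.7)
The plan is to derive the corollary by a substitution-and-symmetry reduction to Theorem~\ref{thm:crazy_binomial}. First, I would split into cases based on whether $\tau \geq k/2$ or $\tau \leq k/2$ (with $\tau = k/2$ overlapping).

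For the case $\tau \geq k/2$, the plan is to apply Theorem~\ref{thm:crazy_binomial} with $n := \tau$ and $m := 2\tau - k$. I would first verify the hypotheses: $0 \leq m \leq n$ (which translates to $k/2 \leq \tau \leq k$), and the condition $1 \leq \Re(p) < 2n-m = k$ matches exactly. Under this substitution, $\binom{2n-m}{n} = \binom{k}{\tau}$, $n - m + 1 = k - \tau + 1$, and $n+1 = \tau + 1$, so the integrand on the right-hand side becomes
\[
\Re\Big( \frac{\Gamma(k-\tau+1)\Gamma(\tau+1)}{\Gamma(k-\tau+ix+1)\Gamma(\tau-ix+1)} \Big).
\]
The corollary's integrand differs in that the two $ix$'s appear in the opposite Gamma factors. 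To identify these I would use the reflection $\overline{\Gamma(z)} = \Gamma(\bar z)$, which gives
\[
\overline{\Gamma(\tau+ix+1)\,\Gamma(k-\tau-ix+1)} = \Gamma(\tau-ix+1)\,\Gamma(k-\tau+ix+1),
\]
so taking real parts renders the two expressions identical. Finally, on the sum side I would rewrite $(-1)^{n-i} = (-1)^n (-1)^i = (-1)^\tau (-1)^i$ and move the $(-1)^\tau$ to the right, turning the theorem's $-2\sin(\pi p/2)$ into the corollary's $(-1)^{\tau+1}\cdot 2\sin(\pi p/2)$.

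For the case $\tau < k/2$, the plan is to apply the substitution $j = k-i$ to exploit the symmetry $\binom{k}{i} = \binom{k}{k-i}$. This yields
\[
\sum_{i=0}^k (-1)^i \binom{k}{i}|i-\tau|^p = (-1)^k \sum_{j=0}^k (-1)^j \binom{k}{j}|j-\tau'|^p,
\]
where $\tau' := k-\tau \geq k/2$. Applying the first case to $\tau'$ and collecting sign factors (using $(-1)^{k+\tau'+1} = (-1)^{\tau+1}$ and $\binom{k}{\tau'} = \binom{k}{\tau}$, while the Gamma ratio is invariant under $\tau \leftrightarrow k-\tau$) gives precisely the claimed identity.

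The only nontrivial step is the Gamma-function manipulation using complex conjugation; everything else is bookkeeping of indices and signs. I expect no significant obstacle, provided I handle carefully the boundary case $\tau = k/2$ (which forces $m = 0$ and is covered by either case) and record the fact that the original theorem's condition $\Re(p) < 2n-m$ becomes exactly $\Re(p) < k$ regardless of which case I use.
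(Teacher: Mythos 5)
Your proof is correct and mirrors the paper's proof, differing only in the choice of substitution into Theorem~\ref{thm:crazy_binomial}: you take $(n,m) = (\tau,\, 2\tau - k)$, handling $\tau \geq k/2$ directly and transferring to $\tau < k/2$ via the binomial symmetry $\binom{k}{i} = \binom{k}{k-i}$; the paper instead takes $(n,m) = (k-\tau,\, k - 2\tau)$, handling $\tau \leq k/2$ directly and extending to $\tau > k/2$ by observing that both sides satisfy the functional equation $f(\tau) = (-1)^k f(k-\tau)$. The two substitutions are mirror images and shift the small amount of nontrivial work to opposite ends: the paper's choice produces the $\Gamma$-ratio of the corollary verbatim but implicitly requires a reindexing $i \mapsto k-i$ in the sum to recover $|i - \tau|^p$ from $|n-i|^p$, whereas your choice gives the sum in the correct form immediately (up to the factor $(-1)^\tau$) but requires the Schwarz-reflection identity $\overline{\Gamma(z)} = \Gamma(\bar z)$ to reconcile the placement of $\pm ix$ in the integrand. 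Both are sound; your handling of the hypotheses ($2n - m = k$, $0 \le m \le n$) and of the boundary case $\tau = k/2$ is correct.
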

\begin{proof}
	For $0 \leq \tau \leq k/2$, this follows immediately from plugging $n := k-\tau$ and $m := k-2\tau$ into Theorem~\ref{thm:crazy_binomial}, and multiplying both sides by $(-1)^\tau$. For $k/2 \leq \tau \leq k$, the result follows by noting that both the left-hand side and the right-hand side satisfy the equation $f(k-\tau) = (-1)^k f(k)$.
\end{proof}

\noindent Finally, we derive the corollary that we need for our application.

\begin{corollary}
\label{cor:crazy_binomial}
	For any integer $k \geq 3$ and $1 \leq p < k$, 
	let 
	\begin{align*}
		S_{k,p} &:= \binom{k}{\floor{k/2}}^{-1} \cdot \sum_{i=0}^{k} (-1)^{i} \binom{k}{i}  |i-\floor{k/2}|^p
		\; .
	\end{align*}
Then, for all integers $k \geq 2$, $ |S_{2k,p}| = |S_{2k-1,p}|$ is monotonically decreasing in $k$  with 
	\begin{align*}
		|S_{k,p}|  &\geq \lim_{k \to \infty} |S_{k,p}| \\
		&= 2 |\sin(\pi p/2)| \zeta(1+p) (2-2^{-p}) \frac{\Gamma(p+1)}{\pi^{p+1}} \\
		&\geq 4|\sin(\pi p/2)| (p/(e\pi))^p
		\; ,
	\end{align*}
	where $\zeta(s) := 1+1/2^s + 1/3^s + \cdots$ is Riemann's $\zeta$ function. Furthermore, 
	\[
		\sign(S_{k,p})=(-1)^{\floor{k/2} + \floor{p/2} + 1}
		\; .
	\]
\end{corollary}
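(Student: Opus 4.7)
The plan is to specialize Corollary~\ref{cor:somecorollary} to $\tau = \floor{k/2}$ and then simplify the resulting $\Gamma$-ratio via Theorem~\ref{thm:ramanujan} so that the parity of $k$ disappears from the integrand. If $k = 2m$, Theorem~\ref{thm:ramanujan} applies directly and shows the gamma factor equals $\tfrac{\sinh(\pi x)}{\pi x}\prod_{j=1}^{m}(1+x^2/j^2)^{-1}$, which is already real-valued. If $k = 2m-1$, the functional equation $\Gamma(m+1-ix) = (m-ix)\Gamma(m-ix)$ rewrites the ratio as $\tfrac{m}{m-ix}\cdot\tfrac{\Gamma(m)^2}{\Gamma(m+ix)\Gamma(m-ix)}$; applying Theorem~\ref{thm:ramanujan} (with parameter $m-1$) to the second factor and taking real parts supplies the extra prefactor $\Re(m/(m-ix)) = (1+x^2/m^2)^{-1}$, which absorbs into the product and yields exactly the same integrand as in the even case. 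The net upshot is the uniform formula
\[
|S_{k,p}| \;=\; \frac{2\,|\sin(\pi p/2)|}{\pi}\int_0^\infty x^{p-1}\prod_{j=1}^{\ceil{k/2}}\bigl(1+x^2/j^2\bigr)^{-1}\,\mathrm{d}x,
\]
valid for all $k \geq 1$ and $1 \leq p < k$.

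From this representation the pair equality and monotonicity are immediate: since $\ceil{2n/2} = \ceil{(2n-1)/2} = n$ we get $|S_{2n,p}| = |S_{2n-1,p}|$, and incrementing the index inserts a further factor $(1+x^2/j^2)^{-1}\in (0,1)$ into the integrand, so $|S_{k,p}|$ strictly decreases across pairs. For the limit we invoke Euler's product $\sinh(\pi x)/(\pi x) = \prod_{j\geq 1}(1+x^2/j^2)$ together with dominated convergence (valid once $\ceil{k/2} > (p+1)/2$, after which the integrand is dominated by the integrable $x^{p-1}\cdot \pi x/\sinh(\pi x)$), yielding $\lim_{k\to\infty}|S_{k,p}| = 2|\sin(\pi p/2)|\int_0^\infty x^p/\sinh(\pi x)\,\mathrm{d}x$.

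To put this in closed form we expand $1/\sinh(\pi x) = 2\sum_{n\geq 0} e^{-(2n+1)\pi x}$ and integrate termwise, obtaining $\frac{2\Gamma(p+1)}{\pi^{p+1}}\sum_{n\geq 0}(2n+1)^{-(p+1)} = \frac{\Gamma(p+1)}{\pi^{p+1}}(2-2^{-p})\zeta(p+1)$, matching the claimed formula. For the explicit lower bound $\geq 4|\sin(\pi p/2)|(p/(e\pi))^p$ we combine Stirling's inequality $\Gamma(p+1) \geq \sqrt{2\pi p}\,(p/e)^p$ with the trivial bounds $\zeta(p+1)\geq 1$ and $(2-2^{-p})\geq 3/2$ for $p \geq 1$; this yields the estimate directly for $p$ above a small threshold ($p \gtrsim 2.8$), while the remaining narrow window (essentially $p \in [1, 3]$, with the vanishing at $p = 2$ being trivial) is verified by hand using stronger facts such as $\zeta(2) = \pi^2/6$. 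Finally, since $\sign(\sin(\pi p/2)) = (-1)^{\floor{p/2}}$ away from even integer $p$, multiplying by the prefactor $(-1)^{\floor{k/2}+1}$ from Corollary~\ref{cor:somecorollary} gives $\sign(S_{k,p}) = (-1)^{\floor{k/2}+\floor{p/2}+1}$, as required.

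The main technical obstacle is the odd-$k$ simplification of the $\Gamma$-ratio: the key insight is that the functional equation conspires with the real-part operation to supply the $j = m$ factor otherwise missing from the Ramanujan product at parameter $m-1$, so that both parities of $k$ yield identical integrands. Once this equivalence is in hand, the remainder reduces to standard manipulations --- Euler's product, termwise integration, and Stirling's estimate.
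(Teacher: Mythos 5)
Your proposal is correct and follows essentially the same path as the paper: specialize Corollary~\ref{cor:somecorollary} at $\tau = \floor{k/2}$, deploy Theorem~\ref{thm:ramanujan} to convert the $\Gamma$-ratio into the Euler-type product $\frac{\sinh(\pi x)}{\pi x}\prod (1+x^2/j^2)^{-1}$, then use monotonicity of the product in $k$, dominated convergence, and the expansion $1/\sinh(x) = 2\sum e^{-(2j+1)x}$ for the closed form. The one genuine (minor) refinement you make is in handling odd $k$: where the paper uses the functional equation to show $S_{k,p} = -S_{k+1,p}$ and then works only with $P_{2k}$, you fold the functional-equation correction $\Re\bigl(m/(m-ix)\bigr) = (1+x^2/m^2)^{-1}$ directly into the Ramanujan product, arriving at a single uniform integrand indexed by $\ceil{k/2}$. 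This is a clean repackaging that makes the pair equality $|S_{2n,p}| = |S_{2n-1,p}|$ and the stepwise monotonicity immediate; it is not a fundamentally different argument, but it is arguably tidier.

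Two small points to tighten. First, in the dominated-convergence step you name $\pi x^p/\sinh(\pi x)$ as the dominating function, but that is the \emph{limit} integrand, which lower-bounds rather than upper-bounds the finite-$k$ integrands; the correct dominator is the integrand at any fixed $k_0$ with $\ceil{k_0/2} > (p+1)/2$, which is integrable and majorizes all later terms by the monotonicity you already established. Second, your Stirling/crude-bound estimate for the final inequality $\geq 4|\sin(\pi p/2)|(p/(e\pi))^p$ only covers $p \gtrsim 2.8$ and you defer the rest to a hand check; this matches the paper's (even terser) treatment, so it is acceptable, but a reader would benefit from a short explicit verification on $[1,3]$ rather than a gesture.
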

\begin{proof}
	By plugging $\tau := \floor{k/2}$ into Corollary~\ref{cor:somecorollary}, we have that for even $k$,
	\[
	S_{k,p} = (-1)^{k/2+1}\cdot2\sin(\pi p/2) \int_0^\infty \frac{x^p}{\sinh(\pi x)} \cdot \frac{\Gamma(k/2+1)^2}{\Gamma(k/2+ix+1)\Gamma(k/2-ix+1)} {\rm d} x
	\; .
	\]
	For odd $k$, 
	notice that 
	we have
	\begin{align*}
		&\Re\Big( \frac{\Gamma(k/2+1/2)\Gamma(k/2+3/2)}{\Gamma(k/2+ix+1/2)\Gamma(k/2-ix+3/2)} \Big) \\
		&\qquad =
		\frac{\Gamma(k/2+1/2)^2}{\Gamma(k/2+ix+1/2)\Gamma(k/2-ix+1/2)} \cdot \frac{(k/2+1/2)^2}{(k/2+1/2)^2+x^2}\\
		&\qquad = \frac{\Gamma((k+1)/2+1)^2}{\Gamma((k+1)/2+ix+1)\Gamma((k+1)/2-ix+1)}
		\; ,
	\end{align*}
	where we have applied the functional equation $\Gamma(r+1) = r\Gamma(r)$ repeatedly, and explicitly computed the real part of $(k/2-ix+1/2)^{-1}$. Plugging this in to Corollary~\ref{cor:somecorollary}, we see that $S_{k,p} = -S_{k+1,p}$ for odd $k$.
	
	Now, let
	\[
		P_{2k}(x) :=
		\frac{\Gamma(k+1)^2}{\Gamma(k+ix+1)\Gamma(k-ix+1)} 
	\]
	By Theorem~\ref{thm:ramanujan}, we have the surprising identity
	\[
		P_{2k}(x) = \frac{\sinh(\pi x)}{\pi x} \cdot \prod_{j=1}^{k}(1+x^2/j^2)^{-1}
		\; .
	\]
	It follows immediately by inspection that $P_{2k}$ is decreasing in $k$ (for real $x \neq 0$), which implies the monotonicity result.
	 
	 To obtain the asymptotic result for $|S_{k,p}|$, we note that the monotonicity described above allows us to apply the dominated convergence theorem and exchange the limit and the integral to obtain
	 \begin{align*}
	 	\lim_{k \to \infty} |S_{k,p}|
	 		&=  2 |\sin(\pi p/2)| \int_0^\infty \frac{x^p}{\sinh(\pi x)} {\rm d} x \\
	 		&= \frac{2 |\sin(\pi p/2)|}{\pi^{p+1}} \cdot \int_0^\infty \frac{x^p}{\sinh(x)} {\rm d} x 
	 		\; ,
	 \end{align*}
	 where we have used the fact that $\lim_{k \to \infty} P_{2k}(x) = 1$. To compute the integral, we can note that for~$x > 0$,
	 \[
	 	\frac{1}{\sinh(x)} = \frac{2 \exp(-x)}{1-\exp(-2x)} = 2\exp(-x) \sum_{j=0}^\infty \exp(-2jx)
	 	\; , 
	 \]
	 so that 
	 \[
	 	\int_0^\infty \frac{x^p}{\sinh(x)} {\rm d} x = 2\sum_{j=0}^\infty \int_0^\infty \exp(-(2j+1)x) x^p {\rm d} x = 2 \sum_{j=0}^\infty \frac{\Gamma(p+1)}{(1+2j)^{p+1}} = (2-2^{-p}) \Gamma(p+1) \zeta(p+1)
	 	\; ,
	 \]
	 as needed.
	 
Finally, $\sign(S_{k,p})=(-1)^{\floor{k/2} +1} \cdot\sign(\sin(\pi p/2)) =(-1)^{\floor{k/2} + \floor{p/2} + 1}$.
\end{proof}

\subsection{Bounds for the non-alternating sum}

\begin{lemma}
\label{lem:simple_binomial}
For any integer $k \geq 2$, and $p\in \mathbb{R}$ with $1\leq p < k$,
\[
\sum_{i=0}^k \binom{k}{i}  |i-k/2|^p \leq 11 \binom{k}{\floor{k/2}} (pk/2)^{(p+1)/2} 
\, .
\]
\end{lemma}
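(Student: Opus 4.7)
My plan is to recognize the sum as a moment of a centered binomial random variable and then bound it via sub-Gaussian concentration. Setting $Y := X - k/2$ where $X$ is binomial with $k$ trials and bias $1/2$, I get the clean identity $\sum_{i=0}^k \binom{k}{i}|i-k/2|^p = 2^k \E|Y|^p$. Combined with a standard lower bound $\binom{k}{\floor{k/2}} \geq 2^{k-1}/\sqrt{k}$ for $k \geq 2$ (provable by an easy induction on $a_n := \binom{2n}{n}/4^n$ showing that $2\sqrt{n}\,a_n$ is non-decreasing in $n$, and then passing to odd $k$ using $\binom{2n+1}{n} = \binom{2n}{n}(2n+1)/(n+1)$), the task reduces to proving $\E|Y|^p \lesssim (pk/2)^{(p+1)/2}/\sqrt{k}$. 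The $(p+1)/2$ exponent on the right-hand side, versus the natural $p/2$ from moments, gives an extra $\sqrt{pk}/\sqrt{k}$ factor of slack, which is the whole point of the bound.

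To obtain the moment bound, I would first compute the MGF directly: $\E[e^{tY}] = \cosh(t/2)^k \leq \exp(kt^2/8)$, where the inequality uses the classical termwise comparison $\cosh(x) \leq \exp(x^2/2)$. This makes $Y$ sub-Gaussian with variance proxy $k/4$ and yields the Chernoff tail $\Pr[|Y| \geq u] \leq 2\exp(-2u^2/k)$. Plugging this into the layer-cake identity and substituting $u = \sqrt{ku'/2}$, the remaining integral collapses to a gamma function:
\[
\E|Y|^p = p\int_0^\infty u^{p-1}\Pr[|Y|\geq u]\,\mathrm{d}u \leq 2p\int_0^\infty u^{p-1}\exp(-2u^2/k)\,\mathrm{d}u = p(k/2)^{p/2}\Gamma(p/2) = 2(k/2)^{p/2}\Gamma(p/2+1).
\]

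It remains to bound $\Gamma(p/2+1)$ and verify the inequality, which I would do by a two-case analysis on $p$. For $p \geq 2$, the bound $\Gamma(x+1) \leq x^x$ (valid for $x \geq 1$, as it follows from $\psi(x+1) \leq \log x + 1$, which in turn follows from $\psi(x) \leq \log x$ plus $\psi(x+1) = \psi(x) + 1/x$) applied at $x = p/2$ gives $\E|Y|^p \leq 2(pk/4)^{p/2}$; the target inequality then reduces, after cancelling $2^k$ and using the central-binomial bound, to $4\sqrt{2}/(\sqrt{p}\,2^{p/2}) \leq 11$, which is maximized at $p = 2$ with value $2\sqrt{2} \approx 2.83$. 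For $1 \leq p \leq 2$, I use that $\Gamma$ does not exceed $1$ on $[1,2]$, so $\Gamma(p/2+1) \leq 1$ and $\E|Y|^p \leq 2(k/2)^{p/2}$; the target becomes $4\sqrt{2}/p^{(p+1)/2} \leq 11$, maximized at $p = 1$ with value $4\sqrt{2} \approx 5.66$. Both inequalities hold with substantial slack, so the constant $11$ is comfortably sufficient. I do not foresee any real obstacle: the sub-Gaussian calculation is textbook, and everything else is elementary arithmetic plus a clean monotonicity argument for the central-binomial bound.
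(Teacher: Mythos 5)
Your proof is correct, but it takes a genuinely different route from the paper's. The paper argues directly on binomial coefficients: after disposing of the case $p \geq k/2$ trivially, it picks thresholds $t_1 = k/2 - \sqrt{pk}$ and $t_2 = k/2 - \sqrt{pk/2}$, bounds the ratio of consecutive terms $\binom{k}{i}|i-k/2|^p$ to establish geometric decay below $t_1$, and sums the three resulting pieces of the range separately. You instead recognize the sum as $2^k\,\E|Y|^p$ for a centered binomial $Y$, prove sub-Gaussianity via $\E[e^{tY}] = \cosh(t/2)^k \leq e^{kt^2/8}$, push through the Chernoff tail and the layer-cake identity to land on $\E|Y|^p \leq 2(k/2)^{p/2}\Gamma(p/2+1)$, and then compare against the central-binomial lower bound $\binom{k}{\floor{k/2}} \geq 2^{k-1}/\sqrt{k}$. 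Your route is conceptually cleaner, replaces a delicate ratio-and-threshold analysis with textbook probability, and (a small bonus) never actually needs the hypothesis $p < k$, since the sub-Gaussian moment bound holds for all $p \geq 1$; the paper's proof, by contrast, is more elementary in the sense that it touches only binomial identities and no $\Gamma$ or MGF machinery. I verified the key steps: the change of variables yields $\int_0^\infty u^{p-1}e^{-2u^2/k}\,du = \tfrac{1}{2}(k/2)^{p/2}\Gamma(p/2)$, the reduction after cancelling $2^{k-1}$ and $(k/2)^{p/2}$ gives exactly $4\Gamma(p/2+1) \leq \tfrac{11}{\sqrt{2}}p^{(p+1)/2}$, and your two-case $\Gamma$ bounds handle this comfortably. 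One arithmetic slip: at $p=2$ the quantity $4\sqrt{2}/(\sqrt{p}\,2^{p/2})$ equals $2$, not $2\sqrt{2}$, but since $2 < 11$ this only gives you \emph{more} slack, not less, so nothing breaks.
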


\begin{proof}
If $p \geq k/2$, then
\begin{align*}
\sum_{i=0}^k \binom{k}{i}  |i-k/2|^p 
\leq 2^k (k/2)^p 
\leq 4 \frac{2^k}{\sqrt{4k}} \cdot (k/2)^{p+1}
\leq 4 \binom{k}{\floor{k/2}} (pk/2)^{(p+1)/2} 
\, .
\end{align*}

If $p< k/2$, then let $t_1=k/2-\sqrt{pk}$ and $t_2=k/2-\sqrt{pk/2}\geq 0$. Note that for any $1\leq i \leq t_1$,
\begin{align*}
\frac{ \binom{k}{i-1} |i-1-k/2|^p}  { \binom{k}{i} |i-k/2|^p} 
&= \frac{i}{k-i+1}\cdot\left(1+\frac{1}{k/2-i}\right)^p \\
&\leq  \frac{t_1}{k-t_1} \cdot\left(1+\frac{1}{k/2-t_1}\right)^p \\
&=\left(1-\frac{2\sqrt{pk}}{k/2+\sqrt{pk}} \right)\cdot \left(1+\frac{1}{\sqrt{pk}} \right)^p \\
& \leq (1-4(\sqrt{2}-1)\sqrt{p/k}) \cdot (1+\sqrt{p/k}+p/k) \\
& \leq 1-(4\sqrt{2}-5)\sqrt{p/k}
\, ,
\end{align*}
where we used $(1+x)^p\leq e^{px} \leq 1+px+(px)^2$ for every $px\leq 1$.
Similarly, for $1\leq i\leq t_2$,
\begin{align*}
\frac{ \binom{k}{i-1} |i-1-k/2|^p}  { \binom{k}{i} |i-k/2|^p} 
&\leq  \frac{t_2}{k-t_2} \cdot\left(1+\frac{1}{k/2-t_2}\right)^p \\
&=\left(1-\frac{2\sqrt{pk/2}}{k/2+\sqrt{pk/2}} \right)\cdot \left(1+\frac{1}{\sqrt{pk/2}} \right)^p \\
& \leq (1-\sqrt{2p/k}) \cdot (1+\sqrt{2p/k}+2p/k) \\
& < 1
\, .
\end{align*}

We have that 
\begin{align*}
    \sum_{i=0}^{\floor{t_1}} \binom{k}{i}  (k/2-i)^p
    &\leq \binom{k}{\floor{t_2}}  (k/2-t_2)^p \cdot \sum_{i=0}^{\floor{t_1}} \left(1-(4\sqrt{2}-5)\sqrt{p/k}\right)^i\\
    &\leq \binom{k}{\floor{k/2}}  (pk/2)^{p/2} \cdot \frac{1}{(4\sqrt{2}-5)\sqrt{p/k}}\\
    &=\binom{k}{\floor{k/2}}  (pk/2)^{(p+1)/2} \cdot \frac{\sqrt{2}}{(4\sqrt{2}-5)p} 
    \,,
\end{align*}
\begin{align*}
    \sum_{i=\ceil{t_1}}^{\floor{t_2}} \binom{k}{i}  (k/2-i)^p
    &\leq \binom{k}{\floor{t_2}}  (k/2-t_2)^p \cdot (t_2-t_1+1)\\
    &\leq\binom{k}{\floor{k/2}}  (pk/2)^{p/2} \cdot \left(\sqrt{pk}-\sqrt{pk/2}+1\right) \\
    &=\binom{k}{\floor{k/2}}  (pk/2)^{(p+1)/2} \cdot \left(\sqrt{2}-1+\sqrt{2/(pk)}\right)
    \,,
\end{align*}
and
\begin{align*}
\sum_{i=\ceil{t_2}}^{k-\ceil{t_2}} \binom{k}{i}  |k/2-i|^p
&\leq \binom{k}{\floor{k/2}}(pk/2)^{p/2}\cdot(k-2t_2+1)\\
&\leq \binom{k}{\floor{k/2}}(pk/2)^{p/2}\cdot\left(2\sqrt{pk/2}+1\right)\\
&\leq \binom{k}{\floor{k/2}}(pk/2)^{(p+1)/2}\cdot\left(2+\sqrt{2/(pk)}\right)\\
\,.
\end{align*}
Finally, using the above inequalities,
\begin{align*}
\sum_{i=0}^k \binom{k}{i}  |i-k/2|^p
&\leq 
2\sum_{i=0}^{\floor{t_1}} \binom{k}{i}  (k/2-i)^p
+2\sum_{i=\ceil{t_1}}^{\floor{t_2}} \binom{k}{i}  (k/2-i)^p
+\sum_{i=\ceil{t_2}}^{k-\ceil{t_2}} \binom{k}{i}  |i-k/2|^p \\
&\leq 
\binom{k}{\floor{k/2}}  (pk/2)^{(p+1)/2} \times\\
&\;\;\;\;\;\;\;\;\;\;\times \left( \frac{2\sqrt{2}}{(4\sqrt{2}-5)p} + \left(2\sqrt{2}-2+2\sqrt{2/(pk)}\right)+\left(2+\sqrt{2/(pk)}\right)\right)\\
&\leq 11 \binom{k}{\floor{k/2}}  (pk/2)^{(p+1)/2} 
\,.
\end{align*}
\end{proof}

\begin{corollary}
\label{cor:simple_binomial}
For any integers $k$ and $c\geq 0$, and $p\in \mathbb{R}$ with $1\leq p < k$,
\[
\sum_{i=0}^k \binom{k}{i}  |i-(k-c)/2|^p \leq 11 \binom{k+c}{\floor{(k+c)/2}} (p(k+c)/2)^{(p+1)/2} 
\, .
\]
In particular, for $c = 1$,
\[
	\sum_{i=0}^k \binom{k}{i}  |i-k/2+1/2|^p \leq 44 \binom{k}{\floor{k/2}} (pk/2)^{(p+1)/2} 
	\; .
\]
\end{corollary}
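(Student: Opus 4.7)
My plan is to reduce the shifted sum back to an unshifted one via the substitution $j = i + c$, bound the binomial coefficients termwise, and then invoke the previous Lemma~\ref{lem:simple_binomial}.

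First I would observe that with the substitution $j = i+c$, the distance becomes $|i-(k-c)/2| = |j-(k+c)/2|$, so that
\[
\sum_{i=0}^{k} \binom{k}{i}\,|i-(k-c)/2|^p \;=\; \sum_{j=c}^{k+c} \binom{k}{j-c}\,|j-(k+c)/2|^p .
\]
The key step is then to show $\binom{k}{j-c} \le \binom{k+c}{j}$ for every $c \le j \le k+c$. This should follow from the explicit ratio
\[
\frac{\binom{k+c}{j}}{\binom{k}{j-c}} \;=\; \frac{(k+1)(k+2)\cdots(k+c)}{(j-c+1)(j-c+2)\cdots j},
\]
where the numerator and denominator are products of $c$ consecutive integers and each factor in the denominator satisfies $j - c + \ell \le k + \ell$ since $j \le k+c$. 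Extending the sum range from $j=c,\dots,k+c$ to $j=0,\dots,k+c$ only enlarges it, so I can then apply Lemma~\ref{lem:simple_binomial} (with $k$ replaced by $k+c$, legal since $p < k \le k+c$) to obtain the first bound with constant $11$.

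For the ``in particular'' statement with $c=1$, it then suffices to absorb the two mismatches $\binom{k+1}{\lfloor(k+1)/2\rfloor}$ vs.\ $\binom{k}{\lfloor k/2\rfloor}$ and $(p(k+1)/2)^{(p+1)/2}$ vs.\ $(pk/2)^{(p+1)/2}$ into a factor of~$4$. For the central binomial ratio I would use $\binom{k+1}{\lfloor(k+1)/2\rfloor} = \binom{k}{\lfloor(k+1)/2\rfloor} + \binom{k}{\lfloor(k+1)/2\rfloor - 1} \le 2\binom{k}{\lfloor k/2\rfloor}$, since both summands are at most the central coefficient. For the other factor I would bound $(1+1/k)^{(p+1)/2}$ by~$2$; this reduces to showing $(k+1)\log(1+1/k) \le 2\log 2$ for integer $k \ge 2$, which I expect to verify by noting that the left side is decreasing in $k$ for $k \ge 1$, equal to $2\log 2$ at $k=1$, and strictly smaller for $k \ge 2$ (and we indeed have $k \ge 2$ from the hypothesis $1 \le p < k$).

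The only mildly delicate point is the termwise binomial inequality $\binom{k}{j-c} \le \binom{k+c}{j}$, but the explicit factorization above makes it immediate. Everything else is routine arithmetic on top of Lemma~\ref{lem:simple_binomial}, so I do not expect any genuine obstacle — the constant $44 = 11 \cdot 2 \cdot 2$ is exactly what the two "loss" factors above contribute.
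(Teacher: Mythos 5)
Your proof is correct and follows essentially the same route as the paper's: after the change of variables $j=i+c$ the paper simply asserts the inequality $\sum_{i=0}^k \binom{k}{i}|i-(k-c)/2|^p \le \sum_{i=0}^{k+c}\binom{k+c}{i}|i-(k+c)/2|^p$ and then for $c=1$ bounds the same two ratios (the central binomial ratio and $((k+1)/k)^{(p+1)/2}$) each by $2$; you fill in the termwise justification $\binom{k}{j-c}\le\binom{k+c}{j}$ that the paper leaves implicit, and your Pascal-identity argument for the binomial ratio is a minor variant of the paper's direct parity-case computation.
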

\begin{proof}
	We have
\begin{align*}
\sum_{i=0}^k \binom{k}{i}  |i-(k-c)/2|^p 
\leq 
\sum_{i=0}^{k+c} \binom{k+c}{i}  |i-(k+c)/2|^p
\leq 11 \binom{k+c}{\floor{(k+c)/2}} (p(k+c)/2)^{(p+1)/2} 
\, .
\end{align*}

For the special case of $c = 1$, we note that the ratio
\[
	\frac{\binom{k+1}{\floor{(k+1)/2}}}{\binom{k}{\floor{k/2}}} 
\]
is exactly $2$ if $k$ is odd and is $2(k+1)/(k+2) \leq 2$ if $k$ is even.
Furthermore, $(k+1)^{(p+1)/2}/k^{(p+1)/2} \leq (1+1/k)^{(k+1)/2} \leq 2$ for $k \geq 1$, which can be verified by noting that $f(x) := (1+1/x)^{(x+1)/2}$ is decreasing as a function of $x > 0$. The result follows.
\end{proof}

\bibliographystyle{alpha}

\newcommand{\etalchar}[1]{$^{#1}$}

\appendix
\section{Hardness of SVP}
\label{app:svp}

We notice that Theorem~\ref{thm:CVP_hard_intro}, or more specifically Corollary~\ref{cor:cvp-hardness}, immediately implies an improvement to the main result in~\cite{ASGapETH18}. Specifically, while~\cite[Theorem 4.3]{ASGapETH18} previously only applied to some non-explicit set of $p$, we can now extend it to all $p \gtrsim 2.14$ with $p \notin 2\Z$. 

We give the formal statement below for completeness. The proof is essentially identical to the original. We simply substitute our Corollary~\ref{cor:cvp-hardness} for the main result from~\cite{conf/focs/BennettGS17} (noting, as in~\cite{ASGapETH18} that the hard $\CVP_p$ instance promised by Corollary~\ref{cor:cvp-hardness} has a particularly nice form). We also include a plot of $C_p$ in Figure~\ref{fig:Cp}, which is taken from~\cite{ASGapETH18}. (\cite{ASGapETH18} also proved that there is no $2^{o(n)}$-time algorithm for $\SVP_p$ for any $p$ assuming Gap-ETH.)

		\begin{figure}
	\begin{center}
		\includegraphics[width=0.4 \textwidth]{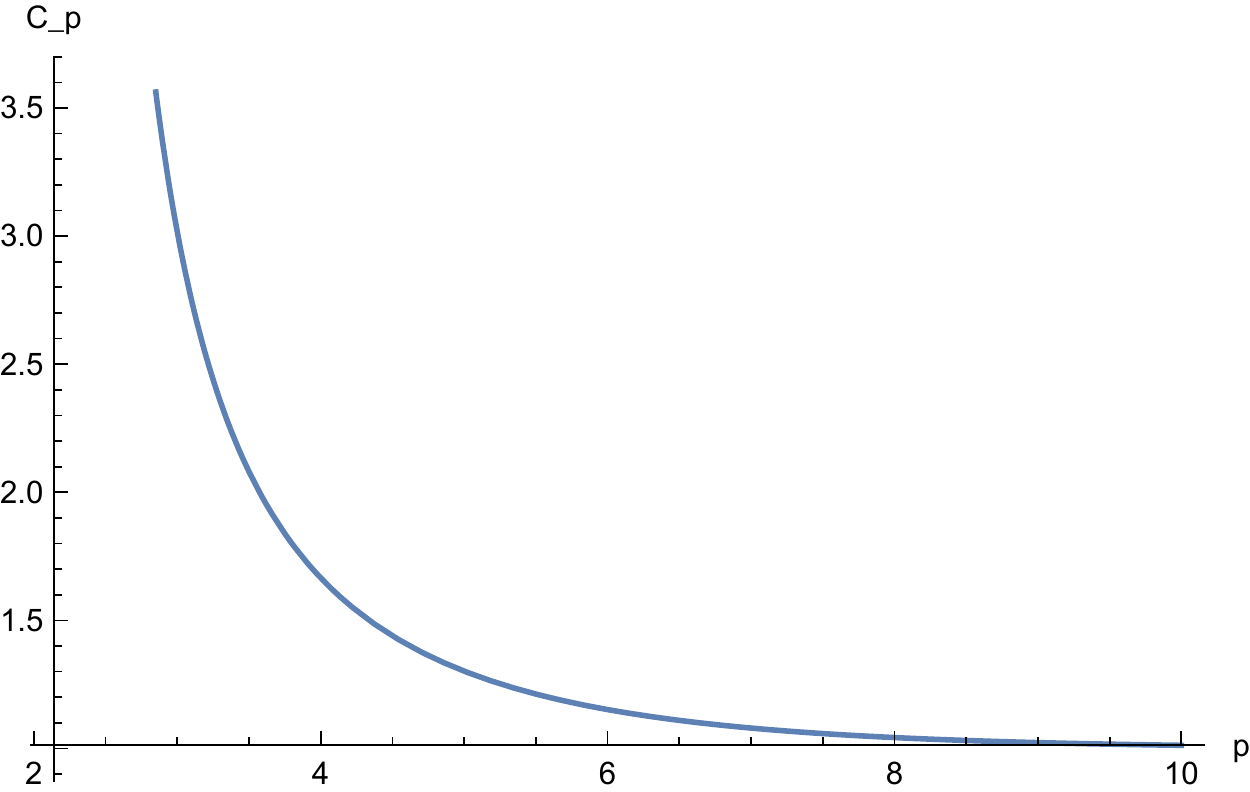}
		\qquad
		\includegraphics[width=0.4 \textwidth]{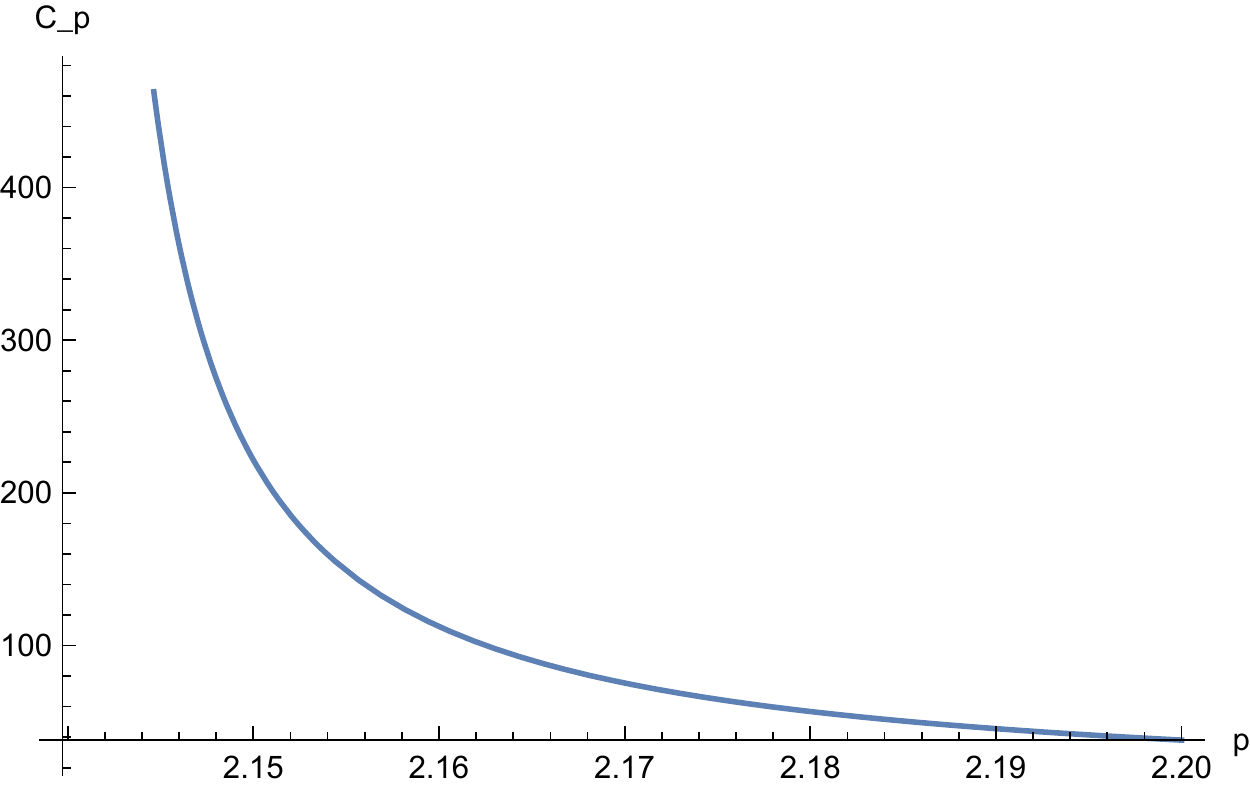}
		\caption{\label{fig:Cp} 
			The value $C_p$ for different values of $p > p_0$. In particular, for $p \notin 2\Z$, there is no $2^{n/C_p}$-time algorithm for $\SVP_p$ unless SETH is false. The plot on the left shows $C_p$ over a wide range of $p$, while the plot on the right shows the behavior when $p$ is close to the threshold $p_0 \approx 2.13972$. (Figure taken from~\cite{ASGapETH18}.)}
	\end{center}
\end{figure}

		\begin{theorem}
	\label{thm:SETH_hardness_centered_theta}
	For any integer $k \geq 2$ and $p > p_0$ with $p \notin 2\Z$, there is an efficient randomized reduction from Max-$k$-SAT on $n$ variables to $\SVP_p$ on a lattice of rank $\ceil{C_p n + \log^2 n}$, where 
	\[
	C_p := \frac{1}{1-\log_2W_p} \qquad \text{ and } \qquad W_p := \min_{\tau > 0} \exp(\tau/2^p) \Theta_p(\tau)
	\; .
	\] 
	Here, $\Theta_p(\tau) := \sum_{z \in \Z} \exp(-\tau |z|^p)$, and $p_0 \approx 2.13972$ is the unique solution to the equation $W_{p_0} = 2$.
	
	In particular, for every $\eps > 0$ and $p > p_0$ with $p \notin 2\Z$ there is no $2^{(1-\eps)n/C_p}$-time algorithm for $\CVP_p$ unless SETH is false.
\end{theorem}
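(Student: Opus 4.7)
The plan is to follow the reduction strategy of~\cite{ASGapETH18} verbatim, simply plugging in the improved $\CVP_p$ hardness from Corollary~\ref{cor:cvp-hardness} in place of the main result of~\cite{conf/focs/BennettGS17}. In~\cite{ASGapETH18}, the analogous theorem was proven only for those $p$ handled by~\cite{conf/focs/BennettGS17}, namely odd integers together with some non-explicit cofinite subset of $[1,\infty)\setminus 2\Z$. Our Corollary~\ref{cor:cvp-hardness} removes this restriction, so the same argument now yields the result for \emph{every} $p > p_0$ with $p \notin 2\Z$.

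First I would invoke Corollary~\ref{cor:cvp-hardness} to obtain, from a Max-$k$-SAT instance on $n$ variables, a $\CVP_p$ instance $(\basis, \vec{t}, r)$ of rank $n$ such that in the YES case there exists $\vec{z} \in \{0,1\}^n$ with $\|\basis \vec{z} - \vec{t}\|_p \leq r$, and in the NO case $\dist_p(\vec{t}, \lat(\basis)) > r$. It is essential here that the witness coordinate vectors lie in $\{0,1\}^n$, which is guaranteed by our construction via the identity-matrix gadget in Eq.~\eqref{eq:Phi_t}; this is precisely the structural hypothesis required by the reduction of~\cite{ASGapETH18}.

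Next I would apply the CVP-to-SVP reduction of~\cite{ASGapETH18}. That reduction constructs, via randomized sampling from a discrete Gaussian over $\Z$ of parameter $\tau$, a ``locally dense'' auxiliary lattice $\M$ of rank roughly $(C_p - 1) n$ together with an embedding that glues $\M$ to the CVP lattice. The result is an $\SVP_p$ instance on a lattice of rank $\lceil C_p n + \log^2 n \rceil$ whose shortest vector in the YES case corresponds to a $\{0,1\}^n$ witness for the CVP instance, while in the NO case no such short vector exists. The quantity $W_p = \min_{\tau > 0} \exp(\tau/2^p) \Theta_p(\tau)$ arises from optimizing the Gaussian parameter $\tau$: the factor $\Theta_p(\tau)$ controls the overall $\ell_p$-mass produced by the Gaussian-sampled coordinates, while $\exp(\tau/2^p)$ accounts for the ratio between the mass at a nonzero integer and at zero. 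Setting the rank blowup to $C_p = 1/(1 - \log_2 W_p)$ ensures that enough independent samples are collected to make the shortest-vector analysis go through with high probability, and the $\log^2 n$ additive term absorbs the standard error-amplification overhead. The threshold $p_0 \approx 2.13972$ is simply the smallest $p$ for which $W_p < 2$, i.e., the smallest $p$ for which the resulting rank blowup $C_p$ is finite.

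The main step, which amounts to bookkeeping once one has Corollary~\ref{cor:cvp-hardness} and the machinery of~\cite{ASGapETH18}, is to check that the YES/NO gap of the CVP instance is preserved (up to negligible error) through the embedding. The ``in particular'' conclusion then follows immediately: a $2^{(1-\eps)n/C_p}$-time algorithm for $\SVP_p$ would translate via this reduction (with $k = k(\eps)$ chosen large enough) to a $2^{(1-\eps')n}$-time algorithm for Max-$k$-SAT for some $\eps' > 0$, contradicting SETH. I do not expect any genuine obstacle beyond verifying that the structural hypothesis of~\cite{ASGapETH18} on the CVP witnesses is met by our reduction, which it is by construction.
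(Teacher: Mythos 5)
Your proposal matches the paper's approach exactly: the paper's own proof is a one-line remark that one should run the reduction of~\cite{ASGapETH18} verbatim, substituting Corollary~\ref{cor:cvp-hardness} for the main hardness result of~\cite{conf/focs/BennettGS17} and noting (as the paper does) that the resulting $\CVP_p$ instance has the required structural form with $\{0,1\}^n$ witnesses. Your sketch adds correct expository detail about the discrete-Gaussian gadget and the role of $W_p$ and $C_p$, but the substance is the same; you also correctly read the ``in particular'' clause as concluding hardness of $\SVP_p$ (the ``$\CVP_p$'' there is a typo in the paper).
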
%

\end{document}